\def\shownotes{0}   % set 1 for version with author notes, jumps
\newtheorem{theorem}{Theorem}
\newtheorem{lemma}{Lemma}
\newtheorem{corollary}{Corollary}
\newtheorem{claim}{Claim}
\newtheorem{definition}{Definition}
\newcommand{\qedclaim}{\hfill $\diamond$ \medskip}
\newenvironment{proofclaim}{\noindent{\em Proof.}}{\qedclaim}
\newcommand{\qed}{\hfill $\Box$ \medskip}
\newenvironment{proof}{\noindent{\em Proof.}}{\qed}
\patchcmd{\maketitle}{\@copyrightspace}{}{}{}
\newcommand{\midwor}[1]{\;\textnormal{ #1 }\;} % text in math mode
\newcommand{\vf}{\,,\,} % a comma with space
\newcommand{\ff}{\,.} % a dot 
\newcommand{\lset}{\left\{\left.\;}
\newcommand{\dimset}{\right.\;\left|\;}
\newcommand{\rset}{\;\right.\right\}}
\newenvironment{disarray}
 {\everymath{\displaystyle\everymath{}}\array}
 {\endarray}
\newcommand{\RelInt}{\mathbb{Z}} % relative integer numbers
\newcommand{\NatInt}{\mathbb{N}} % natural integer numbers
\newcommand{\ind}[1]{\mathbb{I}_{\{#1\}}} % Indicator
\newcommand{\setW}{\mathcal{W}}
\newcommand{\authnote}[2]{{ $\ll${\footnotesize\sffamily Comment from #1: #2}$\gg$}}
\newcommand{\authnote}[2]{}
\newcommand{\ie}{{\em i.e.,~}}
\newcommand{\eg}{{\em e.g.,~}}
\def\timenow{\@tempcnta\time
  \@tempcntb\@tempcnta
  \divide\@tempcntb60
  \ifnum10>\@tempcntb0\fi\number\@tempcntb
  \multiply\@tempcntb60
  \advance\@tempcnta-\@tempcntb
  :\ifnum10>\@tempcnta0\fi\number\@tempcnta}
\begin{document}

% --- Author Metadata here ---
%\conferenceinfo{WOODSTOCK}{'97 El Paso, Texas USA}
%\CopyrightYear{2007} % Allows default copyright year (20XX) to be over-ridden - IF NEED BE.
%\crdata{0-12345-67-8/90/01}  % Allows default copyright data (0-89791-88-6/97/05) to be over-ridden - IF NEED BE.
% --- End of Author Metadata ---

\title{Can Selfish Groups be Self-Enforcing?}

\author{
Guillaume Ducoffe\footnote{
%Computer Science Department, 
Columbia University \& %, and
\'Ecole Normale Sup\'erieure de Cachan,
%email:~
\texttt{gducoffe@ens-cachan.fr}
}
, 
Dorian Mazauric\footnote{
%Computer Science Department, 
Columbia University \& Laboratoire d'Informatique Fondamentale Marseille, %email:~
\texttt{dorian.mazauric@lif.univ-mrs.fr}
}
, 
Augustin Chaintreau\footnote{
%Computer Science Department, 
Columbia University, %email:~
\texttt{augustin@cs.columbia.edu}
}
}

\ifnum\shownotes=1
\date{
\footnotesize (
\emph{Timestamp}:~\today, \timenow;
\emph{Length}:~\pageref{totalpages} pages, excluding title, biblio, appendices
)
} 
\else
\date{}
\fi

\maketitle
\thispagestyle{empty}

\begin{abstract}
Algorithmic graph theory has thoroughly analyzed how, given a network describing constraints between various nodes, groups can be formed among these so that the resulting configuration optimizes a \emph{global} metric. In contrast, for various social and economic networks, groups are formed \emph{de facto} by the choices of selfish players. A fundamental problem in this setting is the existence and convergence to a \emph{self-enforcing} configuration: assignment of players into groups such that no subset of players have an incentive to move into another group than theirs. Motivated by information sharing on social networks -- and the difficult tradeoff between its benefits and the associated privacy risk -- we study the possible emergence of such stable configurations in a general selfish group formation game.

Our paper defines a general class of \emph{cooperative coloring game} that includes previous models of coloring and social network analysis.
While non trivial lower bounds were never established for any of these particular cases, using a novel combinatorial technique we provide the exact convergence time when up to two players collude. We also \emph{complete} the analysis of polynomial time convergence for cooperative coloring games \emph{in general}, closing open problems from previous works. 
We show in particular that when a general potential function argument fails to prove polynomial convergence, then games always exist that either do not converge or take more than a polynomial number of steps. 
Moreover, all our results are shown to be robust to variants of the games, including overlaps between groups and more complex utility functions representing \emph{multi-modal} interactions. We finally prove that a small level of collusion, although it makes convergence more difficult, has a significant and \emph{positive} effect on the \emph{efficiency} of the equilibrium that is attained. 

\end{abstract}

%\terms{Theory and Foundations}
%\keywords{Privacy, Social networks, Network formation, Economics of information, Equilibrium computation, Price of %anarchy, Computational game theory, Complexity theory, Graph theory}

\newpage
\pagestyle{plain}

\section{Introduction}

Understanding how self-interested agents decide to form groups is a fundamental computational problem which regained importance as multiple information sharing platforms are widely adopted. In addition to form direct connections to each others, users of online social networks such as Facebook or Google+ and microblogging platforms such as Twitter and Weibo are encouraged to participate to groups in which information is exchanged. Participating to one of these online groups comes with the benefit of receiving information that could be an advantage, as it was previously well documented~\cite{Coleman:1957vq,Granovetter:1974te,Burt:2004tv,Montgomery:1991uj,CalvoArmengol:2004ug}, but it also comes with the expectation to disclose some personal information to its members. At times, sharing information with certain individuals can be harmful~\cite{Wang:2011gb,Acquisti:2004gp}, and the increasing concern of users for privacy stresses the need to be protected from such risk. In addition, one's time and attention span being by nature limited, there is only so many groups an individual can meaningfully participate to. This paper is about an idealized but general model reproducing this tradeoff once summarized by the famous essayist:
\begin{quote}%\footnotesize
``Nul plaisir n'a goust pour moi sans communication, \\ 
mais il vaut mieux encore estre seul qu'en compagnie 
ennuyeuse et inepte.''\footnote{``No pleasure keeps its savor unless I tell someone, but one'd rather be alone than in dull company.''}

\hfill 
--
 Montaigne, in \textit{Essais} III-9
\end{quote}

In the simplest form of this model, a user may share information with others only by choosing to belong to a \emph{single} group. If users are represented as vertices of a graph, the results of their choices is hence a partition, or a vertex coloring, of the underlying graph. The general model defined below, which we call a \emph{coloring game}, extends previous related work and lies at the frontier between the behavioral analysis of social networks~\cite{kleinberg2013information,Kearns:2006wj,Chaudhuri:2008ts}, the theory of coalition formation and hedonic games~\cite{Dreze:1980tv,Bloch:2011ta}, and its more recent analysis through computational complexity~\cite{Panagopoulou:2008dr,Chatzigiannakis:2010vo,Escoffier:2012vu}. 

A coloring game takes as input a fixed weighted graph whose edges denote potential connections, and the weight on an edge indicates how advantageous this connection is to both nodes. It is  \emph{positive} if this connection is desirable to them. An edge with a \emph{negative} weight represents a ''conflict'', \emph{a fortiori} if the weight is infinitely large and we then call the nodes \emph{enemies}. 
When a player picks a color, she receives a utility determined by the weights of her adjacent edges, but only those connecting her to players with the same color\footnote{Note that our game allows a player to choose a color no-one else is currently using, and receives zero utility.}. We also allow a subset of players to act collectively, \ie simultaneously changing their groups, to form a \emph{deviation}. A subset of size $k$ can form a \emph{$k$-deviation} if all players of the subset can simultaneously change their colors so that each of them strictly increases her utility (see Figure 1). 
\begin{figure}[htb]
	\begin{minipage}{0.55\linewidth}
\centerline{\includegraphics[width=1\linewidth]{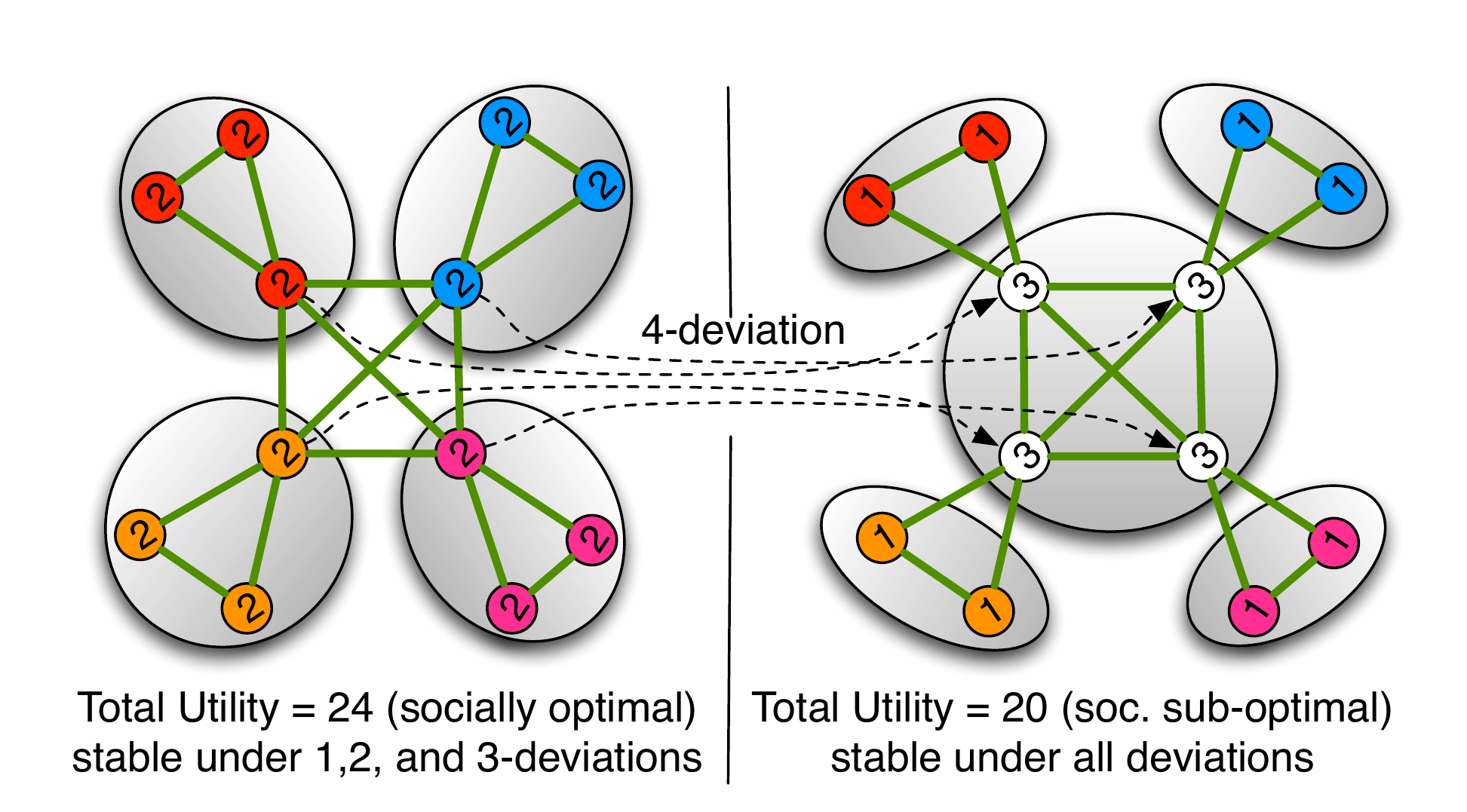}}
	\end{minipage}
	\begin{minipage}{0.45\linewidth}
		\caption{\textbf{A coloring game.}
		All solid edges have weight 1 and all pairs of nodes not connected by a solid edge are \emph{enemies} (connected with an edge of weight $-\infty$). Labels denote each node's utility. (Left) A partition of the graph into 4 groups, nodes all receive utility $2$. Note that the middle nodes, which we may call the ``connectors'', also have friends in other groups. (Right) Connectors can form a $4$-deviation and join a larger group, building a new partition where each of them receives utility $3$, while other nodes' utility decreases.
\label{fig:example}
}
	\end{minipage}
%\vspace{-0.3cm}
\end{figure}

The objective of this paper is to determine under which conditions a coloring game becomes \emph{stable} after a \emph{small} number of steps, and reaches an \emph{efficient} configuration. More precisely, for a fixed value of $k$ we call a partition \emph{$k$-stable} if no deviation of size at most $k$ exists\footnote{This is a $k$-strong Nash equilibrium of the game, called $k$-stable to align with \cite{kleinberg2013information}.}. When such $k$-stable partition exists, we wish to determine whether the game always finds one after a number of steps that grows at most \emph{polynomially} in the number of players. We also aims at comparing the \emph{total utility} received by players to the best partition where this quantity is maximized, as well as characterizing multiple variants of this game, including those where a user may join multiple groups.

\subsection{Related works and our contributions}

\emph{Vertex coloring} is a central optimization problem in Computer Science, with much of the work centering on minimizing globally the number of colors used. Without a complexity classes collapse, no polynomial time algorithm can approximate it. In a \emph{coloring game}, a solution to that problem is found through a selfish local search heuristic. Analyzing such games have proved in the last decade especially fruitful for three reasons: (1) they improve \emph{and} unify a series of well known polynomial heuristics~\cite{Panagopoulou:2008dr}, (2) they naturally extend to design sound and efficient distributed algorithms~\cite{Chatzigiannakis:2010vo} and (3) they offer insights into how computational problems are solved in practice by human subjects~\cite{Kearns:2006wj}. Games in the works aforementioned are equivalent to the \emph{uniform case} that we define and analyze first: In this case, all edges that do not denote enemies have the same weight. These previous works consider the case where players move alone (\ie forming only $1$-deviation). 

\textbf{Our contribution:} 
All these games are known to converge in polynomial time to a Nash Equilibrium (\ie a $1$-stable partition) using a potential argument~\cite{Panagopoulou:2008dr}, but the exponent as the number of players grows is not known to be tight. In fact, a non trivial lower bound was never established. An entire new analysis allows us to provide the \emph{exact} convergence time in closed form. This result prove in particular that the convergence is always arbitrarily \emph{faster} than the bound of the potential function can predict~(Section~\ref{ssec:k=12}).

\medskip
A renewed interest for coloring games recently emerged from a \emph{cooperative} approach. Extending game theory beyond the strict definition of a Nash Equilibrium, the analysis of cooperative games studies how multiple players can form self-interested alliances. \emph{Cooperative} coloring games involve deviations of larger size (up to size $k$), and a stronger termination condition stipulates that no deviation of size at most $k$ is profitable to its members. The attained configuration of the game can then show a considerable improvement\footnote{\eg \cite{Escoffier:2012vu} established that for $k=\infty$, the number of colors used when the game ends approximates the chromatic number up to a logarithmic factor, but finding such a configuration is NP hard.} but this comes at the price of a more complicated local search heuristic, which makes it challenging in general to prove a polynomial convergence time. All polynomial upper bounds to date rely on a potential argument which only applies when $k=2$ or $k=3$~\cite{kleinberg2013information}. Prior to us, non trivial lower bounds were never established. Finding a polynomial bound when $k$ is fixed and larger than $3$ is arguably the most important open problem of cooperative coloring games. It has been shown to be impossible to establish using a potential function~\cite{kleinberg2013information,Escoffier:2012vu}.  This again questions the power of the potential method in the first place.

\textbf{Our contribution:}
Here, we solve this open problem as we prove that these games with $k=4$ necessitates more than a polynomial number of steps to terminate. This also completes the characterization of a polynomial time convergence for cooperative coloring games in the uniform case. It reaffirms the potential method as a ``good'' estimate of time complexity. The proof, which closes an open problem from \cite{kleinberg2013information} and contradicts a previous conjecture from \cite{Escoffier:2012vu}, uses a surprising argument. The extreme pathologies that are responsible to delay the convergence are not derived as before from properties of independent sets. They appear even in the simplest graph from a series of bad ``scheduling'' decisions on subsets, which can be recursively nested to contradict any polynomial bound. This result motivates a widely open problem: can a ``smart scheduler'' be designed in a cooperative setting to avoid such pathological sequences, or is finding a 4-stable configuration a PLS complete problem?~(Section~\ref{subsec-superpolynomial-time}).

\medskip
Moving away from the uniform case, we next analyze coloring games with \emph{general weights}, in which edges that do not denote enemies may have different values. These games present new difficulties. Primarily, a $k$-stable partition is not guaranteed to exist (see Figure~\ref{fig:nonstable}), showing that the game may not always terminate.
Prior to our work, no result exists that guarantees the termination of the dynamics\footnote{The only result known \cite{kleinberg2013information} is a negative one for a variant of the game in which two nodes of different colors can ``gossip'', a deviation that results in connecting all other nodes in their respective groups into a single group, without the latter being able to prevent it. We analyze first the case where such deviations are ignored, which is more complex. Our results extends when gossip is allowed (Section~\ref{sec:extensions}).\label{fnot:gossip}}. A Nash Equilibrium always exists, although finding one in general is PLS complete for unrestricted weights. This motivates us to analyze the case where weights are chosen in a fixed subset, for which the game converges to a Nash Equilibrium in polynomial time. For the important case where all \emph{positive} weights are equal, while null and negative weights may take all possible values, we prove by another potential function argument that a $2$-stable partition is found in polynomial time. As a side note, this proof also yields sufficient conditions on the structure of the graph that guarantees that stability of higher order can be attained. The key open question is, again, is this result tight, or can games be shown to converge even when the potential function method fails?

\textbf{Our contribution:}
Our results close this question and also offer the first non trivial impossibility results on general cooperative coloring games. Indeed, the potential function method is ``as good as it gets'' for three complementary reasons: \emph{First}, as soon as $3$-deviations are allowed, and even when positive weights are all equal, we provide a constructive argument proving that not all games terminate. \emph{Second,} for all other sets of possible weights that are neither trivial nor trivially equivalent to the uniform case or this one, we show games exist that are not $2$-stable. \emph{Third}, we prove that whenever the conditions for a potential argument are not satisfied, not only is a game not guaranteed to converge, but deciding whether it does is NP-hard, even \emph{after} the set of weight has been preliminary fixed. Establishing this result requires us to build a general reduction between convergence from \emph{any} graph and a maximum independent set using only the fact that a single counterexample exists. We found many of these results surprising, especially as in most of those cases our proofs are based on 2-deviations that by definition do not diminish the total utility. ~(Section~\ref{sec:weighted}).
\begin{figure}[t]
	\begin{minipage}{0.55\linewidth}
\centerline{\includegraphics[width=0.8\linewidth]{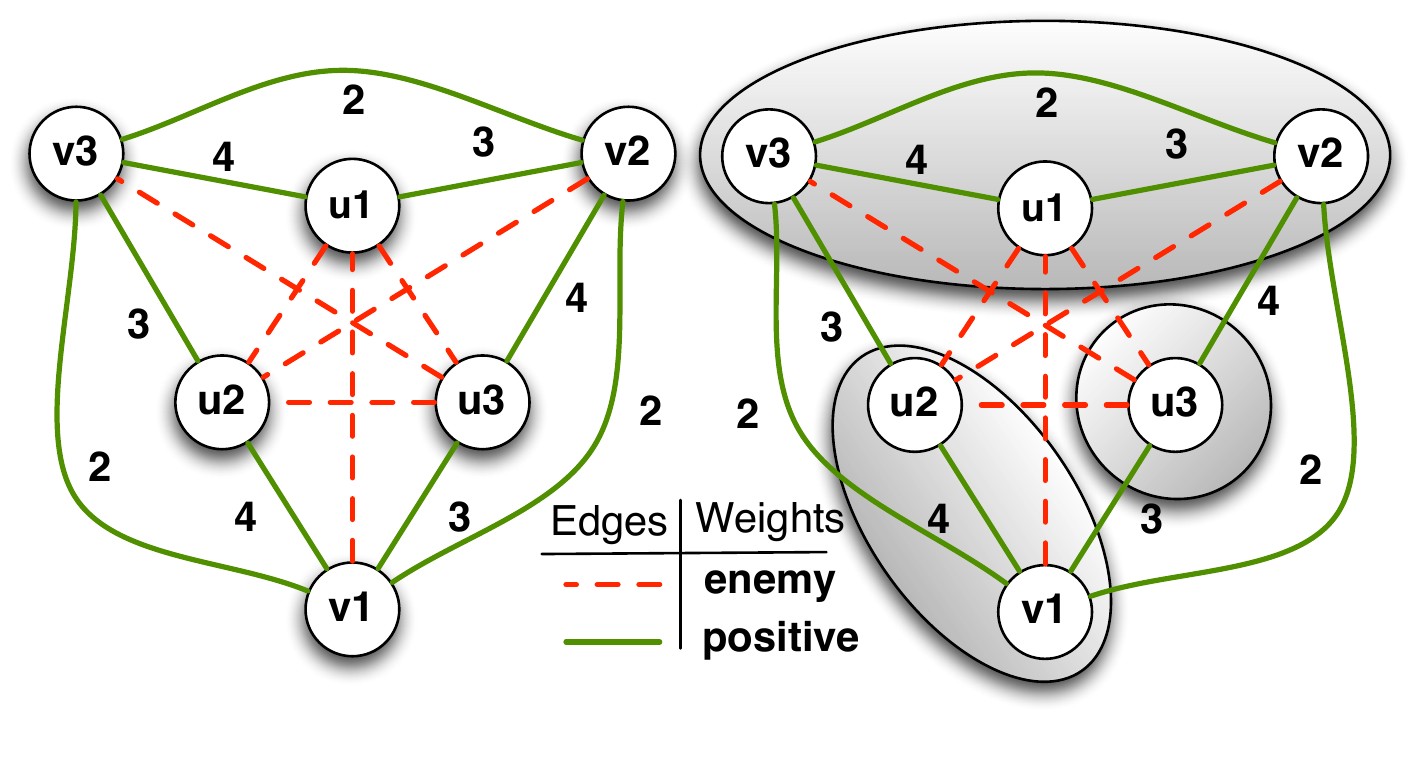}}
	\end{minipage}
	\begin{minipage}{0.45\linewidth}
		\caption{\textbf{A coloring games for which no $2$-stable partition exists.} (Left) A weighted coloring game: Solid green edges have a positive weight with the indicated value, dashed red edges denote enemies or, equivalently, edges of weight $-\infty$. (Right) No partition is $2$-stable: e.g., starting from these groups, nodes \textsf{v1} and \textsf{v2} both decide to move into the group containing \textsf{u3}, effectively ``rotating'' the groups.
\label{fig:nonstable}
}
	\end{minipage}
%\vspace{-0.3cm}
\end{figure}

\medskip
Together our results prove that coloring games converge in polynomial time only when a polynomial potential function exists. This has implications on the stability and efficiency of information sharing and group formation in general, motivating us to see if this claim is robust to slight changes of our assumptions. 

\textbf{Our contribution:}
We treat first the model from \cite{kleinberg2013information} that closely resembles ours except that it introduces another specific ``gossip'' deviation (see footnote~\ref{fnot:gossip} above). While this model is equivalent in the uniform case in which our results close previously open problems, we show that, in the weighted case, gossip introduces additional pathologies even with a unique positive weight and $k=2$. Similarly, we prove that asymmetry in the assigned weights creates more instability: It is then even NP-hard to decide if a Nash Equilibrium exists. 
~(Section~\ref{sec:gossip}-\ref{sec:asym}).

More interestingly, we extend our results to a ``multi-channel'' model in which users may participate to more than one group, which allows us to study the formation of groups beyond the partition case. This game resembles one in which Nash Equilibrium were empirically shown to reflect the formation of overlapping communities in social networks~\cite{Chen:2010wz}. We first show that the proof of a polynomial potential function with general weights extend from a partition to the case of overlapping groups. However, to our surprise, outside of this case, allowing group overlap can jeopardize stability! Even when focusing on \emph{the uniform case}, for which a $k$-stable \emph{partition} always exists, we show that not all graphs admit $3$-stable configurations when nodes can belong to two groups instead of a single one. 
~(Section~\ref{sec:multichannel}).

We then consider a model where the weights measuring the benefits and costs incurring by nodes when sharing a group are not only defined on pairs, but more generally on subsets. We extend the potential function argument of the weighted case to prove a sufficient condition on the termination of the game that depends on the property of the associated hypergraph.
~(Section~\ref{sec:multimodal}).

\medskip
Finally, we investigate the \emph{efficiency} of the equilibrium attained by the game as different size of deviations are allowed. Previous works measured overall efficiency in terms of the number of groups~\cite{Escoffier:2012vu}, or total utility produced~\cite{kleinberg2013information}. We focus on the later as we wish to study the outcomes of the games as experienced by the users, as opposed to minimizing colors used in a graph. Previous results focused either the ratio of the best $k$-stable configuration to the optimal (\ie price of $k$-stability), while we wish to study the worst case of a final configuration of the game when it exists (\ie~price of $k$-anarchy). It was only studied for large values of $k$, for which our results prove that convergence can be uncertain or long.

\textbf{Our contribution:}
Focusing on small values of $k$ for which convergence is guaranteed, we exhibit an interesting tension between stability and efficiency. The game using only 1-deviations can be converging to a stable case far from the optimal. However, we prove that allowing slightly larger deviations, while making convergence longer or more difficult, already significantly improves the efficiency of the groups that are formed. This provides another example in which a stronger form of stability is shown to positively impact the price of anarchy, as other recent works established in different cases~\cite{bachrach2013strong}.
An important theoretical consequence is to motivate finding conditions on graphs under which stability and efficiency can be reconciled. (Section~\ref{sec:efficiency}).

\medskip
\textbf{Additional related work:}
Beyond works of immediate relevance for this model of coloring games, coalitions formation have been studied in multiple settings including \emph{hedonic games} where agents form groups in order to enjoy each other's company \cite{Dreze:1980tv}. Indeed, a large number of coalition formation models, including some for overlapping coalitions, have been proposed in cooperative game theory where transfers are allowed between the players (see~\cite{Dreze:1980tv,Zick:2012ui,Chalkiadakis:2010ui,Augustine:2011wj}). In contrast player's utility in our model are \emph{non-transferable}, which leads to a very different analysis. This assumption resembles the \emph{cut} and \emph{party affiliation} games~\cite{Balcan:2009uc,Fabrikant:2004fxa,Christodoulou:2006cr} except that in those, the number of groups is fixed, and usually only Nash equilibria are considered. 

Hedonic games where players cannot make side payments to each other usually model users's preferences as a partial ordering on configurations, and focus on finding Nash Equilibrium or a configuration in the core~(\ie $k$-stable for $k=1$ or $k=\infty$). This problem is already known to be computationally more difficult, and the sufficient conditions for existence of a stable configurations~\cite{Bloch:2011ta} as well as the previous NP-hardness reduction~\cite{Ballester:2004iu} cannot be used to characterize these coloring games. Our results provide existence proofs and algorithm to cover $k$ stable configurations in a more specific case of users' preferences, but it also shows that this particular case is sufficiently rich to observe a sudden increase in the time and computation needed to converge, and that it is surprisingly sensitive to the exact value of $k$.

%\begin{figure}[t]
%\centerline{\includegraphics[width=0.5\textwidth]{}}
%%\vspace{-0.2cm}
%\caption{A network with set of weights $\mathcal{W}=\{-\infty,2,3,4\}$.}
%%\vspace{-0.45cm}
%\label{fig:fig-triangle}
%\end{figure}

\section{Cooperative coloring games}
\label{sec:model}

%Let us first define coloring games in a formal way.

\paragraph{The Social Network}
We model social group formation using an edge-weighted graph $G=(V,E,w)$, where $V$ denotes the set of $n$ users.
The weight $w_{uv}$ captures the utility both $u$ and $v$ receive when they choose the same color.
Without loss of generality, we can suppose that $G$ is a \emph{simple} graph (no loop exists, no multiple edges exist for a pair of nodes) and \emph{complete} (since missing edges can be added with a weight $0$).
We hence omit the set of edges $E$.
Note that we assume $w_{uv} = w_{vu}$ (see section \ref{sec:extensions} for the non-symmetrical case) and that\footnote{Occasionally, we include \emph{best friends} denoted with a weight $N$ such that $N\geq n\times |w_{uv}|$ for $w_{uv} \notin \{-\infty,N\}$.}
$w_{uv} \in \mathbb{Z} \cup \{ - \infty \}$.
We will also consider games for which the set of weights is constrained, we then denote it by $\mathcal{W}$ (\eg 
Figure~\ref{fig:example} is a graph for 
$\mathcal{W} = \{-\infty,1\}$, 
Figure~\ref{fig:nonstable} is a graph for 
$\mathcal{W} = \{-\infty,2,3,4\}$
).
Note that if the edge $\{u,v\}$ is a conflict ($w_{uv} = -\infty$), the users are called \emph{enemies} and will never choose the same color. 

We denote by $P = (X_1, X_2, \ldots, X_n)$ the partition with groups formed by users choosing the same color.
Note that some of these $n$ groups may be empty. Two examples of such partitions are shown in Figure~\ref{fig:example}.
For a partition $P$, the \emph{utility}
for the user $u$ is defined as $f_u(P) = \sum_{v \in X(u) \setminus \{u\}} w_{uv}$, where $X(u)$ denotes the sharing group in $P$ which $u$ belongs to.
The global utility, or social welfare, is $f(P) = \sum_{u \in V} f_u(P)$. Examples of utilities obtained are shown in Figure~\ref{fig:example}.

\paragraph{Dynamic of the game}
Each user in the network plays the game to maximize her individual utility. Initially, all users choose a different color, forming a collection of singletons.
Then we fix a constant parameter $k \geq 1$, and the coloring game starts.
At each round $i \geq 0$, we consider the current partition $P_i$ of the nodes.
%In particular, there are only singleton groups in $P_0$.
Given a coalition $S$ with at most $k$ players, we say that $S$ is a \emph{$k$-deviation}, or $k$-set, when all the users in $S$ have an incentive to join the same\footnote{As noted in \cite{Escoffier:2012vu}, all deviations can be reduced to ones where everyone in the subset pick the same color.} (possibly empty) group in $P_i$, so that they \emph{all} increase their individual utility in the process.
When such a deviation exists, we may allow any \emph{one}\footnote{
Simultaneous parallel deviations can be handled with a loop avoiding rule \cite{Chatzigiannakis:2010vo} but we ignore them for simplicity.} of them to happen and, in so doing, we \emph{break} $P_i$, and we get a new partition $P_{i+1}$.
Note that $f_{u}(P_{i+1}) > f_{u}(P_{i})$ for all $u \in S$, and there exists a group $X$ of $P_{i+1}$ that contains all nodes of $S$.
Figure~\ref{fig:example} presents an update following a $4$-deviation.
The game stops if and only if there is no possible $k$-deviation.
We call a partition $P$ such that there is no $k$-deviation that breaks $P$ a \emph{$k$-stable} partition.
An algorithmic presentation of the game is given below.

\begin{algorithm}[h]
%\caption{- Cascading failure model}
\textbf{Dynamic of a cooperative coloring game}
\vspace{1mm}
{\hrule height0.8pt depth0pt \kern0pt}
\vspace{.5mm}
\label{algorithm:dynamic}
%\begin{trivlist}
%\item
	\textbf{Input:} A positive integer $k \geq 1$, 
	%a set of weights $\mathcal{W}$, 
	and a graph $G=(V,w)$.
%\vspace{-0.1cm}
%\item
\\\textbf{Output:} A partition $k$-stable for $G$.
%\end{trivlist}
%\vspace{-0.6cm}
\begin{algorithmic}[1]
\STATE Set $i = 0 $ and $P_{0}$ to be the partition composed of $n$ singletons groups.
\WHILE{there exists a $k$-deviation for $P_{i}$}
\STATE Set $i = i+1$ and let $P_{i}$ be the partition obtained after any $k$-deviation.
\ENDWHILE
\RETURN Partition $P_{i}$.
\end{algorithmic}
\end{algorithm}

%We will discuss in the sequel several extensions of the model.
%We will show that most of our results still hold for these more general games.   

\section{The uniform case: Are longest deviation sequences polynomial?}
\label{sec:unweighted}

A coloring game is said \emph{uniform} if, except for conflict edges, all edges have the same unit weight (\ie $\mathcal{W} = \{-\infty,1\}$). This game is entirely characterized by an unweighted and undirected \emph{conflict graph} $G^-=(V,E)$ that contains all the conflict edges (\ie those with weight $-\infty$).
The complementary graph of $G^{-}$ represents all the pairs of friends (with unit weight).
Note also that, for any user $u$, the individual utility $f_u(P)$ equals $|X(u)|-1$, which is how many nodes are in the group in addition to $u$. As shown in~\cite{Escoffier:2012vu,kleinberg2013information} $k$-stable partitions always exist for any value of $k$.
Let us first mention why this game converges for any $k\geq 1$. For a partition $P$ and any $i$ we define $\lambda_{i}(P)$ to be the number of groups of size $i$, and let $\Lambda(P) = (\lambda_{n}(P), \ldots, \lambda_{1}(P))$ be its \emph{partition vector}. One can see that a deviation \emph{of any size} produces a partition vector that is strictly higher in the lexicographical ordering. (for a more formal argument, see our Lemma~\ref{lem:convergence-k-stable} in the Appendix) 

We can then define $L(k,n)$ as the size of a longest sequence of $k$-deviations among all the graphs with at most $n$ nodes. One can prove (see Lemma~\ref{lem:graphe-vide} in the Appendix) that $L(k,n)$ is always attained in the graph containing no conflict edges, an instrumental observation for our next proofs. 

Prior to this work, no lower bound on $L(k,n)$ was known, and the analysis was limited to potential function that only applies when $k=1,2,$ and $3$~\cite{Escoffier:2012vu,kleinberg2013information}. The analysis of the game becomes much more difficult as soon as $4$-deviations are allowed. Table~\ref{tab:tab-graph-complexity} summarizes our contributions: 

%the exact value of $L(k,n)$ for small values of $k$ and the proof that its combinatorial properties suddenly increases as $k=4$.

% We represent this class of instances by undirected graphs where edges represent enemies.
% Furthermore, the dynamic of the system always converges to a stable partition.
% A useful remark is that such a sequence of $k$-deviations is maximum for graphs without edges.\\

\begin{table}[h]
\begin{center}
%\tbl{Previous Bounds and results we obtained on $L(k,n)$.}{%
\begin{tabular}{|c|c|c|c|}
\hline
$k$ & Prior to our work & \multicolumn{2}{c|}{\textbf{Our results}} \\
\hline
1 & $O(n^{2})$ \cite{kleinberg2013information}& exact analysis, which implies $L(k,n) \sim \frac{2}{3}n^{3/2}$ & Theorem~\ref{lem:tight:k=1}  \\
\hline
2 & $O(n^{2})$ \cite{kleinberg2013information}& exact analysis, which implies $L(k,n) \sim \frac{2}{3}n^{3/2}$ & Theorem~\ref{lem:tight:k=2}  \\
\hline
3 & $O(n^{3})$ \cite{kleinberg2013information}& $\Omega(n^{2})$ & Theorem~\ref{thm:lower-bound-3}  \\
 & \cite{Escoffier:2012vu}&  & \\
\hline
$\geq 4$ & $O(2^{n})$ \cite{kleinberg2013information} & $\Omega(n^{\Theta(\ln(n))})$, $O(\exp({\pi\sqrt{{2n}/{3}}})/n)$ & Theorem~\ref{lem:lower-bound-4}  \\
%$\geq 4$ & $O(2^{n})$~\cite{kleinberg2013information} & $\Omega(n^{a \ln(n)})$, $O(\frac{e^{\pi\sqrt{\frac{2n}{3}}}}{n})$ & Theorem~\ref{lem:lower-bound-4}  \\
\hline
\end{tabular}
\caption{Previous Bounds and results we obtained on $L(k,n)$.}
\label{tab:tab-graph-complexity}
\end{center}
\end{table}

\subsection{Exact analysis for k $\leq$ 2}
\label{ssec:k=12}

In~\cite{kleinberg2013information}, the authors proved the global utility increases for each $k$-deviation when $k \leq 2$. As that potential function is also upper bounded by $O(n^2)$, the system converges to a $2$-stable partition in at most a quadratic time.
We improve this result as we completely solve this case and give the exact (non-asymptotic) value of $L(k,n)$ when $k\leq 2$. It also proves that this potential function method is not tight. The gist of the proof is to re-interpret sequences of deviations in the Dominance Lattice.
This object has been widely used in theoretical physics and combinatorics to study systems in which the addition of one element (\eg a grain of sand) creates consequences in cascade (\eg the reconfiguration of a sand pile)~\cite{Goles1993321}. 
%This property, frequently called self-organized criticality, shows that coloring games are strongly chaotic. 
Let us define:
\begin{definition}[\cite{Brylawski1973201}]
An integer partition of $n \geq 1$, is a decreasing sequence of integers $Q = q_1 \geq q_2 \geq \ldots \geq q_n \geq 0$ such that $\sum_{i=1}^{n}q_{i} = n$. 
\end{definition}

Given any game with $n$ nodes, there are as many partition vectors as there are integer partitions of $n$.
Thus in the sequel, we will make no difference between a partition vector and the integer partition it represents.
If we denote the number of integer partitions by $p_{n}$, then we know the system reaches a stable partition in at most $p_{n}=\Theta((e^{\pi\sqrt{\frac{2n}{3}}})/n)$ steps\footnote{The upper-bound directly follows from Lemma~\ref{lem:convergence-k-stable} (\ie the lexicographical ordering argument).}.
This is already far less than $2^n$, which was shown to be the best upper bound that one can obtain for $k \geq 4$ when using an additive potential function~\cite{kleinberg2013information}. 
%To go further, one needs to go beyond lexicographical ordering.
For $n\geq 6$ it can be seen that some partitions $P$ of the nodes may never be in the same sequence.
It is hence important to deal with a partial ordering instead of a total one. Brylawski proved in~\cite{Brylawski1973201} that a dominance ordering creates a lattice on integer partitions, where successors and predecessors can be defined by covering relation:
\begin{definition}(dominance ordering)
Given two integer partitions 
of $n \geq 1$,
denoted $Q = q_{1} \geq \ldots \geq q_{n}$ and $Q' = q_{1}' \geq \ldots \geq q_{n}'$, we say that $Q'$ dominates $Q$ if 
$\sum_{j=1}^{i}q_{j}' \geq \sum_{j=1}^{i}q_{j}$,
for all $1 \leq i \leq n$.
\end{definition}
%That dominance relation can be extended to a partial ordering of the integer partitions. In so doing, 
\begin{definition}(covering)
Given two integer partitions $Q,Q'$ of $n \geq 1$, 
$Q'$ \emph{covers} $Q$ if and only if $Q'$ dominates $Q$ and no other integer partition $Q''$ satisfies $Q'$ dominates $Q''$ and $Q''$ dominates $Q$.
\end{definition}

% One can also check that $a$ dominates $b$ if, and only if, we have a sequence $a = q_1, q_2, \ldots, q_p = b$, where $q_i$ covers $q_{i+1}$ for all $1 \leq i \leq p-1$. 

\noindent
The main ingredient of our analysis is to exploit a strong relationship between this structure and $1$-deviations in a game, that holds as long as no conflict edges exist (see details of the proof in appendix).
\begin{lemma}[\cite{Brylawski1973201}]
\label{dominance-characterization}
%Given two partitions $a = (a_1,\ldots,a_n)$ and $b=(b_1,\ldots,b_n)$ of an integer $n \geq 1$, then 
Given $Q,Q'$,
$Q'$ covers $Q$ if and only if there are $j,k$ such that:
%\begin{itemize}
%\item 
$(i)$ 
$q_j' = q_j+1$;
%\item 
$(ii)$ 
$q_k' = q_k - 1$;
%\item 
$(iii)$ 
for all $i \notin\{j,k\}$, we have $q_i' = q_i$;
%\item 
$(iv)$ 
either $k=j+1$ or $q_j = q_k$. 
%\end{itemize}
\end{lemma}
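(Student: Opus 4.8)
The plan is to prove both implications by passing to partial sums: write $\Sigma_i(Q)=\sum_{l=1}^{i} q_l$, so that $Q'$ dominates $Q$ exactly when $\Sigma_i(Q')\ge\Sigma_i(Q)$ for all $i$, with equality forced at $i=n$. A preliminary observation fixes the ``shape'': if (i)--(iii) hold and both $Q,Q'$ are partitions, then $\Sigma_i(Q')=\Sigma_i(Q)$ for every $i$ outside the window between positions $j$ and $k$, and the two differ by exactly one inside it, so domination holds precisely when $j<k$. I therefore assume $j<k$ throughout (the case $j>k$ just exchanges the roles of $Q$ and $Q'$).

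For the ``if'' direction I would, given (i)--(iv) with $j<k$, first read off $Q'\succ Q$ from this partial-sum picture, and then show that no partition $Q''$ lies strictly between $Q$ and $Q'$. Sandwiching partial sums, such a $Q''$ must satisfy $\Sigma_i(Q'')=\Sigma_i(Q)+\varepsilon_i$ with $\varepsilon_i\in\{0,1\}$ for $i\in[j,k)$ and $\varepsilon_i=0$ otherwise, hence $q''_i=q_i+\varepsilon_i-\varepsilon_{i-1}$. Condition (iv) is exactly what pins $(\varepsilon_i)$ down: if $k=j+1$ there is a single free bit and $Q''\in\{Q,Q'\}$; if instead $q_j=q_k$, then $q_j=q_{j+1}=\cdots=q_k$ is a constant run, and requiring $Q''$ to be weakly decreasing forces $(\varepsilon_i)$ to be constant on $[j,k)$ --- the first index where $(\varepsilon_i)$ could rise from $0$ to $1$ strictly inside the run, or fall back from $1$ to $0$, immediately destroys monotonicity of $Q''$ --- so again $Q''\in\{Q,Q'\}$.

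For the ``only if'' direction I would let $j$ be the least index with $q'_j\ne q_j$; domination gives $q'_j\ge q_j+1$ and $q'_i=q_i$ for $i<j$, and then $q_{j-1}=q'_{j-1}\ge q'_j$ yields $q_{j-1}>q_j$. Let $k>j$ be the least index with $\Sigma_k(Q')=\Sigma_k(Q)$. The key small claim is $q_k>q_{k+1}$: since $\Sigma_{k-1}(Q')\ge\Sigma_{k-1}(Q)+1$ while the $\Sigma_k$'s agree, we get $q'_k\le q_k-1$ (so also $q_k\ge 1$), and if $q_k$ equalled $q_{k+1}$ then $q'_{k+1}\le q'_k$ would force $\Sigma_{k+1}(Q')<\Sigma_{k+1}(Q)$, contradicting domination. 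With $q_{j-1}>q_j$ and $q_k>q_{k+1}$ in hand, the single-box move $Q''$ --- add one unit at position $j$, remove one at position $k$ --- is again a partition, and its partial sums lie between those of $Q$ and $Q'$, so $Q\prec Q''\preceq Q'$; covering then forces $Q''=Q'$, which establishes (i)--(iii). To get (iv) I argue by contradiction: assuming $k\ge j+2$ and $q_j>q_k$, either some strict drop $q_m>q_{m+1}$ occurs with $j\le m\le k-2$, and re-routing the moved unit to land at $m+1$ yields a partition strictly between $Q$ and $Q'$; or $q_j=\cdots=q_{k-1}>q_k$, and then the partition obtained by instead placing the ``$-1$'' at position $k-1$ is strictly between --- in either case contradicting that $Q'$ covers $Q$.

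The step I expect to be the main obstacle is the ``only if'' direction: choosing the ``$-1$'' position $k$ so that the single-box move really is a partition (a naive choice such as ``the first $i$ with $q'_i<q_i$'' can violate monotonicity at $k+1$, and it is precisely domination that makes the choice $k=\min\{i>j:\Sigma_i(Q')=\Sigma_i(Q)\}$ work), together with the final case analysis for (iv), where one must exhibit an explicit strictly-intermediate partition in each case. The monotonicity bookkeeping in the ``if'' direction is routine but needs some care at the boundary indices $j-1$ and $k$. This is of course Brylawski's original characterisation~\cite{Brylawski1973201}; the argument sketched here is a streamlined reorganisation of his.
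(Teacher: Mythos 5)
The paper gives no proof of this lemma at all --- it is imported verbatim from Brylawski~\cite{Brylawski1973201} --- so there is no in-text argument to compare yours against; what you have written is a correct, self-contained derivation of the cited characterization. In the forward direction, sandwiching the partial sums of an intermediate $Q''$ correctly reduces it to a $\{0,1\}$-valued profile $(\varepsilon_i)$ supported on $[j,k)$, and condition (iv), together with the standing hypothesis that $Q'$ is itself a partition (which is what disposes of the boundary indices $j-1$ and $k$, since it forces $q_{j-1}>q_j$ and $q_k>q_{k+1}$), does force that profile to be constant. In the reverse direction, the choice $k=\min\{i>j:\Sigma_i(Q')=\Sigma_i(Q)\}$ correctly yields $q_k>q_{k+1}$, so the single-box move is a genuine partition squeezed between $Q$ and $Q'$, and the concluding dichotomy for (iv) is exhaustive. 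The only place a reader could stumble is the phrase ``re-routing the moved unit to land at $m+1$'': it must be read as keeping the $-1$ at position $k$ and putting the $+1$ at position $m+1$ (i.e., the partition obtained from $Q$ by adding one unit in row $m+1$ and removing one from row $k$), since $q_m>q_{m+1}$ is exactly what allows row $m+1$ to receive a box, while $j\le m\le k-2$ gives strict intermediacy on both sides; read the other way (relocating the $-1$ to position $m+1$) the needed inequality would be $q_{m+1}>q_{m+2}$, which is not assumed. With that reading fixed, the proof is complete and is essentially the standard partial-sum proof of Brylawski's covering relation.
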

%\vspace{-0.5cm}
\begin{corollary}
\label{corollary:equiv-dominance-dev}
Assuming no conflict edges exist (\ie $G^{-}=(V,E)$ with $E=\emptyset$), let $Q,Q'$ be two integer partitions of $n$.
Then, $Q'$ dominates $Q$ if and only if there exist two partitions $P,P'$ of the nodes such that $\Lambda(P') = Q'$, $\Lambda(P) = Q$, and there is a valid sequence of $1$-deviations from $P$ to $P'$.
\end{corollary}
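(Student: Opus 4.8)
The plan is to establish both directions of the equivalence by translating between the combinatorial dominance order and the dynamics of $1$-deviations, using Lemma \ref{dominance-characterization} as the bridge. Throughout, recall that in the uniform case with no conflict edges a $1$-deviation consists of a single node leaving its group (of size $q_k$, becoming $q_k-1$) and joining another group (of size $q_j$, becoming $q_j+1$), and that the node strictly benefits precisely when the group it joins is at least as large as the group it leaves \emph{before} the move, i.e.\ when $q_j \geq q_k - 1$; since both are nonempty parts this is the condition $q_j \geq q_k - 1$. I would first record this elementary observation as the "move rule": a valid $1$-deviation exists from a partition with vector $Q$ producing vector $Q'$ iff $Q'$ is obtained from $Q$ by incrementing some part $q_j$ and decrementing some part $q_k$ with $q_j \geq q_k - 1$ (and $q_k \geq 1$).

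For the ``if'' direction (dominance $\Rightarrow$ existence of a deviation sequence), I would argue that it suffices to handle the covering relation and then chain: since the dominance order is a lattice (Brylawski), if $Q'$ dominates $Q$ there is a saturated chain $Q = R_0 \lessdot R_1 \lessdot \cdots \lessdot R_m = Q'$ of coverings. For a single covering step $R_{t} \lessdot R_{t+1}$, Lemma \ref{dominance-characterization} gives indices $j,k$ with $q_j' = q_j + 1$, $q_k' = q_k - 1$, all other parts fixed, and either $k = j+1$ or $q_j = q_k$. In either case the move rule is satisfied: if $q_j = q_k$ trivially $q_j \geq q_k - 1$; if $k = j+1$ then $q_j \geq q_k$ by the ordering of parts, so again $q_j \geq q_k - 1$. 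Hence each covering step is realizable by a single $1$-deviation, and composing them realizes $Q \to Q'$. One must also exhibit concrete node-partitions $P, P'$ realizing the endpoints; this is routine — fix any partition $P$ of the $n$ labelled nodes with $\Lambda(P) = Q$, carry out the sequence of moves, and let $P'$ be the result, which by construction has $\Lambda(P') = Q'$.

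For the ``only if'' direction (deviation sequence $\Rightarrow$ dominance), I would show that a single $1$-deviation can only increase the partition vector in the dominance order, and then use transitivity of dominance along the sequence. So suppose $P \to P'$ is one $1$-deviation with vectors $Q, Q'$. The move rule says $Q'$ is $Q$ with $q_k \to q_k - 1$ and $q_j \to q_j + 1$ where $q_j \geq q_k - 1$, i.e.\ the decremented part was no larger than one plus the incremented part; equivalently, mass moved from a smaller (or equal-up-to-one) part to a larger (or equal) part. The cleanest way to see $Q'$ dominates $Q$ is to note that sorting $Q'$ back into decreasing order only helps the prefix sums, and that moving one unit of mass ``upward'' among the sorted parts weakly increases every prefix sum $\sum_{i \leq \ell} q_i$: for each $\ell$, the prefix sum changes by $+1$, $-1$, or $0$, and a short case analysis on whether $j$ or $k$ lies in the prefix — using $q_j \geq q_k - 1$ to rule out the bad case — shows it is never $-1$ net. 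Hence $Q' \succeq Q$ for each step, and by transitivity the full sequence gives dominance of the final vector over the initial one.

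The main obstacle I anticipate is the ``only if'' direction, specifically the bookkeeping when the post-move vector must be re-sorted: after incrementing $q_j$ and decrementing $q_k$ the sequence may no longer be weakly decreasing, so one cannot directly compare prefix sums index by index. The careful point is to argue that re-sorting can only raise prefix sums (a standard rearrangement fact) and to verify that before re-sorting the offending prefix sum never drops — which is exactly where the incentive condition $q_j \geq q_k - 1$ is used. It would be tempting but incorrect to ignore the re-sorting subtlety; I would make sure to state the monotonicity-of-prefix-sums-under-sorting lemma explicitly (or reduce to the covering characterization in this direction too, observing that any single deviation's effect on the vector can be factored through a chain of coverings). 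The ``if'' direction is comparatively mechanical once Lemma \ref{dominance-characterization} and the lattice/saturated-chain property are invoked.
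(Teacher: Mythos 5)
Your overall architecture is exactly the paper's: realize a single covering step of the dominance lattice as one node move via Lemma~\ref{dominance-characterization} and chain along a saturated chain for one direction; show a single $1$-deviation can only raise prefix sums (modulo re-sorting) and use transitivity for the other. The re-sorting subtlety you flag is real, and the paper handles it by the same device you allude to (reorder so that the receiving group is the \emph{first} of its size and the donating group the \emph{last} of its size, then compare prefix sums index by index).

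The one genuine flaw is your ``move rule.'' In the uniform case the mover's utility goes from $q_k-1$ to $q_j$ (it joins $q_j$ other nodes), so a strict improvement requires $q_j > q_k-1$, i.e.\ $q_j \geq q_k$ --- not $q_j \geq q_k-1$. Your sentence is even internally inconsistent: ``at least as large as the group it leaves'' is $q_j\geq q_k$, yet you write $q_j\geq q_k-1$. Stated as an equivalence the rule is false: when $q_j=q_k-1$ the mover's utility is unchanged, so no valid deviation exists, although your rule asserts one does. This matters in your forward direction, where you certify validity of each covering move by checking only $q_j\geq q_k-1$; the argument is rescued only because both cases of Lemma~\ref{dominance-characterization} ($q_j=q_k$, or $k=j+1$ whence $q_j\geq q_{j+1}=q_k$) in fact give the correct inequality $q_j\geq q_k$, which is what you should verify. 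The backward direction survives as written, since $q_j\geq q_k-1$ is a true necessary condition and even the borderline case $q_j=q_k-1$ leaves the multiset of group sizes (hence every prefix sum) unchanged. So the proof is repairable by a one-line correction of the move rule, but as stated it leans on a false equivalence at the point where validity of the constructed deviations must be established.
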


\begin{figure}[t]
\centerline{%\includegraphics[width=0.5\linewidth]{}
\includegraphics[width=0.65\linewidth]{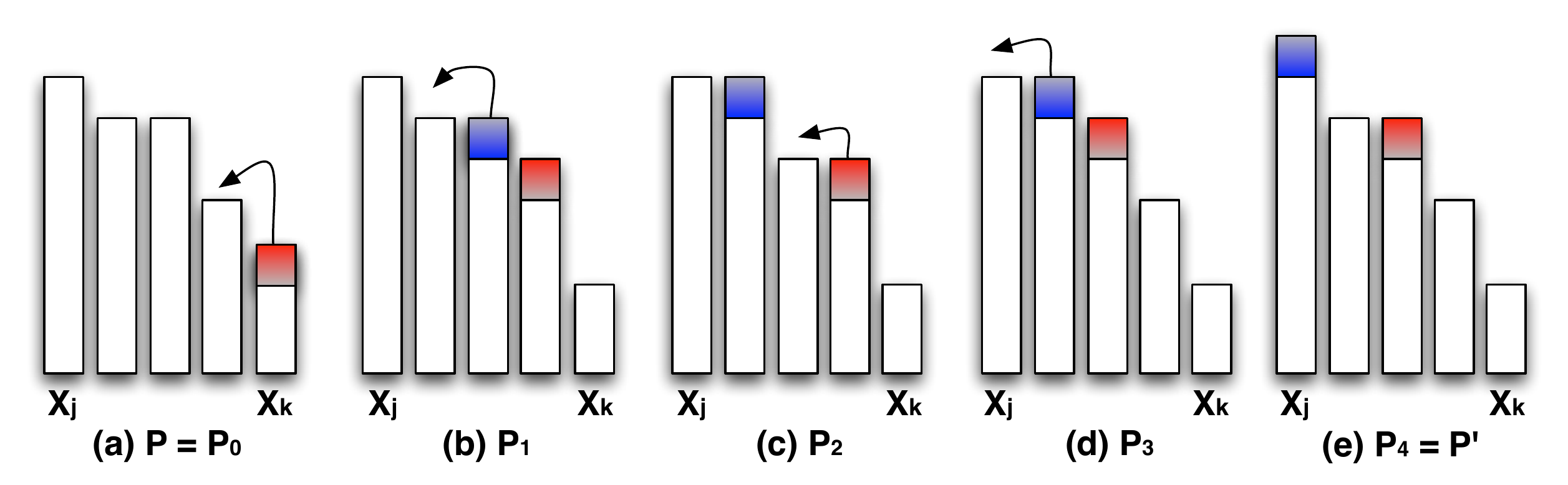}}
%\vspace{-0.3cm}
\caption{An example of decomposition for a $1$-deviation from group $X_k$ to group $X_j$.}
%\vspace{-0.25cm}
\label{fig:sandpile}
%\vspace{-0.3cm}
\end{figure}

In other words, any sequence of $1$-deviations, from a partition $P$ to another partition $P'$, can be decomposed into more elementary $1$-deviations, satisfying that if there is such an elementary deviation from the partition $P_i$ to the partition $P_{i+1}$, then the integer partition $\Lambda(P_{i+1})$ covers $\Lambda(P_i)$. Note that, by doing so, we may get another final partition $P'' \neq P'$ but it will have the same partition \emph{vector}, as seen on an example in Figure~\ref{fig:sandpile}.  
%For our purpose, since the new sequence has a size that is larger than the original one, and it is a sequence of $1$-deviations that %is also valid, which proves (see appendix).
% We call a \emph{chain} any  sequence $q_1, q_2, \ldots, q_p$, such that for every $2 \leq i \leq p$ we have $q_{i+1}$ dominates $q_{i}$.
% The size of such a chain is the number $p$ of partitions in the chain.
% We can now prove the main result of this section:
%
% \begin{lemma}[\cite{Greene:1986:LCL:11002.11003}]
% \label{lemma:max-chain}
% Given an integer $n$, it can be written as $n = \frac{m(m+1)}{2} + r$ in a unique way, such that $0 \leq r \leq m$.
% Then the longest chain of integer partitions of $n$ has a size $\frac{m(m+1)}{2} -  \frac{(m-r)(m+1-r)}{2} + \frac{r(r-1)}{2} \sim \frac{2}{3} n\sqrt{n}$.
% \end{lemma}
\begin{theorem}
\label{lem:tight:k=1}
Let $m$ and $r$ be the unique non negative integers such that $n = \frac{m(m+1)}{2} + r$, and $0 \leq r \leq m$.
We have $L(1,n) = 2 \binom{m+1}{3} + mr$.
This implies that $L(1,n) \sim \frac{2}{3} n\sqrt{n}$ as $n$ gets large.
\end{theorem}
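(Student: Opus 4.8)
The plan is to re-express $L(1,n)$ as the height of the dominance lattice on partitions of $n$ and then determine that height. By Lemma~\ref{lem:graphe-vide} it is enough to bound $1$-deviation sequences on the conflict-free graph, where a node's utility equals the size of its group minus one; there a single $1$-deviation moves a node from a group of size $a$ to one of size $b\ge a$, so it replaces the partition vector $\Lambda(P)$ by a strictly larger integer partition in the dominance order, and by Lemma~\ref{dominance-characterization} the move is a covering step precisely when $b=a$ or when $a,b$ are consecutive among the part sizes. Conversely every covering step of the lattice is performed by exactly one such $1$-deviation on any partition realizing the smaller vector, and by Corollary~\ref{corollary:equiv-dominance-dev} together with the decomposition of Figure~\ref{fig:sandpile} an arbitrary $1$-deviation sequence can be rewritten --- ending at a partition with the same vector and with length no smaller --- so that every move is a covering step. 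Since the all-singletons start is the bottom $\hat 0=1^n$ and the unique $1$-stable partition of the conflict-free graph is the single group $\hat 1=(n)$, it follows that $L(1,n)$ equals the length of a longest saturated chain from $1^n$ to $(n)$ in the dominance lattice.

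Next I would set up a potential. Let $\Phi(Q)=\sum_i\binom{q_i}{2}$; from Lemma~\ref{dominance-characterization}, a covering step moving one box from a part of size $t$ to a part of size $s\ge t$ changes $\Phi$ by exactly $s-t+1\ge 1$, with equality $s=t$ exactly for ``block moves'' (folding the bottom box of a block of equal parts onto its top). Hence any saturated chain of length $T$ satisfies $T=\binom n2-C$, where $C=\sum(s-t)\ge 0$ is the total cost, so the longest chain is the one minimizing $C$. This already gives $T\le\binom n2$, and slightly more sharply $T$ is at most the number of distinct values of $\sum_i q_i^2$ over partitions of $n$ minus one; but that refinement is not attained once $n\ge 7$, so the exact value needs more.

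For the lower bound I would exhibit one explicit chain: starting from $1^n$, run block moves as long as possible (always folding the bottom box of the longest block), which reaches a ``staircase'' partition --- $\delta_m=(m,m-1,\dots,1)$ when $r=0$, and $\delta_m$ augmented by $r$ extra boxes kept as balanced as possible otherwise --- and then descend to $(n)$ by the remaining covering steps; a direct $\Phi$-accounting shows this chain has length exactly $2\binom{m+1}{3}+mr$ (when $r=0$ the two halves use $\binom{m+1}{3}$ moves each and are conjugate to one another). For the matching upper bound --- $C\ge\binom n2-2\binom{m+1}{3}-mr$ for \emph{every} saturated chain --- I would argue by induction on $m$, tracking the largest part: a chain must eventually grow it past the value that block moves alone can sustain, and a careful accounting of the cost unavoidably spent completing the large parts (equivalently, showing any chain can be rerouted through the staircase without shortening) yields the bound; this is precisely Greene and Kleitman's theorem on longest chains in the majorization order, which may also be cited. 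The stated asymptotic then follows on substituting $m=\sqrt{2n}+O(1)$ into the closed form. The delicate step is this upper bound: the naive potential $\Phi$ yields only $\binom n2$, and no separable potential $\sum_i h(q_i)$ can do better than quadratic either, so one genuinely has to understand the global structure of longest chains and quantify why each of them must waste at least $\binom n2-2\binom{m+1}{3}-mr$ units of $\Phi$.
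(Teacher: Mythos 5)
Your proposal is correct and follows essentially the same route as the paper: both reduce $L(1,n)$ to the length of a longest (necessarily covering) chain from $1^n$ to $(n)$ in the dominance lattice via Lemma~\ref{lem:graphe-vide} and Corollary~\ref{corollary:equiv-dominance-dev}, and both ultimately rely on the Greene--Kleitman theorem for the exact value $2\binom{m+1}{3}+mr$. Your additional material (the potential $\Phi(Q)=\sum_i\binom{q_i}{2}$ and the explicit staircase chain) is a nice elaboration of that cited result but does not change the underlying argument, since you defer to the same citation for the delicate upper bound.
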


Interestingly, we prove that $2$-deviations have a similar action on partition vectors, which implies:
%Following a similar line of argument using partition vectors we can further prove:
\begin{theorem}
\label{lem:tight:k=2}
$L(2,n) = L(1,n)$.
\end{theorem}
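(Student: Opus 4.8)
The plan is to show that allowing $2$-deviations does not lengthen the longest deviation sequence beyond what $1$-deviations already achieve, by establishing the reverse inequality $L(2,n) \leq L(1,n)$ (the inequality $L(2,n) \geq L(1,n)$ is trivial since every $1$-deviation is also a $2$-deviation). By Lemma~\ref{lem:graphe-vide}, the extremal sequence can be taken in the conflict-free graph $G^-=(V,E)$ with $E=\emptyset$, so I will work entirely in the dominance lattice on integer partitions of $n$ and forget about the graph structure. The key claim I need is: \emph{if a single $2$-deviation transforms a partition $P_i$ into $P_{i+1}$, then the integer partition $\Lambda(P_{i+1})$ strictly dominates $\Lambda(P_i)$.} Granting this, any sequence of $2$-deviations of length $\ell$ yields a strictly increasing chain of length $\ell$ in the dominance lattice; since each covering step in the lattice is realizable by a $1$-deviation (Corollary~\ref{corollary:equiv-dominance-dev}) and the lattice is graded, the longest such chain has exactly $L(1,n)$ covering steps, which bounds $\ell$ from above. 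So the crux is the dominance claim for $2$-deviations.

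To prove the claim, I would do a small case analysis on what a $2$-deviation can look like. Write $S=\{a,b\}$ for the deviating pair. Because each member strictly increases her utility (here utility of a node equals size of its group minus one), both $a$ and $b$ must land in a group that is strictly larger, after the move, than each of their old groups. There are essentially three configurations: (1) $a$ and $b$ start in the same group and move together to another (possibly empty) group; (2) $a$ and $b$ start in different groups and move to a third group that contained neither; (3) $a$ and $b$ start in different groups and one of them moves into the other's group (so the pair ends in a group that did contain $b$, say). In every case the multiset of group sizes changes in a controlled way: some sizes decrease by $1$ or $2$, and exactly one size increases by $1$ or $2$, with the net transfer balanced. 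I would then verify that the resulting partition vector dominates the old one by checking the prefix-sum inequalities $\sum_{j\leq i} q'_j \geq \sum_{j\leq i} q_i$ directly; the point is that a $2$-deviation only ever \emph{merges mass upward} into a single larger group, which can never decrease a prefix sum of the sorted sequence. It is also worth recording (as the theorem statement hints) that $2$-deviations are strictly \emph{more} constrained than arbitrary dominance covers — they move at most two nodes — so the reachable chains are a subset of all saturated chains, which is exactly why equality, not just inequality, holds.

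The main obstacle I anticipate is handling configuration (3) cleanly, and more generally making the ``mass moves upward'' intuition into a rigorous prefix-sum argument when the two source groups and the destination group have sizes that interleave arbitrarily in the sorted order. One has to be careful that after removing one node from a group of size $s$ and adding two nodes to a group of size $t$ (with the constraints $t+1 > s$ type inequalities forcing $t \geq s-1$, etc.), the re-sorted vector still dominates — a naive exchange argument can fail if one forgets to re-sort. The safe route is to express domination via the characterization "$Q'$ dominates $Q$ iff $Q'$ is obtained from $Q$ by a sequence of elementary moves each raising one part and lowering a weakly-later part," i.e.\ iff $Q'$ lies above $Q$ in the lattice, and to exhibit such a decomposition of the single $2$-deviation into at most two of Brylawski's covering moves (Lemma~\ref{dominance-characterization}). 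Once that decomposition is produced for each of the three configurations, the theorem follows immediately, and in fact one sees that $L(2,n)=L(1,n)$ with the exact value already given by Theorem~\ref{lem:tight:k=1}.
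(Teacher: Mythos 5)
Your proposal is correct and follows essentially the same route as the paper: reduce to the empty conflict graph, do a case analysis on the two-node deviation (same source group, disjoint sources, or one joining the other) to show that each $2$-deviation produces a partition vector that strictly dominates the old one and is in fact reachable by $1$-deviations, and then bound the sequence length by the maximum chain length in the dominance lattice from Theorem~\ref{lem:tight:k=1}. One inessential slip: the dominance lattice is not graded for $n\geq 7$, but your argument only needs that any strictly increasing chain has length at most the maximum chain length, which holds regardless.
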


\subsection{Lower bounds for k $>$ 2}
\label{subsec-superpolynomial-time}

Recall any sequence of $3$-deviations uses at most $O(n^{3})$ steps~\cite{kleinberg2013information}, by another application of the potential function method. In fact, before our work nothing proved that $L(3,n)$ is strictly larger than $L(2,n)$. We answer this question as we prove a quadratic lower bound when $k=3$ using a new technique.
The method heavily relies on specific sequences of deviations that we call \emph{cascades} discussed below. 

\begin{theorem}
\label{thm:lower-bound-3}
$L(3,n) = \Omega(n^{2})$.
\end{theorem}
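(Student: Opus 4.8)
The goal is a lower bound $L(3,n) = \Omega(n^2)$, i.e. a family of uniform coloring games on $n$ nodes admitting sequences of $3$-deviations of quadratic length. By Lemma~\ref{lem:graphe-vide} it suffices to work in the complete-friendship graph (no conflict edges), so the state space is just the set of partition vectors of $n$ and a $3$-deviation moves at most three nodes, all into one common group. The plan is to build the sequence out of reusable \emph{cascade gadgets}: a cascade is a chain of groups $X_1, X_2, \ldots, X_t$ such that moving a small packet of nodes into $X_1$ triggers a node to leave $X_1$ for $X_2$, which triggers a node to leave $X_2$ for $X_3$, and so on down the chain. The point of allowing $k=3$ rather than $k=2$ is that with three movers one can \emph{simultaneously} feed a group and carry its displaced element onward, so that a single externally-supplied deviation propagates through $\Theta(t)$ internal deviations before the configuration stabilizes.

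**Key steps, in order.** First I would give the precise local rule for a single cascade step: identify a configuration of consecutive group sizes (say sizes arranged so that $X_i$ has one more node than $X_{i+1}$, or a comparable ``staircase'') under which a $3$-set consisting of one node already in $X_i$, one node in $X_{i+1}$, and possibly one ``seed'' node, all strictly gain by jumping together into $X_i$; verify each mover's utility (which is just its new group size minus one) strictly increases. Second, I would chain these steps: show that after the step at level $i$ completes, the sizes around level $i+1$ are left in exactly the staircase shape needed to trigger the step at level $i+1$, so the cascade of length $t$ runs to completion using $\Theta(t)$ deviations, all legal and monotone in the lexicographic/dominance order (consistency with convergence is automatic from Lemma~\ref{lem:convergence-k-stable}). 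Third, I would show the gadget is \emph{rechargeable}: after a cascade of length $\Theta(t)$ using $\Theta(t)$ nodes, a small perturbation (another seed) re-establishes a staircase and lets a second cascade run. Finally, I would do the counting: partition the $n$ nodes into $\Theta(n / t)$ disjoint cascade gadgets each of length $t = \Theta(n)$ — or, more simply, use a single gadget on $\Theta(n)$ nodes that can be recharged $\Theta(n)$ times — so that the total number of $3$-deviations is $\Theta(n)\cdot\Theta(n) = \Omega(n^2)$. One must check the gadgets do not interfere (disjoint node sets, and since there are no enemies, interference can only help, never block, a deviation — but blocking is the only thing we'd need to rule out, and disjointness handles scheduling).

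**Main obstacle.** The delicate part is the bookkeeping that makes the cascade both \emph{triggerable} and \emph{self-propagating}: I need the sizes along the chain to be tuned so that (a) the incoming $3$-deviation is strictly profitable for all three of its members, (b) the displaced node from $X_i$ strictly profits by moving into $X_{i+1}$ (so this is not merely allowed but forced to be part of \emph{some} legal run), and (c) the shape is reconstituted at level $i+1$. These three constraints pull in opposite directions — (a) wants $X_i$ small-ish, (b) wants $X_{i+1}$ at least as attractive as $X_i$ was — and reconciling them is where the specific integer arithmetic lives. A secondary subtlety is making the gadget rechargeable enough times: each recharge must cost only $O(1)$ auxiliary nodes and must reset the staircase, so I expect to need a second ``return'' sub-cascade that slides nodes back, or to lay out $\Theta(n)$ independent length-$\Theta(n)$ staircases fed by a common seed pool; the latter is cleaner and I would present that version, since disjointness then trivializes all the non-interference checks and only the single-cascade analysis of (a)--(c) remains.
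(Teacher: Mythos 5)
Your setup is right — by Lemma~\ref{lem:graphe-vide} one works in the empty conflict graph, and the paper's proof is indeed built from cascades of $3$-deviations over a staircase of group sizes. But there are two genuine gaps. First, the local mechanism you describe does not work under the uniform utility $f_u(P)=|X(u)|-1$: pushing a packet \emph{into} $X_1$ makes $X_1$ larger and so gives its members \emph{less} incentive to leave, never more; nothing is ``displaced.'' The cascade that actually works is different: a $3$-deviation takes one node from each of three groups of size $p-1$ and drops them into a group of size $p-3$ (each mover's utility goes from $p-2$ to $p-1$), and in doing so it manufactures exactly the three groups of size $p-2$ that serve as sources for the next deviation, one size level down. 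The chain is driven by supply of source groups, not by displacement.

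Second, and more seriously, your counting does not close. A cascade of length $t$ needs groups of $t$ distinct sizes, hence $\Omega(t^2)$ nodes, so $t=O(\sqrt{n})$; ``$\Theta(n/t)$ disjoint gadgets each of length $t=\Theta(n)$'' is internally inconsistent, and disjoint gadgets contribute \emph{additively}, giving only $\Theta(n)$ deviations in total. The single-gadget-plus-recharge version fares no better: by Lemma~\ref{lem:convergence-k-stable} the partition vector strictly increases lexicographically at every step, so a gadget can never be reset to its previous state — each rerun must be a \emph{shifted} copy of the cascade, and one level of shifting yields only $\Theta(t)$ reruns, i.e.\ $\Theta(t^2)=\Theta(n)$ deviations, which does not even match $L(1,n)=\Theta(n^{3/2})$. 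The missing idea is recursion: the paper runs the length-$t$ cascade $\Theta(t)$ times at shifted offsets to form a second-level sequence $\gamma^2$, then repeats this construction twice more ($\gamma^3$, $\gamma^4$), nesting to depth four so that $\Theta(t^4)=\Theta(n^2)$ deviations fit on $n=\Theta(t^2)$ nodes. Verifying that all the shifted, nested copies remain simultaneously valid (a constant stock of groups of every size suffices at every intermediate moment) is where the real work of the proof lies, and your plan has no substitute for it.
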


The above result proves for the first time that deviations of multiple players can delay convergence and that the gap between $k=2$ and $k=3$ obtained from potential function is indeed justified. Using a considerable refinement of the cascade technique, we are able to prove a much more significant result: that $4$-deviations are responsible for a sudden complexity increases, as we now prove that no polynomial bounds exist for $L(4,n)$.

\begin{theorem}
\label{lem:lower-bound-4}
$L(4,n) = \Omega(n^{\Theta(\ln(n))})$.
\end{theorem}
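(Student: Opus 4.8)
The plan is to construct, for each integer $d \geq 1$, a uniform coloring game on $\Theta(\mathrm{poly}(d))$ nodes that admits a sequence of $4$-deviations of length $d^{\Theta(d)}$; setting $n = \mathrm{poly}(d)$ then gives $d = \Theta(\log n / \log\log n)$ and a lower bound of $n^{\Theta(\log n)}$ as claimed. By Lemma~\ref{lem:graphe-vide} we may work on the empty conflict graph, so the entire construction is really a statement about schedules of $4$-deviations on a complete friendship graph: the only freedom is in the sequence of groups we merge/split, and the constraint is that each of the (at most four) moving players must strictly increase her group size. The key conceptual tool, advertised in the text, is a \emph{cascade}: a gadget (a designated collection of groups whose sizes are in a controlled configuration) which, once ``triggered'' by an external deviation that bumps one of its groups by one, can be driven through a long internal sequence of $4$-deviations, at the end of which it has reset to (a translate of) its original configuration while having emitted one triggering bump to a higher-level gadget. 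The point of using $k=4$ rather than $k=3$ is that a $4$-deviation gives exactly enough simultaneous degrees of freedom to move two players out of one group and two into another (or some $2+1+1$ split) so that the cascade gadget can both absorb an incoming bump and regenerate itself; with only $3$ players this bookkeeping provably cannot be closed (consistent with the $O(n^3)$ potential bound for $k=3$).

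Concretely, I would proceed in the following steps. First, isolate the ``cascade lemma'': define a level-$1$ gadget on $O(1)$ nodes and exhibit an explicit $4$-deviation sequence of length $\Omega(1)$ (a small constant, say $c$) that takes its canonical configuration to a shifted copy, consuming one unit of ``potential'' supplied from outside and delivering one unit outward. Second, define a level-$\ell$ gadget recursively as a level-$(\ell-1)$ gadget together with a constant number of extra groups (a ``driver'' and a ``buffer'') arranged so that: each of the $d_\ell$ bumps that the level-$\ell$ driver can still emit triggers a full run of the level-$(\ell-1)$ gadget, which in turn resets and hands control back to the level-$\ell$ driver for its next bump. If $T(\ell)$ denotes the length of a maximal sequence internal to a level-$\ell$ gadget and $b_\ell$ the number of times its sub-gadget gets triggered, then $T(\ell) \geq b_\ell \cdot T(\ell-1)$, and choosing the gadget sizes so that $b_\ell = \Theta(\ell)$ (or any slowly growing quantity) while the node count of a level-$\ell$ gadget is $\sum_{j\le\ell} O(j) = O(\ell^2)$ yields $T(d) \geq \prod_{j=1}^{d} \Theta(j) = d^{\Theta(d)}$ on $N = O(d^2)$ nodes. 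Third, verify feasibility: at \emph{every} step of the global schedule, the $\le 4$ players designated to move must each strictly gain, which (since utility is group size minus one) reduces to checking a finite list of size inequalities between the source and target groups of each elementary move in the cascade lemma and at each recursion interface. Fourth, translate $N = O(d^2)$ into the stated asymptotics: $d = \Theta(\sqrt N)$ is too weak, so one instead balances so that a level-$d$ gadget uses $N = d^{O(1)}$ nodes and the sequence length is $d^{\Theta(d)} = 2^{\Theta(d\log d)} = N^{\Theta(\log N / \log\log N)} = N^{\Theta(\log N)}$ up to the constant hidden in $\Theta$; padding the remaining nodes with inert singletons and appealing again to $L(4,\cdot)$ being monotone and maximized on the empty conflict graph gives $L(4,n) = \Omega(n^{\Theta(\ln n)})$.

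The main obstacle is the cascade lemma and its recursive composition: one must engineer the level-$\ell$ gadget so that after a sub-gadget run the \emph{entire} configuration returns \emph{exactly} to a translated canonical form (so the next trigger can fire), not merely approximately, while simultaneously never allowing an illegal move in which some of the four deviators would weakly rather than strictly improve, and — crucially — while never inadvertently creating a \emph{shortcut} deviation that a different scheduler could use to terminate early (that is fine for a lower bound on the \emph{longest} sequence, but one must make sure the long sequence itself is legal at each step). Getting the size budgets to line up so that each of the nested resets is exactly implementable with a single $4$-deviation, and so that the buffers absorbing the ``carry'' between levels do not themselves blow up the node count beyond $\mathrm{poly}(d)$, is the delicate combinatorial heart of the argument; everything else is bookkeeping on the partition vector and on the recursion $T(\ell) \geq b_\ell\,T(\ell-1)$.
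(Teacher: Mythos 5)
Your high-level plan --- work on the empty conflict graph via Lemma~\ref{lem:graphe-vide}, build a recursive ``cascade'' gadget whose run length obeys a multiplicative recursion $T(\ell) \geq b_\ell\, T(\ell-1)$, and read off a super-polynomial bound --- is indeed the skeleton of the paper's argument. But your quantitative parameters cannot work, and this is not a detail. You propose $d$ levels of recursion, each adding only $O(\ell)$ nodes and contributing a multiplier $b_\ell = \Theta(\ell)$, for a total of $d^{\Theta(d)}$ deviations on $N = O(d^2)$ nodes. Since every deviation strictly increases the partition vector, any sequence has length at most the number of integer partitions of $n$, which the paper records as $p_n = \Theta(e^{\pi\sqrt{2n/3}}/n) = e^{O(\sqrt{n})}$. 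With $N = O(d^2)$ this cap is $e^{O(d)}$, strictly smaller than your claimed $d^{\Theta(d)} = e^{\Theta(d\log d)}$, so the recursion as parameterized is provably unachievable. The attempted repair at the end is arithmetically incoherent: if $N = d^{O(1)}$ then $d^{\Theta(d)} = e^{\Theta(N^{c}\log N)}$ for some constant $c>0$, which is not $N^{\Theta(\log N)} = e^{\Theta(\log^2 N)}$; your chain $d^{\Theta(d)} = N^{\Theta(\log N/\log\log N)} = N^{\Theta(\log N)}$ would require $d$ to be \emph{polylogarithmic} in $N$, contradicting $N = \mathrm{poly}(d)$.

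The missing structural idea is that the \emph{span} of the gadget (the window of group sizes it touches) necessarily roughly doubles at each level of nesting, so only $\Theta(\log n)$ levels fit --- each contributing a multiplier polynomial in $n$ --- rather than polynomially many levels with linear multipliers. Concretely, in the paper level $1$ is a cascade of $t^2$ consecutive $4$-deviations of signature $(1,-4,4,0,-1)$ plus corrective $1$-deviations, occupying a window of size $s_1 = 2t^2+2$; level $i+1$ sums $a_i = \Theta(s_i/2^i)$ shifted copies of level $i$ and restores a canonical form (the ``Good Property'') with $1$-deviations, forcing $s_{i+1} \leq 2 s_i$. Running $T = \Theta(\log t)$ levels requires $L = \Theta(t^3)$ group sizes and $n = \Theta(t^6\log t)$ nodes, with $c$ groups of every size $1,\dots,L$ so the whole schedule stays $c$-balanced (i.e., legal), and each level multiplies the length by $\Theta(t)$, giving $t^{\Theta(\log t)} = n^{\Theta(\log n)}$. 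Separately, you leave the actual cascade gadget --- the explicit deviation vectors, the symmetry and balancedness invariants that guarantee all four movers strictly gain at every step, and the exact reset to a translated canonical configuration --- entirely unconstructed, and you yourself flag it as the ``delicate combinatorial heart''; that construction occupies most of the paper's appendix, so as it stands the proposal is a plan whose stated numerology is impossible rather than a proof.
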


The rest of this section is devoted to provide a brief sketch of the various steps used in the proof for Theorem~\ref{lem:lower-bound-4}. The actual proof uses these claims with additional observations to build a rather technical recursion on the largest sequence that occupies a large part of the appendix. As before, we assume no conflict edge exists since we know this provides a bound on the longest sequence that is attained.
% Let us consider the conflict graph $G^{\emptyset}=(V,E)$ such that $|V|=n$ and $|E|=0$.
Our approach relies on a \emph{vector representation} of partitions and deviations.
Indeed, recall any partition $P$ is represented by a vector $\Lambda=(\lambda_{n}, \ldots, \lambda_{i}, \ldots, \lambda_{1})$ where $\lambda_{i}$ represents the number of groups of size $i$,  $1 \leq i \leq n$.
The initial partition contains $n$ groups of size $1$ and the associated vector is $(0,\ldots,0,n)$.

Given a deviation from a partition $P$ to another partition $P'$, we then define the vector representation of this deviation as the difference $\Lambda(P') - \Lambda(P)$.
In particular, as we only consider $4$-deviations that consist in creating $1$ group of size $p$, removing $4$ groups of size $p-1$, creating $4$ groups of size $p-2$, and removing $1$ group of size $p-4$, with $p$ such that $5 \leq p \leq n/5$, we can formalize the vector representation $\delta[p]$ of such $4$-deviations in the following way:
$\delta[p]=\Lambda(P')-\Lambda(P)=(0,\ldots,0,1,-4,4,0,-1,0,\ldots,0)$.

We similarly formalize the $1$-deviations that we will use in our proofs as follows.
Recall that a $1$-deviation consists in creating $1$ group of size $p$, removing $1$ group of size $p-1$, removing $1$ group of size $q+1$, and creating $1$ group of size $q$ (if $q > 0$), for any $p, q$ such that $p \geq q + 2 \geq 2$ and $p + q \leq n$.
This yields either the vector representation $\alpha[p,q]= (0,\ldots,0,1,-1,0, \ldots, 0,-1,1,0,\ldots,0)$ or $\alpha[p,q]=(0,\ldots,0,1,-2,1,0,\ldots,0)$.
We also consider $1$-deviations such that $\alpha[p]=(0,\ldots,0,1,-1,0, \ldots, 0,-1)$ and such that $\alpha[2]=(0,\ldots,0,1,-2)$.

We are now ready to present a sequence for the coloring game with a super-polynomial size.
Without loss of generality, we assume that the number of nodes is $n = cL(L+1)/2$.
The values $c$ and $L$ will be defined later and we assume for the moment that $c$ is sufficiently large.
%If in truth $cL(L+1)/2 < n < (c+1)L(L+1)/2$, then we only consider $cL(L+1)/2$ nodes.
%More precisely, the $n - cL(L+1)/2$ other nodes will never move and will stay in their $n - cL(L+1)/2$ respective singleton groups.
%Thus we can consider the vectors and partitions induced by the $cL(L+1)/2$ considered nodes.
We first do a sequence of $1$-deviations to get the partition $P^{0}$ such that $\Lambda^{0}=(0,\ldots,0,\lambda_{L}=c,\ldots,\lambda_{1}=c)$ is the  vector of $P^{0}$.
Using our notations, we observe that to get the partition $P^{0}$ we do the sequence defined by ($\sum_{j=2}^{L} \alpha[j]$) exactly $c$ times.
Thus, one can check that $\Lambda^{0} = (0,\ldots,0,n) + c\sum_{j=2}^{L} \alpha[j]$.

In the sequel, we will denote any sequence of deviations by $\phi[l]$, where $l$ denotes the greatest index $j$ such that $\phi_j \neq 0$.
Equivalently, $l$ denotes the maximum size of a group having users involved into the deviation $\phi[l]$.
Observe that given any sequence of deviations defined by a vector $\phi[l]$, the vector $\phi[l-i]$, $i > 0$, represents the same sequence of deviations in which any $4$-deviation $\delta[p]$ is replaced by $\delta[p-i]$ and any $1$-deviation involving groups of sizes $p$ and $q$, now involves groups of sizes $p-i$ and $q-i$.
Let $t > 0$ and $T > 0$ be such that $2^{T-1} (2t^{2}+2) \leq L$.
%\begin{claim}
%\label{claim:premiere-sequence}
%There exists a sequence of $1$-deviations and $4$-deviations from $P^{0}$ to a partition $P^{1}$ with vector $\Lambda^{1}$, where $\zeta^{1}[L] = \Lambda^{1}-\Lambda^{0}$ is such that for all $j = 1, \ldots, n$,
%$\zeta^{1}[L]_{L} = 1$,
%$\zeta^{1}[L]_{L-2} = -1$,
%$\zeta^{1}[L]_{L-3} = -1$,
%$\zeta^{1}[L]_{L-t^{2}} = 1$,
%$\zeta^{1}[L]_{L-t^{2}-1} = 1$,
%$\zeta^{1}[L]_{L-2t^{2}+2} = -1$,
%$\zeta^{1}[L]_{L-2t^{2}+1} = -1$,
%$\zeta^{1}[L]_{L-2t^{2}-1} = 1$,
%and $\zeta^{1}[L]_{j} = 0$ otherwise.
%%\begin{equation*}
%%\begin{cases}
%%\zeta^{1}[L]_{j} = 0 & j = L+1, \ldots, n \\
%%\zeta^{1}[L]_{L} = 1 \\
%%\zeta^{1}[L]_{L-1} = 0 \\
%%\zeta^{1}[L]_{L-2} = -1 \\
%%\zeta^{1}[L]_{L-3} = -1 \\
%%\zeta^{1}[L]_{j} = 0 & j = L-t^{2}+1, \ldots, L-4 \\
%%\zeta^{1}[L]_{L-t^{2}} = 1 \\
%%\zeta^{1}[L]_{L-t^{2}-1} = 1 \\
%%\zeta^{1}[L]_{j} = 0 & j = L-2t^{2}+3, \ldots, L-t^{2}-2 \\
%%\zeta^{1}[L]_{L-2t^{2}+2} = -1 \\
%%\zeta^{1}[L]_{L-2t^{2}+1} = -1 \\
%%\zeta^{1}[L]_{L-2t^{2}} = 0 \\
%%\zeta^{1}[L]_{L-2t^{2}-1} = 1 \\
%%\zeta^{1}[L]_{j} = 0 & j = 1, \ldots, L-2t^{2}-2.
%%\end{cases}
%%\end{equation*}
%\end{claim}

\medskip
To go further with our vector approach, we need to check whether a given deviation is valid.
We actually remark that a deviation $\phi[l]$ is valid for the partition $P$ if there is no negative value in the vector $\Lambda(P) + \phi[l]$.
This yields the notion of \emph{balanced sequence}.

%To go further in the study of our variation vectors $\zeta$, we define the notion of \emph{balanced sequence}.

\begin{definition}
Given any integer $h > 0$, a sequence of deviations from a partition $P$ is $h$-balanced if, and only if, $\zeta = \Lambda' - \Lambda$ is such that $\zeta_{j} \geq -h$ for all $j$, $1 \leq j \leq n$, where $\Lambda$ is the partition vector of $P$ and $\Lambda'$ is the partition vector of any intermeriady partition obtained once one or several deviations in the sequence have been applied.
\end{definition}
%\begin{definition}
%Given any integer $h > 0$, a sequence of deviations from a partition $P$ to a partition $P'$ is $h$-balanced if, and only if, $\zeta = \Lambda' - \Lambda$ is such that $\zeta_{j} \geq -h$ for any $j$, $1 \leq j \leq n$, where $\Lambda$ and $\Lambda'$ are the vectors of partitions $P$ and $P'$, respectively.
%\end{definition}

Intuitively, it suffices to have $h$ users in every group of the initial partition so that a $h$-balanced sequence of deviations from this partition is valid.
In particular if a sequence of deviations from $P^{0}$ to another partition $P'$ is $c$-balanced, then this sequence is valid.
In the following, we construct a sequence of $1$-deviations and $4$-deviations and we will prove a sufficient value for $c$ to get the validity of this sequence, that is the sub-sequence from $P^{0}$ to \emph{any} partition of the sequence, is $c$-balanced.
%We first prove that there exists a constant $c_{1}$ such that the sequence from $P^{0}$ to $P^{1}$ described in Claim~\ref{claim:premiere-sequence} is $c_{1}$-balanced.
%\begin{claim}
%\label{claim:c-0-balanced}
%There is a constant $c_{1}$ such that the sequence from $P^{0}$ to $P^{1}$ defined by $\zeta^{1}[L]$ in Claim~\ref{claim:premiere-sequence}, is $c_{1}$-balanced.
%\end{claim}

Given a sequence of deviations, the \emph{size} of its corresponding vector is the length of the minimum-size sub-vector that contains all its non-zero values.

\begin{definition}
\label{def:symmetric-property}
A vector has the \emph{symmetric property} if, and only if, the minimum-size sub-vector that contains all non-zero values is a symmetric vector.
\end{definition}

Note that for a vector $\phi[L]$ with the symmetric property, $L$ is enough in that case to deduce the size of $\phi$.
Also, $\phi[L-i]$ defines the same sequence than $\phi[L]$ in which the sizes of the groups involved in the deviations have been decreased by $i$.

\begin{claim}
\label{claim:symmetry}
Let $\phi[L]$ be any vector of size $s$ with the symmetric property.
For any positive integers $r$ and $d$, $1+(r-1)d \leq s$, $\phi' = \sum_{h=0}^{r-1} \phi[L - h d]$ has the symmetric property.
\end{claim}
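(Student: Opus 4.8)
\noindent
The plan is to prove the slightly stronger statement that $\phi'$ is a \emph{palindrome about one fixed centre}, since the symmetric property of its minimal support window then follows at once. I view every vector as a finitely supported function on $\mathbb{Z}$ (entries outside $[1,n]$ being $0$), and for $c\in\frac12\mathbb{Z}$ I write $\sigma_c$ for the reflection $(\sigma_c v)_i=v_{2c-i}$. With this language the symmetric property of $\phi[L]$ says exactly that $\sigma_{c_0}\phi[L]=\phi[L]$, where $c_0=L-\frac{s-1}{2}$ is the midpoint of its minimal support window $[L-s+1,L]$. By the definition of the shift notation, $\phi[L-hd]_i=\phi[L]_{i+hd}$, so $\phi[L-hd]$ is itself a palindrome, now about the centre $c_0-hd$; thus the summands of $\phi'$ are palindromes whose centres form the arithmetic progression $c_0,\,c_0-d,\,\dots,\,c_0-(r-1)d$.

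The key step is to set $m:=c_0-\frac{(r-1)d}{2}$, the midpoint of that progression, and to verify the identity $\sigma_m\big(\phi[L-hd]\big)=\phi[L-(r-1-h)d]$ for every $h$. This is a one-line check: both sides are shifts of the single vector $\phi[L]$, and after writing them out and using $\sigma_{c_0}\phi[L]=\phi[L]$, the equality of the two shift amounts reduces precisely to $2m=2c_0-(r-1)d$, which is our choice of $m$. Hence $\sigma_m$ merely permutes the $r$ summands of $\phi'$ via $h\mapsto r-1-h$, so $\sigma_m\phi'=\sum_{h=0}^{r-1}\phi[L-(r-1-h)d]=\phi'$; that is, $\phi'$ is a palindrome about $m$. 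Note this argument is genuinely global in $r$; I would not try to induct by peeling off one summand, because the remaining sum $\sum_{h=1}^{r-1}\phi[L-hd]$ is centred at a different point than $\phi[L]$, so the inductive step does not assemble.

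Finally I would read off the conclusion. Since $\phi'_i=\phi'_{2m-i}$, the set of indices at which $\phi'$ is nonzero is symmetric about $m$; writing $t^{*}=\max\{\,|i-m|:\phi'_i\neq 0\,\}$ (the statement being vacuous if $\phi'\equiv 0$), the minimum-size sub-vector containing all nonzero entries of $\phi'$ is exactly $[m-t^{*},\,m+t^{*}]$, which is symmetric — so $\phi'$ has the symmetric property. The main point to be careful about is exactly this: the symmetric property refers to that possibly-shrunken minimal window, not to the union of the supports of the summands, so one must rule out asymmetric cancellation at the boundary; the palindrome formulation does this automatically. The hypothesis $1+(r-1)d\le s$ is needed only to guarantee in addition that the supports of consecutive summands overlap, so that the support of $\phi'$ is the single interval $[L-(r-1)d-s+1,\,L]$ and $\phi'$ has size $s+(r-1)d$ (which is how the claim is used later); the symmetric property alone holds for the $m$ produced above irrespective of this.
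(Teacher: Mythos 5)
Your proof is correct, and it is a cleaner execution of the same underlying combinatorial idea as the paper's. Both arguments rest on pairing the summand indexed by $h$ with the one indexed by $r-1-h$ and observing that each such pair is symmetric about the common midpoint $m = c_0 - \tfrac{(r-1)d}{2}$. The paper carries this out by an explicit induction: it first treats $\phi[L]+\phi[L-(r-1)d]$ by direct index manipulation, then adds symmetric pairs from the outside in, verifying at each step that the newly added pair's support sits exactly in the middle of the current window, and it needs a separate case to absorb the unpaired middle term when $r$ is odd. Your reflection-operator formulation ($\sigma_m\phi[L-hd]=\phi[L-(r-1-h)d]$, hence $\sigma_m$ permutes the summands and fixes $\phi'$) collapses all of that into one identity: it eliminates the induction, eliminates the parity case-split, and — as you rightly point out — makes the passage to the \emph{minimal} support window automatic, since a palindrome about $m$ cannot suffer asymmetric cancellation at its boundary. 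That last point is handled only implicitly in the paper's index computations, so your version is arguably more robust. Your closing observation that the hypothesis $1+(r-1)d\leq s$ is irrelevant to the symmetry itself and only guarantees that the support is a single interval of size $s+(r-1)d$ is also accurate and matches how the claim is used in Claim~\ref{claim:sequence-generale}.
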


In order to prove Theorem~\ref{lem:lower-bound-4}, we construct the sequences of deviations defined by $\zeta^{i}[L]$ for all $i=1, \ldots T$.
The initial vector $\zeta^{1}[L]$ is defined according to Claim~\ref{claim:premiere-sequence} which is rather complex and hence we left in the appendix. To construct the vector $\zeta^{i+1}[L]$ from $\zeta^{i}[L]$, for all $i=1, \ldots, T-1$, we follow a particular construction that is shown valid by Claim~\ref{claim:sequence-generale}, and which is illustrated in Figure~\ref{fig:cascade}. The construction is composed of a repetition of the sequence defined by $\zeta^{i-1}$ a certain number of times (linear in $t$) shifting the "starting point" of each sequence by the same value (which Claim~\ref{claim:sequence-generale} allows to formalized). The construction then adds $1$-deviations in order to get a special property, called \emph{Good property}, defined below.

In the following, we denote by $s_{i}$ the size of the vector $\zeta^{i}[L]$.
We first define the notion of \emph{Good Property}.

\begin{figure}[t]
\centerline{%\includegraphics[width=0.5\linewidth]{Figures/Unstablek4noborder.pdf}
\includegraphics[width=0.65\linewidth]{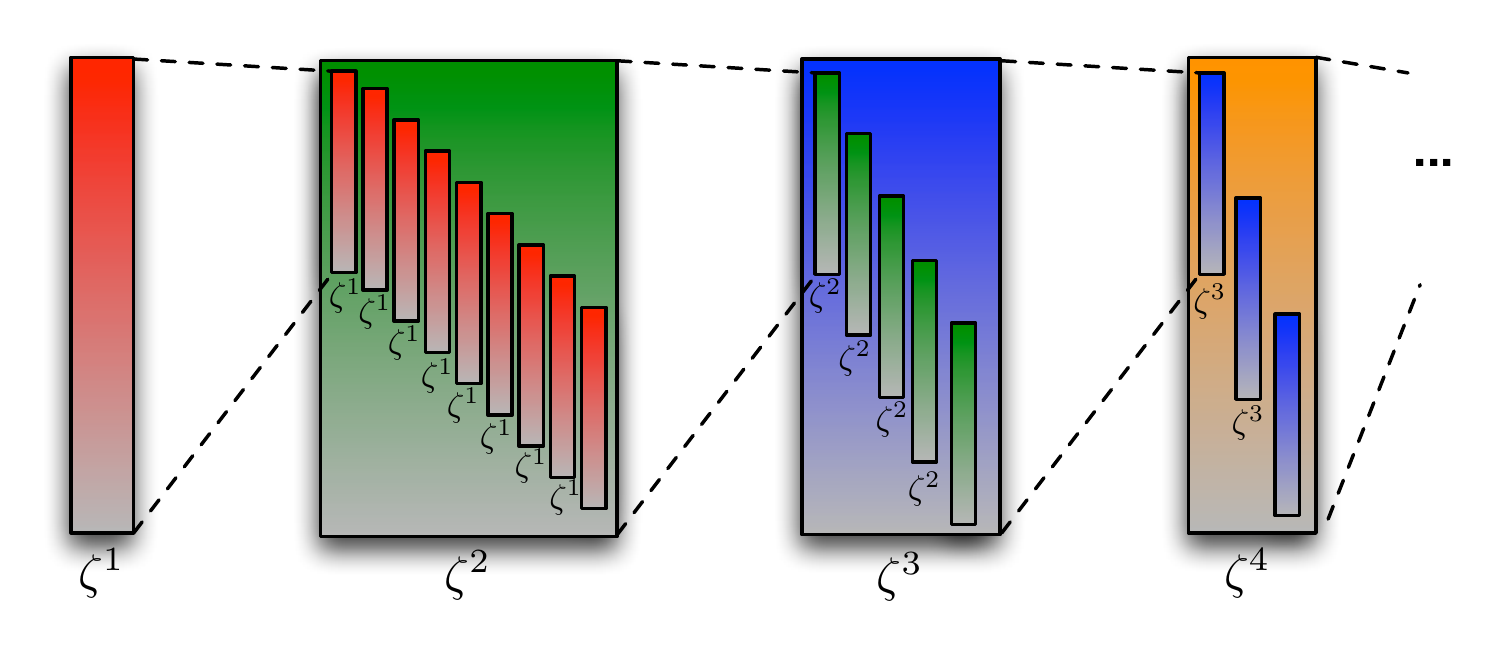}}
%\vspace{-0.5cm}
\caption{Long sequence using recursive cascades.}
\label{fig:cascade}
%\vspace{-0.2cm}
\end{figure}

\begin{definition}
\label{def:good}
Given any integer $i \geq 1$, $\zeta^{i}[L]$ has the \emph{Good Property} if $s_{i}$ is even, if $\zeta^{i}[L]$ has the symmetric property, and if there exists $t^{1}_{i}$, $t^{2}_{i}$ with $1 < t^{1}_{i} < t^{2}_{i} < 2 t^{1}_{i}$, $t^{2}_{i} \leq 2^{i+1}$, such that $\zeta^{i}[L]_{L} = 1$, $\zeta^{i}[L]_{L-t^{1}_{i}} = -1$, $\zeta^{i}[L]_{L-t^{2}_{i}} = -1$, $\zeta^{i}[L]_{L-s_{i}/2+1} = 1$, and $\zeta^{i}[L]_{j} = 0$ for other values of $j$.
%\begin{equation*}
%\begin{cases}
%\zeta^{i}[L]_{j} = 0 & j = L+1, \ldots, n \\
%\zeta^{i}[L]_{L} = 1 \\
%\zeta^{i}[L]_{j} = 0 & j = L-t^{1}_{i}+1, \ldots, L-1 \\
%\zeta^{i}[L]_{L-t^{1}_{i}} = -1 \\
%\zeta^{i}[L]_{j} = 0 & j = L-t^{2}_{i}+1, \ldots, L-t^{1}_{i}-1 \\
%\zeta^{i}[L]_{L-t^{2}_{i}} = -1 \\
%\zeta^{i}[L]_{j} = 0 & j = L-s_{i}/2+2, \ldots, L-t^{2}_{i}-1 \\
%\zeta^{i}[L]_{L-s_{i}/2+1} = 1 \\
%\end{cases}
%\end{equation*}
\end{definition}

\begin{claim}
\label{claim:sequence-generale}
One can define recursively $( \zeta^{i}[L] )_{1 \leq i \leq T}$ such that $\zeta^1[L]$ is defined as for Claim~\ref{claim:premiere-sequence}, and for all $i = 1, \ldots, T-1$ there exists a sequence of $1$-deviations $\xi^{i+1}$ such that $\zeta^{i+1}[L] = \sum_{j=0}^{a_{i}} \zeta^{i}[L-j t^{1}_{i}] + \xi^{i+1}$, where $a_{i}$ is the largest even integer $j$ such that $L-jt^{1}_{i}-t^{2}_{i} > L-s_{i}/2+1$, and $\zeta^{i}[L]$ has the Good Property.
\end{claim}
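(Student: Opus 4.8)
The plan is to prove the claim by induction on $i$, establishing simultaneously that $\zeta^{i}[L]$ has the Good Property and that the stated recursive construction produces $\zeta^{i+1}[L]$ with the Good Property. The base case $i=1$ is handled by Claim~\ref{claim:premiere-sequence} (which asserts exactly that $\zeta^{1}[L]$ has the Good Property, giving us the parameters $s_1$, $t^1_1$, $t^2_1$). For the inductive step, assume $\zeta^{i}[L]$ has the Good Property: $s_i$ is even, $\zeta^{i}[L]$ has the symmetric property with nonzero entries only at positions $L$ (value $1$), $L-t^1_i$ (value $-1$), $L-t^2_i$ (value $-1$), $L-s_i/2+1$ (value $1$), and $1 < t^1_i < t^2_i < 2t^1_i$, $t^2_i \le 2^{i+1}$. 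I would first verify that the partial sum $\Sigma := \sum_{j=0}^{a_i} \zeta^{i}[L - j t^1_i]$ is well-defined and, being a sum of shifted copies of a symmetric vector with shift $d = t^1_i$ and $r = a_i+1$ terms, has the symmetric property by Claim~\ref{claim:symmetry} — provided $1 + a_i t^1_i \le s_i$, which I would check follows from the definition of $a_i$ (the largest even $j$ with $L - jt^1_i - t^2_i > L - s_i/2 + 1$, i.e.\ $j t^1_i < s_i/2 - 1 - t^2_i + 1 = s_i/2 - t^2_i$, so $a_i t^1_i < s_i/2 \le s_i$, comfortably within range).

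The heart of the argument is a careful bookkeeping of which entries of $\Sigma$ cancel and which survive, so that after adding the correcting $1$-deviations $\xi^{i+1}$ we land exactly on the four-nonzero-entry shape required by the Good Property. The key telescoping observation is that the $+1$ at the top of the shifted copy $\zeta^{i}[L - jt^1_i]$ (sitting at position $L - jt^1_i$) cancels against the $-1$ at position $L - t^1_i$ of the \emph{previous} copy $\zeta^{i}[L - (j-1)t^1_i]$, since $L - (j-1)t^1_i - t^1_i = L - jt^1_i$. So in $\Sigma$ the "$t^1$-markers" telescope away, leaving: the top $+1$ at position $L$ from $j=0$; a residual $-1$ at position $L - a_i t^1_i - t^1_i = L - (a_i+1)t^1_i$ from the last copy's $t^1$-marker; the collection of $-1$'s at positions $L - jt^1_i - t^2_i$ (the $t^2$-markers, $j = 0,\dots,a_i$, none of which telescope because $t^2_i \ne t^1_i$ and $t^2_i \ne 2t^1_i$); and the collection of $+1$'s at positions $L - jt^1_i - s_i/2 + 1$ (the "bottom" markers). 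One then chooses $\xi^{i+1}$ — a sequence of $1$-deviations, each of type $\alpha[p,q]$, $\alpha[p]$, or $\alpha[2]$, whose vector representations are designed precisely to convert each stray $-1/+1$ pair into a pair of adjacent entries that cancel, or to pair up a surviving $+1$ with a surviving $-1$ at a lower index (recall $\alpha[p,q]$ has a $+1$ at position $p$, $-1$ at $p-1$, $-1$ at $q+1$, $+1$ at $q$, so it is exactly a "move one $+1$ down from $p$ to $q$" operator after absorbing the internal cancellation). After all the interior markers are cleared, the only survivors are the global top $+1$ at $L$, the new top-marker $-1$ at $L - (a_i+1)t^1_i$ (which becomes $t^1_{i+1}$, so $t^1_{i+1} = (a_i+1)t^1_i$), a second $-1$ at an appropriate position $L - t^2_{i+1}$ coming from the last $t^2$-marker, and the bottom $+1$ at $L - s_{i+1}/2 + 1$. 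I would then check the arithmetic constraints: $s_{i+1}$ even (choose the construction so the bottom marker lands at an even-offset position; since $a_i$ is taken even, $a_i t^1_i$ is even and parity is controlled), $1 < t^1_{i+1} < t^2_{i+1} < 2 t^1_{i+1}$, and $t^2_{i+1} \le 2^{i+2}$ — this last bound is where the "largest even $j$" choice of $a_i$ matters, since $t^2_{i+1}$ will be bounded by something like $t^2_i + t^1_i \cdot(\text{small constant})$ or by $2t^1_{i+1}$, and one needs the doubling $t^2_{i+1} \le 2 \cdot 2^{i+1} = 2^{i+2}$ to carry through.

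Finally, I would confirm that $\xi^{i+1}$ is genuinely realizable as a sequence of \emph{valid} $1$-deviations when performed from the relevant intermediate partitions — i.e.\ that each $\alpha$-step, when applied in the order chosen, never drives an entry of the running partition vector below $0$ given a starting stock of $c$ users per group (this is the $c$-balancedness bookkeeping flagged earlier in the section), and that all group sizes stay in the admissible range $5 \le p \le n/5$ demanded for the $\delta[p]$'s embedded inside $\zeta^{i}$. I expect the main obstacle to be precisely this last point together with the exact index arithmetic of the telescoping: keeping track of all the shifted $t^1$- and $t^2$-markers across $a_i+1$ copies, ensuring no two of the surviving markers accidentally collide at the same index (which would change a $\pm 1$ into a $\pm 2$ and break the symmetric-shape requirement), and verifying that the correcting $1$-deviations can be scheduled so that the \emph{entire} sub-sequence from $P^0$ onward stays $c$-balanced. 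The symmetric property of $\Sigma$ (Claim~\ref{claim:symmetry}) is what guarantees the two "ends" of the construction mirror each other, so in practice one only needs to control one half and the symmetry does the rest; that is the structural reason the recursion closes.
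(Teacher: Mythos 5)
Your overall strategy matches the paper's: induction on $i$, the telescoping of shifted copies in $\sum_{j=0}^{a_i}\zeta^{i}[L-jt^{1}_{i}]$, Claim~\ref{claim:symmetry} for the symmetric property, and a cleanup phase of $1$-deviations $\xi^{i+1}$ that cancels stray $(+1,-1)$ pairs using the symmetry. But there is a genuine error in your identification of the new parameters, and it is not cosmetic. You set $t^{1}_{i+1}=(a_i+1)t^{1}_{i}$ (the residual $-1$ left at $L-(a_i+1)t^{1}_{i}$ by the telescope) and take $t^{2}_{i+1}$ from ``the last $t^{2}$-marker.'' Since $a_i$ is the largest even $j$ with $jt^{1}_{i}<s_i/2-t^{2}_{i}$, you have $(a_i+1)t^{1}_{i}\approx s_i/2$, which grows with $i$; this immediately violates the Good Property's requirement $t^{1}_{i+1}<t^{2}_{i+1}\le 2^{i+2}$. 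That bound is not decorative: it is what keeps $t^{1}_{i+1}$ exponentially small compared with $s_{i+1}$, so that $a_{i+1}=\Theta(s_{i+1}/2^{i+2})$ is large and the length multiplies by a large factor at each level (Claim~\ref{claim:taille-sequence-1}). With your choice, $a_{i+1}=O(1)$ and the whole super-polynomial blow-up collapses. The correct bookkeeping, as in the paper, is that the nearest surviving $-1$ below $L$ is the $t^{2}$-marker of the $j=0$ copy at $L-t^{2}_{i}$, giving $t^{1}_{i+1}=t^{2}_{i}$ (the paper takes $t^{2}_{i+1}=2t^{1}_{i}$), both bounded by $2^{i+2}$; the telescope residual at $L-(a_i+1)t^{1}_{i}$ is one of the \emph{interior} strays that $\xi^{i+1}$ must cancel against a surviving $+1$, not a retained marker of $\zeta^{i+1}$.

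A secondary inaccuracy: because the Good Property is stated together with the symmetric property, each $\zeta^{i}[L-jt^{1}_{i}]$ carries not four but eight nonzero entries (the four listed ones plus their mirror images near $L-jt^{1}_{i}-s_i+1$), so there is a second telescoping at the bottom ends of the copies, mirror-$t^{2}$ markers to account for, and the pairs of adjacent $+1$'s at $L-jt^{1}_{i}-s_i/2$ and $L-jt^{1}_{i}-s_i/2+1$ that become the bottom marker of $\zeta^{i+1}$. You correctly observe that symmetry lets you control one half and get the other for free, but your inventory of survivors (and hence of what $\xi^{i+1}$ has to do, and of how the $+1$'s and $-1$'s pair up into the $(+1,-1,-1,+1)$ quadruples of the $\alpha$-deviations) is incomplete. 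Your remarks on validity and $c$-balancedness are fine but belong to Claim~\ref{claim:c-balanced}, which the paper handles separately.
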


\begin{claim}
\label{claim:c-balanced}
For all $i$, $1 \leq i \leq T$, the sequence of deviations from $P^{0}$ to $P^{i}$ defined in Claim~\ref{claim:sequence-generale}, is $(c_{1}+i-1)$-balanced where $c_{1}$ is the constant defined in Claim~\ref{claim:c-0-balanced}.
\end{claim}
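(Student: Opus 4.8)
The plan is to establish Claim~\ref{claim:c-balanced} by induction on $i$, tracking the worst negative coordinate that the partial sums of the vectors $\zeta^{i}[L]$ can produce. For the base case $i=1$, the statement is exactly Claim~\ref{claim:c-0-balanced} (that the sequence from $P^0$ to $P^1$ is $c_1$-balanced), which we take as given. For the inductive step, suppose the sequence from $P^0$ to $P^i$ is $(c_1+i-1)$-balanced; we must show the sequence from $P^0$ to $P^{i+1}$ is $(c_1+i)$-balanced, i.e., that during \emph{any} prefix of the deviation sequence realizing $\zeta^{i+1}[L]$ (appended after reaching $P^i$), no group-count coordinate drops below $-(c_1+i)$ relative to $\Lambda^0$.

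The key observation is the recursive structure furnished by Claim~\ref{claim:sequence-generale}: $\zeta^{i+1}[L] = \sum_{j=0}^{a_i}\zeta^{i}[L-j t^1_i] + \xi^{i+1}$, so the sequence realizing $\zeta^{i+1}$ consists of $a_i+1$ consecutive ``shifted copies'' of the $\zeta^{i}$-sequence followed by the $1$-deviation corrective block $\xi^{i+1}$. First I would argue that at the \emph{boundary} between consecutive shifted copies — i.e.\ after completing the copy $\zeta^{i}[L-j t^1_i]$ — the cumulative vector $\sum_{h=0}^{j}\zeta^{i}[L-h t^1_i]$ has only \emph{nonnegative} coordinates except possibly at the two ``tail'' positions dictated by the Good Property; the staggered shift by $t^1_i$ is precisely engineered (via $t^1_i < t^2_i < 2t^1_i$) so that the $+1$ at the head of copy $j{+}1$ lands where copy $j$ left a $-1$, cancelling it, and telescoping leaves at most a bounded residue. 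Then, \emph{within} a single shifted copy, the worst intermediate negative coordinate is, by the induction hypothesis applied to the un-shifted $\zeta^i$-sequence, at least $-(c_1+i-1)$ \emph{relative to the partition vector in force when that copy begins}. Since that partition vector already differs from $\Lambda^0$ by the telescoped boundary residue — which I claim contributes at most one further unit of negativity in any coordinate — the total drop relative to $\Lambda^0$ is at least $-(c_1+i-1)-1 = -(c_1+i)$, as required. Finally one checks that appending $\xi^{i+1}$, being a sequence of $1$-deviations each of which moves a single group-count coordinate down by at most $1$ (and only from positive values, since $\xi^{i+1}$ is chosen exactly to restore the Good-Property shape without creating new deficits), does not violate the bound.

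The main obstacle I anticipate is the careful bookkeeping in the ``boundary residue is at most one unit'' step: one must verify that the telescoping cancellation of the $\pm1$ tails across the $a_i+1$ shifted copies genuinely accumulates no more than a single unit of negativity in every coordinate $j$, using that $a_i$ is chosen even (so that the $-1$ at position $L-t^2_i$ of the last copy is not re-cancelled and the parity of corrective steps works out), that the two shift offsets $t^1_i,t^2_i$ satisfy the Good-Property interlacing inequalities, and that the stopping rule $L-a_i t^1_i - t^2_i > L - s_i/2 + 1$ prevents any copy's tail from colliding with the central $+1$ at position $L - s_i/2 + 1$. This amounts to resolving a one-dimensional ``chip-firing'' style cancellation argument along the coordinate axis, and the cleanest way to present it is to prove a small lemma that the partial sums of $\sum_{h=0}^{j}\zeta^{i}[L-h t^1_i]$ (over both $j$ and over prefixes within the $j$-th copy) stay $\geq -1$ coordinatewise except in the ``active window'' currently being processed, where the induction hypothesis takes over. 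Once that lemma is in place, combining it with the induction hypothesis and the harmless nature of $\xi^{i+1}$ closes the induction and hence proves the claim.
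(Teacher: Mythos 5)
Your proposal matches the paper's proof essentially step for step: an outer induction on $i$ with base case Claim~\ref{claim:c-0-balanced}, the decomposition of the $\zeta^{i+1}$-sequence into the $a_i+1$ shifted copies plus $\xi^{i+1}$, the observation that the accumulated vector at each copy boundary never drops below $-1$ in any coordinate (the paper phrases your ``boundary residue'' lemma as an inner induction on $b$ showing each partial sum $\mu^b=\sum_{j=0}^{b}\zeta^{i}[L-jt^{1}_{i}]$ is $1$-balanced, a consequence of the Good Property), and the final remark that $\xi^{i+1}$ only turns $\pm 1$ entries into $0$ and so creates no new deficit. The bookkeeping you flag as the main obstacle is exactly the point the paper also treats tersely, so there is nothing missing relative to the published argument.
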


Recall that $t, T > 0$ are such that $2^{T-1} (2t^{2}+2) \leq L$.
The size $s_{1}$ of $\zeta^{1}[L]$ is $2t^{2}+2$ and $s_{i} \leq s_{i+1} \leq 2 s_{i}$ for all $i$, $1 \leq i \leq T-1$.
Also, recall that $n = cL(L+1)/2$.
Consider $T = \log_{2}(t)+1$ and, without loss of generality, assume $2(t^{3}+t) = L$.
By Claim~\ref{claim:c-balanced}, $n = (c_{1}+T-1) (2t^{3}+2t) (2t^{3}+2t+1)/2 = (c_{1}+\log_{2}(t)) (2t^{3}+2t) (2t^{3}+2t+1)/2$ is sufficient to do the sequence $\zeta^{T}[L]$.
As $c_{1}$ is a constant, $n = O(t^{6} \log_{2}(t))$.
For all $i$, $1 \leq i \leq T$, let $|S_{i}|$ be the size of the sequence defined by $\zeta^{i}[L]$.

\begin{claim}
\label{claim:taille-sequence-1}
For all $i$, $1 \leq i \leq T-1$, $|S_{i+1}| \geq (\frac{s_{i}}{2^{i+2}} -6) |S_{i}|$.
\end{claim}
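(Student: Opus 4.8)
The plan is to track how the recursive construction of $\zeta^{i+1}[L]$ from $\zeta^{i}[L]$ — as given in Claim~\ref{claim:sequence-generale} — inflates the number of deviations. Recall that $\zeta^{i+1}[L] = \sum_{j=0}^{a_{i}} \zeta^{i}[L-jt^{1}_{i}] + \xi^{i+1}$, so the sequence $S_{i+1}$ consists of $a_{i}+1$ shifted copies of the sequence $S_{i}$, together with a number of extra $1$-deviations collected in $\xi^{i+1}$. First I would note that since $\zeta^{i}[L-jt^{1}_{i}]$ and $\zeta^{i}[L]$ define \emph{the same} abstract sequence of deviations (only the group sizes involved are shifted down by $jt^{1}_{i}$, as observed right after Definition~\ref{def:symmetric-property}), each shifted copy contributes exactly $|S_{i}|$ deviations. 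Hence $|S_{i+1}| \geq (a_{i}+1)|S_{i}|$, and the whole claim reduces to a lower bound on $a_{i}$, namely $a_{i}+1 \geq \frac{s_i}{2^{i+2}} - 6$ (we may freely drop the $1$-deviations in $\xi^{i+1}$ since they only increase $|S_{i+1}|$).

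Next I would unwind the definition of $a_{i}$: it is the largest even integer $j$ such that $L - jt^{1}_{i} - t^{2}_{i} > L - s_{i}/2 + 1$, i.e.\ such that $jt^{1}_{i} < s_{i}/2 - 1 - t^{2}_{i}$. Solving, the largest real $j$ is $(s_{i}/2 - 1 - t^{2}_{i})/t^{1}_{i}$; rounding down to the nearest even integer loses at most $2$, so $a_{i} \geq \frac{s_{i}/2 - 1 - t^{2}_{i}}{t^{1}_{i}} - 2$. Now I invoke the Good Property of $\zeta^{i}[L]$ (guaranteed by Claim~\ref{claim:sequence-generale}): it gives $1 < t^{1}_{i} < t^{2}_{i} < 2t^{1}_{i}$ and $t^{2}_{i} \leq 2^{i+1}$. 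From $t^{1}_{i} < t^{2}_{i} \leq 2^{i+1}$ we get $\frac{1}{t^{1}_{i}} > \frac{1}{2^{i+1}}$, so $\frac{s_{i}/2}{t^{1}_{i}} > \frac{s_{i}}{2^{i+2}}$. The subtracted terms $\frac{1 + t^{2}_{i}}{t^{1}_{i}}$ are bounded: $\frac{t^{2}_{i}}{t^{1}_{i}} < 2$ by the Good Property, and $\frac{1}{t^{1}_{i}} < 1$, so $\frac{1+t^{2}_{i}}{t^{1}_{i}} < 3$. Combining, $a_{i} \geq \frac{s_{i}}{2^{i+2}} - 3 - 2 = \frac{s_{i}}{2^{i+2}} - 5$, hence $a_{i}+1 \geq \frac{s_{i}}{2^{i+2}} - 4 \geq \frac{s_{i}}{2^{i+2}} - 6$, which yields $|S_{i+1}| \geq (a_{i}+1)|S_{i}| \geq \left(\frac{s_{i}}{2^{i+2}} - 6\right)|S_{i}|$ as claimed. (The looseness — $-6$ rather than $-4$ — gives comfortable slack and absorbs any off-by-one in how $a_i$'s parity rounding or the strict inequality in its definition is handled.)

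The main obstacle I anticipate is bookkeeping rather than conceptual: making sure that the $a_{i}+1$ shifted copies of $S_{i}$ are genuinely vertex-disjoint occurrences inside the constructed sequence $S_{i+1}$ — so that they really do contribute additively to the \emph{length} $|S_{i+1}|$ — and that none of them is degenerate or empty. This is exactly what the condition $L - a_{i}t^{1}_{i} - t^{2}_{i} > L - s_{i}/2 + 1$ in Claim~\ref{claim:sequence-generale} is designed to ensure (each shifted copy still fits within the allotted index window and keeps the symmetric-property structure), so once that claim is in hand the additivity is immediate; the only remaining care is that $s_i/2^{i+2}-6$ could a priori be negative, in which case the inequality is vacuous and nothing needs proving. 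For the regime of interest ($s_i$ large relative to $2^{i}$, which the parameter choices $T=\log_2 t + 1$ and $s_1 = 2t^2+2$ guarantee for the early indices that drive the recursion) it is positive and the bound is the substantive one.
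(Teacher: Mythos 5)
Your proposal is correct and follows essentially the same route as the paper: both count the $a_i+1$ (the paper writes $a_i$) shifted copies of the sequence $S_i$ inside $S_{i+1}$, then lower-bound $a_i$ directly from its defining inequality $L-jt^1_i-t^2_i > L-s_i/2+1$ using the Good Property bounds $t^1_i < t^2_i \leq 2^{i+1}$ and $t^2_i < 2t^1_i$, with the constant $-6$ absorbing the rounding to the largest even integer. Your handling of the term $(1+t^2_i)/t^1_i < 3$ is in fact slightly cleaner than the paper's intermediate expression, and the final bound coincides.
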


\begin{claim}
\label{claim:taille-sequence-11}
$|S_{T}| \geq O(t^{log_{2}(t)}) |S_{1}|$.
\end{claim}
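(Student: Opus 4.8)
The plan is to unfold the recursive lower bound from Claim~\ref{claim:taille-sequence-1} over the $T-1$ levels and show that the resulting product telescopes to something of the form $O(t^{\log_2 t})$. Concretely, iterating $|S_{i+1}| \geq \left(\frac{s_i}{2^{i+2}} - 6\right)|S_i|$ from $i=1$ to $i=T-1$ gives $|S_T| \geq |S_1| \prod_{i=1}^{T-1}\left(\frac{s_i}{2^{i+2}} - 6\right)$, so the whole statement reduces to a lower bound on this product of $T-1$ factors.

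First I would pin down the growth of $s_i$. We know $s_1 = 2t^2+2$ and $s_i \leq s_{i+1} \leq 2s_i$; for the lower bound on the product I need a \emph{lower} bound on each $s_i$, and the construction in Claim~\ref{claim:sequence-generale} (the cascade repeats $\zeta^i$ roughly $a_i = \Theta(t)$ times with shifts of $t^1_i$, doubling the size at worst but in fact growing it geometrically in $t$) should give $s_i \geq s_1 \cdot t^{\,i-1}/C^{i}$ for an absolute constant $C$, or at the very least $s_i \geq 2^{i-1} s_1$ combined with a separate $\Theta(t^i)$ bound — whichever the earlier claims actually deliver. The key quantitative point is that $s_i/2^{i+2}$ grows like a constant times $t^{i-1}$ (since $s_1 = \Theta(t^2)$ and each level multiplies the size by $\Theta(t)$ while we only divide by an extra factor $2$), so for $t$ larger than some absolute constant each factor $\frac{s_i}{2^{i+2}} - 6$ is at least, say, $\frac{s_i}{2^{i+3}} \geq c' t^{i-1}$ — in particular it dominates the additive $-6$ and is $\geq$ a fixed constant $\geq 2$. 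Then
\[
|S_T| \;\geq\; |S_1| \prod_{i=1}^{T-1}\left(\frac{s_i}{2^{i+2}}-6\right) \;\geq\; |S_1| \prod_{i=1}^{T-1} c' t^{\,i-1} \;=\; |S_1|\, (c')^{T-1}\, t^{\,\binom{T-1}{2}}.
\]
With $T = \log_2(t)+1$ we get an exponent $\binom{T-1}{2} = \Theta(\log_2^2 t)$, which is even stronger than $t^{\log_2 t}$; absorbing the $(c')^{T-1}$ factor (which is only $t^{O(\log\log t)}$, hence negligible) yields $|S_T| \geq \Omega(t^{\log_2 t})\,|S_1|$, which is the claim. (If the earlier claims only guarantee $s_i \geq 2^{i-1}s_1 = \Theta(2^i t^2)$, then $\frac{s_i}{2^{i+2}} - 6 \geq \Theta(t^2)$ for every $i$, and the product is at least $\Theta(t^2)^{T-1} = \Theta(t^{2(T-1)}) = t^{\Theta(\log t)}$, again giving the claim — so the argument is robust to exactly which size bound is available.)

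The main obstacle I anticipate is bookkeeping the constants carefully enough to be sure each factor $\frac{s_i}{2^{i+2}} - 6$ is genuinely bounded below by a quantity that is both $\geq 2$ (so the product does not collapse) and grows with $t$ at the rate needed; this hinges on having the right lower bound on $s_i$ coming out of Claim~\ref{claim:sequence-generale}, and on $t$ being taken larger than an absolute constant (which is harmless since we are proving an asymptotic statement and $n = O(t^6\log_2 t) \to \infty$ with $t$). A secondary point to check is that $|S_1|$ itself is at least a constant (indeed $|S_1| \geq 1$ suffices, since the final conclusion in the proof of Theorem~\ref{lem:lower-bound-4} only needs $|S_T|$ as a function of $n$ via $n = O(t^6\log_2 t)$, turning $t^{\log_2 t}$ into $n^{\Theta(\log n)}$). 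No delicate idea is required beyond the recursion unrolling; it is purely a matter of making the geometric-in-$t$, telescoping-in-$i$ estimate precise.
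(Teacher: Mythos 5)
Your proposal follows the paper's proof exactly: unroll the recursion of Claim~\ref{claim:taille-sequence-1} into a product of $T-1$ factors and lower-bound each factor uniformly using $s_1 = 2t^2+2$ and $T = \log_2(t)+1$. One caveat: the earlier claims only give $s_i \le s_{i+1} \le 2 s_i$, so neither your primary assumption that $s_i$ grows by a factor $\Theta(t)$ per level nor your fallback $s_i \ge 2^{i-1}s_1$ is actually available (the latter inequality in fact goes the other way); but the bound that \emph{is} available, $s_i \ge s_1$, already gives $\frac{s_i}{2^{i+2}} - 6 \ge \frac{2t^2+2}{2^{T+1}} - 6 = \Theta(t)$ for every $i$, and the product of $\log_2 t$ such factors is $t^{\Theta(\log_2 t)}$, which is precisely the paper's computation.
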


As $n = O(t^{6} \log_{2}(t))$, Claim~\ref{claim:taille-sequence-11} proves  Theorem~\ref{lem:lower-bound-4}.

\section{The general case: Are games stable under deviations?}
\label{sec:weighted}

\def\MIS{{\sc Maximum Independent Set problem}\xspace}
\def\SD{{\sc k-Stable decision problem}\xspace}

The uniform case we analyzed so far may be described as a ``clique with enemies'': A pair of nodes may either be connected by an edge with unit weight: $w_{uv}=1$, or they are in conflict: $w_{uv}=-\infty$. 
More generally, a coloring game may be defined with weights taking values in a larger set $\mathcal{W}$.
Nodes then choose to interact with each others according to more complex preferences and the individual utility is not always related to the size of the sharing group.
We first show the following.

\begin{theorem}
\label{thm:stability:k=1}
For any weighted graph, if only $1$-deviations are allowed, a coloring game always converges in $O(w_pn^2)$ steps, where $w_p$ denotes the largest positive weight, if any, and $0$ otherwise. 
\end{theorem}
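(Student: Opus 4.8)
The plan is to use the social welfare $f$ itself as a potential function. The first step is to record the identity
\[
f(P) \;=\; \sum_{u\in V} f_u(P) \;=\; 2\!\!\sum_{\substack{\{u,v\}:\,u\neq v\\ X(u)=X(v)}}\!\! w_{uv},
\]
so that the all-singletons partition has $f(P_0)=0$, while every partition $P$ satisfies $f(P)\le 2\binom{n}{2}w_p = n(n-1)w_p$ (and $f(P)\le 0$ when no positive weight exists), since each monochromatic pair contributes at most $w_p$ and there are at most $\binom{n}{2}$ of them.

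Next I would check that individual utilities stay finite along the whole dynamics, by an easy induction on the round $i$: in $P_0$ every group is a singleton, so $f_u(P_0)=0$; and a $1$-deviation never puts two enemies in a common group, because the moving player would then receive utility $-\infty$, contradicting that her utility strictly increases (a player already sharing a group has finite utility, so her move keeps it finite). Hence every $f_u(P_i)$ is a finite integer.

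The core step is to compute the effect of a single $1$-deviation on $f$. If at round $i$ player $u$ leaves her group $A$ and joins the group whose other members form the set $B$ (possibly $B=\emptyset$), then the only pairs whose ``same group / different group'' status changes are the pairs $\{u,v\}$ with $v\in A\setminus\{u\}$, which leave the monochromatic set, and the pairs $\{u,v\}$ with $v\in B$, which enter it. Therefore
\[
f(P_{i+1})-f(P_i) \;=\; 2\Big(\textstyle\sum_{v\in B} w_{uv} - \sum_{v\in A\setminus\{u\}} w_{uv}\Big) \;=\; 2\big(f_u(P_{i+1})-f_u(P_i)\big).
\]
Since the deviation is valid, $f_u(P_{i+1})>f_u(P_i)$, and both are integers by the previous paragraph, so $f$ strictly increases by a positive even integer, i.e.\ by at least $2$, at every round. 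Combining the two bounds: after $s$ rounds, $f(P_s)\ge f(P_0)+2s = 2s$, while $f(P_s)\le n(n-1)w_p$, hence $s\le \tfrac12 n(n-1)w_p = O(w_pn^2)$; and when $w_p=0$ the same inequalities force $s=0$, matching the claim.

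The step that needs care is the bookkeeping in the core computation: that the social welfare changes by exactly twice the deviating player's own gain (each affected pair is counted once for each of its two endpoints, and all affected pairs are incident to $u$), together with the remark that it is precisely the finiteness of all utilities that upgrades the strict increase of $f_u$ into an increase of $f$ by at least $2$ per round, which is what yields the polynomial bound in $n$ and $w_p$ rather than merely termination.
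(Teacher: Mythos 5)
Your proof is correct and follows essentially the same route as the paper: the paper also uses the global utility $f$ as a potential function, invoking its Lemma~\ref{lem:util-variations} (whose $k=1$ case is exactly your identity $f(P_{i+1})-f(P_i)=2(f_u(P_{i+1})-f_u(P_i))$) to get an increase of at least $2$ per step, and then bounds $f$ by $O(w_p n^2)$. Your explicit handling of the $-\infty$ edges and of the $w_p=0$ case is a welcome bit of extra care, but the argument is the same.
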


Theorem~\ref{thm:stability:k=1} proves that all these games admit a $1$-stable partition, or equivalently, a Nash-equilibrium. 
Note on the other hand that our dependency on the weights can be arbitrarily bad.
It is not that surprising, as the problem of computing a $1$-stable partition is PLS-complete (a simple reduction from cut games can be found, see~\cite{Johnson:1988ur}).
So we will always work with a fixed set of weights $\mathcal{W}$.
The proof of this result extends a potential function argument (see Lemma~\ref{lem:util-variations} in the Appendix).

%Moreover,  $k$-stable partition (with $k>1$) may not necessarily exist, and as a result, the dynamic of the game may not always %converge.
%It may also fail to converge even when such partitions exist.
We now show how this analysis becomes more complex as $k$-stable partition (with $k>1$) may not necessarily exist.
%We analyze all games where weights are chosen in a fixed subset $\mathcal{W}$. 
We look at the greatest values of $k$ for which $k$-stability is \emph{always} eventually attained, for all games where weights are chosen in a fixed subset $\mathcal{W}$. We denote this value $k(\mathcal{W})$, and establish it for various possible set of weights, which requires both to extend our previous potential function argument proof as well as prove counter-example of stability with minimum values of $k$.
We finally strengthen the aforementioned results by proving that deciding if a graph with weights in $\mathcal{W}$ admits a $k$-stable partition is either trivial (\ie it is true for all such graphs) or NP-complete.

%\begin{figure}[t]
%\centerline{\includegraphics[width=0.4\linewidth]{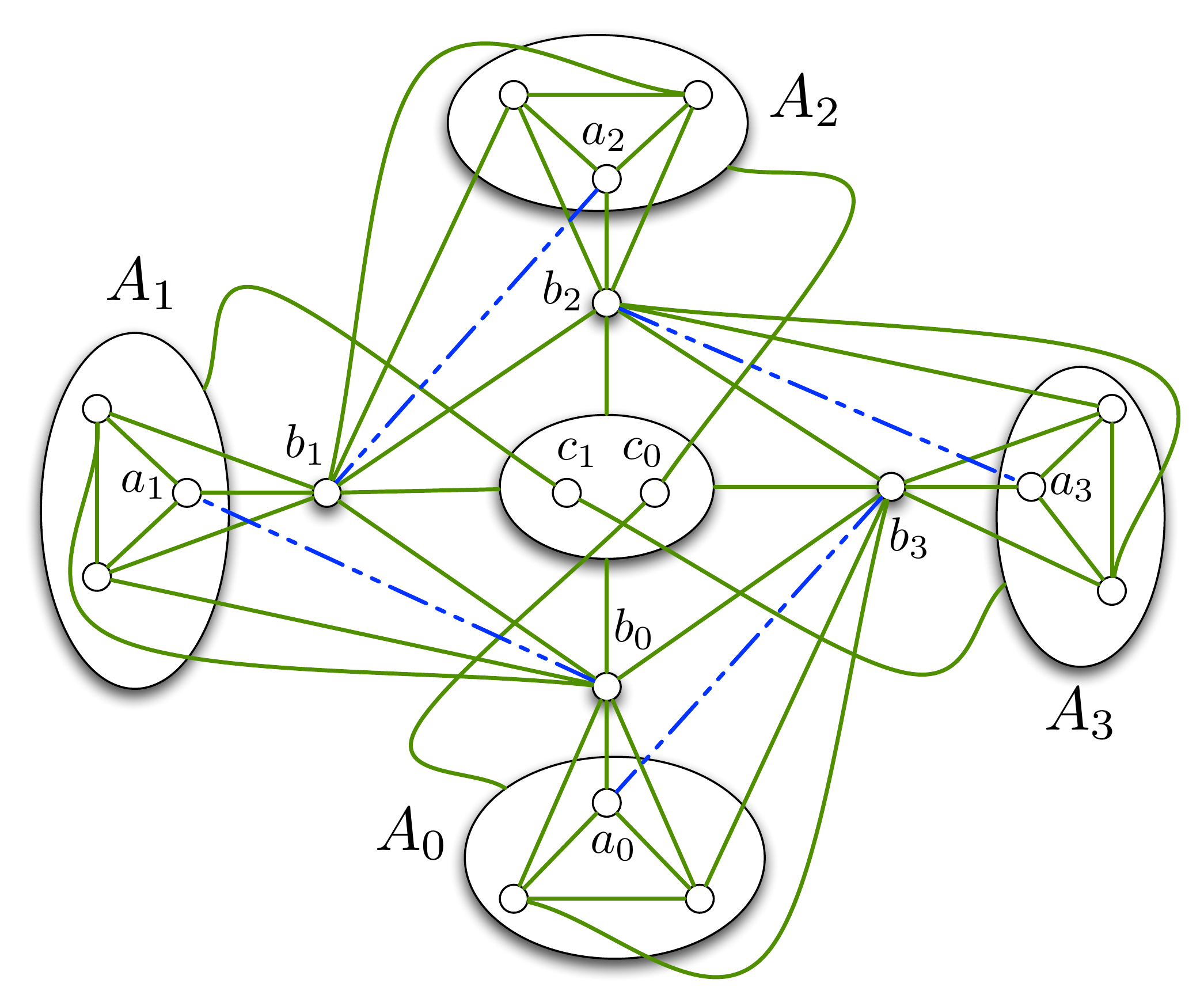}}
%\vspace{-0.4cm}
%\caption{Graph that does not admit a 3-stable partition.}
%\vspace{-0.25cm}
%\label{fig:counter-examples}
%\end{figure}

\subsection{Games with a unique positive weight}
\label{subsec:-m01}

We first focus on the subsets of $\{-\infty,0,1\} \cup -\mathbb{N}$, which contains in particular $\mathcal{W} = \{-\infty,0,1\}$ which is the simplest set of weights extending the uniform case to accommodate \emph{indifferent} edges.
Whereas it does not differ that much from the set $\{-\infty,1\}$ previously studied in the uniform case, and which satisfies $k(\{-\infty,1\})=\infty$, we show its stability properties are radically altered.
Indeed we prove that in that case that $k(\mathcal{W}) = 2$ after exhibiting a surprising counterexample for $k=3$.
%If we refer ourselves to the dynamic system, then there may not exist stable partitions whenever the global utility does not increase at each step.

\begin{lemma}
\label{lem:2-stable}
Every coloring game with weights in $\{-\infty,0,1\} \cup -\mathbb{N}$ converges to a $2$-stable partition in $O(n^2)$ steps. This upper-bound on convergence time is tight.
\end{lemma}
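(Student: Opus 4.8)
The plan is to exhibit a potential function that strictly increases under every $2$-deviation and is polynomially bounded, then separately argue that $\Omega(n^2)$ steps can be forced. For a partition $P$, I would work with the pair $\Phi(P) = \bigl(f(P),\, g(P)\bigr)$ ordered lexicographically, where $f(P)=\sum_u f_u(P)$ is the social welfare (as in the uniform $k\le 2$ case of~\cite{kleinberg2013information}) and $g(P)$ is a secondary tie-breaking quantity measuring how ``spread out'' the positive-weight (friendship) edges are — for instance $g(P) = -\sum_{i}\binom{|X_i|}{2}$ or, more robustly, $g(P)=\sum_{\{u,v\}:\,w_{uv}=1,\;X(u)=X(v)} (\text{something that decreases when such pairs merge})$. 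The reason a secondary term is needed is that with weights in $\{-\infty,0,1\}\cup-\NatInt$, a $2$-deviation where two players move together can leave $f(P)$ unchanged: the single shared positive edge between them is already counted on both sides before and after, and all other affected edges carry weight $0$ or $-\infty$ (the latter impossible in a valid configuration). So the first real step is a careful case analysis of a $2$-deviation $S=\{u,v\}$ moving into a common group $X$: either they move to distinct groups that happen to be merged — reducible to two $1$-deviations, handled by Theorem~\ref{thm:stability:k=1} / Lemma~\ref{lem:util-variations} — or they join the same group $X$. In the latter case I would show $f(P_{i+1})\ge f(P_i)$ always (each of $u,v$ strictly gains, and the only possible losers are third parties connected by weight-$\le 0$ edges, so the net change from third parties is nonnegative since they can only lose negative-weight neighbors... — this direction needs to be checked with sign care), and when equality holds, the configuration change is forced to be of a restricted type (essentially: $u$ and $v$ were already friends, both in singleton-like situations w.r.t.\ positive edges) for which $g$ strictly increases.

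Once the lexicographic potential $\Phi$ is shown to strictly increase at every step, the polynomial bound follows because $f(P)\in\{-\text{poly}(n),\dots,\binom n2\}$ takes only $O(n^2)$ values (all positive weights being $1$, negative weights never realized inside a group in a reachable state), and for each fixed value of $f$ the quantity $g$ can strictly increase only $O(n^2)$ times as well; but to get the clean $O(n^2)$ total I would instead argue more directly that $f$ strictly increases $O(n^2)$ times and between consecutive strict increases of $f$ only $O(1)$ (or $O(n)$) steps with $f$ constant can occur, because a ``flat'' $2$-deviation of the restricted type strictly decreases the number of friendship edges that are ``internal and redundant'', a quantity bounded by $n$; chaining these gives $O(n^2)$. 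The tightness ``$\Omega(n^2)$'' claim I would get for free, or nearly so: the uniform case is a special case ($\mathcal W=\{-\infty,1\}\subseteq\{-\infty,0,1\}\cup-\NatInt$), and Theorem~\ref{thm:lower-bound-3} already gives $L(3,n)=\Omega(n^2)$ — but that uses $3$-deviations, so instead I would point to Theorem~\ref{lem:tight:k=2}, $L(2,n)=L(1,n)\sim\frac23 n^{3/2}$, which is only $\Omega(n^{3/2})$; hence for the matching $\Omega(n^2)$ lower bound for $k=2$ I genuinely need a weighted construction — a small gadget (a path or cycle of ``rotation'' gadgets as in Figure~\ref{fig:nonstable}, but tuned so that it does terminate, after quadratically many slow rotations) replicated $\Theta(n)$ times, each copy forced to make $\Theta(n)$ moves.

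The main obstacle I anticipate is the flat-$2$-deviation analysis: proving that whenever a $2$-deviation does not increase $f$, the resulting partition is structurally constrained enough that a bounded auxiliary potential $g$ strictly increases — and doing this uniformly over all weight sets in $\{-\infty,0,1\}\cup-\NatInt$, including ones with many distinct negative values. The sign bookkeeping for third-party utility changes (a node $z$ losing $u$ as a same-group neighbor, where $w_{uz}\le 0$, versus $z$ in the destination group $X$ gaining $u,v$ with $w\le 0$) is where an error would most easily creep in, so I would isolate it as a standalone claim: \emph{in any partition reachable from the all-singletons start, every intra-group edge has weight $\ge 0$} — which holds because players never voluntarily join a group containing an enemy and no enemy is ever created inside a group — and from there the net third-party effect of any merge is manifestly $\le 0$ in magnitude contributions of negative edges being removed, giving $f$ monotonicity cleanly. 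The secondary obstacle is making the $\Omega(n^2)$ construction actually terminate (the Figure~\ref{fig:nonstable} gadget does not), which requires breaking the symmetry just enough with the weights $\{-\infty,0,1\}$ so each ``rotation'' is slightly downhill in $\Phi$ yet still takes linearly many micro-steps.
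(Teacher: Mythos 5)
Your plan has a genuine gap, and it begins at the premise. You assume that a $2$-deviation can leave the social welfare $f$ unchanged and therefore build a lexicographic potential $(f,g)$ with a secondary tie-breaker; but when the unique positive weight is $1$ this situation never arises, so the secondary potential is unnecessary --- and, as you set it up, it would not even deliver the claimed bound without the extra chaining argument you sketch but do not carry out (naively, $O(n^2)$ values of $f$ times $O(n^2)$ values of $g$ gives $O(n^4)$). The exact accounting (the paper's Lemma~\ref{lem:util-variations}, specialized to $|S|=2$ as in Theorem~\ref{thm:stability:ktree}) gives, for $S=\{u,v\}$ moving to a common group, $f(P')-f(P)=2(\Delta_u+\Delta_v)-2w_{uv}$ when $u$ and $v$ start in different groups, and $f(P')-f(P)=2(\Delta_u+\Delta_v)$ when they start in the same one, where $\Delta_u,\Delta_v\geq 1$ are the integer gains of the two deviators. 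Since $w_{uv}\leq 1$, every $2$-deviation increases $f$ by at least $2$; all third-party contributions cancel exactly by symmetry of $w$, so no sign bookkeeping on their edges is needed. That is fortunate, because your standalone claim that every intra-group edge of a reachable partition has weight $\geq 0$ is false for this weight set: $-\mathbb{N}\subseteq\mathcal{W}$ contains \emph{finite} negative weights, and a player will join a group containing a weight-$(-3)$ neighbour if she nets a positive gain from her unit-weight friends there. With $f$ starting at $0$, increasing by at least $2$ per step, and bounded above by $n(n-1)$ (positive weights all equal $1$), the $O(n^2)$ upper bound is immediate.

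On tightness you correctly diagnose that the uniform case only yields $\Omega(n^{3/2})$ for $k\le 2$ (Theorem~\ref{lem:tight:k=2}) and that Theorem~\ref{thm:lower-bound-3} is unavailable since it uses $3$-deviations, so a weighted construction is required --- but you stop at the intention. The paper's construction (Lemma~\ref{lem:n-square}) is elementary, uses only $1$-deviations, and needs only the weights $\{0,1\}$: three sets $V_1,V_2,V_3$ of size $n/3$, with $V_2\cup V_3$ a unit-weight clique, $V_1$ internally weight $0$, and unit edges between $V_1$ and $V_2$; the $n/3$ nodes of $V_1$ repeatedly pile onto some $v_j\in V_2$, who then defects to $V_3$ and resets them all to utility $0$, for $\Theta(n^2)$ moves in total. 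Note that nothing needs to be ``slightly downhill yet slow'': each micro-step strictly increases the mover's utility, and the quadratic length comes from the repeated resets, not from near-ties in a potential.
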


\begin{lemma}
\label{lem:counter-example}
There exists $G=(V,w)$ with $\mathcal{W} = \{-\infty,0,1\}$ such that there is no $3$-stable partition.
% of the nodes for $G$.
\end{lemma}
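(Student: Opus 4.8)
The goal is to construct a single weighted graph $G=(V,w)$ with $\mathcal{W}=\{-\infty,0,1\}$ that admits no $3$-stable partition. The natural template is the ``rotating triangle'' of Figure~\ref{fig:nonstable}, but there $\mathcal{W}=\{-\infty,2,3,4\}$ exploits asymmetric positive weights; here every non-conflict, non-indifferent edge must have weight exactly $1$, so the ``strength'' of an attachment can only be encoded by the \emph{number} of friends a player would gain in a group, i.e. by group sizes. The plan is therefore to build three ``modules'' $M_1,M_2,M_3$, each a clique of $1$-edges of carefully chosen size, together with three ``mover'' vertices $v_1,v_2,v_3$, arranged cyclically: $v_i$ is a friend (weight $1$) of all of $M_i$ and of all of $M_{i+1}$ (indices mod $3$), the three movers are pairwise enemies (or pairwise indifferent — to be pinned down), and vertices in distinct modules are enemies so the modules can never merge. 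Indifferent ($0$-weight) edges are used wherever we need a pair to be allowed to share a color without it affecting utility, which is exactly the feature $\{-\infty,1\}$ lacks and which makes the construction possible.

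The core of the argument is a case analysis showing every partition $P$ admits a $3$-deviation. First I would argue that in any $P$ we may assume each module $M_i$ stays together in one group (if some module is split, a single mover or a small set can improve, or the module itself reassembles — this should be arrangeable by making the modules large enough that breaking them is never individually profitable, so WLOG each $M_i$ is intact in its own group, since cross-module edges are $-\infty$). Then the only degrees of freedom are: which of the (at most three) module-groups each $v_i$ joins, or whether $v_i$ sits alone/with other movers. For each such configuration I would exhibit a profitable deviation of size $\le 3$: typically the set $\{v_1,v_2,v_3\}$ performs a simultaneous ``rotation'', each $v_i$ leaving its current module-group and joining $M_{i+1}$'s group, and the size parameters of the modules are tuned so that this strictly increases every mover's utility regardless of where they started — the point being that a mover currently attached to one module gains by moving to the ``next'' one in a way that never stabilizes, precisely the cyclic obstruction of Figure~\ref{fig:nonstable}. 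A few boundary configurations (a mover alone, two movers together) need their own small deviation, e.g. a lone $v_i$ joining any adjacent module is a profitable $1$-deviation, and two movers sharing a color is impossible if they are enemies, or is broken by a $1$-deviation if they are indifferent.

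The main obstacle I anticipate is getting the arithmetic of the rotation to close simultaneously for all three movers from \emph{every} starting configuration, while keeping all weights in $\{-\infty,0,1\}$: since utility is now just ``size of my group minus one, restricted to friends,'' I have much less control than with real-valued weights, and I must prevent spurious stable states such as ``all movers pile into one module-group'' or ``all movers sit alone.'' I expect this to force the module sizes to satisfy a chain of strict inequalities (roughly, each module must be slightly larger than its predecessor \emph{as seen through a mover's incident $1$-edges}), which cannot hold cyclically unless we also use the enemy edges between movers and between modules to forbid the degenerate merges — so the enemy structure is doing real work, not just the sizes. Once the inequalities are set up consistently, verifying that each of the finitely many configuration-types has a size-$\le 3$ improving move is routine bookkeeping, and I would present it as a short table rather than prose.
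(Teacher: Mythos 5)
There is a genuine gap, and it sits at the heart of your plan. In this model a $k$-deviation is, by definition, a set $S$ of at most $k$ players who all move into the \emph{same} (possibly empty) group and each strictly gain (see Section~\ref{sec:model} and the footnote citing \cite{Escoffier:2012vu}). Your central engine --- the set $\{v_1,v_2,v_3\}$ ``rotating,'' with each $v_i$ leaving its module and joining $M_{i+1}$'s group --- sends the three deviators to three \emph{different} groups, so it is not a legal $3$-deviation at all. The ``rotation'' in Figure~\ref{fig:nonstable} is only an emergent description of the resulting partition; the actual deviation there is two nodes jointly entering one and the same group. Once you restrict to legal deviations, your construction runs into the second problem you yourself flag: with $\mathcal{W}=\{-\infty,0,1\}$ a mover's preference between module-groups is governed purely by counts of $1$-edges, and the cyclic chain of strict inequalities ``$v_i$ prefers $M_{i+1}$ to $M_i$'' cannot close. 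You express hope that the enemy edges between movers and modules rescue this, but enemy edges only forbid co-membership; they cannot create the strict utility gain that every member of a deviating set needs, and if the movers are pairwise enemies no multi-mover deviation into a common group is even available. So the three-module rotating-triangle template, as described, does not yield a counterexample.

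The paper's construction resolves both issues with a structurally different gadget: four modules $A_0,\dots,A_3$ arranged in a $4$-cycle, four connectors $b_i$, and two hubs $c_0,c_1$. The four $0$-edges are placed very precisely --- each $b_i$ is indifferent only to the single distinguished vertex $a_{i+1}$ of the next module --- so that $b_i$ sees $A_i$ as worth $h$ but $A_{i+1}$ as worth only $h-1$; the friendships among the $b_i$'s and with the $c_j$'s then tip this one-unit balance back and forth, and every legal deviation used in the case analysis is a set of at most $3$ players entering a single common group. If you want to salvage your approach, you would need to (i) replace the rotation by deviations into a single target group, and (ii) introduce asymmetric ``effective sizes'' via selectively placed $0$-edges inside the modules rather than via global module sizes --- at which point you are essentially reconstructing the paper's gadget.
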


%A detailed proof of Lemma~\ref{lem:2-stable} can be found in the Appendix (see Theorem~\ref{thm:stability:ktree} and Lemma~\ref{lem:n-square}).
Figure~\ref{fig:counter-examples} presents a graph with 18 vertices, that is does not admit any stable partition for $k=3$. Note in particular the presence of $4$ indifferent edges shown in dashed lines: Without these edges a $k$-stable partition would exist for all values of $k$. The proof illustrates the complexity of sequence of $3$ deviations which are able to create infinite sequence of deviation at constant utility. 
To keep the graph readable, we use conventions. 
(1) Some sets of nodes are grouped within a circle; an edge from another node to that circle denotes an edge to \emph{all} elements of this set.
(2) All nodes that are not connected by an edge on the Figure are enemies with weights $-\infty$. 
(3) Green solid edges represent edges with weight $1$, whereas blue dashed edges represent edges with weight $0$. 
%We deduce then:
%\begin{corollary}
%\label{lem:counter-example-2}
%Let $\mathcal{W} = \{-\infty,0,1\}$.
%Then $k(\mathcal{W}) = 2$.
%\end{corollary}

\subsection{Game with general weights}

While the two results we obtained ($k(\mathcal{W})\geq 1$ in general, $k(\mathcal{W})=2$ when $\mathcal{W}$ contains a single positive weight) seem constrained, we now prove that these are the best results that one can hope for. 
%In particular, we show in this section the negative influence of some pattern for stability.
%This pattern is presented in Figure~\ref{fig:nonstable}.
%We claim there does not exist any $2$-stable partition for the graph it represents.
%In this case, no partition $P$ is $2$-stable unless it contains either $\{u_1,v_2,v_3\}$, $\{v_1,u_2,v_3\}$, or $\{v_1,v_2,u_3\}$. 
%Without loss of generality, suppose $\{u_1,v_2,v_3\} \in P$. 
%Among remaining nodes, $u_2$ and $u_3$ are enemies and hence in different groups; $v_1$ will then join $u_2$. 
%Consequently, there is a $2$-deviation for $P$, namely: nodes $v_1$ and $v_2$ leave their respective groups, and join the group containing $u_3$ to form a positive triangle.
%
%
%\begin{table}[t]
%\begin{center}
%\noindent
%\begin{tabular}{c|c}
%%\hline
%%\hline
%$k(\mathcal{W})$ & $\mathcal{W}$ \\
%\hline
%%\hline
%$1$ & $\{-\infty,a,b\}$, $0 < a < b$\\
%\hline
%$2$ &  $\{-\infty,-\mathbb{N},0,1\}$\\
%\hline
%$\infty$ & $\{-\infty,b\}$, $b > 0$; 
%$\mathcal{W}\subseteq \NatInt \cup \{N\}$;
%$\mathcal{W}\subseteq -\NatInt \cup \{-\infty\}$;
%$\mathcal{W} \subseteq -\NatInt\cup\{N\}$;
%\\
%\end{tabular}
%\end{center}
%\vspace{-0.5cm}
%\caption{Values of $k(\mathcal{W})$ for different $\mathcal{W}$.}
%\label{tab:tab-1}
%\vspace{-0.2cm}
%\end{table}
%
%
%
%We will show in the Appendix (Lemmas~\ref{lemma:extent},~\ref{lem:counter-example-3}, Corollary~\ref{cor:mn}) many variations exist for this pattern.
Table~\ref{tab:tab-1} summarizes the most important tight values for $k(\mathcal{W})$.
In particular we prove $k(\mathcal{W})=\infty$ if and only if we have one of the following cases: $\mathcal{W}=\{-\infty,b\}$, which is equivalent to the uniform case; $\mathcal{W}$ has no negative (or no positive) elements, in which case the game is trivial; and $\mathcal{W} \subseteq -\mathbb{N} \cup \{N\}$, where $N$ denotes \emph{best friends} (see footnote 6 p.4) and stable groups are those connected by $N$-edges.

\begin{figure}%
\centering
\begin{minipage}[c]{0.38\textwidth}%
\centering
%\vspace{-0.6cm}
    \includegraphics[width=1\textwidth]{Figures/contre-nouveau.pdf}
%\vspace{-0.8cm}
\caption{Graph that does not admit a 3-stable partition.}
\label{fig:counter-examples}
\end{minipage}
\qquad
\parbox{0.5\textwidth}{
%\vspace{0.6cm}
\begin{center}
%\begin{footnotesize}
\begin{tabular}{c|c}
%\hline
%\hline
$k(\mathcal{W})$ & $\mathcal{W}$ \\
\hline
%\hline
$1$ & $\{-\infty,a,b\}$, $0 < a < b$\\
\hline
$2$ &  $\{-\infty,-\mathbb{N},0,1\}$\\
\hline
$\infty$ & $\{-\infty,b\}$, $b > 0$; \\
& $\mathcal{W}\subseteq \NatInt \cup \{N\}$; \\
& $\mathcal{W}\subseteq -\NatInt \cup \{-\infty\}$; \\
& $\mathcal{W} \subseteq -\NatInt\cup\{N\}$;
\\
\end{tabular}
%\end{footnotesize}
\caption{Values of $k(\mathcal{W})$ for different $\mathcal{W}$.}
\label{tab:tab-1}
\end{center}
}
%\vspace{-0.2cm}
\end{figure}

\subsection{Intractability with conflict graphs}

Under general weights, we have proved that not all games in general have a $k$-stable partition, even for relatively simple set of weights and small values of $k$. On the other hand, one could argue that these results comes from pathological cases which could be ruled out after checking a property of the graph. We prove that cannot be the case: not only is the stability not guaranteed for a given game when $k > k(\mathcal{W})$, but it is computationally prohibitive to decide it. We first define:

\begin{definition}[\SD]
Let $k \geq 1$ and let $\mathcal{W}$ be the (constant) set of weights.
Given a graph $G=(V,w)$, does there exist a $k$-stable partition?
% for the nodes? 
\end{definition}

%Theorem~\ref{thm:complexity-decision} shows that $k(\mathcal{W})$ determines the complexity of the \SD.

\begin{theorem}
\label{thm:complexity-decision}
For $k \geq 1$ and $\mathcal{W}$ containing $-\infty$, either a $k$-stable partition always exists (\ie $k\leq k(\mathcal{W})$); or the \SD is NP-hard.
%\begin{itemize}
%\vspace{-0.1cm}
%\item a $k$-stable partition always exists (\ie $k\leq k(\mathcal{W})$);
%\vspace{-0.1cm}
%\item or the \SD is NP-hard.
%\end{itemize}
\end{theorem}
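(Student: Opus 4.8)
The plan is to prove the NP-hardness direction by a reduction from \MIS, exploiting the single counterexample graph $H$ whose existence is guaranteed whenever $k > k(\mathcal{W})$. The statement has the flavor of a dichotomy: if $k \leq k(\mathcal{W})$ there is nothing to prove (every graph with weights in $\mathcal{W}$ has a $k$-stable partition by definition of $k(\mathcal{W})$), so assume $k > k(\mathcal{W})$. By definition of $k(\mathcal{W})$ there is at least one graph $H=(V_H,w_H)$ with weights in $\mathcal{W}$ that admits no $k$-stable partition; fix one such ``gadget'' $H$ once and for all. The key idea is that we can ``switch $H$ on or off'' using best-friend edges (weight $N$, or, if $N \notin \mathcal{W}$, using conflict edges $-\infty$ together with whatever large weight is available — this case split is one thing to be careful about). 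Attaching $H$ to a vertex $v$ of an auxiliary structure in such a way that $H$ is ``activated'' (its vertices are forced to behave as the isolated $H$, hence never stabilize) exactly when $v$ is placed alone — i.e. when $v$ is selected into an independent set — will let an independent set of size $\geq t$ translate into ``enough activated copies of $H$'' and conversely.

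Concretely, here is the reduction I would build. Given an instance $(G,t)$ of \MIS with $G=(V,E)$, $V=\{v_1,\dots,v_n\}$: create $n$ disjoint copies $H_1,\dots,H_n$ of the gadget $H$, a ``selector'' vertex $x_i$ attached to $H_i$ for each $i$, an edge of conflict ($-\infty$) between $x_i$ and $x_j$ whenever $\{v_i,v_j\}\in E$ (so selected vertices form an independent set), and a ``dummy reservoir'' $D$ of $n-t+1$ mutually-enemy vertices each joined to every $x_i$ by a positive (best-friend) edge, so that in any configuration at least $n-t+1$ of the selectors are ``parked'' with a dummy and thus \emph{inactive}, leaving at most $t-1$ selectors free — unless... this is where I would tune the arithmetic so that a $k$-stable partition of the whole construction exists if and only if the number of ``free'' (hence activated) selectors can be kept to zero, which happens iff the parked selectors can be matched conflict-freely, i.e. iff $G$ has an independent set of size $t$. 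The utility bookkeeping should be arranged so that a selector $x_i$ strictly prefers being parked with a dummy to anything else, unless a conflict ($x_i$–$x_j$ edge) prevents two selectors from parking on available dummies; the ``overflow'' selector is then forced to sit with its copy $H_i$, activating it, and by choice of $H$ this subconfiguration admits an infinite $k$-deviation sequence, so no $k$-stable partition exists. I would verify (a) weights used are all in $\mathcal{W}$ (using $-\infty \in \mathcal{W}$, and $N$ or a positive/large element as available — the main fiddly point when $\mathcal{W}$ has no positive weight, where one reverses the roles of attract/repel), (b) the construction has size polynomial in $|V|+|E|$ since $H$ is a fixed constant-size graph, and (c) the equivalence ``$G$ has an independent set of size $t$ $\iff$ the coloring game is $k$-stable-able''.

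The main obstacle I anticipate is the forward implication: showing that if $G$ has no independent set of size $t$, then \emph{every} partition $P$ of the construction admits a $k$-deviation. One must argue that in any $P$, at least one copy $H_i$ is in its ``activated'' state (because the pigeonhole/conflict structure forces an overflow selector onto some $H_i$), and then that the known bad behavior of the isolated $H$ survives the attachment — i.e. that the best-friend/conflict edges tying $H_i$ to $x_i$ do not accidentally create a $k$-stable configuration inside $H_i \cup \{x_i\}$ that the isolated $H$ lacked. This requires the attachment to be ``inert'': the $x_i$–$H_i$ edges should be strong enough to pin $x_i$ but weak enough (or structured so) that they never participate in a profitable $k$-deviation once $x_i$ is activated, so that $k$-deviations internal to $H_i$ are exactly those of the stand-alone $H$. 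I would handle this by giving $x_i$ a dedicated ``best friend'' inside $H_i$ of weight $N$ so large that $x_i$ and that friend are glued in any partition that could conceivably be $k$-stable, effectively reproducing $H$ with one distinguished vertex permanently colored; one then checks that $H$ still has no $k$-stable partition after this cosmetic modification (this may require very mildly enlarging $H$, but it is still a fixed gadget). The converse direction — an independent set of size $t$ yields a genuinely $k$-stable partition of the whole thing — is comparatively routine: park the $t$ non-selected-complement appropriately, deactivate every $H_i$, and invoke $k$-stability of the deactivated (and now ``trivial'') pieces, which hold because the activating edge is neutralized.
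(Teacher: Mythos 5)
Your high-level strategy -- reduce from \MIS using the counterexample graph as a gadget with an ``on/off'' switch controlled by whether a distinguished vertex can be lured away -- is indeed the paper's strategy. But as written the reduction cannot work, because your notion of ``deactivation'' does not actually neutralize the gadget. Each $H_i$ is a full copy of the counterexample $H$, which by assumption admits \emph{no} $k$-stable partition; parking the external selector $x_i$ with a dummy leaves the internal dynamics of $H_i$ completely untouched, so every ``deactivated'' copy still supports an endless supply of $k$-deviations and the whole construction never admits a $k$-stable partition, independent set or not. The idea you are missing is the paper's minimality argument: one first deletes vertices from the counterexample $G_0$ until reaching a vertex $x_0$ whose removal \emph{restores} the existence of a $k$-stable partition $P^0$ of $G_0\setminus x_0$. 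The switch is then $x_0$ itself, not an external selector: ``off'' means $x_0$ is pulled out of $G_0$ into a large independent-set group (realized by blowing each \MIS-vertex up into a $w_p$-clique of size $\alpha$, so that an independent set of size $c$ gives $x_0$ utility $w_p\alpha c$), leaving behind the stabilizable $G_0\setminus x_0$; ``on'' means no independent set is large enough to beat what $x_0$ can earn inside the (padded) copy of $G_0$, so $x_0$ stays engaged and the live counterexample forbids stability. A padding construction ($n$ attached $w_p$-cliques of size $t$) is used to calibrate $x_0$'s utilities on the two sides so that $f_0 + w_p t \leq w_p\alpha c$ exactly when the independent set has size $\geq c$; this calibration is the content your ``tune the arithmetic'' placeholder would have to supply, and it only needs a single copy of the gadget rather than $n$.

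Two further problems with your sketch. First, you lean on best-friend weights $N$ as free glue, but $\mathcal{W}$ is an arbitrary fixed set containing $-\infty$ and need not contain $N$; the paper's constructions use only $w_p$ (the largest positive weight actually present in $\mathcal{W}$) and $-\infty$, amplifying utilities via cliques instead of via a large weight. Second, your dummy-reservoir arithmetic is inconsistent: with $n$ selectors and $n-t+1$ dummies you cannot both ``park at least $n-t+1$ selectors'' and ``keep the number of free selectors at zero,'' and the pigeonhole argument in the no-instance direction yields an independent set of size roughly $n/(n-t+1)$ rather than $t$. These are repairable, but the deactivation flaw above is structural: without the removable vertex $x_0$ and the stability of $G_0\setminus x_0$, there is no ``off'' state at all.
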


The previous result of NP-hardness in~\cite{kleinberg2013information} requires another kind of deviations in addition to the classical $k$-deviations we consider in our paper (see \emph{gossip deviations} in Section~\ref{sec:extensions}).
Moreover, the large positive weight $N$ is essential in their proof, whereas ours overrules these strong constraints.

\def\MIS{{\sc Maximum Independent Set problem}\xspace}

\section{Extensions of coloring games}
\label{sec:extensions}

All theoretical models of social dynamics should consider whether the overall behavior of the model is not too limited by some simplifying assumptions.
We prove this is not the case, as our results extend to account for various situations in the formation of social groups, including variants previously discussed and new ones.

\subsection{Gossiping}
\label{sec:gossip}

%The same kind of pathological deviations arises with gossip.
%In this extension, the nodes that are willing to merge their groups are allowed to do so, even without the agreement of some users in their respective groups.
%A first negative result has been proved in~\cite{kleinberg2013information}: when $\mathcal{W}=\{-\infty,1,N\}$, it is NP-hard to decide whether a given weighted graph admits a gossip-stable partition.
%We prove the same result when $\mathcal{W}=\{-\infty,0,1\}$.

In this model, all $k$-deviations as previously defined are allowed, and in addition two nodes belonging to two different groups may ``gossip'', and in that case the two groups that they are part of are merged. Obviously, and as before this deviation will only takes place if the two nodes benefit from the merge, but what is unique here is that this deviation does not require that other nodes in these groups benefits from the merge. They may see a decrease in utility, but they are not given a choice to block the deviation, although they are in some sense seeing a modification of their group.

This actually turns out to be equivalent to our model in the uniform case. One can easily check that, in the uniform case, there is a valid $1$-deviation whenever there exists a gossip-deviation. Consequently all our results apply in that case, closing previously open problem with this model. In the weighted case, a first negative result has been proved in~\cite{kleinberg2013information}: when $\mathcal{W}=\{-\infty,1,N\}$, it is NP-hard to decide whether a given weighted graph admits a gossip-stable partition. This proof hence requires multiple positive weights, including a weight that grows with the number of nodes. We provide a stronger negative result:
\begin{theorem}
\label{NPKL}
Deciding whether there exists a $2$-stable, gossip-stable partition for a given graph $G=(V,w)$ is NP-hard, even with $\mathcal{W}=\{-\infty,0,1\}$.
\end{theorem}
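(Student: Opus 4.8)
The plan is to give a polynomial-time reduction from the \MIS, the same general strategy as for Theorem~\ref{thm:complexity-decision}, but now carefully engineered so that (a) only $2$-deviations and gossip-deviations appear, and (b) no positive weight beyond $1$ is used. Given a graph $H$ and an integer $\ell$, I would build a weighted graph $G$ with $\mathcal{W}=\{-\infty,0,1\}$ whose $2$-stable, gossip-stable partitions are in correspondence with independent sets of $H$ of size at least $\ell$. The core idea is to reuse the gadget behind Lemma~\ref{lem:counter-example}: we know there exists a fixed graph $G_0$ on $\{-\infty,0,1\}$ with \emph{no} $3$-stable partition. Under gossiping, the extra merge moves are strictly more powerful than ordinary $2$-deviations, so I would first verify the sharper statement that $G_0$ (or a small modification of it) already admits no partition that is simultaneously $2$-stable and gossip-stable; intuitively the ``rotating'' sequences that obstruct $3$-stability can be re-implemented with a $2$-deviation followed by a gossip merge, so the instability survives. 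This ``universal obstruction'' $G_0$ is the engine of the reduction.

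Next I would assemble $G$ out of three parts. First, a vertex-selection region: for each vertex $x$ of $H$ a small constant-size gadget with a private stable ``chosen'' configuration and a private ``unchosen'' configuration, connected by weight-$1$ edges so that locally both are $2$-/gossip-stable. Second, edges of $H$ are encoded by $-\infty$ conflict edges (or weight-$0$ edges where appropriate) linking the ``chosen'' parts of adjacent gadgets, so that two adjacent vertices cannot both sit in their chosen configuration without creating a profitable $2$-deviation — this enforces the independent-set condition. Third, a counting/trigger region attached to all the selection gadgets: it contains a scaled copy of the obstruction $G_0$ whose instability is ``switched off'' precisely when at least $\ell$ selection gadgets are in their chosen configuration (e.g.\ the chosen gadgets collectively supply enough weight-$1$ neighbours to a set of trigger nodes so that those nodes lock into a stable group and thereby freeze $G_0$'s variable part), and is ``switched on'' — hence admits no $2$-/gossip-stable state — otherwise. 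Then $G$ has a $2$-stable, gossip-stable partition if and only if $H$ has an independent set of size $\ge \ell$.

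The two correctness directions are then routine: from an independent set $S$ of size $\ge\ell$ put the gadgets of $S$ in their chosen state, everyone else in the unchosen state, freeze $G_0$, and check no $2$-deviation or gossip move is profitable (all checks are local and finite per gadget, using that weights are in $\{-\infty,0,1\}$ so utilities are just neighbour counts). Conversely, in any $2$-/gossip-stable partition the trigger region forces $\ge\ell$ gadgets into the chosen state, and the conflict edges force the corresponding vertices to be independent in $H$; one must also argue that no gadget can be in a ``hybrid'' or unintended configuration, which is where most of the bookkeeping lives.

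The main obstacle I expect is the interface between the selection gadgets and the copy of $G_0$: I must ensure that the only way to stabilize $G_0$'s internal instability is via the intended ``$\ell$ chosen gadgets'' signal, and that no clever gossip-merge across the interface either stabilizes $G_0$ for free or destabilizes an otherwise-good configuration. Controlling gossip is the delicate point, since a gossip move merges whole groups without the consent of their members, so the gadgets must be designed so that any unintended merge strictly hurts one of the two initiating nodes (making it invalid) — this forces careful placement of $-\infty$ edges between gadget ``cores'' and a tight analysis that weight-$0$ edges, which are invisible to utility, do not create unexpected profitable merges. Getting this ``gossip-tightness'' of every gadget, and of $G_0$ itself, is the crux; once it is in place the rest of the argument follows the template of Theorem~\ref{thm:complexity-decision}.
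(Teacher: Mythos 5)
There is a genuine gap, and it sits exactly where your proposal leans hardest. Your entire reduction is powered by the claim that the graph $G_0$ of Lemma~\ref{lem:counter-example} (no $3$-stable partition) ``or a small modification of it'' admits no partition that is simultaneously $2$-stable and gossip-stable, on the grounds that the obstructing $3$-deviations ``can be re-implemented with a $2$-deviation followed by a gossip merge.'' This does not follow and you never verify it. A $3$-deviation moves three chosen vertices into a target group while leaving the rest of their old groups behind; a gossip-deviation merges two \emph{entire} groups and is profitable only for the two initiators under a different utility calculus. In $G_0$ almost every pair of groups contains mutual enemies ($-\infty$ edges), so the corresponding whole-group merges are blocked whenever one of the two initiators would inherit an enemy, even though the selective $3$-deviation used in the proof of Lemma~\ref{lem:counter-example} is profitable. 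So it is entirely possible that $G_0$ \emph{does} admit a $2$-stable, gossip-stable partition, in which case your ``universal obstruction'' engine does not exist and the reduction collapses. Beyond this, the selection gadgets, the conflict encoding, and especially the threshold/trigger region that ``switches off'' an instability once $\ell$ gadgets are chosen are described only as intentions; with weights confined to $\{-\infty,0,1\}$ you cannot tune utilities the way Theorem~\ref{thm:complexity-decision} does with large cliques weighted by $w_p$ and the constant $\alpha=\lceil f_0/w_p\rceil$, and the ``gossip-tightness'' of every gadget --- which you yourself identify as the crux --- is exactly the part that is missing.

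For comparison, the paper avoids all of this by reducing from $3$-coloring rather than from independent set, and by using gossip \emph{constructively} rather than trying to contain it: each vertex $v$ of the instance is replaced by a small gadget $v_1,v_2,v_{c_1},v_{c_2},v_{c_3}$ tied to three global color vertices $c_1,c_2,c_3$, and the weights are arranged so that gossip forces every gossip-stable partition to collapse into exactly three groups, one per color class; a $2$-stable, gossip-stable partition then exists iff the instance is properly $3$-colorable. No ``unstable core'' gadget is needed at all. If you want to pursue your architecture, the first mandatory step is an actual proof that some concrete $\{-\infty,0,1\}$-weighted graph has no $2$-stable, gossip-stable partition --- and that statement is not implied by Lemma~\ref{lem:counter-example}.
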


Note that this proves that gossip makes this game strictly harder to decide: Gossip creates instability even when a unique and fixed positive weight exists. Without gossip, in such case a $2$-stable partition always exists.

\subsection{Asymmetry}
\label{sec:asym}

Studying directed graphs rather than undirected graphs is a natural generalization.
In this case, we may not have that $w_{uv} = w_{vu}$ for all nodes $u,v$.
However, even if modest generalization of the model, asymmetrical weights lead to intractability.
This can be seen with a simple digraph $D=(\{u,v\},w)$ such that $w_{uv} > 0$ whereas $w_{vu} < 0$.
Furthermore, the problem of deciding whether there exists a $1$-stable partition is NP-hard.
This result holds even when an upper-bound for the number of groups is known.

\begin{theorem}
\label{theorem:oneStableConfigurationProblemNP}
For any $k \geq 1$, deciding whether there exists a $k$-stable partition for a given digraph $D=(V,w)$ is NP-hard, even if the number of groups is upper-bounded by any constant $g \geq 2$.
\end{theorem}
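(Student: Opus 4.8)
\textbf{Proof plan for Theorem~\ref{theorem:oneStableConfigurationProblemNP}.}
The plan is to reduce from a classical NP-complete problem --- I would use \textsc{3-Sat} or, better for this partition-flavoured setting, \textsc{Graph $g$-Colorability} (which is NP-complete for every fixed $g\geq 3$; for $g=2$ we instead reduce from \textsc{Not-All-Equal 3-Sat} or \textsc{Monotone 3-Sat} to get a binary-partition gadget). The core idea is to exploit the asymmetric ``one-way conflict'' gadget already hinted at in the excerpt: a pair $u,v$ with $w_{uv}>0$ and $w_{vu}<0$ has the property that $v$ never wants to be in $u$'s group, but $u$ always wants to be in $v$'s group. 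Iterating this, I would build a small ``frustration'' gadget --- a directed cycle of such asymmetric edges --- that admits \emph{no} $1$-stable partition on its own (this is essentially the rotating instability of Figure~\ref{fig:nonstable}, now realised with asymmetry instead of distinct positive weights). The reduction then attaches, to each variable/clause of the SAT instance, a copy of this frustration gadget whose instability can be ``absorbed'' (i.e. rendered stable) if and only if there is a spare group of a prescribed type available to the gadget; satisfying assignments correspond exactly to colourings that free up the right groups, so that a $k$-stable partition exists iff the formula is satisfiable.

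Concretely, the steps I would carry out are: (1) fix $\mathcal{W}$ to contain $-\infty$, one positive weight, and enough negative weights (or just $-\infty$, $-1$, $1$, $N$ if needed), and describe the elementary asymmetric gadget $H$ on a constant number of nodes such that $H$ alone has no $1$-stable partition; (2) show that adding one ``pressure-release'' node $r$ of a designated colour class to $H$ makes it stabilisable, and that $r$ can service at most one gadget copy at a time (this is where the bound $g$ on the number of groups is enforced: I would hard-wire $g$ mutually-enemy ``anchor'' nodes so that at most $g$ colours are ever used, and the satisfying assignment chooses which anchor each variable-gadget docks onto); (3) given a SAT (or colouring) instance, assemble $O(\text{size})$ copies of the gadget, wire the release-nodes through enemy edges so that the set of simultaneously-stabilisable gadgets corresponds to a consistent truth assignment / proper colouring; (4) prove both directions --- a satisfying assignment yields an explicit $k$-stable partition (check no $k$-set of nodes can jointly improve, using that each gadget is locally stabilised and cross-gadget edges are all $-\infty$), and conversely any $k$-stable partition, restricted to the anchors, reads off a satisfying assignment because an unsatisfied clause leaves its gadget in the frustrated state, contradicting stability. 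Finally (5) observe that the construction is insensitive to $k$: the gadget has no $1$-deviation-free partition to begin with, so allowing larger coalitions only adds deviations and cannot create new stable partitions beyond the intended ones --- one just has to double-check that no $k$-deviation across several gadgets can accidentally stabilise an unsatisfied instance, which holds because inter-gadget edges carry weight $-\infty$ and hence no profitable joint move crosses gadget boundaries.

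The main obstacle I anticipate is step (5) combined with the ``number of groups bounded by $g$'' requirement: I need the frustration gadget to remain genuinely frustrated \emph{under coalitional deviations of every size}, not merely under unilateral ones, while still using only a bounded palette of colours. The risk is that a large coalition inside a single gadget finds a stable rearrangement that a lone player could not reach; to rule this out I would design the gadget so that its only ``escape'' from frustration is to send one particular node into an external colour class (the release move), and make every internal reconfiguration strictly non-improving for at least one participant via a careful weighting (using the best-friend weight $N$ to pin certain pairs together so the coalition's hands are tied). Getting this weighting to work simultaneously for all $k\geq 1$ and with at most $g\geq 2$ groups is the delicate part; the honest fallback, if a single gadget cannot be made robust for the $g=2$ case, is to present the $g=2$ construction separately with a tailored \textsc{Nae-3-Sat} reduction, and the general $g\geq 3$ case via colourability, which is the structure I expect the final proof to take.
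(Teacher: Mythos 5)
Your plan is a genuinely different route from the paper's, and as written it contains a concrete gap centered exactly on the clause you flag as delicate: the bound $g$ on the number of groups. You propose many copies of a frustration gadget with \emph{all} inter-gadget edges set to $-\infty$, while simultaneously hard-wiring anchors so that at most $g$ (possibly $g=2$) colours are ever used. These two design choices are incompatible once the number of gadgets exceeds $g$: in any partition into at most $g$ groups, some group must contain nodes from two different gadgets, i.e.\ two mutual enemies, so \emph{no} partition is even $1$-stable, independently of the SAT/colourability instance --- the reduction always outputs ``no'' and proves nothing. If you repair this by making inter-gadget edges weight $0$ instead of $-\infty$, then the gadgets genuinely share the $g$ colour classes, your step (5) argument (``no profitable joint move crosses gadget boundaries'') no longer applies verbatim, and you must re-verify that one gadget's release move cannot be blocked or enabled by the group structure another gadget imposes on the same colour class. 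None of this is fatal in principle, but it is the heart of the proof and it is precisely the part the plan defers.

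For comparison, the paper avoids the multi-gadget tension entirely by using a \emph{single global} construction: a reduction from \textsc{Partition}. Given numbers $S$ with sum $T$, it builds $V=S\cup\{z,u,v\}$ with $u,v$ mutual enemies (forcing exactly two groups), $w_{s u}=w_{s v}=T-s+1$ but $w_{us}=w_{vs}=0$, $w_{s_1 s_2}=-s_2$, and a tie-breaker $z$ with $w_{sz}=-w_{zs}=s$ and $w_{zu}=w_{zv}=T+1$. The asymmetric weights make each element's utility in a two-group split equal to (the opposite side's total)$+1$ plus a bonus from $z$, so a profitable \emph{unilateral} move from the heavier side to the lighter side exists iff the sums differ; when they are equal, every individual and every coalition is already at its maximum attainable utility, so the partition is $k$-stable for every $k$ simultaneously. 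The two-group bound and the ``for all $k$'' robustness thus both fall out of the same arithmetic, which is what your gadget-based plan still has to engineer by hand. If you want to keep your modular approach, the $g=2$ case in particular will likely force you toward something resembling a single balance-type condition anyway.
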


\subsection{Multichannel model and overlapping groups}
\label{sec:multichannel}

One assumption of our model is that nodes form a partition, hence they are limited to a single channel to interact with their peers. 
In reality participants in social networks may engage in \emph{multiple} groups, although one can participate to only so many of them due to time.
This motivates us to extend partitions of the nodes $P$ into multisets, that we call 'configurations'.
We denote such a configuration $C$, and it is said to use $q$ channels if each user participates (at most) to $q$ groups.
As a user can always form singleton groups, we can assume that $|X(u)|=q$ for any $u$, where $X(u)$ now denotes the set of all groups in $C$ that contain $u$. 
In other words $C = (X_1,X_2,\ldots,X_{qn})$ with possibly empty groups.
Note that partitions of the nodes $P$ are exactly configurations with a single channel.

The utility of $u$ depends on the number of groups that $u$ shares with each peer:
\begin{equation}
	f_{u}(C) = \sum_{v \in V} h\left( \left| X(u)\cap X(v) \right| \vf w_{uv}\right)\vf
	\label{eq:userutility}
\end{equation}
where $h(g,w)$ is a function that measures the utility of sharing $g$ groups with a node with weight $w$.
Note that we assume, without loss of generality, that
\[
\begin{disarray}{l}
	h(0,.)=0\vf h(.,0)=0 \midwor{and} \forall w\in\RelInt \vf h(1,w)=w\vf \\
\forall g\in\NatInt \vf w\mapsto h(g,w) \midwor{is a non-decreasing function,} \\
\forall w\in\RelInt \vf g\mapsto w\cdot h(g,w) \midwor{is a non-decreasing function.}
\end{disarray}
\]
The last property simply ensures that $h(g,w)$ increases with $g$ when $w$ is positive, and decreases with $g$ when $w$ is negative.
As an example: $h(g,w)=s(g)\cdot w$, for $s$ a non-decreasing function that initially takes values $0$ and $1$, always satisfies these properties.
Especially, when $s(g)=\ind{g>0}$:
\vspace{-0.2cm}
\begin{equation}
	f_{u}(C) = \sum_{v \in V} \ind{ X(u)\cap X(v) \neq \emptyset} \cdot w_{uv} \ff 
	\label{eq:userutility1channel}
\end{equation}

Using the same potential function as before, \eg global utility, it follows that there always exists a $1$-stable configuration with $q$ channels.
It is now natural to wonder what is the behavior of $k_q(\mathcal{W})$ when the number of channels $q$ is higher than $1$.
While it may have none or positive effects, we show that it is not always the case.
Indeed, even for uniform games, Lemma~\ref{lem:contre-exemple-q=2} proves $k_2(\{-\infty,1\}) \leq 2$ with $2$ channels while $k_1(\{-\infty,1\}) = k(\{-\infty,1\}) = \infty$ with single channel. 
The idea is to emulate a weight $w_{uv} = 0$ by forcing nodes $u$ and $v$ to be in a predetermined group.

\begin{lemma}
\label{lem:contre-exemple-q=2}
Let us fix $h$ is as $h(0,w) = 0$, $h(g+1,w) = (1 + \varepsilon g)w$, where $\varepsilon > 0$ is arbitrarily small.
There exists a graph $G=(V,w)$ with $\mathcal{W}=\{-\infty,1\}$ such that any configuration with $2$ channels for $G$ is not $3$-stable.
\end{lemma}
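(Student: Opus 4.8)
\textbf{Proof plan for Lemma~\ref{lem:contre-exemple-q=2}.}

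The plan is to build a gadget that, with $2$ channels available, simulates an ``indifferent'' edge of weight $0$, thereby importing the weighted counterexample behind Lemma~\ref{lem:counter-example} into the uniform setting. The point is that with the chosen $h$, once two nodes $u,v$ are \emph{pinned} into a common group together with a large ``anchor'' clique of best-friend-like structure, any attempt by $u$ to also share $v$'s \emph{second} channel gains only an $O(\varepsilon)$ increment, which can be swamped by the $\Theta(1)$ losses incurred elsewhere; conversely, if $u$ and $v$ are in only one shared group, a further sharing is never strictly worth it once $\varepsilon$ is small enough. So the effective marginal value of the ``extra'' interaction between $u$ and $v$ is $0$, i.e.\ behaves like the dashed edges in Figure~\ref{fig:counter-examples}.

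Concretely, I would proceed as follows. First, take the $18$-vertex graph $H$ of Lemma~\ref{lem:counter-example} with $\mathcal{W}=\{-\infty,0,1\}$, where the $4$ weight-$0$ (indifferent) edges are distinguished; call them $e_1,\dots,e_4$. Second, for each indifferent edge $e_j=\{u_j,v_j\}$ introduce a ``channel-locking gadget'': add a block $B_j$ of $m$ fresh vertices, make $B_j\cup\{u_j,v_j\}$ a clique of unit-weight (friend) edges, and make every vertex of $B_j$ an enemy ($-\infty$) of every vertex outside $B_j\cup\{u_j,v_j\}$ and of every other block. Third, choose $m$ large (say $m > $ the maximum possible total utility any node can collect from the $H$-part, which is $O(1)$): then in \emph{any} configuration, each of $u_j,v_j$ must devote one of its two channels to sitting inside one big group containing a large chunk of $B_j$ — otherwise that node strictly gains by doing so, since a weight-$1$ edge to each of $\Theta(m)$ block-nodes dominates. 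After this pinning, $u_j$ has exactly one channel left to spend on its $H$-neighbours, and the presence/absence of $v_j$ in that second group contributes only through the $h(2,\cdot)$ term, worth $\varepsilon\cdot 1$; making $\varepsilon$ small enough that $\varepsilon$ times the (bounded) number of such interactions is less than $1$ forces the $H$-part to behave exactly as the weighted game with a $0$-weight on $e_j$. Fourth, quote Lemma~\ref{lem:counter-example}: the $H$-part admits no $3$-stable partition, hence — lifting a destabilizing $3$-deviation in $H$ to a $3$-deviation among the corresponding nodes' free channels in $G$, and checking the block-pinned channels are never part of a profitable deviation because disturbing them costs $\Theta(m)$ — no configuration with $2$ channels for $G$ is $3$-stable. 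One also checks enemies are respected: the $-\infty$ edges between blocks and the rest make sure the ``locked'' groups can only ever contain block-nodes plus their two endpoints, so the gadgets do not interfere with each other or with $H$.

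The main obstacle I expect is the fourth step: the careful bookkeeping showing that a $3$-deviation never needs to touch a ``locked'' channel and that the $\varepsilon$-terms, summed over all $v\in V$ in \eqref{eq:userutility}, stay strictly below the integer gaps that drive the instability in $H$ — in other words, proving the simulation is \emph{exact} with respect to $3$-deviations, not just $1$-deviations. This requires bounding, uniformly over all reachable configurations, how many nodes a given node can simultaneously $2$-share with (at most $q\cdot(\text{max group size})$, which is $O(m)$), so one must instead bound the \emph{$H$-relevant} $2$-sharings, whose number is $O(1)$ because the $H$-part has bounded degree and the block nodes are enemies of everything in $H$. Once that separation of scales ($\Theta(m)$ pinning forces vs.\ $\Theta(1)$ $H$-utilities vs.\ $O(\varepsilon)$ overlap bonuses) is made rigorous, the reduction to Lemma~\ref{lem:counter-example} is routine, and picking $m=19$ and $\varepsilon<1/18$ already suffices, keeping $|V(G)|$ a fixed constant.
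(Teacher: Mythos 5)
Your strategy is the same as the paper's: for each weight-$0$ edge $\{u,v\}$ of the graph of Lemma~\ref{lem:counter-example}, attach a large unit-weight clique that forces $u$ and $v$ to spend one of their two channels on a common ``locked'' group, so that the residual interaction between $u$ and $v$ on the remaining channel is worth only $h(2,1)-h(1,1)=\varepsilon$, emulating an indifferent edge; then lift the instability of Lemma~\ref{lem:counter-example}. The paper's construction is exactly your blocks $B_j$ (a clique $K_{(u,v)}$ of order $p>2|V_{q=1}|$, with $w_{uv}$ reset to $1$), and its case analysis establishes the pinning you describe.

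However, there is one genuine gap. You only attach pinning gadgets to the $8$ endpoints of the four indifferent edges. The remaining vertices of $H$ then keep \emph{two} free channels, so a configuration of $G$ restricted to $H$ is a $2$-channel configuration of $H$, not a partition: such a vertex can sit in two distinct nontrivial $H$-groups simultaneously, a situation with no counterpart in the single-channel game. Lemma~\ref{lem:counter-example} asserts that no $3$-stable \emph{partition} of $H$ exists; it says nothing about $2$-channel configurations of $H$, so your step four (``quote Lemma~\ref{lem:counter-example}'') does not apply to the object you actually obtain. The paper closes exactly this hole by also attaching a pinning clique $K_u$ to \emph{every} vertex $u$ of $V_{q=1}$ that is not an endpoint of a $0$-edge, so that after pinning each $H$-vertex has exactly one free channel and the $H$-part genuinely reduces to the single-channel game. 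With that addition (and the same scale separation you already set up), your argument matches the paper's proof. A minor further caution: your choice $m=19$ is tighter than necessary --- the relevant bound is $m$ strictly larger than the maximum $2$-channel utility $(1+\varepsilon)(|V(H)|-1)$ achievable inside $H$, and the paper conservatively takes $p>2|V(H)|$.
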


In the case of general weights, a stability criterion depending on $q$ has yet to be found.
We prove in Theorem~\ref{the:chaotic-behavior} that any chaotic behavior may be observed between any two consecutive numbers of channels.

\begin{theorem}
\label{the:chaotic-behavior}
For any sequence of positive integers $q_{1}, q_{2}, \ldots, q_{p}$, and for any sequence $1 \leq i_{1} < i_{2} < \ldots < i_{l} \leq p$, there exists a graph $G=(V,w)$ such that there is a $2$-stable configuration with $q_j$ channels for $G$ if, and only if, $j \neq i_{t}$, for every $1 \leq t \leq l$.
\end{theorem}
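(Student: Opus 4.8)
The plan is to construct, for the given sequences $q_1,\dots,q_p$ and $1\le i_1<\dots<i_l\le p$, a single graph $G$ by taking a disjoint union of gadgets, one for each index $t\in\{1,\dots,l\}$, where gadget $H_t$ is designed so that it admits a $2$-stable configuration with $q_j$ channels if and only if $q_j\ne q_{i_t}$. Since the gadgets live on disjoint vertex sets (all cross-gadget pairs being enemies, weight $-\infty$), a configuration of $G$ is $2$-stable precisely when its restriction to every $H_t$ is $2$-stable; hence $G$ has a $2$-stable configuration with $q_j$ channels iff every $H_t$ does, i.e.\ iff $q_j\notin\{q_{i_1},\dots,q_{i_l}\}$, which is exactly the claim. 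So everything reduces to building one gadget.

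First I would build the single gadget $H$ parametrized by a target channel count $q^\star$: a graph on weights chosen in a suitable $\mathcal{W}$ such that (i) for $q\ne q^\star$ a $2$-stable configuration with $q$ channels exists, and (ii) for $q=q^\star$ no configuration with $q^\star$ channels is $2$-stable. The construction idea is the one already hinted at in Lemma~\ref{lem:contre-exemple-q=2}: use the multichannel structure to \emph{emulate} a desired weighted graph whose (non-)stability is known, by forcing certain pairs of nodes to share a fixed number of channels via ``best friend'' $N$-edges, so that the number of \emph{remaining} free channels equals $q - q^\star$ exactly when $q > q^\star$, and is ``degenerate'' (negative/impossible, forcing singletons, hence trivially stable) when $q < q^\star$. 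Concretely, I would attach to a core copy of the weight-$2,3,4$ rotation gadget of Figure~\ref{fig:nonstable} (which has no $2$-stable \emph{partition}) a bundle of $q^\star-1$ auxiliary ``anchor'' vertices joined by $N$-edges, forcing every core vertex to spend $q^\star-1$ of its channels on fixed anchor groups; then with $q=q^\star$ channels each core vertex has exactly one free channel, so a $2$-stable configuration would restrict to a $2$-stable partition of the core — impossible. With $q>q^\star$ channels there is slack: the core vertices can each open private singleton groups with the extra channels, and one checks that the ``everyone isolated on the core'' configuration (or a trivially stable one) is then $2$-stable, since no pair of core vertices can profit: any deviation that helps both in the core would again be a deviating pair in the underlying partition, but the extra channels let us park the core in an all-singletons state where the only positive-weight edges are unused yet no $2$-deviation strictly improves both endpoints (here I would verify the ``no $2$-deviation'' condition directly from the edge weights, as in the proof of Lemma~\ref{lem:counter-example}). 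With $q<q^\star$ channels the $N$-anchor constraints are infeasible in the sense that the core vertices are forced into all-singleton behaviour on the core — again trivially $2$-stable because there is no room for any pair to move together.

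The key steps, in order: (1) state and verify the one-gadget lemma $H(q^\star)$ with the three cases $q<q^\star$, $q=q^\star$, $q>q^\star$; (2) observe that the $N$-edge anchoring mechanism correctly ``subtracts'' $q^\star-1$ channels regardless of $h$ (using the assumed monotonicity of $h$ and the definition $N\ge n\lvert w_{uv}\rvert$ so that breaking an $N$-edge is never profitable), so the analysis reduces uniformly to the single-channel weighted case already understood; (3) take the disjoint union $G=\bigsqcup_{t=1}^{l} H(q_{i_t})$ with all cross-gadget weights $-\infty$; (4) conclude via the locality of $2$-stability under disjoint unions. I expect the main obstacle to be step (1), specifically the case $q>q^\star$: showing that the slack channels genuinely \emph{kill} instability rather than merely relocating it — one must argue that with the free channels there is a configuration in which no pair of core vertices has a joint profitable move, which requires choosing the core gadget and its weights carefully (e.g.\ so that in the ``spread out'' configuration every potential merge strictly hurts at least one of the two movers, leveraging the enemy/weight-$0$ structure exactly as in Figure~\ref{fig:counter-examples}). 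The rest is bookkeeping about channel counts and the routine disjoint-union argument.
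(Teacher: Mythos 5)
Your overall architecture is the same as the paper's: a disjoint union of gadgets with all cross-gadget weights $-\infty$, each gadget using a bundle of $q^\star-1$ best-friend ($N$-weight) anchors to ``subtract'' channels and reduce the $q=q^\star$ case to a single-channel graph with no $2$-stable partition. The disjoint-union step and the $q\le q^\star$ cases are fine. The gap is in the case $q>q^\star$, which you yourself flag as the main obstacle but then resolve incorrectly. Your proposed stable configuration --- park the core in an ``all-singletons'' state on the surplus channels, with the positive-weight edges unused --- cannot be $2$-stable for the Figure~\ref{fig:nonstable} core (or for any core with no $2$-stable partition and a positive edge between non-enemies): two friends each holding a free singleton channel can merge those singletons and both strictly gain, which is already a profitable $1$- or $2$-deviation. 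Surplus channels do not kill instability by letting vertices stay apart; they kill it by letting vertices be together in \emph{more} groups simultaneously.

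The paper's fix is structural and worth noting: it uses a four-vertex core $\{u_1,u_2,u_3,u_4\}$ with weights $\{-\infty,-4,2,6,7\}$ whose only positive edges are the three edges incident to the single hub $u_4$, and it attaches the $q^\star-1$ anchors \emph{only to $u_4$}. Then every profitable deviation in the core must involve $u_4$, so the whole analysis reduces to counting $u_4$'s free channels: with exactly one free channel ($q=q^\star$) the core behaves like the single-channel counterexample and no configuration is $2$-stable; with fewer, $u_4$ is saturated by $N$-groups and nothing can move; with more, $u_4$ simply joins $u_1$, $u_2$ (and $u_3$) in separate groups at once and no vertex can strictly improve. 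Anchoring \emph{every} core vertex of a multi-hub gadget like Figure~\ref{fig:nonstable}, as you propose, leaves you having to certify stability of a configuration in which each of the three hubs has $q-q^\star+1\ge 2$ free channels, and your candidate configuration fails. To repair your proof you would need either to switch to a single-hub core with anchors on the hub only, or to exhibit (and verify) a genuinely different stable configuration for the surplus-channel regime; as written, step (1) does not go through.
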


\subsection{Multi-modal relationship}
\label{sec:multimodal}

Our model so far is heavily biased towards pairwise relationship, as the utility of a player depends on the sum of her interaction with all other members of the groups.
In reality, more subtle interaction occur: one may be interested to interact with either friend $u$ or $v$, but would not like to join a group where both of them are present. 
Our analysis also generalizes to this case.
We use a hypergraph based model and we prove that, for instance,  there always exists a $1$-stable partition
%Due to lack of space, our results are presented in Section~\ref{annexe-multi-modal-annexe}.

\section{Efficiency stability trade-off}
\label{sec:efficiency}

\def\MIS{{\sc Maximum Independent Set problem}\xspace}

Efficiency of configuration can be estimated through the sum of utility received by all the players.
On the positive side, we have a construction to show that, in our extended model with multichannel, using more channels adds in general no particular difficulty to that computation.

\begin{theorem}
\label{the:equivalence-maximum}
Assume $h$ concave.
Given a graph $G = (V,w)$ with $\setW \subseteq \{-\infty\} \cup \NatInt$, one can construct a graph $G'$ such that from a maximum partition of the nodes for $G'$, we deduce in linear time a maximum configuration with $q$ channels for $G$.
\end{theorem}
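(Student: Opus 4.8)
The plan is to reduce the problem of computing a maximum $q$-channel configuration for $G$ to the problem of computing a maximum partition of a suitably enlarged graph $G'$, so that the extra channels are ``simulated'' by extra vertices. First I would build $G'$ by replacing each node $u$ of $G$ by a cluster of $q$ copies $u^{1},\dots,u^{q}$, and connecting all $q$ copies of $u$ to each other with \emph{best friend} edges of weight $N$ (with $N$ large enough, as in footnote~6, to dominate every other aggregate of weights in the instance). These $N$-edges force, in any maximum partition of $G'$, all $q$ copies of a given node to land in $q$ distinct groups (if two copies were in the same group, we would lose an $N$ term, which the concavity and the size of $N$ make strictly suboptimal compared to splitting them), so the groups of $G'$ restricted to the original nodes exactly describe the $q$ groups that $u$ joins in a configuration $C$. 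Between clusters, for an edge $uv$ of $G$ with weight $w_{uv}\ge 0$, I would put edges between the copies $u^{a}$ and $v^{b}$ whose weights are chosen so that, when $u$ and $v$ share exactly $g$ groups, the total contribution collected in $G'$ equals $h(g,w_{uv})$; concavity of $h$ is what makes this ``discrete marginal'' encoding consistent, since the increments $h(g+1,w)-h(g,w)$ are nonincreasing in $g$, so one can assign to the $\binom{q}{1},\binom{q}{2},\dots$ possible co-memberships weights realizing exactly these decreasing marginals (for $w=0$ the contribution is $0$ as required by $h(.,0)=0$, so indifferent edges need no gadget).

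The core of the argument is then a two-way correspondence. Given a $q$-channel configuration $C$ for $G$ with social welfare $f(C)$, I would build a partition $P'$ of $G'$ by placing $u^{1},\dots,u^{q}$ into the groups $X(u)=\{X_{1}(u),\dots,X_{q}(u)\}$ (identifying the $j$-th group of each node across nodes so that co-membership is preserved), and I would check that $f(P') = f(C) + C_{0}$ where $C_{0}=q\,\binom{?}{2}\cdot\ldots$ is a fixed additive constant coming entirely from the intra-cluster $N$-edges — crucially independent of $C$. Conversely, given any maximum partition $P'$ of $G'$, the largeness of $N$ forces the intra-cluster structure to be the ``spread'' one, hence $P'$ induces a legal $q$-channel configuration $C$ with $f(C)=f(P')-C_{0}$. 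Since the map is onto the set of configurations and preserves welfare up to the same constant, a maximum partition of $G'$ yields a maximum configuration of $G$, and the decoding (read off which copies are grouped together) is clearly linear time. The size of $G'$ is $qn$ vertices, so the reduction is polynomial (linear in $n$ for fixed $q$), matching the ``construct a graph $G'$'' and ``linear time'' wording of the statement.

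The main obstacle I expect is making the inter-cluster gadget genuinely realize $h(g,w_{uv})$ as a function of the \emph{number} $g$ of shared groups rather than of \emph{which} groups are shared: a priori the weight collected between two clusters depends on the bipartite pattern of which copies $u^{a}$, $v^{b}$ are grouped together, not just on how many pairs of groups coincide. The fix is to exploit that in a configuration the group-indices can be permuted freely, together with an exchange/rerouting argument showing that any maximum partition of $G'$ can be assumed to have ``aligned'' clusters (copy $u^{j}$ and copy $v^{j}$ in the common group whenever $u$ and $v$ co-occur, and otherwise in private groups); here concavity is used a second time, to argue that consolidating shared memberships into a common set of $g$ indices never decreases welfare. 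Once alignment is established the inter-cluster contribution is a clean function of $g$ and the weight assignment above does the rest. A secondary (routine) point is verifying that the additive constant $C_{0}$ is the same on both sides of the correspondence and that $N$ can indeed be chosen once and for all as a polynomially-bounded integer, which follows from $\setW\subseteq\{-\infty\}\cup\NatInt$ and the bound $N\ge n\times|w_{uv}|$ already adopted in the model.
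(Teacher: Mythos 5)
There are two problems with your construction, one of which is an outright error.

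\textbf{The intra-cluster gadget is backwards.} You connect the $q$ copies $u^{1},\dots,u^{q}$ of a node by \emph{positive} best-friend edges of weight $N$ and claim this forces them into $q$ distinct groups because putting two copies together would ``lose an $N$ term.'' In this game utility is \emph{collected} when two endpoints of an edge share a group, so $N$-edges do the opposite: a maximum partition of your $G'$ would merge all $q$ copies of $u$ into one group to harvest the $\binom{q}{2}N$ bonus, collapsing the $q$ channels. To force the copies apart you need conflict edges $w'_{u_i u_j}=-\infty$, which is exactly what the paper's construction (Definition~\ref{def:transformation}) uses. This also kills your claim that the additive constant $C_0$ comes from the intra-cluster edges; with conflict edges there is no such contribution.

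\textbf{The inter-cluster encoding is where the real work lies, and your sketch does not do it.} You correctly identify the obstacle: with direct weights between copies $u^{a}$ and $v^{b}$, the utility collected depends on \emph{which} copies are co-grouped, not just on the number $g$ of shared groups, and realizing a concave $h(g,w_{uv})$ therefore requires a global ``alignment'' of indices across all neighbors of all nodes simultaneously. Your proposed fix is an exchange argument, but that is precisely the hard step and it is not carried out; with several mutually adjacent nodes the index-permutation constraints interact and it is not clear a single consistent alignment exists. The paper sidesteps this entirely: it sets the direct weight $w'_{u_i v_j}=0$ and instead introduces, for each positive edge $uv$ and each level $g\le q$, an auxiliary connector node $uv_{g,1}$ carrying weight $(h(g,w_{uv})-h(g-1,w_{uv}))/2$ to every copy of $u$ and of $v$, paired with a private partner $uv_{g,2}$ offering a fallback payoff of $3(h(g,w_{uv})-h(g-1,w_{uv}))/4$. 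A connector earns strictly more by sitting in a group containing one copy of $u$ and one copy of $v$ than by staying with its partner, the connectors for different $g$ are mutual enemies so at most one sits in each shared group, and concavity guarantees the optimum uses the levels $g=1,2,\dots$ in order. Because the connectors attach to \emph{every} copy, no alignment of indices is ever needed. If you want to salvage your approach you would have to either prove the alignment lemma or adopt a connector-style gadget; as written the proof does not go through.
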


However, it has been proved in~\cite{kleinberg2013information}, that computing a maximum partition is NP-complete even with uniform games.
Hence, it does not only require coordination, but also has a prohibitive computational cost.
We first strengthen this result by showing that this optimum cannot even be \emph{approximated} within a sublinear multiplicative factor.

\begin{lemma}
\label{lem:mcp-not-multiplicative}
For every $1 > \epsilon > 0$, the problem of finding a maximum partition cannot be approximated within any $n^{\epsilon - 1}$-ratio in polynomial time, unless P=NP.
\end{lemma}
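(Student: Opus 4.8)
I would prove this by a gap‑preserving reduction from \MIS, whose optimum is known to be NP‑hard to approximate: for every $\epsilon_0>0$ it is NP‑hard to decide whether an $N$‑vertex graph $H$ satisfies $\alpha(H)\ge N^{1-\epsilon_0}$ or $\alpha(H)\le N^{\epsilon_0}$ (the derandomized Håstad/Zuckerman bound, which gives the conclusion $\mathrm{P}=\mathrm{NP}$). The point is that this gap transfers, with essentially no loss, to the maximum‑partition objective of a \emph{uniform} coloring game; since uniform games are a special case, this already establishes the (stronger) statement, and it also subsumes the NP‑completeness of~\cite{kleinberg2013information}.

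The first step is to pin down the objective in the uniform case. If $G^-$ is the conflict graph and $P=(X_1,X_2,\dots)$ a partition into groups, then each $X_i$ is an independent set of $G^-$ and $f(P)=\sum_u(|X(u)|-1)=\sum_i|X_i|(|X_i|-1)=\sum_i|X_i|^2-n$. Writing $\alpha=\alpha(G^-)$, this yields two elementary bounds on the maximum welfare $W^\ast$: (a) putting a single maximum independent set into one group and leaving all other nodes as singletons gives $W^\ast\ge\alpha(\alpha-1)$; (b) since every group has size at most $\alpha$ and $\sum_i|X_i|=n$, we get $\sum_i|X_i|^2\le\alpha n$, hence $W^\ast\le\alpha n-n<\alpha n$. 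So $W^\ast$ is squeezed between $\alpha(\alpha-1)$ and $\alpha n$, i.e.\ it behaves like $\alpha^2$ up to a factor $n/\alpha$.

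Now fix $\epsilon\in(0,1)$, set $\epsilon_0:=\epsilon/4$, take an instance $H$ of the MIS gap problem on $N$ vertices, and view $H$ as the conflict graph of a uniform coloring game on $n=N$ nodes. In the YES case, $\alpha\ge N^{1-\epsilon_0}$, so by (a) $W^\ast\ge N^{1-\epsilon_0}(N^{1-\epsilon_0}-1)\ge\tfrac12 N^{2-2\epsilon_0}$ for $N$ large; in the NO case, $\alpha\le N^{\epsilon_0}$, so by (b) $W^\ast< N\cdot N^{\epsilon_0}=N^{1+\epsilon_0}$. Suppose a polynomial‑time algorithm always returns a partition of welfare at least $n^{\epsilon-1}W^\ast=W^\ast/n^{1-\epsilon}$. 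On a YES instance it would output welfare at least $\tfrac12 N^{2-2\epsilon_0}\cdot N^{\epsilon-1}=\tfrac12 N^{1-2\epsilon_0+\epsilon}=\tfrac12 N^{1+\epsilon/2}$, which for $N$ large exceeds $N^{1+\epsilon_0}=N^{1+\epsilon/4}$, the largest welfare attainable on any NO instance. Thresholding the output welfare at $N^{1+\epsilon/4}$ therefore distinguishes YES from NO instances of the MIS gap problem in polynomial time (the finitely many small $N$ being handled by brute force), forcing $\mathrm{P}=\mathrm{NP}$.

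I expect the only delicate point to be routine bookkeeping rather than any conceptual hurdle: ensuring the node count of the constructed game is exactly the vertex count $N$ of the MIS instance (so that the ``$n^{1-\epsilon}$'' in the conclusion is the right $n$), absorbing the integer rounding of $N^{1-\epsilon_0}$ and the additive $-1$ and $-n$ terms into the constant $\tfrac12$ together with a ``for $N$ large'' clause, and quoting a version of the \MIS inapproximability whose hypothesis is exactly $\mathrm{P}\neq\mathrm{NP}$ (rather than a weaker derandomization assumption). None of these require more than the adjustments sketched above.
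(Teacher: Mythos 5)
Your proposal is correct and takes essentially the same route as the paper: both arguments work in the uniform game, use the identity $f(P)=\sum_i |X_i|^2-n$ to sandwich the optimal welfare between $\alpha(\alpha-1)$ and $\alpha n$, and then invoke the $n^{1-\epsilon_0}$-inapproximability of Maximum Independent Set to rule out an $n^{\epsilon-1}$-approximation. The only cosmetic difference is that the paper extracts the largest group of the approximate partition as an approximate independent set (contradicting an $n^{\epsilon/2-1}$-approximation bound for MIS), whereas you run the equivalent calculation as a gap/threshold argument on the welfare values.
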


Note that in our extended model with multichannel, the total utility of the players can only improve as the number of channels $q$ gets larger, because more configurations are available. Unfortunately, we also prove that computing the minimum number of channels that guarantee a given threshold global utility is hard to approximate.

\begin{theorem}
\label{the:min-q-pas-APX}
Given $U \geq 1$, computing $q^{*}$, the minimum $q$ for which there exists a configuration $C$ satisfying $f(C) \geq U$, cannot be approximated within $n^{1-\varepsilon}$ for any $\varepsilon > 0$ in polynomial time, unless P=NP.
\end{theorem}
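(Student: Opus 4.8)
The plan is to build a gap-introducing reduction from \MIS, which by Lemma~\ref{lem:mcp-not-multiplicative} (or by the known near-optimal inapproximability of independent set) cannot be approximated within $n^{1-\eps'}$ unless P=NP. Given a graph $H=(V_H,E_H)$ on $N$ vertices, I would construct a weighted coloring-game instance $G$ together with a threshold $U$ so that the minimum number of channels $q^\ast$ needed to reach total utility $U$ is roughly $N/\alpha(H)$, where $\alpha(H)$ is the independence number of $H$. The intuition is exactly the one used elsewhere in the paper: a channel is a ``slot'' in which a user may sit, and a group on a channel that yields positive utility must be an enemy-free set of nodes, i.e.\ an independent set in the conflict graph $G^-$. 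So if we encode $H$ as the conflict graph (an edge of $H$ becomes an $-\infty$ weight, non-edges become weight $1$), then a single channel can only collect, from a clique-like gadget, utility proportional to the size of the largest independent set it can host; to cover all $N$ ``demand'' nodes and thereby accumulate the target utility $U$, one needs at least $\lceil N/\alpha(H)\rceil$ channels, and conversely a proper partition of $V_H$ into $\alpha(H)$-sized independent sets (padded appropriately) uses that many channels.

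The key steps, in order, are: (1) Fix the target $\mathcal W=\{-\infty\}\cup\NatInt$ (weights $1$ suffice) so the hypothesis of the theorem is met; take the conflict graph of $G$ to be a suitable blow-up of $H$, adding a clique of ``anchor'' nodes that are mutual friends (weight $1$) so that any useful group must contain exactly one anchor plus an independent set of $H$-nodes sharing that anchor's channel. (2) Choose $U$ to be precisely the social welfare obtained when every $H$-node lies in some positive group with an anchor; show that this forces, on each channel, the $H$-nodes grouped with a given anchor to form an independent set of $H$, hence at most $\alpha(H)$ of them per channel, hence $q\ge N/\alpha(H)$. (3) Conversely, from any proper coloring of $H$ with $\alpha(H)$-bounded color classes — equivalently a partition of $V_H$ into independent sets — realize a configuration with $\lceil N/\alpha(H)\rceil$ channels meeting $f(C)\ge U$; here I would use a partition of $V_H$ into $\lceil N/\alpha(H)\rceil$ independent sets, which exists because $H$ (or a mild padding of it) can be covered by that many independent sets whenever it has independence number $\alpha(H)$ (this is where a small amount of care is needed — one may have to add isolated vertices to $H$ to make $\alpha(H)$ divide $N$, which does not change the gap asymptotically). (4) Finally, observe $q^\ast=\Theta(N/\alpha(H))$, so an $n^{1-\eps}$-approximation of $q^\ast$ would yield an $n^{1-\eps}$-type approximation of $\alpha(H)$ (after adjusting the exponent using $n=\mathrm{poly}(N)$ from the blow-up), contradicting the hardness of \MIS.

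The main obstacle I anticipate is controlling the reduction tightly enough that $q^\ast$ is genuinely sandwiched as $\Theta(N/\alpha(H))$ rather than merely bounded on one side: I must ensure that no ``clever'' configuration can extract utility $U$ using substantially fewer channels by exploiting positive weights among non-anchor nodes or among anchors, and also that the padding needed to handle divisibility and to make the number of vertices $n$ of $G$ a controlled polynomial in $N$ does not erode the multiplicative gap below $n^{1-\eps}$ for every $\eps>0$. The standard fix for the gap-erosion issue is to start not from \MIS on general graphs but from its gap-amplified form (independent set is hard to approximate within $N^{1-\eps'}$), and to keep the blow-up polynomial of bounded degree so that $n=N^{O(1)}$ and the exponent transfers. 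The ``no cheating'' direction is handled by making all non-anchor–non-anchor positive edges absent (only $H$'s non-edges carry weight $1$) and by making anchors pairwise enemies across channels in the relevant gadget, so that the only way to accumulate welfare is the intended anchor-plus-independent-set structure; then a short exchange argument shows any optimal $C$ can be put in that canonical form without increasing the number of channels, and the counting in step~(2) applies verbatim.
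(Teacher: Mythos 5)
There is a genuine gap in your step (3), and it breaks the whole reduction. Your construction correctly forces every ``useful'' group on a channel to be an independent set of the conflict graph $H$, so the minimum number of channels needed to cover all $N$ demand nodes is the minimum number of independent sets whose union is $V_H$ --- that is, the \emph{chromatic number} $\chi(H)$, not $\lceil N/\alpha(H)\rceil$. Your claim that ``a partition of $V_H$ into $\lceil N/\alpha(H)\rceil$ independent sets \ldots exists because $H$ \ldots can be covered by that many independent sets whenever it has independence number $\alpha(H)$'' is false: $N/\alpha(H)$ is only a \emph{lower} bound on $\chi(H)$, and the gap between the two can be polynomial in $N$ (there are graphs with $\alpha(H)=\Theta(N)$ yet $\chi(H)$ polynomially large). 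Padding with isolated vertices does not help. Consequently $q^{*}$ is \emph{not} sandwiched as $\Theta(N/\alpha(H))$, and your final step --- converting an approximation of $q^{*}$ into an approximation of $\alpha(H)$ and invoking the hardness of \MIS --- does not go through: an algorithm approximating $q^{*}$ approximates $\chi(H)$, and says nothing directly about $\alpha(H)$.

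The fix is to keep your gadget but change the source problem, which is exactly what the paper does. Its construction attaches a hub vertex $u$ with best-friend weight $N$ to every vertex of a graph $G_{1}$ whose internal weights lie in $\{-\infty,0\}$, plus an auxiliary set of vertices forcing the threshold $U$ to be met only when \emph{every} positive edge is realized; then $q^{*}$ equals, up to an additive constant, the minimum number of conflict-free groups covering $V_{1}$, i.e.\ the chromatic number of the conflict graph. The theorem then follows directly from Zuckerman's result that the chromatic number cannot be approximated within $n^{1-\varepsilon}$ unless P$=$NP. If you replace your appeal to \MIS and $\alpha(H)$ by an appeal to chromatic-number inapproximability (and drop the false covering claim), your argument becomes essentially the paper's proof.
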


However, in spite of this complexity, we prove that collusion and multiple channels have in general a beneficial effect on the price of anarchy, which is the ratio of the worst stable equilibrium to this optimal.
Nash Equilibria obtained by the dynamic of the system satisfy a general bound that improves with $q$, in contrast with worst Nash Equilibria that are always arbitrarily bad.
Finally, we show that higher order stability, although it is significantly harder to obtain in general, potentially improves this bound significantly.

\subsection{Efficiency of Nash Equilibrium}
\label{sec:nashefficiency}

Since we proved there is few hope we can compute efficient configurations in polynomial time, our goal is now to study the efficiency of any stable configuration, and especially the ones that we can compute by using our dynamic system (extended to many channels).

We start by studying the efficiency of the largest class of stable configurations: Nash Equilibria ($k=1$).
Let us suppose $q$ to be fixed in the context.
For any graph $G$, we denote by $C^+$ (resp. $P^+$ when $q=1$) a maximum configuration with $q$ channels (resp. partition); the \emph{worst case} $C_k^-$ (resp. $P_k^-$) denotes a configuration with the minimum utility that is obtained in a $k$-stable configuration using $q$ channels (resp. partition).
We first show in Lemma~\ref{poa:infinite-ratio} that Nash Equilibria can be arbitrarily bad:

\begin{lemma}
\label{poa:infinite-ratio}
Assume here $q$ equals $1$.
Let $a,b > 0$ be two integers.
With $\mathcal{W} \supseteq \{-\infty,0,b\}$ or $\mathcal{W} \supseteq \{-a,b\}$, for any $R \geq 1$, there are graphs $G=(V,w)$ such that $f(P^+) \geq R$ and $f(P_1^-) = 0$.
\end{lemma}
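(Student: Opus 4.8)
The goal is to construct, for each target ratio $R \geq 1$, a graph $G$ for which the maximum partition achieves social welfare at least $R$, yet some Nash equilibrium (1-stable partition) achieves welfare exactly $0$. The key idea is that in the singleton partition $P_0 = (\{v\}: v\in V)$ every player already has utility $0$, so if $P_0$ is itself $1$-stable, we are done: $f(P_1^-) = 0$ automatically (this is one valid final configuration of the dynamics), while a separate large-welfare partition witnesses $f(P^+) \geq R$. So the whole problem reduces to engineering a graph where (a) splitting into singletons is a Nash equilibrium, and (b) some other partition has large positive welfare. The tension is that a positive-weight edge $w_{uv}=b>0$ by itself is never $1$-stable from singletons (either $u$ or $v$ gains by joining the other), so we must ``lock'' the beneficial moves using the available weights.

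\textbf{Case $\mathcal{W} \supseteq \{-\infty,0,b\}$.} Here I would use the $0$-weight (indifferent) edges as locks. Take many disjoint gadgets; in a single gadget use a set of nodes that, in the optimal partition, form one group producing welfare proportional to $b$, but arrange the gadget so that any \emph{single} node contemplating a move into any nonempty group either (i) is an enemy of some member of that group, or (ii) would only pick up $0$-weight edges, hence gains nothing and has no incentive to move. Concretely, one clean construction: for parameter $m$, take nodes $a_1,\dots,a_m$ pairwise connected by weight $b$, and for each $a_i$ a private ``blocker'' $z_i$ that is an enemy of all $a_j$, $j\neq i$, and connected to $a_i$ itself by a $0$-edge; additionally make all $z_i$ pairwise enemies. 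In the singleton configuration, $a_i$ cannot join any other $a_j$'s (currently singleton) group profitably? — that's false, so instead I flip it: start from singletons and observe $a_i$ joining $a_j$ \emph{is} profitable, so singletons is not stable. The correct move is to make the \emph{good} partition be the reference $P^+$ and make a \emph{different} $1$-stable partition have welfare $0$: pad the graph with a ``trap'' component whose only $1$-stable partitions have welfare $0$, while a disjoint clique of $m$ mutual friends with weight $b$ contributes $\binom{m}{2}b \geq R$ to $P^+$; but in the trap component, even though positive edges exist, the indifferent ($0$) edges must be placed so that every node's only available profitable relocations are blocked by enemies. I expect the cleanest version is a cyclic gadget as in Figure~\ref{fig:nonstable}/Lemma~\ref{lem:counter-example}: a configuration that is $1$-stable and yields welfare $0$ because every node that could move is surrounded by enemies in every candidate target group, while the optimum groups them to extract $b$'s.

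\textbf{Case $\mathcal{W} \supseteq \{-a,b\}$ with $a,b>0$.} Now there are no $0$ weights nor $-\infty$; the locks must come from finite negative weights $-a$. The plan is the same but the ``enemy'' role is played by sufficiently many $-a$ edges: to prevent node $u$ from joining a target group $X$ that would give it one $b$-edge, include $\lceil b/a \rceil$ nodes in $X$ each connected to $u$ by a $-a$-edge, so that $u$'s utility change upon joining is $b - \lceil b/a\rceil a \leq 0$, killing the incentive. Building $P^+$ of welfare $\geq R$ is then a matter of taking a large enough clique of $b$-friends whose members have no $-a$-edges among themselves, and checking that the overall welfare (positive contributions from the clique minus whatever negative contributions the optimal partition incurs) is still $\geq R$ by scaling $m$; meanwhile the all-singletons partition has welfare $0$ and, by the $-a$-locking just described, no profitable $1$-deviation exists, so it is a legitimate output $P_1^-$ with $f(P_1^-)=0$.

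\textbf{Main obstacle.} The routine part is the arithmetic ensuring $f(P^+)\geq R$; the delicate part is simultaneously (i) keeping the all-singletons (or another welfare-$0$) partition genuinely $1$-stable — which requires checking \emph{every} node against \emph{every} nonempty target group, including partially-formed groups along hypothetical deviation paths, not just the groups of $P^+$ — and (ii) doing this using only weights from the prescribed set $\mathcal{W}$, with no access to $-\infty$ in the second case. I expect the second-case construction (only $\{-a,b\}$ available) to be the true crux, since there genuine hard enemies are unavailable and one must verify that no \emph{cumulative} advantage across several weak positive edges ever outweighs the negative locks; a counting argument bounding the number of $b$-edges any single node can have into any one group, against a fixed budget of $-a$-edges, should close this, and I would present the construction with explicit multiplicities chosen so the inequality $\sum b\text{'s} \leq \sum a\text{'s}$ holds with room to spare.
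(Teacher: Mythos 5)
There is a genuine gap. Your high-level strategy — exhibit one $1$-stable partition of welfare $0$ and a different partition of welfare $\geq R$ — is exactly the paper's, but the concrete plan you build it on, namely making the \emph{all-singletons} partition $1$-stable, cannot work in this game. If any edge $w_{uv}=b>0$ exists, then from singletons the node $u$ can join the group $\{v\}$ and strictly gain $b$; no ``lock'' can prevent this, because the target group is just $\{v\}$ and there is no room in it for blocker nodes carrying $-a$ or $-\infty$ edges to $u$. And if no positive edge exists, then $f(P^+)=0<R$. You notice this collision in the first case (``--- that's false, so instead I flip it'') but then only gesture at ``a trap component whose only $1$-stable partitions have welfare $0$'' without constructing one; pointing to the gadget of Lemma~\ref{lem:counter-example} does not help, since that gadget is designed to have \emph{no} $3$-stable partition, not a welfare-$0$ stable one. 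In the second case you do not notice the collision at all and assert that singletons plus ``$-a$-locking'' is stable, which is false for the reason above. So neither case is actually closed.

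For comparison, the paper's welfare-$0$ stable partitions are never the singleton partition. For $\mathcal{W}\supseteq\{-\infty,0,b\}$ it takes two sets $V_1,V_2$ of size $n'$ with all $b$-edges \emph{between} them and only $0$-edges \emph{inside} each, plus a guard $v_i$ placed inside the group $V_i\cup\{v_i\}$ that is an enemy of the entire opposite side: every node then has utility $0$, and every profitable-looking move into the other group is blocked by the guard sitting there, so $(V_1\cup\{v_1\},V_2\cup\{v_2\})$ is $1$-stable with $f=0$ while $(V_1\cup V_2,\{v_1,v_2\})$ has welfare $2bn'^2\geq R$. For $\mathcal{W}\supseteq\{-a,b\}$ it takes a $d$-regular graph of $-a$ edges with $d$ and $n$ tuned so that $-da+(n-1-d)b=0$: in the grand coalition every node has utility exactly $0$, and the only available $1$-deviation (founding an empty group) also yields $0$, so the grand coalition is $1$-stable with welfare $0$ while pairing up friends yields $b(n-1)\geq R$. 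This exact-cancellation idea is the missing ingredient in your case (b): the zero-welfare equilibrium must be a \emph{large} group in which each node's positive and negative incident weights cancel, not the singleton partition.
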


On the other hand, such situation is an oddity as precised in Lemma~\ref{greedy:upper-bound-gal}.
We now denote by $C^{\textrm{d}-}_{1}$ the worst equilibrium obtained under the dynamic of the system.
Note that in the following, $w_p$ is the maximum weight of $\mathcal{W}$.

\begin{lemma}
\label{greedy:upper-bound-gal}
Given integers $q, n \geq 1$ and a set $\mathcal{W}$ such that $\mathcal{W}^+ \neq \emptyset$, for any graph $G=(V,w)$ with $\mathcal{W}$, we have $f(C^+)/f(C^{\textrm{d}-}_{1}) = O(h(q,w_p)n)$.
\end{lemma}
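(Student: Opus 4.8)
\textbf{Proof proposal for Lemma~\ref{greedy:upper-bound-gal}.}

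The plan is to compare the total utility of the maximum configuration $C^+$ to that of any equilibrium $C^{\textrm{d}-}_1$ reached by the dynamic system started from singletons, using the $1$-stability (Nash) condition node by node. First I would observe that since the dynamic starts with the all-singletons configuration $C^{(0)}$, which has $f(C^{(0)})=0$, and since each $1$-deviation strictly increases the utility of the deviating player \emph{and} (by Theorem~\ref{thm:stability:k=1}, whose potential is exactly the global utility $f$ in the uniform-positive-weight analysis, or more carefully the potential argument of Lemma~\ref{lem:util-variations}) does not decrease the global utility too much, we have $f(C^{\textrm{d}-}_1)\geq 0$; but we actually need a strictly positive lower bound, so the real content is to bound $f(C^+)$ from above in terms of $f(C^{\textrm{d}-}_1)$.

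The key step is the Nash inequality: fix any player $u$ and consider, in the equilibrium configuration $C=C^{\textrm{d}-}_1$, the best response of $u$, namely joining (for each of her $q$ channels) the group that $u$ occupies in $C^+$. Since $C$ is $1$-stable, $u$ cannot strictly gain by this move, so $f_u(C)\geq f_u(C\text{ with }u\text{ moved to her }C^+\text{-groups})$. Now I would lower-bound the right-hand side: by the structural assumptions on $h$ (monotonicity in $g$ with sign of $w$, and $h(1,w)=w$), the utility $u$ would get by moving into her $C^+$-groups is at least the utility she gets from her \emph{positive} neighbors that are also in those groups, each contributing at least $h(1,w_{uv})=w_{uv}\ge 0$ — wait, we must be careful about the negative neighbors she would drag along; here is where the loop-avoiding/greedy nature and the fact that we compare against the specific dynamic equilibrium (not an arbitrary one) matters, and I would instead bound $f_u(C^+)\le h(q,w_p)\cdot(\text{number of positive neighbors of }u) \le h(q,w_p)\cdot n$ trivially, and separately show $f_u(C)\ge$ something like $f_u(C^+)/n$ is \emph{not} what we want. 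Let me restructure: sum the Nash inequalities over all $u$. Summing $f_u(C)\geq (\text{deviation value of }u)$ over $u\in V$, and noting each edge $\{u,v\}$ with $w_{uv}>0$ present in $C^+$ contributes $w_{uv}$ to both $f_u(C^+)$ and $f_v(C^+)$, one gets $\sum_u f_u(C)\geq$ (a constant fraction of) the positive part of $f(C^+)$ minus the negative contributions; since in $C^+$ no enemy pair shares a group, the negative contributions are bounded, and one obtains $f(C)=\Theta(f(C^+)/(h(q,w_p)n))$ after observing that the per-player deviation gain telescopes. Concretely: each of the $\le n$ players, when moving to her $C^+$-position, captures at least one positive edge worth $\ge 1$ if she has any positive $C^+$-neighbor at all, so the deviation-value summed over players is $\Omega(f(C^+)/(h(q,w_p)n))$ — because $f(C^+)\le h(q,w_p)\cdot n^2$ while at least $f(C^+)/(h(q,w_p)n)$ distinct players must have a nonzero positive contribution — giving $f(C)\ge f(C^+)/(h(q,w_p)n) - O(1)$, hence the claimed ratio $f(C^+)/f(C^{\textrm{d}-}_1)=O(h(q,w_p)n)$.

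The main obstacle I expect is handling the negative and enemy edges correctly when $u$ performs her hypothetical best-response move: moving $u$ into her $C^+$-group in the equilibrium $C$ may expose $u$ to enemies or negative-weight neighbors that are not her enemies in $C^+$ (since $C\neq C^+$), which could make the deviation \emph{invalid} (utility $-\infty$) or merely unprofitable for a reason unrelated to $C^+$. The clean way around this is to not move $u$ all the way into her $C^+$-groups, but rather to use that $u$ could instead form $q$ fresh singleton groups (utility $0$), which is always a valid move and shows $f_u(C)\ge 0$ for every $u$ — this rules out negative contributions but gives only $f(C)\ge 0$. To get the quantitative bound one then argues on the specific dynamic equilibrium: because the dynamic is monotone in the global potential (Lemma~\ref{lem:util-variations}) and starts at $f=0$, and because a player with a positive $C^+$-neighbor available would have deviated at some point, a counting argument over the $n$ players and the $\le n$ positive neighbors each can have yields the bound; isolating exactly which invariant of the dynamic delivers the $\Omega(f(C^+)/(h(q,w_p)n))$ lower bound on $f(C^{\textrm{d}-}_1)$, cleanly and without circularity, is the technical heart of the argument.
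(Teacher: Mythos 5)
Your proposal correctly identifies the central difficulty --- that the naive Nash/smoothness inequality fails because a player's hypothetical move to her $C^+$-groups may be blocked by enemies, which is exactly why \emph{arbitrary} Nash equilibria can be arbitrarily bad (Lemma~\ref{poa:infinite-ratio}) --- but it stops short of resolving it: your closing sentence concedes that isolating the right invariant of the dynamic ``is the technical heart of the argument,'' and the intermediate claim you lean on (``a player with a positive $C^+$-neighbor available would have deviated at some point'') is false for a general $1$-stable configuration, since that neighbor may only ever sit in groups containing enemies of $u$. As written, the proof is therefore incomplete.

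The missing idea is a counting argument on the \emph{trajectory} of the dynamic rather than on the final configuration alone. Let $s$ be the number of steps the dynamic takes; since each $1$-deviation raises the global utility by at least $1$ (Lemma~\ref{lem:util-variations}) and the starting value is $0$, one has $f(C^{\textrm{d}-}_{1})\geq s$. Let $V_s$ be the set of nodes having at least one non-singleton group at the end; each step turns singleton groups of at most two distinct nodes into non-singleton ones, so $|V_s|\leq 2s$. The crucial observation --- the one that replaces your false claim --- is that if $u,v\notin V_s$ and $w_{uv}>0$, then $u$ may join one of $v$'s \emph{singleton} groups, a move that encounters no enemies and is strictly profitable; $1$-stability therefore forces every positive edge to have an endpoint in $V_s$, whence $m_+\leq n|V_s|\leq 2sn\leq 2n\,f(C^{\textrm{d}-}_{1})$. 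Combined with $f(C^+)\leq 2h(q,w_p)m_+$, this yields $f(C^+)/f(C^{\textrm{d}-}_{1})=O(h(q,w_p)n)$. Without restricting the deviation target to a singleton group, the argument collapses back into the counterexamples of Lemma~\ref{poa:infinite-ratio}, which is where your sketch gets stuck.
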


The upper-bound above, if it were tight, would imply that increasing of the number of channels might deteriorate the quality of the lowest computable Nash Equilibria.
On the contrary, we prove in Lemma~\ref{lemma:positive-impact} that the number of channels may have a positive impact for some  sets of weights.

\begin{lemma}
\label{lemma:positive-impact}
Given integers $q,n \geq 1$, and a set $\mathcal{W} = \mathcal{W}^+ \cup \{-\infty\}$ such that $\mathcal{W}^+ \neq \emptyset$, for any graph $G = (V,w)$ with $\mathcal{W}$, we have $f(C^+)/f(C^{\textrm{d}-}_{1}) = O((1 + 2n/q)h(q,w_p))$.
\end{lemma}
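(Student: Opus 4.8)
The plan is to bound the utility of any node in the worst dynamically-attained Nash Equilibrium $C^{\textrm{d}-}_{1}$ from below by a quantity that grows linearly with $q$ (roughly $q\cdot w_p / (1+2n/q)$ worth of utility per ``active'' node), and to bound $f(C^+)$ trivially from above, then take the ratio. The key structural fact we exploit is that $\mathcal{W}$ contains \emph{no negative finite weight}: every edge is either a conflict ($-\infty$) or strictly positive. Hence in any configuration a node never suffers from sharing a group, and the only constraint on joining a group is the presence of an enemy in it.

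First I would set up the easy upper bound: since $h(g,w)\le h(q,w_p)$ for every pair with $g\le q$, and each node has at most $n-1$ peers, $f_u(C^+) \le (n-1)h(q,w_p)$ for every $u$, so $f(C^+) \le n(n-1)h(q,w_p)$. The substance is the lower bound on $f(C^{\textrm{d}-}_{1})$. Consider a node $u$ that is not isolated from everyone, i.e. $u$ has at least one non-enemy $v$ (nodes with no friends contribute $0$ to both numerator and denominator-relevant counts and can be discarded; if \emph{every} node is isolated then $f(C^+)=0$ and there is nothing to prove). I claim that in a $1$-stable configuration reached by the dynamics, $u$ must be receiving a reasonably large utility, because otherwise $u$ has a profitable $1$-deviation: $u$ could move one of its $q$ group-memberships into the group currently containing $v$ (or, if all of $u$'s groups already meet $v$, into \emph{any} group, because all weights seen are nonnegative and $h$ is non-decreasing in $g$). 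The point is that the \emph{initial} configuration is all singletons, so every group that ever forms was created by deviations, and a $1$-stable endpoint means no node can improve by relocating a single membership; combining the ``no negative weights'' property with the monotonicity of $g\mapsto w\cdot h(g,w)$, a node that is below the per-node average of a large group it could legally join would deviate. I would make this precise by a pigeonhole/averaging argument: in $C^{\textrm{d}-}_{1}$ some group must be large (total membership across $qn$ group-slots is at least $qn$, so the largest group meeting $u$'s non-enemies has size $\Omega(q/(1+\text{enemies}))$), and $u$'s stability forces $f_u \ge$ roughly $(q/(1+2n/q))\cdot h(\cdot,w_p)$ worth, whence $f(C^{\textrm{d}-}_{1}) = \Omega\!\big(n\cdot \tfrac{q}{1+2n/q} h(q,w_p)/h(1,w_p)\big)$ — I would track the constants so the ratio comes out as $O((1+2n/q)h(q,w_p))$ after dividing, using that positive weights are bounded (the $h(q,w_p)$ factor) but the \emph{number} of them a node can collect scales with $q$ unless $n$ is too small relative to $q$, which is exactly the $1+2n/q$ correction.

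The main obstacle I anticipate is making the lower-bound averaging argument fully rigorous with the right constant $2n/q$ rather than some other function of $n/q$: one has to argue carefully about \emph{which} large group a node $u$ is entitled to join, given that enemies may block the largest ones, and one must use the \emph{dynamic} nature of $C^{\textrm{d}-}_{1}$ (reachability from singletons) rather than mere $1$-stability, since a worst-case abstract $1$-stable configuration could be pathologically bad (indeed Lemma~\ref{poa:infinite-ratio} shows arbitrary badness is possible once a finite negative weight is allowed — here it is not, but one still wants to be sure the dynamics cannot get stuck at an all-singleton-like state; it cannot, precisely because any friendly pair forms a profitable $1$-deviation immediately, and more generally greedy moves keep improving the potential $f$). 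Once the correct ``every non-isolated node collects $\Omega(q)$ shared memberships up to the $1+2n/q$ slack'' statement is pinned down, dividing the two bounds gives the claim.
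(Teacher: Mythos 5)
There is a genuine gap: your overall accounting cannot yield the stated ratio. You upper-bound $f(C^+)$ by $n(n-1)h(q,w_p)$ and then try to prove an \emph{absolute} lower bound on $f(C^{\textrm{d}-}_{1})$ of order $n\cdot q/(1+2n/q)$ via an averaging argument. This fails for sparse friendship graphs: if $G$ has a single positive edge (everything else being $0$ or $-\infty$), then $f(C^{\textrm{d}-}_{1})=O(h(q,w_p))$, nowhere near $\Omega(nq/(1+2n/q))$, yet the lemma still has to hold because $f(C^+)$ is equally small. Both sides must be compared to the number $m_+$ of positive edges, not to $n^2$. Your pigeonhole step is also unsound: the total membership over the (up to $qn$) group slots is exactly $qn$, so the average group size is $1$ and no group need be large (a friendship graph that is a perfect matching admits stable configurations in which every group has size at most $2$); moreover, $1$-stability only forces a node to realize \emph{one} shared group with a reachable friend when $g\mapsto h(g,w)$ is flat beyond $g=1$ (e.g.\ $h(g,w)=\ind{g>0}w$), not $\Omega(q)$ of them.

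The paper's proof is relative rather than absolute. Writing $m_+=m_{1,+}+m_{0,+}$, where $m_{1,+}$ counts positive edges realized in the final configuration $C_s$ (reached after $s$ steps) and $m_{0,+}$ those not realized, one has $f(C^+)\le 2h(q,w_p)m_+$ and $f(C_s)\ge\max(s,m_{1,+})$ (the utility increases at every step, and since $\mathcal{W}$ has no finite negative weight every realized positive edge contributes nonnegatively). The key counting step, which is the idea missing from your proposal, is that each $1$-deviation destroys at most two singleton slots, so at most $2s/q$ vertices can have all $q$ of their slots non-singleton; any unrealized positive edge must have such an endpoint, since otherwise both endpoints retain a free singleton slot and one would join the other's, a profitable $1$-deviation. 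Hence $m_{0,+}\le 2sn/q\le 2nf(C_s)/q$, and dividing gives $f(C^+)/f(C_s)\le 2h(q,w_p)(1+2n/q)$. This is where the dynamic nature of $C^{\textrm{d}-}_{1}$ (reachability from the all-singleton configuration in $s$ steps) is genuinely used --- not, as you suggest, to rule out getting stuck near singletons, but to tie the number of consumed singleton slots to the achieved utility via $s\le f(C_s)$.
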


\subsection{Efficiency of higher order stability}
\label{sec:stabilityefficiency}

Increasing $k$ beyond 1 restricts further the set of stable configurations; they may even sometimes cease to exist. However, before this happens, we can expect the worst case configurations, and hence our general bounds, to improve.
We prove in Theorem~\ref{poa:upper-bound:kgeq2} that it is generally the case by considering single channel case (the existence of a stable configuration is not guaranteed, but the result holds if it exists). 
As we constrain ourselves to partitions of the nodes anew, we will denote the configurations $P$ instead of $C$ in the sequel.

\begin{theorem}
\label{poa:upper-bound:kgeq2}
Given $\mathcal{W}$, any graph $G = (V,w)$ with weights in $\setW$ satisfies, $f(P^+)/f(P_k^-) = O(\Delta_+)$
for $k\geq 2$, where 
$\Delta_+=\max_{v\in V} |\lset u \dimset w_{uv} > 0 \rset|$.
\end{theorem}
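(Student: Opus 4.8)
The plan is to bound, for a fixed $k\geq 2$ and a $k$-stable partition $P_k^-$, the ratio between the optimal social welfare $f(P^+)$ and $f(P_k^-)$ by charging the utility produced in $P^+$ to the utility produced in $P_k^-$, node by node, using the $2$-deviation stability condition. First I would observe that since $k\geq 2$, $P_k^-$ is in particular $2$-stable, so no pair $\{u,v\}$ has a joint profitable deviation; this is the only stability property I will use. The key inequality I want is: for every vertex $v$, the utility $f_v(P^+)$ that $v$ draws from positive edges toward its $P^+$-group is controlled by $\Delta_+$ times (roughly) $f_v(P_k^-)$ plus contributions that can be globally cancelled. More precisely, for each $v$, let $N^+(v)=\{u : w_{uv}>0\}$, so $|N^+(v)|\le \Delta_+$; every positive contribution to $f_v(P^+)$ comes from a vertex in $N^+(v)$.

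The main step is a charging argument. Fix a vertex $v$ and a positive-weight neighbor $u\in N^+(v)$ that lies in $v$'s group in $P^+$. Since $P_k^-$ is $2$-stable, $u$ and $v$ together cannot both strictly gain by jointly moving to $v$'s $P^+$-group $X^+(v)$ (or to the group $u$ belongs to in $P_k^-$, or to a fresh pair-group): at least one of them does not strictly improve. Applying this to the candidate deviation where $u$ and $v$ both move into a new two-element group $\{u,v\}$ shows $\max(f_u(P_k^-),f_v(P_k^-)) \ge w_{uv}$ whenever $w_{uv}>0$ — hence each positive edge $uv$ satisfies $w_{uv}\le \max(f_u(P_k^-),f_v(P_k^-))$. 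Summing $f_v(P^+)=\sum_{u\in X^+(v)\setminus\{v\}} w_{uv} \le \sum_{u\in N^+(v)\cap X^+(v)} w_{uv} \le \sum_{u\in N^+(v)} \max(f_u(P_k^-),f_v(P_k^-)) \le \Delta_+\,\max_u f_u(P_k^-)$ over all $v$, and then carefully distributing the $\max$ over the two endpoints, I get $f(P^+)=\sum_v f_v(P^+) = O(\Delta_+)\sum_v f_v(P_k^-) = O(\Delta_+) f(P_k^-)$. Some care is needed because $f_v(P_k^-)$ could be small or the $\max$ could always fall on the same side; the clean way is to write, for each positive edge $uv$ in $P^+$, $w_{uv}\le f_u(P_k^-)+f_v(P_k^-)$ (crude but valid), and then $\sum_v f_v(P^+)\le \sum_v\sum_{u\in N^+(v)}(f_u(P_k^-)+f_v(P_k^-)) \le \sum_v \Delta_+\, f_v(P_k^-) + \sum_u f_u(P_k^-)\,|\{v: u\in N^+(v)\}| \le 2\Delta_+ f(P_k^-)$, using symmetry of $N^+$.

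The main obstacle is handling edges whose $P_k^-$-endpoint utilities are $0$ or where $w_{uv}$ is not bounded by the single-vertex utilities — i.e., making the two-player-deviation argument genuinely tight. The subtlety is that the candidate $2$-deviation "$u,v$ form a new group" is blocked in $P_k^-$ only if $w_{uv}\le f_u(P_k^-)$ or $w_{uv}\le f_v(P_k^-)$, which is exactly what I need, but one must verify this is a legal deviation (it is: singleton or new colors are always allowed, and the condition for a deviation is that \emph{both} strictly increase, so blocked means at least one does not). A secondary point is that $f(P_k^-)$ itself must be positive for the ratio to make sense; if $\mathcal W^+=\emptyset$ the statement is vacuous/degenerate and otherwise one argues $f(P_k^-)>0$ whenever $f(P^+)>0$ by noting any single positive edge gives a profitable $1$-deviation unless already realized. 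Assembling these pieces gives $f(P^+)/f(P_k^-)=O(\Delta_+)$, which is the claim; I expect the write-up to be short, with the only real work being the precise case analysis of the blocked $2$-deviation.
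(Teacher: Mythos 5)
Your proof is correct, and it rests on exactly the same stability fact as the paper's: for every positive edge $uv$, the $2$-deviation in which $u$ and $v$ found a fresh two-element group must be blocked in $P_k^-$. Where you diverge is in what you extract from that fact and how you do the accounting. The paper only uses the qualitative consequence that at least one endpoint of each positive edge has strictly positive (hence, by integrality of weights, at least $1$) utility in $P_k^-$; it then counts $m_+/\Delta_+$ such vertices to get $f(P_k^-)\geq m_+/\Delta_+$ and pairs this with $f(P^+)\leq 2m_+w_p$, so its hidden constant is $2w_p$ (harmless only because $\mathcal{W}$ is fixed). You instead extract the quantitative bound $w_{uv}\leq \max(f_u(P_k^-),f_v(P_k^-))\leq f_u(P_k^-)+f_v(P_k^-)$ (valid since utilities are nonnegative in any $1$-stable partition) and charge each positive edge of $P^+$ to its endpoints, each endpoint being charged at most $\Delta_+$ times by symmetry of the positive-neighborhood relation. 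This yields $f(P^+)\leq 2\Delta_+ f(P_k^-)$ with a constant independent of $w_p$ and without invoking integrality of the weights, so your version is in fact slightly sharper and more robust than the paper's; your side remarks on the legality of the fresh-group deviation and on $f(P_k^-)>0$ whenever $f(P^+)>0$ are also correct and close the degenerate cases properly.
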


Note the result also holds for $k=1$ whenever neither $\{-\infty,0,b\} \subseteq \mathcal{W}$ nor $\{-a,b\} \subseteq \mathcal{W}$. 
Indeed, in this case, if there exist $u,v$ such that $f_u(P) = f_v(P) = 0$, whereas $w_{uv} > 0$, then $u$ can reach the group of $v$, or vice-versa, and so, there is a $1$-deviation.

In other words, we show that for $k\geq 2$ stable partitions are with a multiplicative factor of the optimal given by the maximum degree of a node in the friendship graph.

We define the \emph{price of $k$-anarchy} $p_a(n,k)$ as the greatest ratio $f(P^+)/f(P_k^-)$ obtained for a graph $G$ with $n$ nodes.
By considering a worst case degree, we have $p_a(n,k) = O(n)$. 
For a given $k$, this bound is tight: we can even construct a graph with $\setW\subseteq \NatInt$ and any other non trivial weights such that $f(P^+)/f(P_k^-) = \Omega(n)$.

\begin{lemma}
\label{poa:lower-bound-1-2}
Let $k$ be any integer $k \geq 2$.
Given positive integers $b,b'$ such that $b' < b$, and a non-negative integer $a$, there is a graph $G=(V,w)$:
\begin{itemize}
%\vspace{-0.2cm}
\item[-] with $\mathcal{W} \supset \{-a,b\}$, such that $\frac{f(P^+)}{f(P_k^-)}  \geq  \frac{ (k+1)\lfloor \frac{n}{(k+1)^2} \rfloor - 1}{k}$.
%\vspace{-0.2cm}
\item[-] with $\mathcal{W} \supset \{b,b'\}$, such that $\frac{f(P^+)}{f(P_k^-)} \geq 1 + \frac{b'}{b}\frac{kb(\lfloor \frac{n}{kb} \rfloor - 1)}{kb-1}$.
\end{itemize}
\end{lemma}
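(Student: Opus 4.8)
The plan is to build, for each of the two cases, an explicit graph on $n$ nodes together with a matching $k$-stable partition $P_k^-$ of low utility, while exhibiting a much better partition $P^+$ (not necessarily optimal, just good enough to establish the ratio). The guiding intuition is the same one used throughout Section~\ref{sec:weighted}: a $k$-stable partition can be ``frozen'' into a highly fragmented state because any profitable move would require more than $k$ players to coordinate simultaneously. So the construction should force the cheap partition into blocks of size exactly $k$ (or $k+1$) that no coalition of size $\le k$ can improve, while a coarser partition collapses these blocks into one large group of size $\Theta(n)$.

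For the first case, with $\mathcal{W} \supseteq \{-a,b\}$: I would take a graph that is, roughly, a disjoint union of cliques on $(k+1)^2$ vertices each (there are about $\lfloor n/(k+1)^2\rfloor$ of them), with $b$-weighted edges inside a clique and $(-a)$-weighted edges between cliques — or more precisely arrange the $(k+1)^2$ vertices of one block into $k+1$ ``rows'' of $k+1$ vertices, with $b$-edges within a row and carefully chosen cross-edges so that: (i) the partition into rows is $k$-stable — any set of at most $k$ players that wants to merge two rows would have to be joined by the whole remaining row to avoid a net loss against the $-a$ penalties they incur, and $k$ players cannot do this; yet (ii) the partition that merges all $k+1$ rows of a block into a single group of size $(k+1)^2$ gives each vertex utility roughly $(k+1)^2 b$ versus $k b$ in the row partition. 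Summing over the $\lfloor n/(k+1)^2\rfloor$ blocks gives $f(P^+) \ge \big((k+1)\lfloor n/(k+1)^2\rfloor - 1\big)\cdot(\text{something})\cdot n$-type quantities and $f(P_k^-)$ of order $k\cdot n$, and the stated ratio $\frac{(k+1)\lfloor n/(k+1)^2\rfloor - 1}{k}$ should fall out after the bookkeeping. The role of the extra $-a$ enemies is exactly to make the small blocks $k$-stable; if $a=0$ one instead uses $\{-\infty,0,b\}$ and the note after Theorem~\ref{poa:upper-bound:kgeq2} shows why $k\ge 2$ is needed.

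For the second case, with $\mathcal{W} \supseteq \{b,b'\}$ and $b' < b$ (no enemies available, only two positive weights), the construction must be different since one can no longer forbid merges outright; instead one makes merging \emph{unprofitable} by giving the would-be movers only the weaker weight $b'$ to the target group while they already enjoy $b$ inside their current block. Concretely, partition the $n$ vertices into blocks of size $kb$, put $b$-weighted edges inside each block (so each vertex has utility $kb\cdot b$-ish… actually utility $(kb-1)b$ in its block), and put $b'$-weighted edges between blocks; then a coalition of $\le k$ vertices leaving its block to join another loses $b$-edges to the $\ge kb-k$ members left behind and gains only $b'$-edges — a net loss when $b'/b$ is small relative to the block size — so this partition is $k$-stable. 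The coarse partition $P^+$ merging everything multiplies the between-block contribution by roughly $n/(kb)$ at rate $b'$, giving the claimed $1 + \frac{b'}{b}\cdot\frac{kb(\lfloor n/(kb)\rfloor-1)}{kb-1}$.

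The main obstacle is verifying $k$-stability rigorously: I have to rule out \emph{every} coalition of size at most $k$, including ones that split across several blocks or move to a brand-new empty color, and show that in each case at least one member strictly loses or fails to strictly gain. This is where the precise choice of block sizes ($(k+1)^2$ and $kb$) and of cross-weights matters, and it is essentially the reason the bounds have the exact constants stated rather than merely $\Omega(n)$. The rest — computing $f(P^+)$ and $f(P_k^-)$ and simplifying the ratio — is routine arithmetic that I would relegate to the appendix.
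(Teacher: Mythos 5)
Your second construction (blocks of size $kb$ with weight $b$ inside and $b'$ across, grand coalition as $P^+$, block partition as $P_k^-$) is exactly the paper's, and the stability accounting you sketch is the right one: a deviating vertex ends with at most $(k-1)b + kb\,b'$, which is at most its current $(kb-1)b$ precisely because $b' \le b-1$. The first case, however, has a genuine gap. The ratio you must achieve, $\frac{(k+1)\lfloor n/(k+1)^2\rfloor - 1}{k}$, grows linearly in $n$, but your construction — a disjoint union of blocks of $(k+1)^2$ vertices with only $-a$ (or $-\infty$) edges between blocks — caps every group of \emph{any} partition that avoids inter-block penalties at $(k+1)^2$ vertices. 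Hence $f(P^+) \le n\bigl((k+1)^2-1\bigr)b$ while $f(P_k^-) = \Theta(nkb)$, and the best ratio you can extract is $\bigl((k+1)^2-1\bigr)/k = O(k)$, a constant in $n$. No bookkeeping over the $\lfloor n/(k+1)^2\rfloor$ blocks can fix this, because the blocks are independent and the ratio of a disjoint union is at most the maximum of the per-block ratios.

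What the paper does instead is a single global grid: $n' = (k+1)^2\lfloor n/(k+1)^2\rfloor$ vertices $v_{i,j}$ arranged in $k+1$ rows and $n'/(k+1)$ columns, with $w = b$ exactly when two vertices share a row \emph{or} a column and $w = -a$ otherwise. The high-utility partition is the one into \emph{rows}, each of size $(k+1)\lfloor n/(k+1)^2\rfloor = \Theta(n/k)$ — this is where the $n$-dependence of the ratio comes from. The $k$-stable partition $P_k^-$ is the one into \emph{columns}, each of size $k+1$, giving every vertex utility $kb$; it is stable because a vertex has exactly one friend in every other column, so after any deviation of size at most $k$ it has at most $k$ friends (its at most $k-1$ co-deviators plus one row-mate in the target column) and gains nothing strictly — an argument that works even when $a=0$, so you do not need the $\{-\infty,0,b\}$ detour you mention. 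The structural point your proposal misses is that the ``good'' and ``bad'' partitions must be two transversal decompositions of the \emph{same} globally connected friendship structure, one with $\Theta(n/k)$-size groups and one with $(k+1)$-size groups, rather than two refinements of a partition into small independent components.
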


%% ANCIEN

%As a result, since we proved there is few hope we can compute efficient configurations in polynomial time, our
%goal is to study the efficiency of any stable configuration, and especially the ones computable by using our dynamical system (extended to many channels).
%In spite of the complexity of computing a maximum configuration, we prove that collusions and multiple channels have in general a beneficial effect on the price of anarchy, which is the ratio of the worst stable equilibrium to this optimal.
%To sum up our work:
%\begin{itemize}
%\item Nash Equilibria are the largest class of stable configurations, and the ones obtained by the dynamic of the system satisfy a general bound that improves with $q$, in contrast with worst Nash Equilibria that are always arbitrarily bad.
%\item Increasing $k$ beyond $1$ restricts further the set of stable configurations; they may even sometimes
%cease to exist. However, before this happens, we can expect the worst case configurations, and
%hence our general bounds, to improve. We prove in Theorem~\ref{poa:upper-bound:kgeq2} that it is generally the case
%considering single channel case (the existence of a stable configuration is not guaranteed, but the
%result holds if it exists).
%\end{itemize}

\label{totalpages}

\newpage
\bibliographystyle{abbrv}
\bibliography{Bibliographie/InfoSharing,Bibliographie/BehavPrivacy,Bibliographie/AllPapers,Bibliographie/IntegerPartitions}

\begin{thebibliography}{10}

\bibitem{Acquisti:2004gp}
A.~Acquisti.
\newblock {Privacy in electronic commerce and the economics of immediate
  gratification}.
\newblock In {\em Proceedings of the 5th ACM conference on Electronic
  commerce}, pages 21--29, May 2004.

\bibitem{Augustine:2011wj}
J.~Augustine, N.~Chen, E.~Elkind, A.~Fanelli, N.~Gravin, and D.~Shiryaev.
\newblock {Dynamics of profit-sharing games}.
\newblock In {\em Proceedings of the Twenty-Second international joint
  conference on Artificial Intelligence-Volume Volume One}, pages 37--42, 2011.

\bibitem{bachrach2013strong}
Y.~Bachrach, V.~Syrgkanis, E.~Tardos, and M.~Vojnovic.
\newblock Strong price of anarchy and coalitional dynamics.
\newblock Technical report, arXiv preprint arXiv:1307.2537, 2013.

\bibitem{Balcan:2009uc}
M.-F. Balcan, A.~Blum, and Y.~Mansour.
\newblock {Improved equilibria via public service advertising}.
\newblock In {\em SODA '09: Proceedings of the twentieth Annual ACM-SIAM
  Symposium on Discrete Algorithms}. Society for Industrial and Applied
  Mathematics, Jan. 2009.

\bibitem{Ballester:2004iu}
C.~Ballester.
\newblock {NP-completeness in hedonic games}.
\newblock {\em Games and Economic Behavior}, 49(1):1--30, Oct. 2004.

\bibitem{Bloch:2011ta}
F.~Bloch and E.~Diamantoudi.
\newblock {Noncooperative formation of coalitions in hedonic games}.
\newblock {\em International Journal of Game Theory}, 40(2):263--280, 2011.

\bibitem{Brylawski1973201}
T.~Brylawski.
\newblock The lattice of integer partitions.
\newblock {\em Discrete Mathematics}, 6(3):201 -- 219, 1973.

\bibitem{Burt:2004tv}
R.~Burt.
\newblock {Structural Holes and Good Ideas}.
\newblock {\em The American Journal of Sociology}, 110(2):349--399, Sept. 2004.

\bibitem{CalvoArmengol:2004ug}
A.~Calv{\'o}-Armengol.
\newblock {Job contact networks}.
\newblock {\em Journal of Economic Theory}, 115:191--206, 2004.

\bibitem{Chalkiadakis:2010ui}
G.~Chalkiadakis, E.~Elkind, E.~Markakis, M.~Polukarov, and N.~R. Jennings.
\newblock {Cooperative games with overlapping coalitions}.
\newblock {\em Journal of Artificial Intelligence Research}, 39(1):179--216,
  Sept. 2010.

\bibitem{Chatzigiannakis:2010vo}
I.~Chatzigiannakis, C.~Koninis, P.~N. Panagopoulou, and P.~G. Spirakis.
\newblock {Distributed game-theoretic vertex coloring}.
\newblock In {\em OPODIS'10: Proceedings of the 14th international conference
  on Principles of distributed systems}. Springer-Verlag, Dec. 2010.

\bibitem{Chaudhuri:2008ts}
K.~Chaudhuri and F.~Chung~Graham.
\newblock {A network coloring game}.
\newblock In {\em WINE '08: Proceedings of the 4th International Workshop on
  Internet and Network Economics}, 2008.

\bibitem{Chen:2010wz}
W.~Chen, Z.~Liu, X.~Sun, and Y.~Wang.
\newblock {A game-theoretic framework to identify overlapping communities in
  social networks}.
\newblock {\em Data Mining and Knowledge Discovery}, 21(2):224--240, Sept.
  2010.

\bibitem{Christodoulou:2006cr}
G.~Christodoulou, V.~Mirrokni, and A.~Sidiropoulos.
\newblock {Convergence and Approximation in Potential Games}.
\newblock pages 349--360. STACS 2006, Berlin, Heidelberg, 2006.

\bibitem{Coleman:1957vq}
J.~Coleman, E.~Katz, and H.~Menzel.
\newblock {The Diffusion of an Innovation Among Physicians}.
\newblock {\em Sociometry}, 20(4):253--270, Dec. 1957.

\bibitem{Dailey1980}
D.~P. Dailey.
\newblock Uniqueness of colorability and colorability of planar 4-regular
  graphs are np-complete.
\newblock {\em Discrete Mathematics}, 30(3):289 -- 293, 1980.

\bibitem{Dreze:1980tv}
J.~H. Dr{\`e}ze and J.~Greenberg.
\newblock {Hedonic Coalitions: Optimality and Stability}.
\newblock {\em Econometrica}, 48(4):987--1003, May 1980.

\bibitem{Escoffier:2012vu}
B.~Escoffier, L.~Gourv{\`e}s, and J.~Monnot.
\newblock {Strategic coloring of a graph}.
\newblock {\em Internet Mathematics}, 8(4):424--455, 2012.

\bibitem{Fabrikant:2004fxa}
A.~Fabrikant, C.~Papadimitriou, and K.~Talwar.
\newblock {The complexity of pure Nash equilibria}.
\newblock In {\em STOC '04: Proceedings of the thirty-sixth annual ACM
  symposium on Theory of computing}. ACM Request Permissions, June 2004.

\bibitem{Goles1993321}
E.~Goles and M.~A. Kiwi.
\newblock Games on line graphs and sand piles.
\newblock {\em Theoretical Computer Science}, 115(2):321 -- 349, 1993.

\bibitem{Granovetter:1974te}
M.~Granovetter.
\newblock {\em {Getting a job: a study of contacts and careers }}.
\newblock University of Chicago Press, 1974.

\bibitem{Greene:1986:LCL:11002.11003}
C.~Greene and D.~J. Kleitman.
\newblock Longest chains in the lattice of integer partitions ordered by
  majorization.
\newblock {\em Eur. J. Comb.}, 7(1):1--10, Jan. 1986.

\bibitem{Johnson:1988ur}
D.~S. Johnson, C.~Papadimitriou, and M.~Yannakakis.
\newblock {How easy is local search?}
\newblock {\em Journal of Computer and System Sciences}, 37(1):79--100, 1988.

\bibitem{Kar72}
R.~Karp.
\newblock Reducibility among combinatorial problems.
\newblock In R.~Miller and J.~Thatcher, editors, {\em Complexity of Computer
  Computations}, pages 85--103. 1972.

\bibitem{Kearns:2006wj}
M.~Kearns and S.~Suri.
\newblock {An experimental study of the coloring problem on human subject
  networks}.
\newblock {\em Science}, 313(5788):824--827, 2006.

\bibitem{kleinberg2013information}
J.~Kleinberg and K.~Ligett.
\newblock Information-sharing in social networks.
\newblock {\em Games and Economic Behavior}, 82:702--716, 2013.

\bibitem{Korf2009}
R.~Korf.
\newblock Multi-way number partitioning.
\newblock {\em International Joint Conference on Artificial Intelligence},
  2009.

\bibitem{Montgomery:1991uj}
J.~Montgomery.
\newblock {Social Networks and Labor-Market Outcomes: Toward an Economic
  Analysis}.
\newblock {\em The American Economic Review}, 81:1408--1418, 1991.

\bibitem{Panagopoulou:2008dr}
P.~N. Panagopoulou and P.~G. Spirakis.
\newblock {A game theoretic approach for efficient graph coloring}.
\newblock {\em The 19th International Symposium on Algorithms and Computation
  (ISAAC 2008)}, pages 183--195, 2008.

\bibitem{Wang:2011gb}
Y.~Wang, G.~Norcie, S.~Komanduri, A.~Acquisti, P.~G. Leon, and L.~F. Cranor.
\newblock {"I regretted the minute I pressed share"}.
\newblock In {\em SOUPS '11 Proceedings of the Seventh Symposium on Usable
  Privacy and Security}, page~1, New York, New York, USA, 2011. ACM Press.

\bibitem{Zick:2012ui}
Y.~Zick, G.~Chalkiadakis, and E.~Elkind.
\newblock {Overlapping coalition formation games: charting the tractability
  frontier}.
\newblock In {\em AAMAS '12: Proceedings of the 11th International Conference
  on Autonomous Agents and Multiagent Systems}. International Foundation for
  Autonomous Agents and Multiagent Systems, June 2012.

\bibitem{Zuck}
D.~Zuckerman.
\newblock Linear degree extractors and the inapproximability of max clique and
  chromatic number.
\newblock {\em Theory of Computing}, 3(1):103--128, 2007.

\end{thebibliography}

\newpage

\section{Appendix}

\subsection{Appendix 1: proofs of Section~\ref{sec:unweighted}}

\begin{lemma}
\label{lem:convergence-k-stable}
For any $k \geq 1$, for any conflict graph $G^{-}$, the system converges to a $k$-stable partition.
\end{lemma}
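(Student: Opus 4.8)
The plan is to exhibit a monovariant (potential function) on partitions that strictly increases along every $k$-deviation and takes values in a well-ordered set, so that no infinite sequence of deviations can exist. The natural candidate, as already hinted in the body of the paper, is the partition vector $\Lambda(P) = (\lambda_n(P), \ldots, \lambda_1(P))$ compared in the lexicographic order: since $\Lambda(P)$ ranges over a finite set (the set of integer partitions of $n$), strict lexicographic increase along deviations immediately yields termination, hence convergence to a partition admitting no $k$-deviation, i.e.\ a $k$-stable partition. Existence of such a stable partition is then automatic because the process must halt.

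First I would set up notation: fix a conflict graph $G^-$ and a parameter $k \geq 1$, and consider an arbitrary $k$-deviation taking partition $P_i$ to $P_{i+1}$, where a coalition $S$ of at most $k$ players all move into a common group $X$ of $P_{i+1}$ and each strictly increases her utility. The key structural claim is: $\Lambda(P_{i+1})$ is strictly greater than $\Lambda(P_i)$ in the lexicographic order (reading coordinates from the largest group size $n$ down to $1$). To prove this, I would let $p = |X|$ be the size of the group that $S$ joins in $P_{i+1}$, and argue that every player $u \in S$ satisfies $f_u(P_i) < f_u(P_{i+1}) \le p - 1$ (the inequality $f_u(P_{i+1}) \le p-1$ holds because utility per player is at most the number of other members of her group, with equality in the uniform case, and $\le$ in general as negative/conflict edges only decrease it — in fact for this lemma we only need the uniform bound since the statement is about conflict graphs). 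Hence each $u \in S$ was previously in a group of size at most $p - 1$. Therefore, relative to $P_i$, forming $X$ removes some groups of sizes $< p$ and creates one group of size $p$ (and possibly leaves some residual smaller groups behind where the deviators came from). So the coordinate $\lambda_p$ is affected: I would track the first coordinate (largest index) at which $\Lambda(P_i)$ and $\Lambda(P_{i+1})$ differ. All groups of size $> p$ are untouched by the deviation, so coordinates $\lambda_n, \ldots, \lambda_{p+1}$ are unchanged; and $\lambda_p$ strictly increases by at least $1$ (a brand-new group of size $p$ appears, and no group of size $p$ is destroyed because all source groups had size $< p$). This shows the first point of difference is at index $p$ and the change there is positive, which is exactly strict lexicographic increase.

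Then I would conclude: the sequence $\Lambda(P_0), \Lambda(P_1), \ldots$ is strictly increasing in the lexicographic order on the finite set of integer partitions of $n$, so it has finite length; equivalently, the while-loop of Algorithm~\ref{algorithm:dynamic} terminates, and its output partition admits no $k$-deviation, i.e.\ is $k$-stable. A minor subtlety to address is the case where a deviator moves into an \emph{empty} group (she forms a singleton of a new color): then $p = 1$ is impossible for a profitable move (utility would be $0$, not strictly larger than a nonnegative previous utility), unless $S$ has more than one member forming a fresh group together, in which case $p = |S| \geq 2$ and the argument above applies verbatim with that $p$. The main obstacle — though it is modest — is being careful about which groups are destroyed versus merely shrunk: I must make sure that no group of size exactly $p$ is destroyed by the deviation, which follows from the bound $f_u(P_i) \le |X(u)| - 1 < p - 1$ forcing $|X(u)| \le p - 1$ for every deviator $u$, so all ``source'' groups are strictly smaller than $p$. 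Once that is pinned down, the lexicographic monovariant argument is routine.
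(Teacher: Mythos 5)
Your proof is correct and follows essentially the same route as the paper's: both use the partition vector $\Lambda(P)$ as a monovariant that strictly increases in the lexicographic order under any $k$-deviation, and conclude termination from the finiteness of the set of integer partitions of $n$. You merely spell out in more detail (via the bound $f_u(P_i) < p-1$ forcing all source groups to have size $< p$) why the first differing coordinate is the one that increases, a step the paper's proof asserts more tersely.
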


\begin{proof}~[Lemma~\ref{lem:convergence-k-stable}]
Let $P_{i}, P_{i+1}$ be two partitions for $G^-$ such that $P_{i+1}$ is obtained from $P_i$ after a $k$-deviation for $P_i$. %$dev(C_{i},S,j)=C_{i+1}$ for some $S \subseteq V$, $|S| \leq k$, and for some $j$, $1 \leq j \leq n$.
We prove $\Lambda(P_{i}) <_{L} \Lambda(P_{i+1})$ where $<_L$ is the lexicographical ordering.
By definition, for any $u \in S$, we have $f_{u}(P_{i}) < f_{u}(P_{i+1})$, where $S$ represents the set of nodes involving in the $k$-deviation ($|S| \leq k$).
Thus, we get $\Lambda(P_{i+1}) - \Lambda(P_{i}) = (0, \ldots, 0, 1, \ldots)$, and so $\Lambda(P_i) <_{L} \Lambda(P_{i+1})$.
Finally, as the number of possible vectors is finite, we obtain the convergence of the system.
\end{proof}

\begin{lemma}
\label{lem:graphe-vide}
For any $k, n \geq 1$, there is a sequence of $k$-deviations with size $L(k,n)$ in $G^{\emptyset}$, where $G^{\emptyset}=(V,E)$ is the conflict graph such that $|V|=n$ and $|E|=0$.
\end{lemma}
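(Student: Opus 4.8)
The plan is to show that adding any conflict edge to a graph can only shrink (or leave unchanged) the length of the longest $k$-deviation sequence, so that the conflict-free graph $G^\emptyset$ on $n$ vertices realizes the maximum $L(k,n)$. Formally, fix any conflict graph $G^- = (V,E)$ on $n$ vertices and let $(P_0, P_1, \ldots, P_\ell)$ be a longest sequence of $k$-deviations in $G^-$, so $\ell = L(G^-, k)$. I want to exhibit a sequence of $k$-deviations of length at least $\ell$ in $G^\emptyset$. The key observation is that in the uniform case the individual utility $f_u(P) = |X(u)| - 1$ depends \emph{only} on the size of $u$'s group, not on \emph{which} other vertices share the group; the conflict edges only \emph{restrict} which partitions are legal (a legal partition for $G^-$ is one where no group contains two enemies). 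Since $G^\emptyset$ has no conflict edges, \emph{every} partition of $V$ is legal for $G^\emptyset$.

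First I would check the crucial invariant: if $P_{i+1}$ is obtained from $P_i$ by a $k$-deviation in $G^-$, then the same ``move'' — i.e. a subset $S$ of at most $k$ vertices all jumping into a common group — is also available in $G^\emptyset$, and the utility change for each $u \in S$ is determined entirely by the group sizes, hence identical in both games. So the naive idea is to replay the exact same sequence of moves in $G^\emptyset$. The only subtlety is that ``a group of $P_i$'' is a specific vertex set, and to replay the move I need the group-size structure to match; the cleanest way to make this rigorous is to track only the partition vectors $\Lambda(P_i)$ and argue that the move taking $\Lambda(P_i)$ to $\Lambda(P_{i+1})$ is realizable in $G^\emptyset$ whenever it is realizable at all — because in $G^\emptyset$ any partition with a prescribed partition vector exists and any target group assignment is legal. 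Concretely: given $P_i$ legal for $G^-$, its ``copy'' $P_i'$ in $G^\emptyset$ (built by matching group sizes, which is trivial since all partitions are legal) admits the analogous $k$-deviation because (a) the deviating players all strictly gain, by the size-only formula, and (b) the resulting partition is automatically legal in $G^\emptyset$. Iterating gives a valid length-$\ell$ sequence in $G^\emptyset$, so $L(G^\emptyset, k) \geq L(G^-,k)$ for every $G^-$, and taking the maximum over all conflict graphs on $\le n$ vertices yields $L(G^\emptyset, k) = L(k,n)$.

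The main obstacle — really the only place that needs care — is the bookkeeping that turns ``replay the same move'' into a genuine proof: one must argue that a $k$-deviation in $G^-$ corresponds to a well-defined change of partition vector, that this change is independent of the identities of the vertices involved, and that the corresponding partition in $G^\emptyset$ can always be reached legally. I expect this to be a short induction on the length of the sequence, maintaining the invariant ``$P_i'$ is a legal partition for $G^\emptyset$ with $\Lambda(P_i') = \Lambda(P_i)$ reached by $i$ valid $k$-deviations.'' A mild point to handle is that a $k$-deviation in $G^-$ might \emph{merge} into an existing nonempty group whose membership matters for legality in $G^-$ but never matters in $G^\emptyset$; this only makes the $G^\emptyset$ side easier, so it poses no real difficulty. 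No potential-function machinery is needed beyond Lemma~\ref{lem:convergence-k-stable}, which guarantees all these sequences are finite so that ``longest'' is well defined.
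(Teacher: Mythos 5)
Your proposal is correct and follows essentially the same route as the paper: take a longest sequence of $k$-deviations for an arbitrary conflict graph $G^-$ and replay it in $G^{\emptyset}$, using the facts that utilities in the uniform case depend only on group sizes and that every partition is legal in $G^{\emptyset}$. The only difference is that your partition-vector bookkeeping is unnecessary — since $G^-$ and $G^{\emptyset}$ share the same vertex set, the very same partitions and the very same deviating sets can be reused verbatim, which is all the paper's (shorter) proof does.
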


\begin{proof}~[Lemma~\ref{lem:graphe-vide}]
In order to prove the lemma, let $k, n \geq 1$, and let $G^-=(V,E)$ be any graph with $|V|=n$.
We consider a longest sequence of $k$-deviations for $G^-$.
Then, we can mimic this sequence for $G^-$ in order to get a valid sub-sequence for $G^{\emptyset}$.
Indeed, every $k$-deviation for $G^-$ is a valid $k$-deviation for $G^{\emptyset}$.
We get there is a sequence of $k$-deviations for $G^{\emptyset}$ that is at least as long as a longest sequence of $k$-deviations for $G^-$.
Because there is a finite number of conflict graphs with $n$ nodes, this clearly implies that there exists a sequence of $k$-deviations with maximum size for $G^{\emptyset}$.
\end{proof}

\begin{proof}~[Corollary~\ref{corollary:equiv-dominance-dev}]
First, assume $Q'$ covers $Q$.
By Lemma~\ref{dominance-characterization}, there exist $j,k$ such that  $q_j' = q_j+1$,  $q_k' = q_k - 1$, and for all $i$ such that $i \neq j,k$, $q_i' = q_i$.
Moreover, since $k=j+1$ or $q_j = q_k$, we get $q_j \geq q_k$.
Let $P$ be any partition of the nodes such that $\Lambda(P) = Q$.
We can write $P = ( X_1, \ldots, X_n )$ such that $|X_i| = q_i$ for all $1 \leq i \leq n$.
Let $v$ be any vertex in $X_k$.
Such a vertex exists because $|X_k| = q_k > 0$.
Then $v$ can move from $X_k$ to $X_j$, and it is a valid $1$-deviation because $|X_j| = q_j \geq q_k$.
In so doing, we get a new partition $P' = ( X_1', \ldots, X_n' )$ for the nodes such that $|X_j'| = |X_j|+1$, $|X_k'| = |X_k|-1$, and for all $i$ such that $i \neq j,k$, $|X_i'| = |X_i|$.
In other words, $\Lambda(P') = Q'$.

More generally, if $Q'$ dominates $Q$ then there exists a sequence $Q = Q_1, Q_2, \ldots, Q_p = Q'$, such that for every $1 \leq i \leq p-1$, $Q_{i+1}$ covers $Q_{i}$.
Therefore, we can iterate the process, and we get the expected result.\\

\noindent
Conversely, let $P = ( X_1, \ldots, X_n )$, be a partition for the nodes,
let $S = \{v\}$ be any $1$-deviation that breaks $P$,
and let $P' = ( X_1', \ldots, X_n')$ be the resulting partition for the nodes. 
Then $v$ leaves some group $X_{k}$ for another group $X_{j}$.
Furthermore, $|X_{j}| \geq |X_{k}|$ by the hypothesis.
So, we get either $|X_{j}| = |X_{k}|$, or $|X_{j}| > |X_{k}|$, hence $j \leq k-1$.
Let us denote $Q = \Lambda(P)$ and $Q' = \Lambda(P')$.
By a reordering of groups with equal size, we may assume that $X_j$ is the first group with size $|X_j|$, whereas $X_k$ is the last group with size $|X_k|$ in $P$.
Then:
\begin{itemize}
\item if $i \in \{ 1, \ldots, j-1 \}$, then $\sum_{l=1}^i |X_l'| = \sum_{l=1}^i |X_l|$;
\item if $i \in \{ j, \ldots, k-1 \}$, then $\sum_{l=1}^i |X_l'| = [\sum_{l=1}^i  |X_l|] + 1 $;
\item if $i \in \{ k, \ldots, n \}$, then $\sum_{l=1}^i  |X_l'| = \sum_{l=1}^i |X_l|$.
\end{itemize}
As a consequence, we have that $Q'$ dominates $Q$ by the hypothesis.
\end{proof}

\begin{proof}~[Theorem~\ref{lem:tight:k=1}]
We first recall that a \emph{chain} of integer partitions is a sequence of integer partitions $Q_1, \ldots, Q_p$ such that for all $1 \leq i \leq p-1$, $Q_{i+1}$ dominates $Q_i$. Especially, a \emph{covering chain} is a chain such that for all $1 \leq i \leq p-1$, $Q_{i+1}$ covers $Q_i$. It has been proven in~\cite{Greene:1986:LCL:11002.11003} the maximum length of such a chain in the Dominance Lattice is exactly $2 \binom{m+1}{3} + mr$. Moreover, it should be clear that any chain with maximum length is a covering chain.  

Note that to any sequence of partitions of the nodes $P_1, P_2, \ldots, P_p$ that is obtained via $1$-deviations, we can associate a chain of integer partitions with the same length $p$, by Corollary~\ref{corollary:equiv-dominance-dev}.
Hence $L(1,n)$ is upper-bounded by the size of a longest chain of integer partitions. 
Conversely, any covering chain of integer partitions $Q_1, Q_2, \ldots, Q_p$ can be associated to such a sequence of partitions of the nodes.
As a consequence, that maximum length is also a lower-bound. \\
We get $L(1,n) = 2 \binom{m+1}{3} + mr$
\end{proof}

\begin{proof}~[Theorem~\ref{lem:tight:k=2}]
Clearly, $L(2,n) = \geq L(1,n)$. \\
Let $G^{\emptyset}= (V,E)$ be the conflict graph such that $|V|=n$ and $|E|=0$.
Let $P$ be any partition of the nodes for $G^{\emptyset}$ such that there exists $S= \{u,v\}$ that breaks $P$.
We denote by $X'$ the group in $P$ both vertices in $S$ agree to join together.
If $|X'| \geq |X(u)|$ or $|X'| \geq |X(v)|$, then the $2$-deviation can be decomposed into $1$-deviations.
So, we suppose that $|X'| =  |X(u)| - 1 = |X(v)| - 1$. There are two cases:

a) Suppose $X(u) = X(v)$. 
Then, after the $2$-deviation, we replace $X(u),X'$, by $X(u) \setminus \{u,v\}, X' \cup \{u,v\}$; that is we get two groups of size $|X(u)| - 2$, $|X(u)| + 1$, respectively, instead of two groups of size  $|X(u)| - 1$, $|X(u)|$, respectively. Thus, if a vertex of $X'$ breaks $P$ by joining group $X(u)$ we get the same partition vector.

b) Now suppose $X(u) \neq X(v)$.  
Then, after the $2$-deviation, we replace $X(u),X(v),X'$, by $X(u) \setminus \{u\},X(v) \setminus \{v\}, X' \cup \{u,v\}$; that is we get three groups of size $|X(u)| - 1$, $|X(u)| - 1$, $|X(u)| + 1$ , respectively, instead of three groups of size $|X(u)| - 1$, $|X(u)|$, $|X(u)|$, respectively. Thus, if a vertex of $X(v)$ breaks $P$ and joins group $X(u)$, then we also get the same partition vector.

Finally, any partition vector that is gotten from a $2$-deviation may also be gotten from a sequence of $1$-deviations.
Consequently, $L(2,n) = L(1,n) = \theta{(n\sqrt{n})}$.
\end{proof}

\begin{proof}~[Theorem~\ref{thm:lower-bound-3}]
Let us consider the conflict graph $G^{\emptyset}=(V,E)$ such that $|V|=n$ and $|E|=0$.
Recall that any partition $P$ is represented by $\Lambda=(\lambda_{n}, \ldots, \lambda_{i}, \ldots, \lambda_{1})$ where $\lambda_{i}$ represents the number of groups of size $i$,  $1 \leq i \leq n$.
We here only consider $3$-deviations that consist in creating $1$ group of size $p$, removing $3$ groups of size $p-1$, creating $3$ groups of size $p-2$, and removing $1$ group of size $p-3$, for some $p$ such that $4 \leq p \leq n/4$.
%If $p=3$, the $3$-deviation consists in creating $1$ group of size $3$, removing $3$ groups of size $2$, and creating $3$ groups of size $1$.
In other words, we focus on $3$-deviations from $P$ to $P'$ such that $\Lambda'-\Lambda=(\ldots,0,1,-3,3,-1,0,\ldots)$ where $\Lambda$ and $\Lambda'$ are the vectors of partitions $P$ and $P'$, respectively.
For any $p$, $4 \leq p \leq n/4$, we define the vector $\gamma[p]$ of size $n$ such that:
\begin{equation*}
\begin{cases}
\gamma[p]_{j} = 0 & j = p+1, \ldots n \\
\gamma[p]_{p} = 1 \\
\gamma[p]_{p-1} = -3 \\
\gamma[p]_{p-2} = 3 \\
\gamma[p]_{p-3} = -1 \\
\gamma[p]_{j} = 0 & j = 1, \ldots p-4.
\end{cases}
\end{equation*}

Given the vector $\Lambda$ of a partition $P$, the $3$-deviation considered here is such that $\Lambda' = \Lambda + \gamma[p]$, for some $p$, $4 \leq p \leq n/4$, where $\Lambda'$ represents the new partition after the deviation.

The number of nodes is assumed to be $n = O(cL^{2})$, where $c$ denotes a large constant integer.
We first do a sequence of $1$-deviations to get the partition $P$  such that $\Lambda=(0,\ldots,0,\lambda_{L}=c,\ldots,\lambda_{1}=c)$ represents the vector of $P$.

Let $t = \frac{L-1}{4}$.

We do $t+1$ consecutive $3$-deviations $\gamma[L], \ldots, \gamma[L-t]$ starting from $P$.
We get $\Lambda^{1} = \Lambda + \gamma^{1}[L]$ where $\Lambda^{1}$ is the vector of the resulting partition $P^{1}$ and $\gamma^{1}[L] = \sum_{i=L-t}^{L} \gamma[i]$ is such that:
\begin{equation*}
\begin{cases}
\gamma^{1}[L]_{j} = 0 & j = L+1, \ldots, n \\
\gamma^{1}[L]_{L} = 1 \\
\gamma^{1}[L]_{L-1} = -2 \\
\gamma^{1}[L]_{L-2} = 1 \\
\gamma^{1}[L]_{j} = 0 & j = L-t, \ldots, L-3 \\
\gamma^{1}[L]_{L-t-1} = -1 \\
\gamma^{1}[L]_{L-t-2} = 2 \\
\gamma^{1}[L]_{L-t-3} = -1 \\
\gamma^{1}[L]_{j} = 0 & j = 1, \ldots, L-t-4.
\end{cases}
\end{equation*}

We repeat the sequence defined by $\gamma^{1}$ starting from $L-1, L-2, \ldots, L-t+2$.
That corresponds to $t-1$ consecutive sequences $\gamma^{1}[L], \ldots, \gamma^{1}[L-t+2]$ starting from $P$.
We get $\Lambda^{2} = \Lambda + \gamma^{2}[L]$ where $\Lambda^{2}$ is the vector of the resulting partition $P^{2}$ and $\gamma^{2}[L] = \sum_{i=L-t+2}^{L} \gamma^{1}[i]$ is such that:
\begin{equation*}
\begin{cases}
\gamma^{2}[L]_{j} = 0 & j = L+1, \ldots, n \\
\gamma^{2}[L]_{L} = 1 \\
\gamma^{2}[L]_{L-1} = -1 \\
\gamma^{2}[L]_{j} = 0 & j = L-t+2, \ldots, L-2 \\
\gamma^{2}[L]_{L-t+1} = -1 \\
\gamma^{2}[L]_{L-t} = 1 \\
\gamma^{2}[L]_{L-t-1} = -1 \\
\gamma^{2}[L]_{L-t-2} = 1 \\
\gamma^{2}[L]_{j} = 0 & j = L-2t+1, \ldots, L-t-3 \\
\gamma^{2}[L]_{L-2t} = 1 \\
\gamma^{2}[L]_{L-2t-1} = -1 \\
\gamma^{2}[L]_{j} = 0 &  j = 1, \ldots, L-2t-2.
\end{cases}
\end{equation*}

We repeat the sequence $\gamma^{2}$ starting from $L-1, L-2, \ldots, L-t+4$.
That corresponds to $t-3$ consecutive sequences $\gamma^{2}[L], \ldots, \gamma^{2}[L-t+4]$ starting from $P$.
We get $\Lambda^{3} = \Lambda + \gamma^{3}[L]$ where $\Lambda^{3}$ is the vector of the resulting partition $P^{3}$ and $\gamma^{3}[L] = \sum_{i=L-t+4}^{L} \gamma^{2}[i]$ is such that:

\begin{equation*}
\begin{cases}
\gamma^{3}[L]_{j} = 0 & j = L+1, \ldots, n \\
\gamma^{3}[L]_{L} = 1 \\
\gamma^{3}[L]_{j} = 0 & j = L-t+4, \ldots, L-1 \\
\gamma^{3}[L]_{L-t+3} = -1 \\
\gamma^{3}[L]_{L-t+2} = -1 \\
\gamma^{3}[L]_{L-t+1} = 0 \\
\gamma^{3}[L]_{L-t} = -1 \\
\gamma^{3}[L]_{j} = 0 & j = L-3t+4, \ldots, L-t-1 \\
\gamma^{3}[L]_{L-3t+3} = 1 \\
\gamma^{3}[L]_{L-3t+2} = 0 \\
\gamma^{3}[L]_{L-3t+1} = 1 \\
\gamma^{3}[L]_{j} = 0 & j = L-4t+1, \ldots, L-3t \\
\gamma^{3}[L]_{L-4t} = -1 \\
\gamma^{3}[L]_{j} = 0 & j = 1, \ldots, L-4t-1.
\end{cases}
\end{equation*}

We finally repeat the sequence $\gamma^{3}$ from $L-1, L-2, \ldots, L-t+1$.
That corresponds to $t$ consecutive sequences $\gamma^{3}[L], \ldots, \gamma^{3}[L-t+1]$ starting from $P$.
We get $\Lambda^{4} = \Lambda + \gamma^{4}[L]$ where $\Lambda^{4}$ is the vector of the resulting partition $P^{4}$ and $\gamma^{4}[L] = \sum_{i=L-t+1}^{L} \gamma^{3}[i]$.
Note as $\gamma^{3}[i]$ has a constant number of non zero values, then $\gamma^{4}_{j}[L] \geq -c$ for any $j$, $1 \leq j \leq n$, where $c \geq 4$. So, the sequence is indeed valid when we start from $P$, which means we have enough groups of each size to do it.

Set $t = \theta(L)$ such that $t \leq \frac{L-1}{4}$.
We get that the total number of $3$-deviations is $\theta(t^{4}) = \theta(L^{4})$.
Recall that the number of nodes is $\theta(cL^{2})=\theta(L^{2})$.
Thus $L(3,n) = \Omega(n^{2})$.
\end{proof}

\paragraph{Proof of super-polynomial lower bound for k $\geq$ 4 (Thm~\ref{lem:lower-bound-4}).}
%\label{annexe:superpolynomial-time}

We first formalize the deviations we will use in the proof.
We only consider $4$-deviations that consist in creating $1$ group of size $p$, removing $4$ groups of size $p-1$, creating $4$ groups of size $p-2$, and removing $1$ group of size $p-4$, with $p$ such that $5 \leq p \leq n/5$.
For all $p$, $5 \leq p \leq n/5$, we define the vector $\delta[p]$ of size $n$ such that for all $j = 1, \ldots, n$,
$\delta[p]_{p} = 1$,
$\delta[p]_{p-1} = -4$
$\delta[p]_{p-2} = 4$
$\delta[p]_{p-4} = -1$,
and $\delta[p]_{j} = 0$ otherwise.
Given the vector $\Lambda$ of a partition $P$, the $4$-deviation considered here is such that $\Lambda' = \Lambda + \delta[p]$, for some $p$, $5 \leq p \leq n/5$, where $\Lambda'$ represents the vector of the new partition after the $4$-deviation defined by $\delta[p]=\Lambda'-\Lambda=(0,\ldots,0,1,-4,4,0,-1,0,\ldots,0)$.
In other words, we focus on $4$-deviations from $P$ to $P'$ such that $\Lambda'-\Lambda=(0,\ldots,0,1,-4,4,0,-1,0,\ldots,0)$.

%\begin{equation*}
%\begin{cases}
%\delta[p]_{j} = 0 & j = p+1, \ldots, n \\
%\delta[p]_{p} = 1 \\
%\delta[p]_{p-1} = -4 \\
%\delta[p]_{p-2} = 4 \\
%\delta[p]_{p-3} = 0 \\
%\delta[p]_{p-4} = -1 \\
%\delta[p]_{j} = 0 & j = 1, \ldots, p-5.
%\end{cases}
%\end{equation*}

We now formalize the $1$-deviations as follows.
Recall that a $1$-deviation consists in creating $1$ group of size $p$, removing $1$ group of size $p-1$, removing $1$ group of size $q+1$, and creating $1$ group of size $q$ (if $q > 0$), for any $p, q$ such that $p \geq q + 2 \geq 2$ and $p + q \leq n$.
%In the following, we assume $q > 0$.
Hence for any $p, q$ such that $p > q + 2 \geq 3$ and $p + q \leq n$, we define the vector $\alpha[p,q]$ of size $n$ such that for all $j = 1, \ldots n$,
$\alpha[p,q]_{p} = 1$,
$\alpha[p,q]_{p-1} = -1$,
$\alpha[p,q]_{q+1} = -1$
$\alpha[p,q]_{q} = 1$, 
and $\alpha[p,q]_{j} = 0$ otherwise.
%\begin{equation*}
%\begin{cases}
%\alpha[p,q]_{j} = 0 & j = p+1, \ldots n \\
%\alpha[p,q]_{p} = 1 \\
%\alpha[p,q]_{p-1} = -1 \\
%\alpha[p,q]_{j} = 0 & j = q+2, \ldots p-2 \\
%\alpha[p,q]_{q+1} = -1 \\
%\alpha[p,q]_{q} = 1 \\
%\alpha[p,q]_{j} = 0 & j = 1, \ldots q-1.
%\end{cases}
%\end{equation*}
For any $p, q$ such that $p = q + 2 \geq 3$ and $p + q \leq n$, we define the vector $\alpha[p,q]=\alpha[p,p-2]$ of size $n$ such that for all $j = 1, \ldots, n$,
$\alpha[p,q]_{p} = 1$,
$\alpha[p,q]_{p-1} = -2$,
$\alpha[p,q]_{p-2=q} = 1$,
and $\alpha[p,q]_{j} = 0$ otherwise.
%\begin{equation*}
%\begin{cases}
%\alpha[p,q]_{j} = 0 & j = p+1, \ldots, n \\
%\alpha[p,q]_{p} = 1 \\
%\alpha[p,q]_{p-1} = -2 \\
%\alpha[p,q]_{p-2=q} = 1 \\
%\alpha[p,q]_{j} = 0 & j = 1, \ldots, p-3.
%\end{cases}
%\end{equation*}
Given the vector $\Lambda$ of a partition $P$, the $1$-deviation considered above is such that $\Lambda' = \Lambda + \alpha[p,q]$, for some $p, q$ such that $p \geq q + 2 \geq 3$ and $p + q \leq n$, where $\Lambda'$ represents the vector of the new partition $P'$ after the $1$-deviation defined by $\alpha[p,q]$.
In other words, the $1$-deviation from $P$ to $P'$ is such that $\Lambda'-\Lambda=(0,\ldots,0,1,-1,0, \ldots, 0,-1,1,0,\ldots,0)$ or $\Lambda'-\Lambda=(0,\ldots,0,1,-2,1,0,\ldots,0)$.

The definition above does not take into account the special case when $q=0$.
So, for any $p$ such that $p \geq 3$ and $p < n$, we define the vector $\alpha[p]$ of size $n$ such that for all $j = 1, \ldots, n$,
$\alpha[p]_{p} = 1$,
$\alpha[p]_{p-1} = -1$,
$\alpha[p]_{1} = -1$,
and $\alpha[p]_{j} = 0$ otherwise.
%\begin{equation*}
%\begin{cases}
%\alpha[p]_{j} = 0 & j = p+1, \ldots, n \\
%\alpha[p]_{p} = 1 \\
%\alpha[p]_{p-1} = -1 \\
%\alpha[p]_{j} = 0 & j = 2, \ldots, p-2 \\
%\alpha[p]_{1} = -1.
%\end{cases}
%\end{equation*}
We define the vector $\alpha[2]$ of size $n$ such that for all $j = 1, \ldots, n$,
$\alpha[2]_{2} = 1$,
$\alpha[2]_{1} = -2$.
and $\alpha[2]_{j} = 0$ otherwise.
%\begin{equation*}
%\begin{cases}
%\alpha[2]_{j} = 0 & j = 3, \ldots, n \\
%\alpha[2]_{2} = 1 \\
%\alpha[2]_{1} = -2.
%\end{cases}
%\end{equation*}
Given the vector $\Lambda$ of a partition $P$, the $1$-deviation considered above is such that $\Lambda' = \Lambda + \alpha[p]$, for some $p$, $2 \leq p < n$, where $\Lambda'$ represents the vector of the new partition after the $1$-deviation defined by $\alpha[p]$.
In other words, the $1$-deviation from $P$ to $P'$ considered here is such that $\Lambda'-\Lambda=(0,\ldots,0,1,-1,0, \ldots, 0,-1)$ or $\Lambda'-\Lambda=(0,\ldots,0,1,-2)$.

For any positive integers $p, q, d$ such that $p-d \geq q+d$ and $p + q \leq n$, we define the vector $\alpha[p,p-d,q+d,q]=\sum_{j=0}^{d-1} \alpha[p-j,q+j]$.
If $p-d > q+d$, we get for all $j = 1, \ldots, n$:
\begin{equation*}
\begin{cases} 
\alpha[p,p-d,q+d,q]_{p} = 1,\\
\alpha[p,p-d,q+d,q]_{p-d} = -1,\\
\alpha[p,p-d,q+d,q]_{q+d} = -1,\\
\alpha[p,p-d,q+d,q]_{q} = 1,\\
\text{and }\alpha[p,p-d,q+d,q]_{j} = 0\text{ otherwise}.
\end{cases}
\end{equation*}
If $p-d = q+d$, we get for all $j = 1, \ldots, n$:
\begin{equation*}
\begin{cases}  
\alpha[p,p-d,q+d,q]_{p} = 1,\\
\alpha[p,p-d,q+d,q]_{p-d=q+d} = -2,\\
\alpha[p,p-d,q+d,q]_{q} = 1,\\
\text{and }\alpha[p,p-d,q+d,q]_{j} = 0\text{ otherwise}.
\end{cases}
\end{equation*}

We are now ready to present a sequence for the coloring game with a superpolynomial size.
Without loss of generality, we assume that the number of nodes is $n = cL(L+1)/2$.
The values $c$ and $L$ will be defined later and we assume for the moment that $c$ is sufficiently large.
 If in truth $cL(L+1)/2 < n < (c+1)L(L+1)/2$, then we only consider $cL(L+1)/2$ nodes.
More precisely, the $n - cL(L+1)/2$ other nodes will never move and will stay in their $n - cL(L+1)/2$ respective singleton groups.
Thus we can only consider the vectors and partitions induced by the $cL(L+1)/2$ considered nodes.
We first do a sequence of $1$-deviations to get the partition $P^{0}$ such that $\Lambda^{0}=(0,\ldots,0,\lambda_{L}=c,\ldots,\lambda_{1}=c)$ is the  vector of $P^{0}$.

Using our notations, we observe that to get the partition $P^{0}$ we do the sequence defined by $\sum_{j=2}^{L} \alpha[j]$ $c$ times.
Thus, one can check that $\Lambda^{0} = (0,\ldots,0,n) + c\sum_{j=2}^{L} \alpha[j]$.

In the sequel, we will denote any sequence of deviations by $\phi[l]$, where $l$ denotes the greatest index $j$ such that $\phi_j \neq 0$.
Observe that given any sequence of deviations defined by a vector $\phi[l]$, the vector $\phi[l-i]$, $i > 0$, represents the same sequence of deviations in which any $4$-deviation $\delta[p]$ is replaced by $\delta[p-i]$ and any $1$-deviation involving groups of sizes $p$ and $q$, now involves groups of sizes $p-i$ and $q-i$.

Let $t > 0$ and $T > 0$ be such that $2^{T-1} (2t^{2}+2) \leq L$.

\begin{claim}
\label{claim:premiere-sequence}
There exists a sequence of $1$-deviations and $4$-deviations from $P^{0}$ to a partition $P^{1}$ with vector $\Lambda^{1}$, where $\zeta^{1}[L] = \Lambda^{1}-\Lambda^{0}$ is such that for all $j = 1, \ldots, n$,
$\zeta^{1}[L]_{L} = 1$,
$\zeta^{1}[L]_{L-2} = -1$,
$\zeta^{1}[L]_{L-3} = -1$,
$\zeta^{1}[L]_{L-t^{2}} = 1$,
$\zeta^{1}[L]_{L-t^{2}-1} = 1$,
$\zeta^{1}[L]_{L-2t^{2}+2} = -1$,
$\zeta^{1}[L]_{L-2t^{2}+1} = -1$,
$\zeta^{1}[L]_{L-2t^{2}-1} = 1$,
and $\zeta^{1}[L]_{j} = 0$ otherwise.
%\begin{equation*}
%\begin{cases}
%\zeta^{1}[L]_{j} = 0 & j = L+1, \ldots, n \\
%\zeta^{1}[L]_{L} = 1 \\
%\zeta^{1}[L]_{L-1} = 0 \\
%\zeta^{1}[L]_{L-2} = -1 \\
%\zeta^{1}[L]_{L-3} = -1 \\
%\zeta^{1}[L]_{j} = 0 & j = L-t^{2}+1, \ldots, L-4 \\
%\zeta^{1}[L]_{L-t^{2}} = 1 \\
%\zeta^{1}[L]_{L-t^{2}-1} = 1 \\
%\zeta^{1}[L]_{j} = 0 & j = L-2t^{2}+3, \ldots, L-t^{2}-2 \\
%\zeta^{1}[L]_{L-2t^{2}+2} = -1 \\
%\zeta^{1}[L]_{L-2t^{2}+1} = -1 \\
%\zeta^{1}[L]_{L-2t^{2}} = 0 \\
%\zeta^{1}[L]_{L-2t^{2}-1} = 1 \\
%\zeta^{1}[L]_{j} = 0 & j = 1, \ldots, L-2t^{2}-2.
%\end{cases}
%\end{equation*}
\end{claim}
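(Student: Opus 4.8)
\textbf{Proof plan for Claim~\ref{claim:premiere-sequence}.}
The plan is to exhibit an explicit sequence of deviations realizing the prescribed vector, and then to check the only two things that can go wrong: that every deviation in the sequence is actually applicable at the partition where it is invoked, and that the net change of the partition vector is exactly $\zeta^{1}[L]$. Two facts already established above do most of the work. First, each of the elementary move-vectors $\delta[p]$, $\alpha[p,q]$, $\alpha[p]$, $\alpha[2]$, $\alpha[p,p-d,q+d,q]$ corresponds to a legal deviation from any partition that has enough groups of the sizes it consumes, and performing a sequence of such moves adds up their vectors; and a sequence starting from $P^{0}$ is valid as soon as it is $c$-balanced, since $\Lambda^{0}$ has exactly $c$ groups of every size in $\{1,\dots,L\}$. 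Second, there is a telescoping phenomenon: if the $4$-deviations $\delta[L],\delta[L-1],\dots,\delta[L-s]$ are performed in this (decreasing) order, then because $1-4+4+0-1=0$ every interior coordinate of the running sum cancels, so the cumulative vector $\sum_{i=L-s}^{L}\delta[i]$ is supported on two blocks of constant width, one anchored near $L$ (with entries $1,-3,1,1$) and one anchored near $L-s$ (with entries $-1,3,-1,-1$), and each intermediate cumulative vector has all coordinates bounded by an absolute constant. The analogous telescoping holds for the $1$-deviation compounds $\alpha[p,p-d,q+d,q]=\sum_{j=0}^{d-1}\alpha[p-j,q+j]$, whose partial sums also have all coordinates in $\{-1,0,1\}$.

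The construction itself proceeds in two stages. First I would run one telescoping window of $4$-deviations of length of order $t^{2}$ anchored at $L$, producing a block near $L$ and a block near $L-t^{2}$, and then a second window of the same kind shifted down by about $t^{2}$, producing a third block near $L-2t^{2}$ while its top block overlaps and interacts with the bottom block of the first window around $L-t^{2}$. This gives a vector with three clusters in the right locations but with the "raw" cluster shapes above rather than the target $\pm1$ patterns. Second I would append a sequence of $1$-deviations — short-range $\alpha[p,q]$'s acting inside a single cluster, and a bounded number of long-range $\alpha[p,p-d,q+d,q]$'s to move a unit between clusters or to a prescribed coordinate — whose move-vectors sum to exactly the correction needed to turn the three raw clusters into $\{L{:}1,\,L{-}2{:}{-}1,\,L{-}3{:}{-}1\}$, $\{L{-}t^{2}{:}1,\,L{-}t^{2}{-}1{:}1\}$, and $\{L{-}2t^{2}{+}2{:}{-}1,\,L{-}2t^{2}{+}1{:}{-}1,\,L{-}2t^{2}{-}1{:}1\}$, with every other coordinate zero. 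Since the whole thing is a concatenation of valid moves and the partition vector is additive along it, the net effect is $\Lambda^{1}-\Lambda^{0}=\zeta^{1}[L]$ with $\Lambda^{1}=\Lambda(P^{1})$, which is the claim. (Here one also uses $2t^{2}\le L$, guaranteed by $2^{T-1}(2t^{2}+2)\le L$, so that all the group sizes appearing stay in the admissible range.)

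For validity it suffices to show the sequence from $P^{0}$ is $c$-balanced. Every intermediate cumulative vector along it is a partial sum of $O(1)$ shifted copies of $\delta$ together with partial sums of the $1$-deviation (compound) cleanups; by the two telescoping bounds recalled above, each such partial sum has all coordinates bounded by a fixed absolute constant, so the running cumulative vector never drops below $-(c-c_{1})$ coordinatewise once $c$ exceeds the fixed constant $c_{1}$ of Claim~\ref{claim:c-balanced}. Hence every $\delta[p]$ in the sequence meets its requirement of at least $4$ groups of size $p-1$ and at least $1$ of size $p-4$, every $\alpha$-move meets its analogous requirement, and the sequence is valid from $P^{0}$. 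The step I expect to be the main obstacle is the middle one: choosing the two window lengths and anchors and the $1$-deviation cleanups so that, after all the cancellations, not a single spurious nonzero coordinate survives and the eight prescribed coordinates carry exactly $\pm1$ — in particular so that the overlap of the two windows around $L-t^{2}$ contributes precisely the pattern $(1,1)$ and nothing else. This is a finite, entirely explicit linear-algebra verification over the available move-vectors, but it is delicate precisely because the raw blocks produced by the $\delta$-telescoping are not themselves in the target shape, and the corrections must be arranged so as neither to disturb the already-correct coordinates nor to break the $c$-balanced property.
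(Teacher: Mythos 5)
Your overall framework (telescoping cascades of $\delta$'s, cleanup by $\alpha$-moves, validity via $c$-balancedness) is the right one, but the specific architecture you propose --- two cascade windows of $4$-deviations, each of length $O(t^{2})$, followed by $1$-deviation corrections --- provably cannot realize $\zeta^{1}[L]$. Consider the second moment $M(v)=\sum_{i} i^{2}v_{i}$. Every $1$-deviation strictly increases $M$ (by $2(p-q-1)\geq 2$ for $\alpha[p,q]$, and by $2p-2>0$ for $\alpha[p]$), while every $4$-deviation changes $M$ by at least $-4$, with equality exactly for the $\delta[p]$ shape; this follows from the strict-improvement condition that each mover leaves a group of size at most $b+3$ when the destination group ends up with size $b+4$. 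Since $M$ is invariant under shifts for vectors with zero sum and zero first moment, one computes directly $M(\zeta^{1}[L])=-2t^{4}+18t^{2}-16$. Hence \emph{any} sequence of $1$- and $4$-deviations realizing $\zeta^{1}[L]$ must contain at least $(2t^{4}-18t^{2}+16)/4=\Omega(t^{4})$ four-deviations. Your plan uses only $O(t^{2})$ of them --- the window lengths are forced to be $\approx t^{2}$ by the positions of the target clusters at $L$, $L-t^{2}$ and $L-2t^{2}$ --- and the $1$-deviation cleanup can only push $M$ upward, so the net vector of your sequence can never equal $\zeta^{1}[L]$. The ``delicate finite linear-algebra verification'' you defer is therefore not delicate but impossible.

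The paper's proof resolves exactly this tension with a two-level nesting that your plan is missing. It first packages one cascade of $t^{2}$ consecutive $\delta$'s together with four short $\alpha$-compounds into an intermediate move $\phi[L]$ whose own nonzero pattern $(1,-1,-1,0,0,1,\ldots)$ telescopes under shifts; it then sums the $t^{2}-4$ shifted copies $\phi[L-j]$, plus two long-range $\alpha$-compounds, to obtain $\zeta^{1}[L]$. This keeps the support width at $\approx 2t^{2}$ (so the clusters land where the Good Property needs them) while packing $\Theta(t^{4})$ four-deviations into the sequence; indeed the $\delta$'s contribute $-4t^{4}+O(t^{2})$ to $M$ and the two long-range $\alpha$-compounds contribute $+2t^{4}+O(t^{2})$, matching $M(\zeta^{1}[L])$ exactly. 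To salvage your approach you would have to introduce this (or an equivalent) second level of cascading; the rest of your argument --- additivity of move vectors, and validity from $P^{0}$ via a constant balancedness bound --- is sound and matches the paper's Claims~\ref{claim:c-0-balanced} and~\ref{claim:c-balanced}.
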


\begin{proofclaim}~[Claim~\ref{claim:premiere-sequence}]
Set $\phi[L] = (\sum_{j=0}^{t^{2}-1}\delta[L-j]) + \alpha[L-1,L-2,L-2,L-3] + \alpha[L-t^{2},L-t^{2}-1,L-t^{2}-1,L-t^{2}-2] + \alpha[L-t^{2}-3,L-t^{2}-4,L-t^{2}-5,L-t^{2}-6] + \alpha[L-1,L-3,L-3,L-5]$. 
Recall it means we do $t^2$ consecutive $4$-deviations, namely a cascade, then we adjust the resulting partition using $4$ sequences of $1$-deviations.
We get:
\begin{equation*}
\begin{cases}
\phi[L]_{j} = 0 & j = L+1, \ldots, n \\
\phi[L]_{L} = 1 \\
\phi[L]_{L-1} = -1 \\
\phi[L]_{L-2} = -1 \\
\phi[L]_{j} = 0 & j = L-3,L-4 \\
\phi[L]_{L-5} = 1 \\
\phi[L]_{j} = 0 & j = L-t, \ldots, L-6 \\
\phi[L]_{L-t^{2}-1} = 1 \\
\phi[L]_{j} = 0 & j = L-t^{2}-3,L-t^{2}-2 \\
\phi[L]_{L-t^{2}-4} = -1 \\
\phi[L]_{L-t^{2}-5} = -1 \\
\phi[L]_{L-t^{2}-6} = 1 \\
\phi[L]_{j} = 0 & j = 1, \ldots, L-t^{2}-7.
\end{cases}
\end{equation*}

We can then construct $\zeta^{1}_{j}[L] = (\sum_{j=0}^{t^{2}-5} \phi[L-j]) + \alpha[L-4,L-t^{2}+1,L-t^{2}-2,L-2t^{2}+3] + \alpha[L-t^{2}+4,L-t^{2}+2,L-t^{2}-3,L-t^{2}-5]$.
\end{proofclaim}

\begin{claim}
\label{claim:c-0-balanced}
There is a constant $c_{1}$ such that the sequence from $P^{0}$ to $P^{1}$ defined by $\zeta^{1}[L]$ in Claim~\ref{claim:premiere-sequence}, is $c_{1}$-balanced.
\end{claim}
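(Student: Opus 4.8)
The plan is to exploit the fact that, at \emph{every} scale, $\zeta^{1}[L]$ is a \emph{cascade}: a sum of successively shifted copies of a fixed vector that has only a bounded number of nonzero entries, each of bounded magnitude. For such a cascade the running partial sums cannot grow with $t$ or $L$, because any fixed coordinate is touched by only a bounded number of the shifted copies. So the balancedness constant $c_{1}$ will come out independent of $t$, $L$ and $n$, which is exactly what Claim~\ref{claim:premiere-sequence} and the later Claim~\ref{claim:c-balanced} need.

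Concretely, I would first isolate an elementary \emph{telescoping} observation. Let $v$ be an integer vector with at most $N$ nonzero entries, each of absolute value at most $M$, and consider the cascade obtained by applying $v[x], v[x-1], \ldots, v[x-k+1]$ in this order. Then for every prefix length $k' \le k$ and every coordinate $m$, the partial sum $\bigl(\sum_{i=0}^{k'-1} v[x-i]\bigr)_{m}$ has absolute value at most $NM$: indeed $v[x-i]$ affects coordinate $m$ only when $x-i+s=m$ for some $s$ in the support of $v$, which pins down at most $N$ values of $i$, each contributing at most $M$. Applied with $v=\delta[\cdot]$ (so $N=4$, $M=4$) this bounds the running effect of any prefix of an inner $\delta$-cascade by a constant; applied with $v=\phi[\cdot]$ (so $N=8$, $M=1$, since by the proof of Claim~\ref{claim:premiere-sequence} the vector $\phi[L]$ has exactly eight $\pm1$ entries) it bounds the running effect of any prefix of the outer $\phi$-cascade $\sum_{j=0}^{t^{2}-5}\phi[L-j]$ by $8$ per coordinate; and applied with $v=\alpha[\cdot,\cdot]$ (again $N=4$, $M=1$) it bounds, by a further constant, the running effect of any of the $O(1)$ correction blocks $\alpha[p,p-d,q+d,q]=\sum_{j}\alpha[p-j,q+j]$ used in the definitions of $\phi$ and $\zeta^{1}$.

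Next I would decompose an arbitrary intermediate partition $P'$ reached along the sequence $P^{0}\to P^{1}$ according to the recursive structure of Claim~\ref{claim:premiere-sequence}. At any moment the executed deviations consist of: a prefix of the outer cascade $\sum_{j}\phi[L-j]$ (a constant-bounded running contribution, by the observation with $v=\phi$), together with a prefix of the one partially executed copy $\phi[L-j_{0}]$, which is itself a prefix of an inner $\delta$-cascade followed by a prefix of its four $\alpha$-corrections (each a constant-bounded running contribution), and finally, once the outer cascade is finished, a prefix of the two trailing $\alpha$-corrections of $\zeta^{1}$ (again constant-bounded). Adding these $O(1)$ many contributions, $\Lambda'-\Lambda^{0}$ has absolute value at most a fixed absolute constant $c_{1}$ at every coordinate (a crude bound such as $c_{1}=48$ is already enough, and much smaller suffices). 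In particular $(\Lambda'-\Lambda^{0})_{j}\ge -c_{1}$ for all $j$ and every intermediate $\Lambda'$, which is precisely the assertion that the sequence from $P^{0}$ to $P^{1}$ is $c_{1}$-balanced.

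The routine-but-delicate part, and the step I expect to be the main obstacle, is the bookkeeping at the cluster boundaries. The vector $\phi[L]$ has two well-separated clusters of four $\pm1$ entries (one near position $L$, one near $L-t^{2}$), and in the outer cascade the ``upper'' clusters of later copies sweep down into the region occupied by the ``lower'' clusters of earlier copies; likewise the prefix of the inner $\delta$-cascade of the in-progress copy may overlap the $\alpha$-corrections of earlier copies. One has to verify that these overlaps still only ever stack a bounded number of bounded contributions onto any single coordinate at any single time — which again follows because ``being affected at coordinate $m$'' pins down only $O(1)$ indices, but it requires writing out the explicit supports of $\delta$, $\alpha$, $\phi$ and $\zeta^{1}$ and is where errors are easiest to make. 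The telescoping observation is the conceptual core; everything else is careful accounting.
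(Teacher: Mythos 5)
Your proposal is correct and follows essentially the same route as the paper's proof: the paper likewise argues that each single deviation changes a constant number of groups, that each coordinate is affected by only a constant number of the shifted copies in each cascade (your telescoping observation), and that $\phi[L-j]$ has constantly many non-zero entries, then sums these $O(1)$ contributions to obtain $c_1$. Your write-up is simply a more explicit and careful version of the same counting argument, including the boundary bookkeeping that the paper leaves implicit.
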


\begin{proofclaim}~[Claim~\ref{claim:c-0-balanced}]
First, note that any $k$-deviation, $k \leq 4$, changes a constant number of groups in the partition.
Thus, for any $j$, $0 \leq j \leq t^{2}-5$, the sequence defined by $\phi[L-j]$ is balanced for some constant because, for any $i$, $1 \leq i \leq n$, the number of deviations that change groups of size $i$ is constant.
Furthermore, the vector $\phi[L-j]$ contains a constant number of non-zero values.
Thus for any $i$, $1 \leq i \leq n$, the sequence defined by $\sum_{j=0}^{t^{2}-5} \phi[L-j]$ changes a constant number of groups of size $i$.
We conclude there exists a constant $c_{1}$ such that the sequence induced by $\zeta^{1}[L]$ is $c_{1}$-balanced.
\end{proofclaim}

\begin{proofclaim}~[Claim~\ref{claim:symmetry}]
The vector $\phi'$ has size $s' = (r-1)d+s$.
Let us first suppose that $r$ is even.
$\phi' = \sum_{h=0}^{r/2-1} \phi[L- h d] +  \sum_{h=r/2}^{r-1} \phi[L - h d]$.
We prove that, for any $r'$, $0 \leq r' \leq r/2-1$, $\phi^{r'} = \sum_{h=0}^{r'} \phi[L - h d] +  \sum_{h=r-1-r'}^{r-1} \phi[L - h d]$ has the symmetric property.
Note that the size of $\phi^{r'}$ is $s' = (r-1)d+s$ for any $r'$, $0 \leq r' \leq r/2-1$.

By induction on $r'$.
Suppose $r'=0$.
We have $\phi^{0} = \phi[L]+\phi[L-(r-1)d]$.
For any $j$, $L-(r-1)d-s+1 \leq j \leq L$, $\phi^{0}_{j}=\phi_{j}+\phi_{j+(r-1)d}$.
As $\phi$ has the symmetric property and has size $s$, $\phi^{0}_{j}=\phi_{L-s+1+L-j} + \phi_{L-s+1+L-j-(r-1)d}$.
Furthermore, $\phi^{0}_{L-s'+1+L-j} = \phi_{L-s'+1+L-j} + \phi_{L-s'+1+L-j+(r-1)d}$.
Recall that $s' = (r-1)d+s$.
Thus, $\phi^{0}_{L-s'+1+L-j} = \phi_{L-(r-1)d-s+1+L-j} + \phi_{L-s+1+L-j}$.
We get that, for any $j$, $L-s'+1 \leq j \leq L$, $\phi^{0}_{j} = \phi^{0}_{L-s'+1+L-j}$, and so $\phi^{0}$ has the symmetric property.

Suppose it is true for $r'$, $0 \leq r' \leq r/2 - 1$, we prove it is also true for $r'+1$.
By the induction hypothesis, vector $\phi^{r'}$ has the symmetric property.
We have $\phi^{r'+1} = \phi^{r'} + \phi[L-(r'+1) d] + \phi[L-(r-r'-2) d]$.
First, if $j > L-(r'+1)d$, then $\phi^{r'+1}_{j} - \phi^{r'}_{j} = 0$ by construction.
Also, $\phi^{r'+1}_{j}-\phi^{r'}_{j} = 0$ for any $j$, $L-s'+1 \leq j \leq L-(r-r'-2) d-s$.
For any $j$, $L-(r-r'-2) d-s+1 \leq j \leq L-(r'+1)d$, $\phi^{r'+1}_{j}-\phi^{r'}_{j} = \phi_{j-(r'+1)d}+\phi_{j-(r-r'-2) d}$.
We apply the proof of case $r'=0$ to show that $\phi[L-(r'+1) d] + \phi[L-(r-r'-2) d]$ has the symmetric property.
To conclude, it is sufficient to show that $L- (L-(r-r'-2) d-s+1) = L-(r'+1)d - (L-s'+1)$, that is the sub-vector of non-zero values of $\phi[L-(r'+1) d] + \phi[L-(r-r'-2) d]$ is in the middle of $\phi'$.
Indeed, $(r-r'-2) d+s-1) = -(r'+1)d +s'-1$, and so $(r-r'-2) d+s-1) = -(r'+1)d +s'-1$.
As $s'=(r-1)d+s$, $rd-r'd-2d+s-1=-r'd-d+rd-d+s-1$, and so $L- (L-(r-r'-2) d-s+1) = L-(r'+1)d - (L-s'+1)$.
Thus, $\phi'$ has the symmetric property.

Finally, for any $r'$, $0 \leq r' \leq r/2-1$, $\phi^{r'}$ has the symmetric property.
We get that $\phi'$ has the symmetric property too.

Consider now that $r$ is odd.
Using the previous induction, we prove that the vector $\sum_{h=0}^{(r-1)/2-1} \phi[L-h d] +  \sum_{h=(r-1)/2+1}^{r-1} \phi[L-h d]$ has the symmetric property.
We now prove that $\phi' = \sum_{h=0}^{(r-1)/2-1}  \phi[L-h d] +  \sum_{h=(r-1)/2+1}^{r-1}  \phi[L-h d] + \phi[L-d(r-1)/2]$ has the symmetric property.
To do that we prove that the non-zero part of the vector $\phi[L-d(r-1)/2]$ is exactly in the middle of the vector $\sum_{h=0}^{(r-1)/2-1}  \phi[L-h d] +  \sum_{h=(r-1)/2+1}^{r-1}$.
In other words, the non-zero sub-vector of $\phi[L-d(r-1)/2]$ starts at index $L-d(r-1)/2$ and ends at index $L-d(r-1)/2-s+1$, and we show that $L-(L-d(r-1)/2) = (L-d(r-1)/2-s+1)-(L-s'+1)$.
Indeed, $d(r-1)/2 = -d(r-1)/2-s+s'$, and so $d(r-1)=s'-s$.
Recall that $s' = (r-1)d+s$.
Thus, $\phi'$ has the symmetric property.
\end{proofclaim}

\begin{proofclaim}~[Claim~\ref{claim:sequence-generale}]
%LES INDICES A INVERSER !!!!!
By induction on $i$.
The sequence $\zeta^{1}$ defined in Claim~\ref{claim:premiere-sequence} has the Good Property.
Consider $i$, $1 \leq i \leq T-1$.
Suppose $\zeta^{i}[L]$ has the Good Property.
We prove that there exists a sequence of $1$-deviations $\xi^{i+1}$ such that $\zeta^{i+1}[L] = \sum_{j=0}^{a_{1}} \zeta^{i}[L-j t^{1}_{i}] + \xi^{i+1}$ has the Good Property.

Set $\varphi^{i+1} = \sum_{j=0}^{a_{i}} \zeta^{i}[L-j t^{1}_{i}]$.
By construction, for all $j$, $L-t^{2}_{i+1} \leq j \leq L$, we get 
$\varphi^{i+1}_{L} = 1$, $\varphi^{i+1}_{L-t^{1}_{i+1}} = -1$, $\varphi^{i+1}_{L-t^{2}_{i+1}} = -1$, and $\varphi^{i+1}_{j} = 0$ otherwise, where $t^{1}_{i+1}=t^{2}_{i}$ and $t^{2}_{i+1}=2t^{1}_{i}$.
%\begin{equation*}
%\varphi^{i+1}_{j} = 
%\begin{cases}
%1 & \mbox{if $j = L$} \\
%-1 & \mbox{if $j = L-t^{1}_{i+1}$} \\
%-1 & \mbox{if $j = L-t^{2}_{i+1}$} \\
%0 & \mbox{otherwise}
%\end{cases}
%\end{equation*}
%where $t^{1}_{i+1}=t^{2}_{i}$ and $t^{2}_{i+1}=2t^{1}_{i}$.
We get $1 < t^{1}_{i+1} < t^{2}_{i+1}$ because $1 < t^{2}_{i} < 2t^{1}_{i}$ by the  induction hypothesis.
Also, $t^{2}_{i+1} < 2 t^{1}_{i+1}$ because $2t^{1}_{i} < 2 t^{2}_{i}$.
Finally, $t^{2}_{i+1} \leq 2^{i+1}$ because $t^{2}_{i+1} < 2t^{1}_{i+1} = 2t^{2}_{i} \leq 2^{i}2$.

For now we abuse the notation setting $s_{i+1}$ the size of the vector $\varphi^{i+1}$.
We will prove later that $\varphi^{i+1}$ and $\zeta^{i+1}$ have same size.
Furthermore $s_{i+1}=a_{i}t^{1}_{i}+s_{i}$.
As $a_{i}$ and $s_{i}$ are even integers, then $s_{i+1}$ is even.
Note that $s_{i+1} \leq 2 s_{i}$.

By Claim~\ref{claim:symmetry}, $\varphi^{i+1}$ has the symmetric property: $\varphi^{i+1}_{L-j}=\varphi^{i+1}_{L-s_{i+1}+1+j}$ for all $j$, $0 \leq j \leq s_{i+1}-1$.

Moreover, by the choice of $a_{i}$, we get that:
\begin{equation*}
\varphi^{i+1}_{j} \in 
\begin{cases}
\{-1,0\} & \mbox{if $L-\frac{s_{i}}{2}+2 \leq j \leq L-t_{i+1}^{2}-1$} \\
\{0,1\} & \mbox{if $L-s_{i}+1+t_{i}^{1} \leq j \leq L-\frac{s_{i}}{2}+1$} \\
\{-1,0\} & \mbox{if $L-s_{i+1}+1+t_{i+1}^{1} \leq j \leq L-s_{i}+t_{i}^{1}$.}
\end{cases}
\end{equation*}

\noindent
Furthermore:
$\sum_{j=L-\frac{s_{i}}{2}+2}^{L-t_{i+1}^{2}-1} \varphi^{i+1}_{j}
+\sum_{j=L-s_{i+1}+1+t_{i+1}^{1}}^{L-s_{i}+t_{i}^{1}} \varphi^{i+1}_{j} \\ = -\sum_{j=L-s_{i}+1+t_{i}^{1}}^{L-\frac{s_{i}}{2}+1} \varphi^{i+1}_{j} -2$.

%Furthermore, by the choice of $a_{i}$, we get that there exist $j_{1}, j_{2}$, $t^{2}_{i+1} < j_{1} < j_{2} < s_{i}-t^{1}_{i}+1$, such that for any $j$, $t^{2}_{i+1} \leq j \leq s_{i}-t^{1}_{i}+1$:
%\begin{equation*}
%u^{i+1}_{L-j} \in 
%\begin{cases}
%\{-1,0\} & \mbox{if $t^{2}_{i+1} < j \leq j_{1}$} \\
%\{0,1\} & \mbox{if $j_{1} < j < j_{2}$} \\
%\{-1,0\} & \mbox{if $j_{2} \leq j < s_{i}-t^{1}_{i}+1$.}
%\end{cases}
%\end{equation*}

%for any $j$, $t^{2}_{i+1} < j \leq j_{1}$, $u^{i+1}(j) \in \{-1,0\}$; \\
%for any $j$, $j_{1} < j < j_{2}$, $u^{i+1}(j) \in \{0,1\}$; \\
%for any $j$, $j_{2} \leq j < 2l(i)-t^{1}_{i}+1$, $u^{i+1}(j) \in \{-1,0\}$.\\

%\noindent
%Furthermore:\\
%$\sum_{j=t^{2}_{i+1}+1}^{j_{1}} u^{i+1}_{L-j}
%+\sum_{j=j_{2}}^{s_{i}-t^{1}_{i}} u^{i+1}_{L-j} = -\sum_{j=j_{1}+1}^{j_{2}-1} u^{i+1}_{L-j}$.

We now show that $\varphi^{i+1}_{L-s_{i+1}/2}=\varphi^{i+1}_{L-s_{i+1}/2+1}=1$.
As $\zeta^{i}[L]_{L-s_{i}/2+1} = 1$, then $\varphi^{i+1}_{L-s_{i}/2+\frac{a_{i}}{2}(t^{1}_{i}-1)+1}=1$.
Also, $\varphi^{i+1}_{L-s_{i}/2+\frac{a_{i}}{2}(t^{1}_{i}-1)+2}=1$.
As $a_{i}$ is even, we get $\varphi^{i+1}_{L-s_{i+1}/2+1}=1$ because $\frac{s_{i}}{2}+\frac{a_{i}}{2}=\frac{s_{i+1}}{2}$.
Recall that $s_{i+1}=a_{i}(t^{1}_{i}-1)+s_{i}$.
Thus, $\varphi^{i+1}_{L-s_{i+1}/2}=1$.

We now prove that there exists a sequence of $1$-deviations $\xi^{i+1}$ such that $\zeta^{i+1}[L] = \varphi^{i+1} + \xi^{i+1}$ has the Good Property.
Consider any $j_{1}, j_{2} > 0$, $L-\frac{s_{i}}{2}+2 \leq j_{1} \leq L-t_{i+1}^{2}-1$, $L-\frac{s_{i+1}}{2}+2 \leq j_{2} \leq L-\frac{s_{i}}{2}+1$, such that $\varphi^{i+1}_{j_{1}} = -1$, $\varphi^{i+1}_{j_{2}} = 1$.
As $\varphi^{i+1}$ has the symmetric property (Claim~\ref{claim:symmetry}), then there exists a $1$-deviation defined by $\alpha[j_{1},j_{2},j_{3},j_{4}]$, with $j_{1}-j_{2} = j_{3}-j_{4}$, such that $\varphi^{i+1}_{j_{1}} = \varphi^{i+1}_{j_{4}} = - 1$ and $\varphi^{i+1}_{j_{2}} = \varphi^{i+1}_{j_{3}} = 1$.

This deviation clearly keeps the symmetric property.
We can repeat that, with a different $4$-tuple of indices, until having the Good Property and so the vector $\zeta^{i+1}[L]$.
\end{proofclaim}

\begin{proofclaim}~[Claim~\ref{claim:c-balanced}]
We prove the result by induction on $i$.
Claim~\ref{claim:c-0-balanced} proves the result for $i=1$.
Suppose it is true for all $i' \leq i \leq T-1$.
We prove that it is true for $i+1$.
%Suppose that the current vector of difference is $v^{i}$.
%We construct $v^{i+1}$ from $v^{i}$ as described in the proof of Claim~\ref{claim:etape-induction}.
Recall that we first build $\varphi^{i+1} = \sum_{j=0}^{a_{i}} \zeta^{i}[L-j t^{1}_{i}]$ where $a_{i}$ is the largest even integer $j$ such that $L-jt^{1}_{i}-t^{2}_{i} > L-s_{i}/2+1$.

For all $b$, $0 \leq b \leq a_{i}$, consider $\mu^{b} = \sum_{j=0}^{b} \zeta^{i}[L-j t^{1}_{i}]$.
We prove that the sequence of deviations from $\zeta^{i}$ to $\mu^{b}$ is $(c_{1}+i)$-balanced.
By induction on $b$.
%We prove by induction that each vector of difference before reaching configuration corresponding to $u'_{b}$ is $(c_{1}+i)$-balanced.
%By induction on $b$.
It is true for $b=0$.
Indeed by the first induction hypothesis, the sequence $\zeta^{i}[L]$ is $(c_{1}+i-1)$-balanced.
Suppose it is true for all $b' \leq b \leq a_{i}-1$.
We prove it is true for $b+1$.
By Claim~\ref{claim:sequence-generale}, $\mu^{b}$ is $1$-balanced because $\zeta^{i}[L]$ is $1$-balanced (the values of the vector belong to the set $\{-1,0,1\}$).
By definition $\mu^{b+1}=y^{b}+\zeta^{i}[L-(b+1)t^{1}_{i}]$.
By the first induction hypothesis, each sequence of $\zeta^{i}[L-(b+1)t^{1}_{i}]$ is $(c_{1}+i-1)$-balanced.
Since $\mu^{b}$ is $1$-balanced, the sequence from $\mu^{b}$ to $\mu^{b+1}$, is $(c_{1}+i)$-balanced.
Thus, $\varphi^{i+1}$ is $(c_{1}+i)$-balanced.

To conclude, it remains to prove that the sequence from $\varphi^{i+1}$ to $\zeta^{i+1}$, is $(c_{1}+i)$-balanced.
Consider the last $1$-deviations described in the proof of Claim~\ref{claim:sequence-generale}.
Every deviation consists in replacing values $-1$ and $1$ by $0$.
Thus, $\zeta^{i+1}$ is $(c_{1}+i)$-balanced.
\end{proofclaim}

\begin{proofclaim}~[Claim~\ref{claim:taille-sequence-1}]
$|S_{i+1}| = a_{i} |S_{i}|$ with $a_{i}$ is the largest even integer $j$ such that $L-jt^{1}_{i}-t^{2}_{i} > L-s_{i}/2+1$.
As $t^{2}_{i} \leq 2^{i+1}$, $a_{i}$ is the largest even integer such that $j < \frac{s_{i}}{2^{i+2}} - \frac{1}{2^{i}} - 1$.
We get $a_{i} \geq \lfloor \frac{s_{i}}{2^{i+2}} - \frac{1}{2^{i}} - 1 \rfloor -2$, and so $|S_{i+1}| \geq (\frac{s_{i}}{2^{i+2}}-6) |S_{i}|$.
\end{proofclaim}

\begin{proofclaim}~[Claim~\ref{claim:taille-sequence-11}]
By Claim~\ref{claim:taille-sequence-1}, for all $i$, $1 \leq i \leq T-1$, $|S_{i+1}| \geq (\frac{s_{i}}{2^{i+2}}-6) |S_{i}|$.
As $s_{i} \leq s_{i+1}$ for all $i$, $1 \leq i \leq T-1$, then $|S_{T}| \geq (\frac{s_{1}}{2^{T+1}}-6)^{T-1} |S_{1}|$.
Thus, $|S_{T}| \geq (\frac{2t^{2}+2}{2^{\log_{2}(t)+2}}-6)^{\log_{2}(t)} |S_{1}|$, and so $|S_{T}| \geq O(t^{log_{2}(t)}) |S_{1}|$.
\end{proofclaim}

\subsection{Appendix 2: proofs of Section~\ref{sec:weighted}}

We start this section with a tool to lower-bound the variations of the global utility in a more general setting:
\begin{lemma}
\label{lem:util-variations}
Given a graph $G=(V,w)$, let $P$ be a partition of the nodes, let $S = \{u_1, \ldots, u_k\}$ be a $k$-set that breaks $P$, and let $P'$ be the resulting partition.
Then:

\centering
$f(P') - f(P) \geq 2[ k - \sum_{1\leq i,j \leq k}w_{u_i u_j} + \sum_{\{u_i,u_j \mid X(u_i) = X(u_j)\}}w_{u_i u_j}]$. 
\end{lemma}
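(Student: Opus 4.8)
The plan is to reduce the statement to an exact identity relating the change in global utility to the changes in the individual utilities of the deviating players, and then to invoke the defining property of a valid deviation together with integrality of the weights.

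First I would rewrite the global utility over unordered pairs: writing $X_Q(u)$ for the group of $u$ in a partition $Q$, we have $f(Q)=\sum_{u}\sum_{v\in X_Q(u)\setminus\{u\}}w_{uv}=2\sum_{\{u,v\}:\,X_Q(u)=X_Q(v)}w_{uv}$, so $f(P')-f(P)$ depends only on which pairs of vertices change their ``together/apart'' status. Since $P'$ is obtained from $P$ by moving exactly the vertices of $S$ into one common group $Y$ (so the group of $S$ in $P'$ is $Y\cup S$), the only pairs that can flip are those incident to $S$; I would split them into the pairs contained in $S$ and the pairs $\{u_i,v\}$ with $u_i\in S$, $v\notin S$.

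The key step is then to establish the identity
\[
f(P')-f(P)\;=\;2\sum_{u_i\in S}\bigl(f_{u_i}(P')-f_{u_i}(P)\bigr)\;-\;2\sum_{\substack{\{u_i,u_j\}\subseteq S\\ X(u_i)\neq X(u_j)}}w_{u_iu_j}.
\]
I would prove it by tracking each edge separately. An edge $\{u_i,v\}$ with $v\notin S$ is gained as a ``together'' pair iff $v\in Y$ and $u_i\notin Y$, and lost iff $X(v)=X(u_i)\neq Y$; in either case it contributes to $f_{u_i}(P')-f_{u_i}(P)$ and, with the same sign and magnitude, to $f_v(P')-f_v(P)$, so it is counted twice in $\sum_{u\in V}(f_u(P')-f_u(P))=f(P')-f(P)$ but only once in $\sum_{u_i\in S}(\cdot)$ — which is exactly why the doubled sum over $S$ is the right object. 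An edge $\{u_i,u_j\}$ inside $S$ contributes $2w_{u_iu_j}$ to $\sum_{u_i\in S}\bigl(f_{u_i}(P')-f_{u_i}(P)\bigr)$ when $X(u_i)\neq X(u_j)$ (both endpoints gain it) and $0$ otherwise, whereas its contribution to $f(P')-f(P)$ is only $w_{u_iu_j}$ (resp.\ $0$); subtracting the displayed correction term reconciles the two. Summing these per-edge identities yields the claim. This bookkeeping is the only delicate part: one must keep the factor $2$ and the set of ``within-$S$'' pairs straight, including the degenerate cases where some $u_i$ already lies in $Y$, or where several $u_i$ share a group in $P$.

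Finally, since $S$ is a valid $k$-deviation, $f_{u_i}(P')>f_{u_i}(P)$ for every $i$, and because all finite weights are integers these differences are integers, hence $f_{u_i}(P')-f_{u_i}(P)\geq 1$. Substituting into the identity gives $f(P')-f(P)\geq 2k-2\sum_{\{u_i,u_j\}\subseteq S:\,X(u_i)\neq X(u_j)}w_{u_iu_j}$, and splitting $\sum_{\{u_i,u_j\}\subseteq S}=\sum_{\text{same group}}+\sum_{\text{different groups}}$ rewrites the right-hand side as $2\bigl[k-\sum_{\{u_i,u_j\}}w_{u_iu_j}+\sum_{\{u_i,u_j\}:\,X(u_i)=X(u_j)}w_{u_iu_j}\bigr]$, which is the stated bound (the sums ranging over unordered pairs of distinct vertices of $S$).
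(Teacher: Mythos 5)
Your proof is correct and takes essentially the same route as the paper's: both establish the identity $f(P')-f(P)=2\sum_{i}\bigl(f_{u_i}(P')-f_{u_i}(P)\bigr)-2\sum_{\{u_i,u_j\}}w_{u_iu_j}+2\sum_{\{u_i,u_j\mid X(u_i)=X(u_j)\}}w_{u_iu_j}$ (your version, with the last two terms combined into $-2\sum_{X(u_i)\neq X(u_j)}w_{u_iu_j}$, is the same thing) and then apply $f_{u_i}(P')-f_{u_i}(P)\geq 1$, which, as you rightly make explicit, uses integrality of the weights. One small slip in your per-edge bookkeeping: a within-$S$ pair that flips status contributes $2w_{u_iu_j}$ to $f(P')-f(P)$, not $w_{u_iu_j}$ (it is counted once for each endpoint in $\sum_{u\in V}f_u$), and with that correction your verification of the identity balances exactly; the identity you state and the conclusion are unaffected.
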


\begin{proof}
In the sequel, we will call $X'$ the group in $P$ that all the vertices in $S$ agree to join with.
As $S$ breaks the partition $P$, the utility of each vertex increases by at least $1$.
So, we get as the variation of the utility for the $k$-set: $\sum_{i=1}^k [f_{u_i}(P') - f_{u_i}(P)] \geq k$.

For every $1 \leq i \leq k$, we then define $\delta_i = \sum_{v \in X(u_i) \setminus S}w_{u_iv}$ and $\sigma_i = \sum_{u_j \in X(u_i)} w_{u_ju_i}$;
we define $\delta_i' = \sum_{v \in X'}w_{u_iv}$ and $\sigma_i' = \sum_{j=1}^k w_{u_iu_j}$ in a similar way.
Then, $f_{u_i}(P) = \delta_i + \sigma_i$ and $f_{u_i}(P') = \delta_i' + \sigma_i'$; so, we get by summation that
$\sum_{i=1}^k f_{u_i}(P)  = \sum_{i=1}^k \delta_i + 2\sum_{\{u_i,u_j \mid X(u_i) = X(u_j)\}}w_{u_iu_j}$,\\
$\sum_{i=1}^k f_{u_i}(P') = \sum_{i=1}^k \delta_i' + 2\sum_{1\leq i, j \leq k}w_{u_iu_j}$.\\
Note that we have a factor $2$ for any occurence of $w_{u_iu_j}$, as it is counted once for $u_i$ and once for $u_j$.

Furthermore, the variation of the global utility includes that of the nodes in $S$, that of the nodes in $X'$, plus that of the nodes in $X(u_i)\setminus S$ for every $1\leq i \leq k$.
In other words, we get by symmetry that:\\ 
$f(P') - f(P) = \sum_{i=1}^k [f_{u_i}(P') - f_{u_i}(P)] + \sum_{i=1}^k [\delta_i' - \delta_i]$\\
$= 2\sum_{i=1}^k [f_{u_i}(P') - f_{u_i}(P)] - 2\sum_{1\leq i, j \leq k}w_{u_iu_j} +  2\sum_{\{u_i,u_j \mid X(u_i) = X(u_j)\}}w_{u_iu_j}$\\
$\geq 2k - 2\sum_{1\leq i, j \leq k}w_{u_iu_j} +  2\sum_{\{u_i,u_j \mid X(u_i) = X(u_j)\}}w_{u_iu_j}$.
\end{proof} 

\begin{proof}~[Theorem~\ref{thm:stability:k=1}]
Let $P,P'$ be two partitions of the nodes for the same graph $G=(V,w)$.
We assume that there is a user $u$ such that $\{u\}$ breaks $P$, and the resulting partition is $P'$.
Then, by Lemma~\ref{lem:util-variations}, we get that $f(P') - f(P) \geq 2$.
So, the global utility increases at each step of the game, and it is upper bounded by $O(w_pn^2)$.
\end{proof}

\subsection{Proof of Corollary~\ref{lem:counter-example-2}}

Let us first show that $k(\mathcal{W}) \geq 2$.
In fact, Theorem~\ref{thm:stability:ktree} is a global stability result, which is more precise and uses structural properties of the graphs.
Given a graph $G=(V,w)$, let us define the "friendship graph" $G^{+} = (V,E^{+})$ of  $G$, where $E^{+} = \{ uv \in E: w_{uv} > 0\}$.
We remind that the \emph{girth} of a graph is the length of its shortest cycle.
By definition, an acyclic graph has infinite girth.
\begin{theorem}
\label{thm:stability:ktree}
Given an integer $k \geq 1$ and a graph $G = (V,w)$ with $\mathcal{W} \subseteq \{-\infty,0,1\} \cup -\mathbb{N}$, there exists a $k$-stable partition for $G$ if the girth of the friendship graph $G^{+}$ is at least $k+1$.
Furthermore, the system always reaches a $k$-stable partition in $O(n^2)$ steps in that case.
\end{theorem}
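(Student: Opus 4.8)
The plan is to run the standard potential argument with the \emph{global utility} $f$ as the potential, and to use the girth hypothesis to show that every $k$-deviation raises $f$ by at least $2$. Fix a partition $P$, a $k$-deviation given by a set $S=\{u_1,\dots,u_{k'}\}$ with $k'\le k$ whose members all agree to move into a common group of $P$, and let $P'$ be the resulting partition. Since in $P'$ all of $S$ lies in one group, no two members of $S$ are enemies, so every weight $w_{u_iu_j}$ with $u_i,u_j\in S$ is finite. Applying Lemma~\ref{lem:util-variations} (with $|S|=k'$) and combining its two sums over pairs inside $S$ — the sum over all pairs and the sum over pairs already together in $P$ — only the pairs of $S$ lying in \emph{distinct} groups of $P$ survive, giving
\[
f(P')-f(P)\ \ge\ 2\Bigl[\,k' - \!\!\sum_{\substack{i<j\\ X(u_i)\ne X(u_j)}}\!\! w_{u_iu_j}\,\Bigr].
\]

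The key step is to bound that remaining sum. Because $\setW\subseteq\{-\infty,0,1\}\cup-\NatInt$, the only positive weight is $1$, so a pair $\{u_i,u_j\}$ contributes $+1$ precisely when $u_iu_j$ is an edge of the friendship graph $G^{+}$, and contributes $\le 0$ otherwise. Hence the pairs contributing positively span a subgraph of $G^{+}[S]$, the subgraph of $G^{+}$ induced on the at most $k$ vertices of $S$. Since $G^{+}$ has girth at least $k+1$, it contains no cycle on $\le k$ vertices, so $G^{+}[S]$ is a forest and has at most $|S|-1\le k'-1$ edges. As the negative (and $0$) weights only decrease the sum, we get $\sum_{i<j,\,X(u_i)\ne X(u_j)} w_{u_iu_j}\le k'-1$, and therefore $f(P')-f(P)\ge 2\bigl(k'-(k'-1)\bigr)=2$.

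To conclude, I would note that $f(P_0)=0$ for the initial all-singletons partition, and that $f(P)\le 2|E^{+}|\le n(n-1)=O(n^2)$ for every partition $P$ (negative-weight pairs only lower $f$). By the previous paragraph $f$ strictly increases, by at least $2$, at each round of the dynamic, so the game halts after at most $\tfrac12\bigl(n(n-1)\bigr)=O(n^2)$ rounds; and when it halts no $k$-deviation is available, which is exactly the definition of a $k$-stable partition. This simultaneously proves existence of a $k$-stable partition and the $O(n^2)$ bound on convergence time. I expect the main delicacy to be the sign bookkeeping when rewriting the bound of Lemma~\ref{lem:util-variations} so that only cross-group pairs of $S$ remain, together with the observation that $-\infty$ weights cannot occur among those pairs; the genuinely new content is the simple forest argument, which is where the girth hypothesis is used, and everything else is routine.
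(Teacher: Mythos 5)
Your proof is correct and follows essentially the same route as the paper's: both apply Lemma~\ref{lem:util-variations}, reduce to the sum over cross-group pairs of $S$, and use the girth hypothesis to conclude that $G^{+}[S]$ is a forest with at most $|S|-1$ edges, so that the global utility rises by at least $2$ per deviation and is bounded by $O(n^2)$. No gaps.
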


\begin{proof}
As in the case $k=1$, we will prove the global utility strictly increases after any $k$-deviation.
Let $P$ be any partition for $G$, such that there is no vertices $u$ and $v$ with $w_{uv} = -\infty$ and $X(u) = X(v)$.
If $P$ is $k$-stable, we are done.
Otherwise there exists a subset $S = \{u_1,u_2, \ldots, u_{|S|}\}$, $|S| \leq k$, such that $S$ breaks $P$.
Let $P'$ be the resulting partition for $G$ after the $k$-deviation.
By Lemma~\ref{lem:util-variations} we have:
 $f(P') - f(P) \geq 2[ |S| - \sum_{\{u_i,u_j \mid X(u_i) \neq X(u_j)\}}w_{u_iu_j}]$\\
$\geq  2[ |S| - \sum_{\{u_i,u_j \mid X(u_i) \neq X(u_j)\}}\max{(0,w_{u_iu_j})}]$\\
$\geq  2[ |S| - |E^+ \cap S \times S|]$\\
$\geq 2$, because the girth is at least $k+1$ by the hypothesis.
\end{proof}

Particularly, if $G^{+}$ is cycle-free, then we get there is a $k$-stable partition for $G$, for any $k \geq 1$;
if $G^{+}$ is triangle-free, then there always exists a $3$-stable partition for $G$.
Furthermore, as the girth of any graph is at least $3$, then there always exists a $2$-stable partition. 
Theorem~\ref{thm:stability:ktree} also shows that such a coloring game (with \emph{fixed} set of weights $\mathcal{W}$) takes at most $O(n^2)$ steps to converge. 
Using another sequence construction, one can prove that this bound is indeed tight in the general case.
\begin{lemma}
\label{lem:n-square}
There exists $G=(V,w)$ with $\mathcal{W} = \{0,1\}$ such that there is a sequence of $\theta(n^2)$ $1$-deviations for $G$.
\end{lemma}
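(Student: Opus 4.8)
The plan is to pair the \emph{upper} bound, which is already available, with a matching construction. On the upper side, recall that by Lemma~\ref{lem:util-variations} every $1$-deviation raises the social welfare $f$ by at least $2$, while for weights in $\{0,1\}$ one has $f(P)=2\cdot|\{\text{weight-}1\text{ edges contained in a group of }P\}|\le 2|E^+|\le n(n-1)$; this is exactly the $O(w_pn^2)$ estimate of Theorem~\ref{thm:stability:k=1} with $w_p=1$. Hence it suffices to exhibit a single graph $G$ with $\mathcal{W}=\{0,1\}$ that admits a \emph{valid} sequence of $\Omega(n^2)$ $1$-deviations.

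For the construction I would take $n=2m$ and let the friendship graph be the complete bipartite graph on parts $A=\{a_1,\dots,a_m\}$ and $B=\{b_1,\dots,b_m\}$: set $w_{uv}=1$ when $u$ and $v$ lie in different parts and $w_{uv}=0$ otherwise, so that indeed $\mathcal{W}=\{0,1\}$. In any partition the utility of an $A$-node equals the number of $B$-nodes sharing its group (and symmetrically), so the deviation rule is very transparent: an $A$-node may move to, and only to, a group that currently holds strictly more $B$-nodes than its own does, and dually for $B$-nodes. The one-group partition already has $f=2m^2=\Theta(n^2)$, so there is in principle room for $\Omega(n^2)$ steps, and the whole content is to realise it.

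To produce the sequence I would argue by induction on $m$, proving the stronger statement that from $2k$ isolated nodes ($k$ per part) there is a valid sequence of $\Omega(k^2)$ $1$-deviations ending in a prescribed \emph{balanced, spread-out} configuration --- for instance one group holding $k-1$ of the $A$-nodes together with a single $B$-node, and one group holding the remaining $A$-node together with the other $k-1$ $B$-nodes. The inductive step introduces one new $a$ and one new $b$ and routes them into the configuration guaranteed for the $k$-instance through a \emph{cascade} of $\Omega(k)$ ``shuttle'' moves; the key mechanism, reminiscent of the rotation in Figure~\ref{fig:nonstable}, is that once a group loses all its cross-partition neighbours of some node, that node falls back to utility $0$ and can be dispatched again, so a single node ends up taking part in $\Omega(m)$ deviations over the whole run. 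Every elementary move is justified by the one-line criterion above, and the lengths add up as $\sum_{k\le m}\Omega(k)=\Omega(m^2)=\Omega(n^2)$.

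The main obstacle is precisely this schedule design. The obvious idea --- merge everything into one steadily growing group --- consumes only $O(n)$ deviations and then deadlocks; moreover the ``conservation'' constraint that all groups together can offer at most $m$ cross-partition neighbours of any fixed node severely limits how far a node can climb, which is also why the uniform-case bound $\Theta(n^{3/2})$ is \emph{not} recovered here. Making the recursion go through therefore forces one to leave each sub-instance in a balanced state rather than fully merged, so that the two freshly added nodes can trigger a long cascade, and it requires a careful accounting to certify that the final length is genuinely quadratic rather than, say, $O(n^{3/2})$. I expect this bookkeeping --- keeping the ``balanced'' invariant exactly right and tallying the cascade lengths --- rather than any single inequality, to be where the real work lies.
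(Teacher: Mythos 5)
Your choice of graph (complete bipartite friendship graph, weight $0$ inside each part) is workable, and you have correctly isolated the mechanism that makes a quadratic sequence possible: weight-$0$ edges let a node's utility be reset to $0$ when a single neighbour departs, so the same node can deviate $\Omega(n)$ times. But the lemma \emph{is} the explicit schedule, and your proposal defers exactly that part ("I expect this bookkeeping \dots to be where the real work lies"). Worse, the inductive scheme you sketch does not go through as described. If the level-$k$ instance is parked in the balanced state $X=(k-1,1)$, $Y=(1,k-1)$ and you add one fresh node per part, you need $\Omega(k)$ further deviations whose \emph{total} welfare gain is $O(k)$ (otherwise $\sum_k \Omega(k^2)$ exceeds the $O(n^2)$ potential budget you yourself invoke); so almost every move must gain $O(1)$. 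The two fresh nodes alone cannot supply $\Omega(k)$ such moves — each strict improvement raises the mover's cross-count, which can only be knocked back down by moves of \emph{other} nodes. The natural candidates for resetting are the $k-1$ parked $A$-nodes of $X$, but they are released only if the lone $B$-node of $X$ leaves, and that $B$-node has utility $k-1$ and would need a destination with at least $k$ $A$-nodes, which your configuration does not provide. So the recursion deadlocks precisely where the cascade is supposed to start.

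For contrast, the paper's construction avoids induction entirely: it keeps a fixed pool of $\Theta(n)$ mutually indifferent "escort" nodes and a single ever-growing attractor clique. In each of $\Theta(n)$ rounds, every escort joins the current target $v_i$ (each gaining exactly $1$), then $v_i$ departs for the attractor, resetting all escorts to utility $0$ at once; one departure thus buys $\Theta(n)$ fresh unit-gain deviations, and the rounds sum to $\Theta(n^2)$. Your bipartite graph supports the same one-level scheme (seed the attractor with one $B$-node and half the $A$-nodes, use the other $A$-nodes as escorts for the remaining $B$-nodes), but you would need to write that schedule down and verify each move is a strict improvement; the balanced-state induction you propose is not a substitute for it.
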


\begin{proof}
Let any $p \geq 1$.
We build $G$ as follows.
Without loss of generality assume $n=3p$.
Let $V=V_{1} \cup V_{2} \cup V_{3}$ with $|V_{1}|=|V_{2}|=|V_{3}|=p$.

For any $u, v \in V_{2} \cup V_{3}$, $w_{uv} = 1$.
For any $u, v \in V_{1}$, $w_{uv} = 0$.
For any $u \in V_{1}$, for any $v \in V_{2}$, $w_{uv} = 1$.

We now set $V_{1}=\{u_{1}, u_{2}, \ldots, u_{p}\}$ and $V_{2}=\{v_{1}, v_{2}, \ldots, v_{p}\}$.
Consider the partition $P$ such that any node of $V_{1} \cup V_{2}$ forms a singleton group and there is group formed by all the nodes of $V_{3}$.

Sequentially, each node of $V_{1}$ reaches the group of $v_{1}$.
The utility of $v_{1}$ is now $p$.
Then $v_{1}$ reaches the group composed of all nodes of $V_{3}$.
The utility of $v_{1}$ is now $p+1$, and the utility of each $u_{i}$ is $0$.
The number of $1$-deviations is $p+1$.
We repeat the same process for $v_{2}$ using $p+1$ $1$-deviations.
And so on for each $v_{i}$, $3 \leq i \leq p$.

Finally, the number of $1$-deviations is $(p+1)p = \theta(n^2)$.
\end{proof}

As it was announced, the combination of Theorem~\ref{thm:stability:ktree} and Lemma~\ref{lem:n-square} yields a proof of Lemma~\ref{lem:2-stable}.
We now define the notion of twin vertices and we prove a useful lemma.

\begin{definition}
Let $G=(V,w)$ be a graph, and $u,u' \in V$ be two users.
We say that $u$ and $u'$ are \emph{twins} if and only if $w_{uu'} > 0$ and for all vertex $v \in V \setminus \{u,u'\}$ $w_{uv} = w_{u'v}$. 
\end{definition}

\begin{lemma}
\label{lemma:group}
Given a graph $G$, and a $1$-stable partition $P$ for $G$, then $X(u) = X(u')$ for all vertices $u,u'$ such that $u$ and $u'$ are twins.
\end{lemma}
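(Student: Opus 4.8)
The plan is to argue by contradiction: suppose $u$ and $u'$ are twins but $X(u) \neq X(u')$ in the $1$-stable partition $P$, and exhibit a profitable $1$-deviation. The two natural candidate moves are ``$u$ leaves $X(u)$ to join $X(u')$'' and, symmetrically, ``$u'$ leaves $X(u')$ to join $X(u)$''. The heart of the argument is a short computation of the utility $u$ would obtain after the first move, using the twin hypothesis. Since $u \notin X(u')$ (as $X(u)\neq X(u')$ and $u \in X(u)$), after $u$ joins $X(u')$ its group becomes $X(u')\cup\{u\}$ and its new utility is $\sum_{v\in X(u')} w_{uv} = w_{uu'} + \sum_{v\in X(u')\setminus\{u'\}} w_{uv}$. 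For every $v \in X(u')\setminus\{u'\}$ we have $v\notin\{u,u'\}$, so by the twin property $w_{uv}=w_{u'v}$; hence $\sum_{v\in X(u')\setminus\{u'\}} w_{uv} = \sum_{v\in X(u')\setminus\{u'\}} w_{u'v} = f_{u'}(P)$, and the new utility of $u$ equals $w_{uu'}+f_{u'}(P)$. Symmetrically, if $u'$ joins $X(u)$ its new utility equals $w_{uu'}+f_u(P)$.

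Next I would invoke $1$-stability. Since neither move is a profitable $1$-deviation, we get $w_{uu'}+f_{u'}(P) \le f_u(P)$ and $w_{uu'}+f_u(P)\le f_{u'}(P)$. Adding these two inequalities and cancelling $f_u(P)+f_{u'}(P)$ yields $2w_{uu'}\le 0$, i.e. $w_{uu'}\le 0$, contradicting $w_{uu'}>0$ which holds by definition of twins. Therefore $X(u)=X(u')$.

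The one point that needs care --- and the only real obstacle --- is checking that these two moves are actually legal, i.e. that they do not place $u$ (resp. $u'$) in a group containing one of its enemies, which would make the ``new utility'' equal to $-\infty$ and the inequalities vacuous. I would handle this by first observing that in any $1$-stable partition every node has nonnegative (in particular finite) utility: a node with negative utility could always move to an empty group --- one exists whenever its current group is not already a singleton --- and gain utility $0$, so no such node can exist at a $1$-stable $P$. Consequently $f_{u'}(P)$ is finite, so $w_{u'v}\neq-\infty$ for all $v\in X(u')\setminus\{u'\}$, and by the twin identity $w_{uv}=w_{u'v}\neq-\infty$; together with $w_{uu'}>0$ this shows $u$'s move to $X(u')$ yields a finite value, and symmetrically for $u'$. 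With legality established, the computation and the summation above complete the proof.
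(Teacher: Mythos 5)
Your proof is correct and follows essentially the same route as the paper: both exhibit the $1$-deviation in which one twin joins the other's group and use the twin identity $f_u(P')=w_{uu'}+f_{u'}(P)$ to contradict $1$-stability (the paper just takes WLOG $f_u(P)\leq f_{u'}(P)$ and uses one inequality instead of summing two). Your extra care in ruling out $-\infty$ weights in $X(u')$ via the nonnegativity of utilities at a $1$-stable partition is a legitimate point that the paper's one-line proof leaves implicit.
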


\begin{proof}
By contradiction.
Suppose $X(u) \neq X(u')$.
By symmetry, we can assume $f_u(P) \leq f_{u'}(P)$.
Thus, $\{u\}$ breaks the partition, as vertex $u$ joins the group of $u'$ because, in this new partition $P'$ for the nodes, we get $f_u(P') = f_{u'}(P) + w_{uu'} > f_u(P)$.
A contradiction.
Hence $X(u) = X(u')$.
\end{proof}

Finally, we show that there exists a graph $G$ such that any partition for $G$ is not $3$-stable ($k(\{-\infty,0,1\}) \leq 2$).
The proof is all the more counter-intuitive as it implies there are infinite sequences of deviations that do not change the global utility, even in graphs with only four edges weighted by $0$!

\begin{proof}~[Lemma~\ref{lem:counter-example}]
The set of vertices consists of four sets $A_i$, $0 \leq i \leq 3$, each of equal size $h \geq 2$ and with a special vertex $a_i$, plus four vertices $b_i$, $0 \leq i \leq 3$, and two vertices $c_0$ and $c_1$.
In what follows, indices are taken modulo $2$ for $c_j$, $j \in \{0,1\}$, and they are taken modulo $4$ everywhere else.
Figure~\ref{fig:counter-examples} represents the example with $h=3$. 

The friendship graph $G^+$ here consists of all the edges with weight $1$; the set $E^+$ contains:
\begin{enumerate}
\item all the edges between nodes in $A_i$ ($0\leq i \leq 3$);
\item edges between $b_i$ and $A_i$ ($0\leq i \leq 3$);
\item edges between $b_i$ and $A_{i+1} \setminus \{a_{i+1}\}$ ($0\leq i \leq 3$);
\item edges between $b_i$ and $b_{i-1}$ and $b_{i+1}$ ($0\leq i \leq 3$);
\item edges between $c_0$ and all the $b_i$, and edges between $c_1$ and all the $b_i$;
\item edges between $c_0$ and $A_0 \cup A_2$, and edges between $c_1$ and $A_1 \cup A_3$.
\end{enumerate} 
Moreover, there are four edges with weight $0$, namely the edges $b_i a_{i+1}$.
All the remaining edges have weight $-\infty$. 
That is two nodes in different $A_i,A_{i'}$ are enemies; a user $b_i$ is enemy of $b_{i+2}$ and of the nodes in $A_{i+2}$ and $A_{i+3}$; $c_0$ and $c_1$ are enemies; $c_0$ is enemy of the nodes in $A_1$ and $A_3$, and $c_1$ is enemy of the nodes in $A_0$ and $A_2$. 

We now assume there exists a $3$-stable partition $P$ for $G$.

\begin{claim}
\label{claim:subgrp}
Every $A_i$ is a subgroup in $P$.
\end{claim}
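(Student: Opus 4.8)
The plan is to show that every vertex of $A_i$ ($0 \le i \le 3$) sits in the same group of $P$ as the special vertex $a_i$, so that $A_i$ is contained in a single group; then I argue this group contains no other vertices, so $A_i$ is exactly a group (a ``subgroup'' of $P$). First I would observe that all vertices in $A_i \setminus \{a_i\}$ are pairwise twins, and in fact each $v \in A_i \setminus \{a_i\}$ is a twin of $a_i$ as well: every such vertex is a friend (weight $1$) of all of $A_i$, of $b_i$, of $b_{i-1}$ only when it is not $a_{i+1}$'s analogue (careful bookkeeping needed here), and of exactly one of $c_0,c_1$; meanwhile it is an enemy of everything else. Since $a_i$ differs from the other vertices of $A_i$ only in the single edge $b_{i-1}a_i$ (weight $-\infty$ versus weight $0$), I cannot quite call $a_i$ a twin of the rest, but the vertices of $A_i\setminus\{a_i\}$ \emph{are} mutual twins. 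By Lemma~\ref{lemma:group}, in the $1$-stable (hence $3$-stable) partition $P$ all of $A_i \setminus \{a_i\}$ lie in one common group $Y_i$.

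Next I would pin down $a_i$. If $|A_i \setminus \{a_i\}| = h-1 \ge 1$, the group $Y_i$ contains $\ge h-1 \ge 1$ friends of $a_i$, and $a_i$'s only friends are: the rest of $A_i$, $b_i$, and one of $c_0,c_1$. If $a_i$ were not in $Y_i$, it would be in some group $Z$; since $a_i$ is an enemy of every vertex of $A_j$ for $j \ne i$ and of $b_{i-1}$, $b_{i+1}$, $b_{i+2}$, its utility in $Z$ is at most the weight it can collect from $\{b_i\} \cup \{c_0 \text{ or } c_1\}$, i.e. at most $2$. Then a $1$-deviation moving $a_i$ into $Y_i$ gives it utility at least $h-1 \ge 2$; one must check this is a \emph{strict} increase, which is where the precise value of $h$ and the placement of $b_i$ and the $c$-vertices matter — I would choose the argument so that $h \ge 3$ makes $h-1 \ge 3 > 2$, forcing the strict inequality and contradicting $3$-stability (indeed $1$-stability). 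Hence $a_i \in Y_i$, so $A_i \subseteq Y_i$.

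Finally I would show $Y_i$ contains nothing else, i.e. $Y_i = A_i$. Any vertex $x \notin A_i$ that is a friend of some vertex of $A_i$ must be $b_i$, $b_{i-1}$, or the appropriate $c_j$; but each of these is an \emph{enemy} of $a_i$ or of some other vertex of $A_i$ (for instance $b_{i-1}$ is an enemy of $a_i$ via the weight-$0$... no: $b_{i-1}a_i$ has weight $0$, so I must instead use that $b_{i-1}$ is an enemy of the vertices in $A_{i+1}$, which does not directly help). So more carefully: $b_i$ is a friend of all of $A_i$, and $c_{j}$ (the one adjacent to $A_i$) is a friend of all of $A_i$, so they are \emph{not} excluded by the enemy structure inside $A_i$; instead I must rule them out using stability of \emph{other} groups — e.g. if $b_i \in Y_i$ then $b_i$ is in a group with all of $A_i$ but then $a_{i+1}$'s analogue and $b_{i+1}$ cannot also be there, and a short deviation/counting argument (the kind carried out in the surrounding proof) gives a contradiction. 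I expect \textbf{this last step} — cleanly separating $A_i$ from the ``connector'' vertices $b_i$ and $c_j$ using the global stability of $P$ rather than just local adjacency — to be the main obstacle, since the weight-$0$ edges $b_i a_{i+1}$ are specifically designed to make the configuration delicate; the twin argument and the relocation of $a_i$ are routine by comparison.
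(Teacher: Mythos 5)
There is a genuine gap, and it starts with a misreading of the statement. In this paper ``$A_i$ is a subgroup in $P$'' means that $A_i$ is \emph{contained in a single group} of $P$, not that $A_i$ \emph{is} a group of $P$ (compare the use of ``subgroup'' after Lemma~\ref{lemma:group} elsewhere in the appendix, where twins are only shown to share a group). Your third step, which tries to prove $Y_i = A_i$ exactly, is therefore both unnecessary and unprovable: Claims~\ref{claim:contre-3} and~\ref{claim:contre-4} go on to show that in any putative $3$-stable $P$ some group must contain $A_i$ \emph{together with} $b_i$, $b_{i-1}$ and $c_i$, so the separation you are trying to establish is false. The ``main obstacle'' you identify is an obstacle because you are attempting to prove the wrong statement; the intended claim ends once $a_i$ is shown to join the group containing $A_i\setminus\{a_i\}$.

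Your first two steps do follow the paper's route (twins via Lemma~\ref{lemma:group}, then relocating $a_i$), but the relocation step is not closed. The arithmetic ``$h\ge 3$ makes $h-1\ge 3$'' is off by one: with $h=3$ the group $A_i\setminus\{a_i\}$ offers $a_i$ utility $2$, which does not strictly beat the utility $2$ that $a_i$ can already collect from $b_i$ and $c_i$, so the $1$-deviation fails; and the construction explicitly allows any $h\ge 2$ (the later claims even treat $h=2$). The paper closes this by first observing $X(a_i)\subseteq\{a_i,b_i,b_{i-1},c_i\}$ and that \emph{every} vertex of $X(a_i)$ has at least one friend and no enemy in the group containing $A_i\setminus\{a_i\}$, so the whole of $X(a_i)$ can merge into that group via a deviation of size at most $3$ (or, in the residual case $|X(a_i)|=4$, either $a_i$ moves alone when $h\ge 4$ or the small group $A_i\setminus\{a_i\}$ merges into $X(a_i)$ via a $2$-deviation). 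That is, the claim genuinely needs the collective deviations permitted by $3$-stability, not just $1$-deviations; your sketch would only be salvageable by fixing $h\ge 4$ and abandoning the (false) final step.
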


\begin{proofclaim}~[Claim~\ref{claim:subgrp}]
First, all the nodes in $A_i \setminus \{a_i\}$ are in the same group by Lemma~\ref{lemma:group}, as all the vertices in this subset are pairwise twins. 
Now assume that $a_i$ is not in the same group as $A_i \setminus \{a_i\}$.
Note that the nodes in $A_i$ have the same enemies, they all have $b_i$ and $c_i$ as friends (recall that the index for $c_i$ is taken modulo $2$), but the nodes in $A_i \setminus \{a_i\}$ also have $b_{i-1}$ as a friend whereas $w_{a_ib_{i-1}} = 0$.  
So, we have $X(a_i) \subseteq \{a_i,b_i,b_{i-1},c_i\}$.
This implies all the nodes in $X(a_i)$ have at least one friend and no enemy in the group in $P$ that contains $A_i \setminus \{a_i\}$.

If $|X(a_i)| \leq 3$ we can merge this group with the one that contains $A_i \setminus \{a_i\}$ using a $3$-deviation.
Otherwise, $P$ contains groups $\{a_i,b_i,b_{i-1},c_i\}$ and $A_i \setminus \{a_i\}$.
If $h \geq 4$, then $|A_i \setminus \{a_i\}| \geq 3$ and so, $a_i$ has an incentive to join $A_i \setminus \{a_i\}$.
Otherwise, we have $|A_i \setminus \{a_i\}| \leq 2$, hence we can merge the two groups using a $2$-deviation.
\end{proofclaim}

\begin{claim}
\label{claim:contre-2}
$b_i$ is in the same group as $A_i$ or $A_{i+1}$.
\end{claim}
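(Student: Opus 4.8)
The plan is to argue by contradiction. Assume $P$ is $3$-stable but that $X(b_i)$ is neither $X(A_i)$ nor $X(A_{i+1})$, where for $j\in\{i,i+1\}$ we write $X(A_j)$ for the (unique) group of $P$ containing $A_j$; it exists by Claim~\ref{claim:subgrp}.

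The first step is to pin down $X(b_i)$. The non-enemies of $b_i$ are exactly the vertices of $A_i\cup A_{i+1}$, the vertices $b_{i-1}$ and $b_{i+1}$, and $c_0,c_1$; moreover the only weight-$0$ edge incident to $b_i$ joins it to $a_{i+1}\in A_{i+1}$. Since $A_i$ and $A_{i+1}$ are subgroups of $P$, if $X(b_i)$ contained any vertex of $A_i$ (resp. $A_{i+1}$) it would equal $X(A_i)$ (resp. $X(A_{i+1})$); hence $X(b_i)$ misses $A_i\cup A_{i+1}$ entirely, in particular $a_{i+1}\notin X(b_i)$, so $X(b_i)\subseteq\{b_i,b_{i-1},b_{i+1},c_0,c_1\}$. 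But $b_{i-1}$ and $b_{i+1}$ are enemies (their indices differ by $2$), and so are $c_0$ and $c_1$, so $X(b_i)$ contains at most one of $\{b_{i-1},b_{i+1}\}$ and at most one of $\{c_0,c_1\}$; thus $|X(b_i)|\le 3$, and since every edge inside $X(b_i)$ has weight $1$ we get $f_{b_i}(P)\le 2$.

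The second step is to exhibit a profitable deviation of size at most $3$, contradicting $3$-stability. The group $X(A_i)$ contains no enemy of $b_i$: the enemies of $b_i$ are $b_{i+2}$, $A_{i+2}$ and $A_{i+3}$, and none of these can share a group with $A_i$. Hence $b_i$ may join $X(A_i)$, obtaining utility at least $|A_i|=h$. If $h\ge 3$ this already exceeds $f_{b_i}(P)\le 2$, so $\{b_i\}$ is a $1$-deviation and we are done. For $h=2$ I would instead move $b_i$ together with one or two carefully chosen additional players into $X(A_i)$ or $X(A_{i+1})$. The candidate companions are the (at most two) other members of $X(b_i)$, together with the \emph{matching} vertex of $A_i$, namely the $c$-vertex that is a friend of every element of $A_i$ and of $b_i$ (and symmetrically for $A_{i+1}$); note this matching vertex may currently sit in some $X(A_j)$, so it is recruited from outside $X(b_i)$. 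Running through the finitely many possibilities for which of $b_{i-1},b_{i+1},c_0,c_1$ belong to $X(b_i)$, and, where necessary, locating $c_0$ and $c_1$ from their conflict edges, one checks that in every case one of these moves keeps the target group conflict-free and strictly raises the utility of every mover.

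I expect the $h=2$ analysis to be the crux. The delicate point is not whether $b_i$ itself gains — recruiting the matching $c$-vertex always pushes $b_i$ to utility $h+1\ge 3$ — but whether the recruited third player is itself strictly better off after the move, since a priori it could already be enjoying its maximum possible utility in its current group; resolving this is exactly where the forced positions of $c_0,c_1$ and the choice between target $X(A_i)$ and $X(A_{i+1})$ come in. For $h\ge 3$ the argument is immediate from the bound $f_{b_i}(P)\le 2$.
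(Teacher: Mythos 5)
Your first step is correct and matches the paper: since $A_i$ and $A_{i+1}$ are subgroups of $P$ (Claim~\ref{claim:subgrp}), a group $X(b_i)$ disjoint from both is contained in $\{b_i,b_{i-1},b_{i+1},c_0,c_1\}$, hence has size at most $3$ and gives $f_{b_i}(P)\le 2$, while $X(A_i)$ contains no enemy of $b_i$. Your $1$-deviation then settles every case with $h\ge 3$ (and also the case $|X(b_i)|\le 2$), exactly as in the paper. If one only cares about the concrete $18$-vertex instance of Figure~\ref{fig:counter-examples}, where $h=3$, this already proves the claim.

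However, the claim is stated for the construction with arbitrary $h\ge 2$, and your treatment of $h=2$ is a genuine gap, not a proof: you describe a space of candidate deviations and assert that ``running through the finitely many possibilities \ldots one checks'' that one of them works, while yourself flagging that the unresolved point is whether the recruited third player strictly gains. That is precisely the step that needs an argument, and the paper supplies it with a short two-case split on which $c$-vertex sits in the size-$3$ group $X(b_i)$. If $c_i\in X(b_i)$, the pair $\{b_i,c_i\}$ moves to $A_i$: each had utility $2$ in $X(b_i)$ and gets $h+1=3$ afterwards (recall $c_i$, index mod $2$, is a friend of all of $A_i$), so no third player need be recruited and no question of its incentive arises. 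Otherwise $c_{i-1}\in X(b_i)$ together with exactly one of $b_{i-1},b_{i+1}$, and that vertex paired with $c_{i-1}$ deviates into the group of $A_{i-1}$ (resp.\ $A_{i+1}$), again raising both utilities from $2$ to $3$; here $b_i$ itself is not among the movers. Note that the paper's deviations always consist of players currently sharing the group $X(b_i)$, which is what makes the strict-gain check immediate; your proposal of pulling in a ``matching'' $c$-vertex from some other group $X(A_j)$ is exactly the configuration in which that vertex may already have utility $h+{}$(something) and refuse to move, so the case analysis you defer cannot be waved through.
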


\begin{proofclaim}~[Claim~\ref{claim:contre-2}]
Suppose it is not the case.
Then $X(b_i)$ contains at most two other nodes: one of $b_{i-1}$ and $b_{i+1}$ (together enemies), and one of $c_0$ and $c_1$ (enemies).
If $|X(b_i)| \leq 2$ or the group containing $A_i$ has size at least $3$, then $b_i$ has an incentive to leave her group.
So, we assume $|X(b_i)| = 3$, $h=2$ and $A_i$ is a group of size $h=2$ in $P$.
There are two cases.
If $X(b_i) = X(c_i)$ then we can merge $X(b_i)$ with $A_i$ using the $2$-deviation $\{b_i,c_i\}$.
Otherwise, $c_{i-1}$ is in $X(b_i)$, and $X(b_i)$ also contains either $b_{i-1}$ or $b_{i+1}$; hence we can break $P$ using the vertex in $\{b_{i-1},b_{i+1}\} \cap X(b_i)$ with $c_{i-1}$. 
\end{proofclaim}

\begin{claim}
\label{claim:contre-3}
There exists an $i$ such that $A_i, b_i$ and $b_{i-1}$ are in the same group. 
\end{claim}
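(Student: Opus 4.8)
The plan is to turn Claims~\ref{claim:subgrp} and~\ref{claim:contre-2} into a rigid description of where the $b_i$ sit, assume the conclusion of Claim~\ref{claim:contre-3} fails, and then exhibit an explicit single-vertex improving move, contradicting $3$-stability of $P$. \textbf{Step 1: combinatorial rigidity.} By Claim~\ref{claim:subgrp} each $A_i$ is a subgroup, and since vertices of distinct $A_i,A_j$ are enemies, $X(A_0),X(A_1),X(A_2),X(A_3)$ are four pairwise distinct groups. By Claim~\ref{claim:contre-2} each $b_i$ lies in exactly one of $X(A_i),X(A_{i+1})$; let $\sigma(i)\in\{i,i+1\}$ be the index with $b_i\in X(A_{\sigma(i)})$. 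The statement to be proved says precisely that some $X(A_j)$ contains both $b_{j-1}$ and $b_j$, i.e. that $\sigma^{-1}(j)=\{j-1,j\}$ for some $j$; suppose not. Since $\sigma^{-1}(j)\subseteq\{j-1,j\}$ always, the assumption gives $|\sigma^{-1}(j)|\le1$ for every $j$, and as $\sum_{j}|\sigma^{-1}(j)|=4$ over the four indices, $\sigma$ is a bijection of $\{0,1,2,3\}$ with $\sigma(i)\in\{i,i+1\}$. A short check (if $\sigma(i)=i+1$ then injectivity forces $\sigma(i+1)=i+2$, so the ``up'' set is closed under $+1$ on the cycle) shows the only such bijections are $\sigma=\mathrm{id}$ (``all $b_i\in X(A_i)$'') and $\sigma(i)\equiv i+1$ (``all $b_i\in X(A_{i+1})$'').

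\textbf{Step 2: locating $c_0$ and $c_1$.} In a $3$-stable (hence $1$-stable) partition no vertex shares a group with an enemy. As $c_0$ is a friend of all of $A_0\cup A_2$ and of all $b_i$, but an enemy of $A_1\cup A_3\cup\{c_1\}$, a singleton $\{c_0\}$ could profitably join $X(A_0)$; hence $c_0\in X(A_0)$ or $X(A_2)$, and symmetrically $c_1\in X(A_1)$ or $X(A_3)$. Consequently exactly two of $X(A_0),\dots,X(A_3)$ contain a $c$-vertex, one with even and one with odd index, and the other two are $c$-free; in the cyclic order on the $A_j$'s these two $c$-free groups are $X(A_\ell)$ and $X(A_{\ell+1})$ for some $\ell$, so each $c$-containing group is immediately preceded or immediately followed by a $c$-free one.

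\textbf{Step 3: producing a $1$-deviation.} In the case $\sigma(i)\equiv i+1$ we have $b_{i-1}\in X(A_i)$, and $b_i$ may move into $X(A_i)$: there it is a friend of all $h$ vertices of $A_i$, of $b_{i-1}$, and possibly of a $c$-vertex, with no enemy present, so it gains utility at least $h+1$; but in its current group $X(A_{i+1})$ its utility is at most $h$, since its edge to $a_{i+1}$ has weight $0$ and neither $b$-neighbour of $b_i$ lies in $X(A_{i+1})$ --- a strict improvement. In the case $\sigma=\mathrm{id}$, $b_i$ is an enemy of $A_{i-1}$, so the only group it can enter alone is $X(A_{i+1})$ (which now contains $b_{i+1}$); one computes that $b_i$ has utility $h$ or $h+1$ in $X(A_i)$ according as $X(A_i)$ is $c$-free or $c$-containing, and $h$ or $h+1$ in $X(A_{i+1})$ likewise (the weight-$0$ edge to $a_{i+1}$ being exactly offset by gaining $b_{i+1}$). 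By Step 2 one can pick $i$ with $X(A_{i+1})$ $c$-containing and $X(A_i)$ $c$-free, whence $b_i$ strictly improves by moving. In both cases $P$ admits a $1$-deviation, hence a $3$-deviation, contradicting $3$-stability; so the claim holds.

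\textbf{Expected main obstacle.} The $\sigma=\mathrm{id}$ case is the crux: every lone move of a $b_i$ is a priori only break-even there, so the argument must genuinely use the forced positions of $c_0,c_1$, and the parity bookkeeping of Step 2 is exactly what guarantees that some $c$-free group sits directly before a $c$-containing one in the cyclic order. The $\sigma(i)\equiv i+1$ case is immediate by comparison, since joining $X(A_i)$ hands $b_i$ the whole of $A_i$ together with $b_{i-1}$.
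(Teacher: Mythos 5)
Your proof is correct, and it follows the same overall strategy as the paper's: assume the claim fails, observe (via Claims~\ref{claim:subgrp} and~\ref{claim:contre-2}) that the placement of the $b_i$'s must then be globally ``all with $A_{i+1}$'' or ``all with $A_i$'', pin down the locations of $c_0,c_1$ by $1$-stability, and exhibit an improving deviation. The execution differs in a way worth noting. The paper's argument distinguishes subcases on where $c_i$ (resp.\ $c_{i-1}$) sits and, when the $c$-vertex is not already in the target group, moves a pair $\{b_i,c_i\}$ or $\{b_i,c_{i-1}\}$ as a $2$-deviation; in its Case~2 the target $X(b_{i-1})=X(A_{i-1})$ is delicate to check because $A_{i-1}=A_{i+3}$ consists of enemies of $b_i$, so the index bookkeeping there is fragile. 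Your version needs only $1$-deviations in both cases: in the shifted case the move of $b_i$ into $X(A_i)$ is unconditionally improving (it gains all of $A_i$ plus $b_{i-1}$, versus at most $h$ before, regardless of the $c$'s), and in the identity case your parity observation --- that the two $c$-free groups are one even- and one odd-indexed, hence some $c$-free $X(A_i)$ is immediately followed by a $c$-containing $X(A_{i+1})$ --- cleanly isolates an $i$ for which $b_i$ strictly gains by moving. This buys a tighter and more easily verifiable argument (every utility count is exact: $h$ before, $h+1$ after), at the cost of the explicit enumeration in Steps~1--2; both proofs rely on the same structural facts about the counterexample graph.
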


\begin{proofclaim}~[Claim~\ref{claim:contre-3}]
Again, we show the claim by contradiction.
We distinguish two cases:

Case $1$: $b_{i-1}$ is with $A_i$, but not $b_i$.
So, as the claim is supposed to be false, $b_i$ is with $A_{i+1}$, $b_{i+1}$ is with $A_{i+2}$, and $b_{i+2}$ is with $A_{i+3}$.
Either we have $X(b_{i-1}) = X(c_i)$, hence $b_i$ has an incentive to join $X(b_{i-1})$, or $X(b_{i-1}) \neq X(c_i)$, hence $\{b_i,c_i\}$ joins group $X(b_{i-1})$.

Case $2$: $b_i$ is with $A_i$, but not $b_{i-1}$.
So, as the claim is supposed to be false, $b_{i-1}$ is with $A_{i-1}$, $b_{i+1}$ is with $A_{i+1}$, and $b_{i+2}$ is with $A_{i+2}$.
Note that $c_i$ is either in $X(b_i)$ or in $X(b_{i+2})$.
Without loss of generality, suppose it is in $X(b_{i+2})$.
Either we have $X(b_{i-1}) = X(c_{i-1})$, hence $b_i$ has an incentive to join $X(b_{i-1})$, or $X(b_{i-1}) \neq X(c_{i-1})$, hence $\{b_i,c_{i-1}\}$ joins group $X(b_{i-1})$.
\end{proofclaim}

\begin{claim}
\label{claim:contre-4}
There exists an $i$, such that $b_{i}$, $b_{i-1}$ and $c_{i}$ are in the same group as $A_{i}$ in $P$.
Moreover, such a group is unique in $P$, due to the conflict graph in $G$. 
\end{claim}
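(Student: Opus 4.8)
The plan is to combine the three previous claims. By Claim~\ref{claim:contre-3} we may fix an index $i$ such that $A_i$, $b_i$, and $b_{i-1}$ all lie in the same group $X$ of $P$; we must upgrade this to the statement that $c_i$ also lies in $X$, and then argue uniqueness. First I would locate $c_i$: it is a friend of all of $A_i$ (via item 6 of the edge list, since the index of $c_i$ is taken modulo $2$), of $b_i$, and of $b_{i-1}$ (item 5), so $c_i$ is compatible with being in $X$. If $c_i \notin X$, I want to show $\{c_i\}$ — or a small coalition involving $c_i$ — has an incentive to move into $X$. The utility $c_i$ would gain in $X$ is at least $|A_i| + 2 \geq h+2 \geq 4$, so unless the current group $X(c_i)$ is already large and entirely friendly to $c_i$, this is a profitable $1$-deviation. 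The enemies of $c_i$ are $c_{i-1}$, $A_{i+1}$, and $A_{i+3}$ (the odd-index $A$'s, using the modulo-$2$ convention), so $X(c_i)$ may only contain (besides $c_i$) nodes from $A_i, A_{i+2}$, the $b_j$'s, and nothing from $\{c_{i-1}\}$; but $b_{i+2}$ is an enemy of $A_i$-nodes only, not of $c_i$, etc. The key point is that $c_i$'s friendship neighborhood outside $X$ is small and fragmented (it can collect at most a bounded number of $b$'s plus at most one of the even $A$-blocks, and the even $A$-blocks are pinned to their own $b$'s and $c$'s by Claims~\ref{claim:subgrp}--\ref{claim:contre-3}), so $f_{c_i}(P) < |A_i| + 2$, contradicting $3$-stability. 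Hence $c_i \in X$, which gives the first sentence of the claim with the group $X = X(A_i) = X(b_i) = X(b_{i-1}) = X(c_i)$.

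For uniqueness, I would argue directly from the conflict graph: suppose $X'$ is another group of $P$ containing $A_j$, $b_j$, $b_{j-1}$, $c_j$ for some $j \neq i$. If $j \equiv i \pmod 2$, then $c_j = c_i$, so $c_i$ would be in two distinct groups, impossible. If $j \not\equiv i \pmod 2$, then $c_j = c_{i-1}$ is the enemy of $c_i$ and of every node in $A_i$; but I also need to rule out $j \in \{i-1, i+1\}$ and $j = i+2$ separately. For $j = i+2$: then $A_{i+2}$, $b_{i+2}$, $b_{i+1}$, $c_{i+2}=c_i$ all lie in $X'$, again forcing $c_i$ into two groups. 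For $j = i\pm 1$: then $c_j = c_{i\pm 1} = c_{i-1}$ (mod $2$), and $b_j$ with $j=i-1$ would be $b_{i-1}$, which we already placed in $X$ — but $b_{i-1} \in X'$ too would be a contradiction; for $j = i+1$, the node $b_{j-1} = b_i$ is in $X$, and $b_i \in X'$ is again a contradiction. So in every case a single vertex is forced into two groups, which is absurd; hence the group described in the claim is unique.

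The main obstacle I anticipate is the case analysis showing $f_{c_i}(P) < |A_i|+2$ when $c_i \notin X$ — i.e., carefully bounding how large and how ``clean'' (enemy-free) the group $X(c_i)$ can be, using Claims~\ref{claim:subgrp}, \ref{claim:contre-2}, and \ref{claim:contre-3} to pin down where the other $A$-blocks and $b$-vertices have gone, and handling the small cases $h=2,3$ where $|A_i|$ is close to the threshold (there one may need a $2$- or $3$-deviation rather than a $1$-deviation, exactly as in the proofs of the earlier claims). The uniqueness part is then essentially bookkeeping on the modulo-$2$ and modulo-$4$ index conventions, but it must be done exhaustively over the four possible values of $j \bmod 4$.
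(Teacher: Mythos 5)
Your uniqueness argument is fine and is essentially all the paper offers (its proof is one line: ``straightforward by using the previous claim''), but the existence part of your proof has a genuine gap. You assert that if $c_i \notin X$, where $X$ is the group containing $A_i, b_i, b_{i-1}$ given by Claim~\ref{claim:contre-3}, then $f_{c_i}(P) < |A_i|+2$, so $c_i$ would strictly gain by a $1$-deviation into $X$. This inequality is false in general: the group $A_{i+2} \cup \{b_{i+1}, b_{i+2}, c_i\}$ is conflict-free (note $c_{i+2}=c_i$ is a friend of $A_{i+2}$ and of every $b_j$, and $b_{i+1},b_{i+2}$ are the two $b$'s compatible with $A_{i+2}$) and gives $c_i$ utility exactly $h+2 = |A_i|+2$, i.e.\ exactly what it would get in $X$. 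In that configuration $c_i$ has no \emph{strict} incentive to move, no coalition can help (only $c_i$ is conflict-free with all of $X$, so no $2$- or $3$-deviation into $X$ exists either), and the specific index $i$ supplied by Claim~\ref{claim:contre-3} simply fails to satisfy the conclusion. The obstacle you anticipated (small $h$) is not the real one; the real one is this equality case, which your phrase ``the even $A$-blocks are pinned to their own $b$'s and $c$'s'' does not exclude, since $c_i$ \emph{is} the own $c$ of $A_{i+2}$.

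The repair is a dichotomy rather than a forced move: a short computation shows that the only way $c_i$ can reach utility $\geq h+2$ in a group disjoint from $A_i$ is to sit in exactly $A_{i+2}\cup\{b_{i+1},b_{i+2},c_i\}$ (it can collect at most one of the two blocks $A_i,A_{i+2}$ plus at most two consecutive $b$'s, and only $b_{i+1},b_{i+2}$ are compatible with $A_{i+2}$). Hence either $f_{c_i}(P)<h+2$ and $\{c_i\}$ breaks $P$ by joining $X$ (contradicting stability, so $c_i\in X$ and the index $i$ works), or $c_i$'s own group is $A_{i+2}\cup\{b_{i+2},b_{(i+2)-1},c_{i+2}\}$ and the index $i+2$ witnesses the claim. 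Since Claim~\ref{claim:contre-4} is existential in $i$, this closes the gap; your uniqueness bookkeeping then applies unchanged to whichever index is obtained.
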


The proof of Claim~\ref{claim:contre-4} is straightforward by using the previous claim.
By symmetry, we assume that $P$ contains the group $\{b_0,b_3,c_0\} \cup A_0$.

Case $1$: the group containing $A_2$ also contains $b_1$ and $b_2$.
Then either $c_1$ is with $A_1$ and a $1$-deviation of $b_1$ to $A_1 \cup \{c_1\}$ breaks $P$, or $c_1$ is not with $A_1$ and a $2$-deviation of $\{b_1,c_1\}$ to $A_1$ breaks $P$.

Case $2$: the group containing $A_2$ only contains $b_2$.
Then either $c_1$ is with $A_3$ and a $2$-deviation of $\{b_2,b_3\}$ to $A_3$ breaks $P$, or $c_1$ is not with $A_3$ and a $3$-deviation of $\{b_2,b_3, c_1\}$ to $A_3$ breaks $P$.

Case $3$: the group containing $A_2$ only contains $b_1$.
In that case, a $1$-deviation from $b_1$ to the group containing $A_1$ breaks $P$.

Case $4$: $A_2$ is a group in $P$.
This implies $b_2$ is with $A_3$.
Then either $c_1$ is with $A_3$ and a $1$-deviation of $b_3$ to $A_3$ breaks $P$, or $c_1$ is not with $A_3$ and a $2$-deviation of $\{b_3, c_1\}$ to $A_3$ breaks $P$.

Finally, there does not exist a $3$-stable partition $P$ for $G$.
\end{proof}

\begin{corollary}
\label{lem:counter-example-2}
Let $\mathcal{W} = \{-\infty,0,1\}$.
Then $k(\mathcal{W}) = 2$.
\end{corollary}

\subsection{Counter-examples for stability}

Counter-examples with a different set of weights are presented in the sequel.
First, observe the weights $2,3,4$ in the counter-example of Figure \ref{fig:nonstable} can be replaced by any weights $w_1,w_2,w_3$ such that $w_1 < w_2 < w_3$, and $w_1+w_2 > w_3$. 
Especially, they can be replaced by $b,b+1,b+2$, for any $b \geq 2$.  
We now show through a counter-example, no $2$-stable partition exists in general when $\mathcal{W}$ contains two positive elements, even when the zero is not present.

\begin{lemma}
\label{lemma:extent}
Let $a,b$ be two positive integers such that $a < b$.
There is a graph $G=(V,w)$ with $\mathcal{W} = \{-\infty,a,b\}$ such that there is no $2$-stable partition for $G$.
\end{lemma}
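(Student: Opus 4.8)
The plan is to exhibit one explicit finite graph $G=(V,w)$ with $\mathcal{W}=\{-\infty,a,b\}$ and to show that \emph{every} partition of $V$ admits a profitable deviation of size at most $2$, which by definition rules out a $2$-stable partition (note a $1$-stable one still exists by Theorem~\ref{thm:stability:k=1}, so the counterexample is genuinely about the jump from $k=1$ to $k=2$, and it lies outside the scope of Theorem~\ref{thm:stability:ktree} precisely because two positive weights are now allowed). The construction follows the ``rotation'' idea behind Figure~\ref{fig:nonstable}: three mutually enemy \emph{anchor} blocks $U_1,U_2,U_3$, which therefore occupy three distinct colour classes in any partition, together with \emph{mover} nodes whose preferences over the anchors are wired cyclically, so that from any configuration two movers can jointly ``shift one step'' around the cycle while strictly improving. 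Since Figure~\ref{fig:nonstable} relies on three distinct positive weights $w_1<w_2<w_3$ with $w_1+w_2>w_3$ and we only have $a$ and $b$, the first step is to realise three distinct \emph{effective} interaction strengths out of $a$-edges and $b$-edges. The choice $W_1=b$, $W_2=a+b$, $W_3=2b$ works for \emph{every} pair $1\le a<b$ of positive integers, because $b<a+b<2b<a+2b=W_1+W_2$; here $W_1$ is a single $b$-edge to an anchor, $W_2$ is that $b$-edge together with an $a$-edge to the partner mover, and $W_3=2b$ is obtained by taking the relevant anchor block to be a twin pair (joined to the mover by two $b$-edges), which must stay in one class by Lemma~\ref{lemma:group} and so behaves like a single node of weight $2b$. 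All pairs not explicitly given weight $a$ or $b$ are set to $-\infty$; in particular the auxiliary twin nodes are enemies with every mover they do not interact with, so they cannot be ``hijacked''.

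With the graph fixed, the second step is a short sequence of structural reductions that collapse the exponentially many partitions to a constant number of qualitatively distinct \emph{configurations}, exactly in the spirit of the proof of Lemma~\ref{lem:counter-example}. One checks, using Lemma~\ref{lemma:group} and the enemy relations, that: no two anchors share a class; each anchor block (possibly a twin pair) appears either isolated or intact inside one class; and each mover lies in the class of one of the (at most two) anchors it is not an enemy of, or alone, or in a common class containing only other movers. These facts leave only finitely many configurations to examine.

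The third step is the case analysis. \textbf{Case 1:} some mover can strictly increase its utility by a unary move (join a more valuable anchor's class, or join its partner) — then a $1$-deviation breaks the partition and we are done. \textbf{Case 2:} no such unary move exists; one then locates two movers $x,y$ currently realising only the weak interactions, sitting respectively in the classes of two consecutive anchors (or both isolated), and verifies that the joint move of $\{x,y\}$ into the ``next'' anchor's class strictly raises both their utilities. This is exactly where $W_1+W_2>W_3$ enters: $x$ gives up an interaction of strength at most $W_3$ and acquires one of strength $W_1$ plus the $W_2$-strength interaction it now has with $y$, and symmetrically — with the roles of $W_1,W_2,W_3$ cyclically permuted — for $y$. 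One then checks that the resulting configuration is again of the type handled by Case 1 or Case 2, so the game cannot terminate, and disposes of the few degenerate configurations (a twin block detached, all movers clumped into one class, all movers isolated) by exhibiting one more explicit deviation in each. Hence no partition of $V$ is $2$-stable, which also pins down $k(\{-\infty,a,b\})=1$ as in Table~\ref{tab:tab-1}.

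I expect the main obstacle to be the \emph{exhaustiveness} of this case analysis together with \emph{uniformity in $a$ and $b$}: one must ensure no exotic partition escapes (e.g.\ a mover parked with an anchor it does not prefer, from which it nonetheless cannot profitably escape alone and for which no matching partner is currently available), and simultaneously that every strict inequality invoked — ``$x$ improves'', ``$y$ improves'', ``the Case 1 unary move is profitable'' — holds for \emph{all} $1\le a<b$. This second requirement is exactly what forces the effective weights $W_1=b,\ W_2=a+b,\ W_3=2b$ and the twin-amplification rather than using $a$ and $b$ directly as the three weights (any attempt to do so forces some inequality such as $2a>b$ or $a+b>a+b$ that fails for some admissible $a<b$). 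A final routine verification is that the twin nodes create no new deviations: a twin cannot leave its partner by Lemma~\ref{lemma:group}, and it cannot profitably migrate because all its other edges are $-\infty$.
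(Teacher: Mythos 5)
Your overall strategy is the same as the paper's: manufacture three distinct \emph{effective} positive interaction strengths $W_1<W_2<W_3$ with $W_1+W_2>W_3$ out of the two weights $a<b$, hold the amplifying blocks together via the twin argument (Lemma~\ref{lemma:group}), and then run the rotation instability of Figure~\ref{fig:nonstable}. Your arithmetic $(W_1,W_2,W_3)=(b,\,a+b,\,2b)$ is a legitimate alternative to the paper's choice $(b,\,2b+a,\,3b)$, which is realized there by attaching three-element $b$-cliques $V_1,V_2,V_3$ to the triangle $u_1,u_2,u_3$ with two $b$-edges and one $a$-edge each; both triples satisfy the required strict inequalities for every $1\le a<b$.

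The gap is in the execution, and it is not cosmetic. First, your gadget does not actually reduce to the already-verified Figure~\ref{fig:nonstable} counterexample: you realize $W_2=a+b$ as ``a $b$-edge to the anchor plus an $a$-edge to the partner mover,'' so $W_2$ is not a static edge weight between two (super-)nodes but a state-dependent quantity that exists only when the partner happens to be co-located. You therefore cannot invoke the paper's remark that any weights with $w_1<w_2<w_3$ and $w_1+w_2>w_3$ make the six-node gadget unstable; you would have to re-verify, edge by edge and partition by partition, that \emph{your} structure admits a $\le 2$-deviation from every configuration. Second, that verification --- which is the entire content of the lemma --- is precisely what your Cases 1--3 defer (``one then checks\dots'', ``disposes of the few degenerate configurations\dots''), and the graph itself is never fully written down (which vertices are enemies of which, how many movers, what the mover--mover weights are). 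The paper's proof, by contrast, fixes an explicit $13$-vertex graph, proves via twins plus a degree count ($x_3$'s total outside positive weight is $a+b<2b$) that $V_1,V_2,V_3$ are frozen as subgroups, and only then contracts to the known gadget so that the single remaining check is that the deviating pair can always be chosen inside $\{u_1,u_2,u_3\}$. Until you pin down the graph and carry out the finite case analysis, this is a correct plan rather than a proof.
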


\begin{proof}[Lemma~\ref{lemma:extent}]
Users are partitioned into four sets $V_1 = \{x_1,x_2,x_3\}$, $V_2 = \{y_1,y_2,y_3\}$, $V_3=\{z_1,z_2,z_3\}$ and $\{u_1,u_2,u_3\}$.
Each of these sets is a clique with weight $b$.
Moreover, all the edges that lie between $V_1$ and $V_2$, $V_2$ and $V_3$, and $V_3$ and $V_1$ have weight $-\infty$.
We also set that for all vertices $x_i \in V_1$, for all vertices $y_j \in V_2$, and for all vertices $z_l \in V_3$, $w_{x_iu_1} = w_{y_ju_2} = w_{z_lu_3} = b$, whereas $w_{x_iu_3} = w_{y_ju_1} = w_{z_lu_2} = -\infty$.
Finally:
\begin{itemize}
\item $w_{u_1z_1} = w_{u_1z_2} = b$; $w_{u_1z_3} = a$;
\item $w_{u_2x_1} = w_{u_2x_2} = b$; $w_{u_2x_3} = a$;
\item $w_{u_3y_1} = w_{u_3y_2} = b$; $w_{u_3y_3} = a$.
\end{itemize}
 
We now assume there exists a $2$-stable partition $P$ for $G$.

\begin{claim}
$V_1, V_2$ and $V_3$ are subgroups in $P$.
\end{claim}

\begin{proofclaim}
By symmetry, it suffices to show the claim for $V_1$.
First, we have that $\{x_1,x_2\}$ is a subgroup in $P$ by Lemma~\ref{lemma:group}.
Remark that all the vertices in $V_1$ have the same enemies in the graph.
Moreover, the sum of the positive weights that lie between $x_3$ and $V \setminus V_1$ equals $b+a$, which is lower than $2b$.
Hence $V_1$ is a subgroup of $P$ because we suppose $P$ to be $1$-stable.
\end{proofclaim}

Then, we can replace subsets $V_1,V_2,V_3$ with three nodes $v_1,v_2,v_3$ in the obvious way.
In so doing, we get the variation of Figure \ref{fig:nonstable} with weights $w_1 = b$, $w_2 = 2b+a$ and $w_3 = 3b$.
Moreover, it is important in the counter-example of Figure \ref{fig:nonstable} to remark that one can always find a $2$-deviation using nodes $u_1,u_2$ and $u_3$ only.
As a consequence, such a $2$-deviation is valid for $G$ and so, there does not exist any $2$-stable partition for $G$. 
\end{proof}

Other results can be expressed in a convenient way using the subset of non-negative weights $\mathcal{W}^+$ and the subset of non-positive weights $\mathcal{W}^-$.
These two subsets entirely cover the set of weights $\mathcal{W}$, but they may not be disjoint whenever $0 \in \mathcal{W}$.
\begin{corollary}
\label{cor:mn}
Suppose $\{-\infty,N\}$ is a strict subset of $\mathcal{W}$.
Then, $k(\mathcal{W}) = 1$ if $\mathcal{W}^+ \setminus \{N\} \neq \emptyset$, and $k(\mathcal{W}) \leq 2$ otherwise.
\end{corollary}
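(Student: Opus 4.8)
~[Corollary~\ref{cor:mn}]
By Theorem~\ref{thm:stability:k=1} every coloring game admits a $1$-stable partition, so $k(\mathcal{W})\geq 1$ in all cases, and it suffices, in each of the two cases, to produce a single graph with weights in $\mathcal{W}$ admitting no $2$-stable (resp. no $3$-stable) partition.

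\emph{Case where $\mathcal{W}$ contains a positive weight other than $N$.} Then $\mathcal{W}$ contains two distinct positive weights $p<p'$ (for instance that weight and $N$, reordered). Applying Lemma~\ref{lemma:extent} with $(a,b)=(p,p')$ yields a graph $G=(V,w)$ whose weights lie in $\{-\infty,p,p'\}\subseteq\mathcal{W}$ and which has no $2$-stable partition; combined with $k(\mathcal{W})\geq 1$ this gives $k(\mathcal{W})=1$. This step is a direct appeal to Lemma~\ref{lemma:extent} and is not the difficulty.

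\emph{Case where $N$ is the only positive weight of $\mathcal{W}$.} Since $\{-\infty,N\}$ is a strict subset of $\mathcal{W}$, the set $\mathcal{W}\setminus\{-\infty,N\}$ is nonempty and contained in $\{0\}\cup(-\mathbb{N})$; fix an element $w_0$ of it, so $w_0\leq 0$. The plan is to start from the graph $G$ of Lemma~\ref{lem:counter-example} --- whose weights are $\{-\infty,0,1\}$, and in which the only four weight-$0$ edges are the $b_ja_{j+1}$, each of whose endpoints carries no other weight-$0$ edge --- and to build a graph $G'$ on the same vertex set by keeping every $-\infty$-edge, sending every weight-$1$ edge to weight $N$, and sending each of the four weight-$0$ edges to weight $w_0$; all weights of $G'$ then lie in $\mathcal{W}$. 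The key observation is that every $k$-deviation that is valid for a partition $P$ in $G$ is still valid for $P$ in $G'$: no new enemy pair is created since the $-\infty$-edges are unchanged, and for each member $u$ of the deviating coalition the increase of $u$'s utility in $G'$ equals $N\Delta^1_u+w_0\Delta^0_u$, where $\Delta^1_u\geq 1$ is the (integer) increase in the number of weight-$1$ neighbours of $u$ that made the move profitable in $G$, and $\Delta^0_u\in\{-1,0,1\}$ because $u$ is incident to at most one weight-$0$ edge of $G$; since $N$ is a best-friend weight we have $|w_0|\leq N/|V(G')|<N$, so this increase is at least $N-|w_0|>0$. Consequently any partition that is $3$-stable in $G'$ would be $3$-stable in $G$, contradicting Lemma~\ref{lem:counter-example}; hence $G'$ has no $3$-stable partition and $k(\mathcal{W})\leq 2$.

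I expect the only points needing care to be bookkeeping: verifying on the construction of Lemma~\ref{lem:counter-example} that the weight-$0$ edges are pairwise vertex-disjoint (which is exactly what gives $|\Delta^0_u|\leq 1$), and invoking the best-friend convention for the fixed-size gadget $G'$ so that $N$ dominates $|w_0|$. Neither is delicate; the rest of the argument is a reduction to Lemma~\ref{lemma:extent} and Lemma~\ref{lem:counter-example}.
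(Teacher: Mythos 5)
Your proof is correct and follows essentially the same route as the paper: Lemma~\ref{lemma:extent} applied to two positive weights in the first case, and in the second case the same substitution on the counterexample of Lemma~\ref{lem:counter-example} (weight $1\mapsto N$, weight $0\mapsto w_0$). The only difference is that you explicitly justify why non-existence of a $3$-stable partition transfers to the modified graph (via the matching structure of the weight-$0$ edges and the best-friend bound $N>|w_0|$), a step the paper's proof simply asserts.
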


\begin{proof}~[Corollary~\ref{cor:mn}]
First assume there exists a weight $a \in \mathcal{W}^+ \setminus \{N\}$.
Then we can apply Lemma~\ref{lemma:extent} with the subset $\{-\infty,a,N\}$.

Otherwise, we have by Lemma~\ref{lem:counter-example} there exists a graph $G_0 = (V_0,w^0)$ with $\{-\infty,0,1\}$ such that there does not exist any $3$-stable partition for $G_0$.
Let $a \in \mathcal{W} \setminus \{-\infty,N\}$ be any weight.
Note that $a$ may be positive or non positive for this construction.
We define $G_1 = (V_0,w^1)$ with the same conflict edges as for $G_0$, such that for all vertices $u,v \in V_0$,  $w^1_{uv} = N$ if $w^0_{uv} = 1$,  $w^1_{uv} = a$ if $w^0_{uv} = 0$.
As there does not exist any $3$-stable partition for $G_0$, there does not exist any $3$-stable partition for $G_1$ either.
\end{proof}

\begin{lemma}
\label{lem:counter-example-3}
Let $a,b$ be two positive integers.
There exists a graph $G=(V,w)$ with $\mathcal{W} = \{-a,b\}$ such that any partition for $G$ is not $2(b+a+1)$-stable.
\end{lemma}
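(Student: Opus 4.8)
The plan is to reuse the same rotational gadget that proves Lemma~\ref{lemma:extent} and Lemma~\ref{lem:counter-example}, but scaled so that coalitions of size up to $2(b+a+1)$ are needed to trigger the rotation. The skeleton is the triangle of three "super-nodes" $v_1,v_2,v_3$ together with three "connector" nodes $u_1,u_2,u_3$, exactly as in Figure~\ref{fig:nonstable}: the weights on the triangle must satisfy $w_1<w_2<w_3$ with $w_1+w_2>w_3$ so that the three connectors can always perform a rotating deviation, no matter which of the three admissible configurations the partition is in. The new ingredient is that, since the only available weights are $-a$ and $b$ (no $0$, no $-\infty$), I cannot realize "enemies" directly, nor realize a $0$-edge; instead I must emulate both a forbidden grouping and a near-zero interaction by bundling many parallel $b$-edges against many parallel $(-a)$-edges inside inflated super-nodes. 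Concretely, each $v_i$ is replaced by a clique of roughly $(b+a+1)$ (or a small multiple thereof) twin vertices joined by $b$-edges; cross-triangle pairs that "should" be enemies get weight $-a$, and because each such vertex has $\Theta(b+a+1)$ friends of weight $b$ inside its own clique and only $\Theta(b+a+1)$ foes of weight $-a$ across, a single defector strictly loses, so by Lemma~\ref{lemma:group} (twins) and a degree/weight count the cliques stay intact under any $1$-stable — indeed under any $o(b+a+1)$-sized — deviation.

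The steps, in order: (1) fix the vertex set: three $b$-cliques $V_1,V_2,V_3$ each of size $c:=b+a+1$ (or $2c$, tuned below), and three connectors $u_1,u_2,u_3$, with the adjacency pattern of the proof of Lemma~\ref{lemma:extent} but every $-\infty$ replaced by $-a$ and with the "slightly weaker" edges (the ones that were weight $a$ there, or weight $0$ in Lemma~\ref{lem:counter-example}) replaced by enough parallel $b$-edges to a sub-clique versus $(-a)$-edges to the complement so that the effective contribution is below the rotation threshold. (2) Show every $V_i$ is a subgroup of any $2(b+a+1)$-stable $P$: twins force $V_i\setminus\{\text{one vertex}\}$ together by Lemma~\ref{lemma:group}, and a counting argument shows no coalition of size $\le 2(b+a+1)$ that includes at most all of one clique plus a few connectors can profitably split or merge a clique, because moving a whole clique of size $c$ across a $(-a)$-edge costs $\Theta(c\cdot a)$ against a gain of $O(c\cdot b)$ that does not materialize unless the whole clique moves, and $c>a$ forces the merge/split to be unprofitable once we also account for the $b$-edges lost; here is where the exact size $c=b+a+1$ is chosen so the inequality is strict and so that the threshold "$2(b+a+1)$" is exactly the coalition size at which the structure would break — hence it does not break for anything smaller. (3) Collapse each $V_i$ to a single representative $v_i$ and observe the induced instance is precisely the $3$-cycle gadget of Figure~\ref{fig:nonstable} with $w_1=b$, $w_2=2b+?$, $w_3=3b$ (values chosen in step (1) to satisfy $w_1<w_2<w_3$, $w_1+w_2>w_3$). (4) Conclude: in every admissible configuration the three connectors $u_1,u_2,u_3$ form a profitable $3$-deviation (a rotation), and a $3$-deviation is a fortiori a $2(b+a+1)$-deviation since $2(b+a+1)\ge 3$; hence $P$ is not $2(b+a+1)$-stable, a contradiction, so no $2(b+a+1)$-stable partition exists.

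The main obstacle I expect is step (2): calibrating the clique size and the pattern of $b$/$(-a)$ edges so that (i) cliques are provably rigid against \emph{all} deviations of size up to $2(b+a+1)$ — not merely $1$-deviations — which requires checking that partial moves of a clique are never profitable and that a coalition cannot "assemble" a forbidden group by combining a clique with several connectors within the size budget; and (ii) the residual rotation among $u_1,u_2,u_3$ remains available and profitable in \emph{every} reachable configuration, including the degenerate ones where some $u_i$ sits alone or with a clique. This is essentially a bookkeeping argument — summing weights of the form "gain $\ge 1$ per member, total cross-group positive weight $\le$ something, total same-group positive weight $\ge$ something" via Lemma~\ref{lem:util-variations} — but the constants must be pinned down precisely because the statement asserts the sharp threshold $2(b+a+1)$, so the choice $c=b+a+1$ (and the fact that a rotation touches exactly the $3\cdot?$ connector-adjacent vertices plus the connectors) has to be reverse-engineered from that number rather than guessed.
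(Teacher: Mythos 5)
Your outline matches the paper's strategy: inflate the rotating triangle of Figure~\ref{fig:nonstable} into three large cliques $V_1,V_2,V_3$ plus three connector cliques $U_1,U_2,U_3$ of size exactly $b+a+1$ (so that the rotation, a $2$-deviation in the collapsed gadget, becomes a $2(b+a+1)$-deviation in $G$), emulate the missing weights by bundling parallel $b$-edges to a sub-clique against $(-a)$-edges to its complement, and collapse to a triangle with effective weights $w_1<w_2<w_3$, $w_1+w_2>w_3$. That is exactly what the paper does, with $w_1=b+a+1$, $w_2=2(b+a)+3$, $w_3=3(b+a+1)$ realized via nested sub-cliques $V_{i,m}\subset V_{i,p}\subset V_i$.

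However, the step you yourself flag as the obstacle -- step (2), the rigidity of the $V_i$ -- is a genuine gap, and the one concrete parameter you commit to there is wrong. You take all three cliques $V_i$ of the same size $c=b+a+1$ (or $2c$). With only $-a$ available in place of $-\infty$, the only way to force $V_i$ and $V_j$ apart in \emph{every} stable partition is to make the penalty $a\,|V_i|$ exceed $b$ times the largest total positive degree that any vertex of $V_j$ can have -- and that positive degree itself grows with $|V_j|$. This forces a chain of inequalities (in the paper: $\max\bigl(t_i b,\ (t_i+1)a+2(b+a)+3\bigr)\cdot a \geq [(t_{i-1}+5)(b+a)+4]\,b+1$) whose solution requires three cliques of sizes $s_1\ll s_2\ll s_3$, each chosen after the previous one; equal cliques of size $b+a+1$ cannot satisfy this when, say, $b$ is large relative to $a$, since a $V_1$-vertex then has $\Theta\bigl(b(b+a)\bigr)$ of positive weight available against a penalty of only $a(b+a+1)$. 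Your heuristic ``$\Theta(b+a+1)$ friends at $b$ versus $\Theta(b+a+1)$ foes at $-a$, so a defector loses'' is exactly the comparison that fails for $b>a$. Only the connector cliques $U_i$ have size $b+a+1$; the threshold $2(b+a+1)$ in the statement comes from moving two of \emph{them}, not from the size of the $V_i$. (A smaller point: the rotation in the collapsed gadget is a $2$-deviation of two connectors into the third's group, not a $3$-deviation, though this does not affect the conclusion since either is at most $2(b+a+1)$.)
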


\begin{proof}[Lemma~\ref{lem:counter-example-3}]
Our aim in this case is to emulate the variation of Figure~\ref{fig:nonstable} with the three weights $w_1 = b+a+1$, $w_2 = 2(b+a) + 3$, and $w_3 = 3(b+a+1)$.
As $b+a+1 < 2(b+a) + 3 < 3(b+a+1)$, and $b+a+1 + 2(b+a) + 3 = 3(b+a+1) +1 >  3(b+a+1)$, this change of weights does not make $2$-stable partitions exist for the graph.

To do that, let $t_1, t_2, t_3$ be any three positive integers that satisfy the following properties:
\begin{itemize}
\item $t_3 \geq t_2 \geq t_1$;
\item $t_1 \geq 5(b+a+1)/\max (b-a, 1)$;
\item $\max ( t_1b, (t_1+1)a + 2(b+a) + 3 ).a \geq (3(b+a)+2)b + 1$;
\item $\max ( t_2b, (t_2+1)a + 2(b+a) + 3 ).a \geq [(t_1+5)(b+a) + 4]b + 1$;
\item $\max ( t_3b, (t_3+1)a + 2(b+a) + 3 ).a \geq [(t_2+5)(b+a) + 4]b + 1$.   
\end{itemize}
We can note the possible values of those three integers are infinite.
Indeed, the value of $t_1$ can be any value that is greater than every of its (finite) lower bounds.
Furthermore, if the value of $t_1$ is arbitrarily fixed, then we can also choose for $t_2$ any value that is sufficiently large and, finally, the value of $t_3$ can be chosen in the same way, depending on the values of $t_1$ and $t_2$.

For every $1 \leq i \leq 3$, we also define the size $s_i = t_i(b+a) + 3(b+a+1) = (t_i+1)(b+a) + 2(b+a) + 3$.\\

\noindent
We now define the graph $G=(V,w)$. 
The vertices consist in three subsets $V_1,V_2$ and $V_3$, plus three other subsets $U_1,U_2$ and $U_3$.
All of them are cliques with weight $b$.
Furthermore, $U_i$ has size $b+a+1$ and $V_i$ has size $s_i$, for every $1 \leq i \leq 3$. 
There are also two distinct subcliques $V_{i,m} \subset V_{i,p} \subset V_i$, with $|V_{i,m}| = t_ib$, $|V_{i,p}| = (t_i+1)b$.

We set $w_{uu'} = b$ for any $u,u' \in U_1 \cup U_2 \cup U_3$, and in addition, for every $1 \leq i \leq 3$:
\begin{itemize}
\item $w_{u_iv_i} = b$ for every $u_i \in U_i$, $v_i \in V_i \setminus V_{i,m}$;
\item $w_{u_iv_{i+1}} = b$, for every $u_i \in U_i$, $v_{i+1} \in V_{i+1} \setminus V_{i+1,p}$ (with the convention $V_{i+1} = V_1$ when $i=3$).
\end{itemize}
Every other weight equals $-a$.\\

\noindent
By contradiction.
Now assume there is a $2(b+a+1)$-stable partition of the nodes $P$.
By Lemma~\ref{lemma:group}, it comes that subsets $U_1,U_2,U_3, V_{1,m}, V_{2,m}, V_{3,m}, V_{1} \setminus V_{1,p},  V_{2} \setminus V_{2,p},  V_{3} \setminus V_{3,p}$ are subgroups in $P$.

\begin{claim}
For every $1 \leq i \leq 3$:
\begin{itemize} 
\item if $a \geq b$ then $|V_{i,p}| + 2(b+a+1) \leq | V_{i} \setminus V_{i,p}|$, 
\item if $b > a$ then $| V_{i} \setminus V_{i,m}| + 2(b+a+1) \leq |V_{i,m}|$.
\end{itemize} 
\end{claim}

\begin{proofclaim}
By the hypothesis, we have $|V_{i,p}| = (t_i+1)b$, hence $| V_{i} \setminus V_{i,p}| = s_i -  (t_i+1)b = (t_i+1)(b+a) + 2(b+a) + 3 - (t_i+1)b  =  (t_i+1)a + 2(b+a) + 3$; furthermore, $|V_{i,m}| = t_ib$, and so, $| V_{i} \setminus V_{i,m}| = s_i - t_ib =   t_i(b+a) + 3(b+a+1) - t_ib =  t_ia + 3(b+a+1)$. 

First if $a \geq b$, then $|V_{i,p}| + 2(b+a+1) =  (t_i+1)b + 2(b+a+1) < (ti+1)a + 2(b+a) + 3 = | V_{i} \setminus V_{i,p}|$, and the claim is proved. Otherwise, the equation $| V_{i} \setminus V_{i,m}| + 2(b+a+1) \leq |V_{i,m}|$ is equivalent to the following lines:
\[  t_ia + 3(b+a+1) +  2(b+a+1) \leq  t_ib\]
\[ 5(b+a+1) \leq t_i(b-a) \]
\[ 5(b+a+1)/(b-a) \leq t_i\]
and the claim is also proved, as $t_i \geq t_1 \geq  5(b+a+1)/\max (b-a, 1)$. 
\end{proofclaim}

In the sequel, we will denote $V_i' =  V_{i} \setminus V_{i,p}$ if $a \geq b$, and $V_i' = V_{i,m}$ if $b > a$.
We recall that $V_i'$ is a subgroup in $P$ by the previous claim.

\begin{claim}
For any $j < i$, none of the vertices in $V_j$ is in the same group in $P$ as vertices of $V_i'$.
\end{claim}

\begin{proofclaim}
Indeed, any vertex in $V_j$ has at most $s_j - 1 + 2(b+a+1) =  (t_j+1)(b+a) + 2(b+a) + 3 +  2(b+a+1) - 1 = (tj+5)(b+a) +4$ friends in its group.
Moreover, $[(tj+5)(b+a) +4]b < \max ( t_ib, (t_i+1)a + 2(b+a) + 3 ).a = |V_i'|a$ by the hypothesis.
\end{proofclaim}

\begin{claim}
Subsets $U_1$ and $V_3'$ are not in the same group in $P$.
\end{claim}

\begin{proofclaim}
By contradiction.
Suppose $U_1 \cup V_3'$ is a subgroup in $P$.
By the previous claim, there can be no vertex of $V_1 \cup V_2$ in this group.
Thus, any vertex of $U_1$ has at most $2(b+a+1) + b+a = 3(b+a) + 2$ friends in its group.
Furthermore, as $|V_3'|a = \max ( t_3b, (t_3+1)a + 2(b+a) + 3 ).a \geq (3(b+a)+2)b + 1$, we have that any vertex of $U_1$ has a negative utility, which is a contradiction because $P$ is $1$-stable.
\end{proofclaim}

In the same way, one can check that $U_2$ and $V_1'$ are not in the same group in $P$,  $U_3$ and $V_2'$ are not in the same group in $P$.

\begin{claim}
For any $1 \leq i \leq 3$, $V_i$ is a subgroup in $P$.
\end{claim}

\begin{proofclaim}
By our first claim, $|V_3 \setminus V_3'| + 2(b+a+1) \leq |V_3|'$, and by our third claim, there can be no vertex $u$ in the same group as $V_3'$ in $P$ such that $w_{uv_3} = -a$ for any vertex $v_3 \in V_3$. 
Consequently, $V_3$ is a subgroup in $P$, as any vertex of $V_3 \setminus V_3'$ that would not be in the same group as $V_3'$ would break the partition. 
\end{proofclaim}

It directly follows that $V_2$ is a subgroup in $P$, hence $V_1$ is a subgroup in $P$.

Consequently, subsets $V_1,V_2,V_3,U_1,U_2$ and $U_3$ can be replaced with vertices $v_1,v_2,v_3,u_1,u_2$ and $u_3$ in the obvious way.
We get the variation of Figure~\ref{fig:nonstable} that we announced at the beginning of the proof.
Observe we can mimic any $2$-deviation involving nodes $u_1,u_2$ or $u_3$ in this reduced graph by a $2(b+a+1)$-deviation involving subsets $U_1,U_2$ or $U_3$ in $G$. 
So, there does not exist any $2(b+a+1)$-stable partition for $G$.
\end{proof}

\begin{theorem}
\label{lem:car}
$k(\mathcal{W})=\infty$ $\Leftrightarrow$ $\mathcal{W} = \mathcal{W}^{+}$, or $\mathcal{W} = \mathcal{W}^{-}$, or $\mathcal{W} = \{-\infty, b\}$, or $\mathcal{W} = \mathcal{W}^{-} \cup \{N\}$ and $-\infty \notin \mathcal{W}$.
\end{theorem}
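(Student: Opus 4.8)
The statement is a classification, so the plan is to prove each direction by a case analysis that assembles the potential-function arguments and explicit counterexamples already available; throughout I keep in mind that $N$ is not a fixed integer but a ``best friend'' weight allowed to grow with the instance size, and this is what will separate the stable families involving $N$ from the unstable ones. For the implication $(\Leftarrow)$ I would treat the four families in turn. If $\mathcal{W}=\{-\infty,b\}$ with $b>0$, dividing every weight by $b$ turns the game into the uniform game with weights $\{-\infty,1\}$ without changing which coalitions form profitable deviations, so convergence for every $k$ follows from Lemma~\ref{lem:convergence-k-stable}. If $\mathcal{W}=\mathcal{W}^-$ (that is, $\mathcal{W}\subseteq-\NatInt\cup\{-\infty\}$), all weights are non-positive, so by Lemma~\ref{lem:util-variations} the correction term is non-negative and $f(P')-f(P)\ge 2|S|>0$ after any deviation; since $f\le 0$ is bounded the game stops — in fact it never leaves the all-singletons partition, which is then $k$-stable for all $k$. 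If $\mathcal{W}=\mathcal{W}^+$ ($\mathcal{W}\subseteq\NatInt\cup\{N\}$) or $\mathcal{W}=\mathcal{W}^-\cup\{N\}$ with $-\infty\notin\mathcal{W}$, I would use a two-level argument: first show that the number of monochromatic $N$-edges is non-decreasing along any profitable deviation (leaving a best friend loses the dominating weight $N$, which the bounded non-$N$ weights can never recoup, so a profitable move can only weakly increase each player's number of best friends in her own group); then, once that quantity is maximal, each $N$-connected block sits inside a single group, so the residual dynamics is a coloring game on the block graph whose effective weights are non-negative in the $\mathcal{W}^+$ case and non-positive in the $\mathcal{W}^-\cup\{N\}$ case, and it converges by the corresponding case above, respectively by the non-negative-weight extension of Theorem~\ref{thm:stability:ktree}.

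For the implication $(\Rightarrow)$ I would argue the contrapositive. If $\mathcal{W}$ is none of the four families, it must contain a positive element (otherwise $\mathcal{W}=\mathcal{W}^-$) and a negative element (otherwise $\mathcal{W}=\mathcal{W}^+$), and in each case I exhibit a subset $\mathcal{W}'\subseteq\mathcal{W}$ for which a game with weights in $\mathcal{W}'$, hence in $\mathcal{W}$, already fails to be $k$-stable for some $k$. If $\mathcal{W}$ contains a finite negative weight $-a$ ($a\ge 1$) and a finite positive weight $b$, then $\{-a,b\}\subseteq\mathcal{W}$ and Lemma~\ref{lem:counter-example-3} gives a graph with no $2(b+a+1)$-stable partition; this uses crucially that $a,b$ are fixed constants, which is exactly why it cannot be applied with $b=N$ and why $\mathcal{W}^-\cup\{N\}$ escapes. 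Otherwise the only negative element of $\mathcal{W}$ is $-\infty$; if $\mathcal{W}$ has two distinct finite positive weights $a<b$, then $\{-\infty,a,b\}\subseteq\mathcal{W}$ and Lemma~\ref{lemma:extent} gives a graph with no $2$-stable partition; if $\mathcal{W}$ has exactly one finite positive weight $b$ together with $0$, then $\{-\infty,0,b\}\subseteq\mathcal{W}$ and rescaling the $18$-vertex graph of Lemma~\ref{lem:counter-example} by $b$ yields a graph with no $3$-stable partition; and if $\mathcal{W}$ has exactly one finite positive $b$ and no $0$, then $\mathcal{W}=\{-\infty,b\}$, contradicting the hypothesis. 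The remaining configurations in which $N$ appears next to $-\infty$ and at least one other weight are covered by Corollary~\ref{cor:mn}, since then $\{-\infty,N\}\subsetneq\mathcal{W}$. This exhausts the cases.

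The main obstacle is the uniform treatment of the best-friend weight $N$: one must show at the same time that the two families built with $N$ are stable for every $k$ — which is where the layered ``satisfy all $N$-edges first, then analyze a residual game on $N$-blocks'' potential is needed, and where some care is required to argue that moving an entire $N$-block is the only way an $N$-edge can be broken by a profitable deviation — while every weight set placing a genuinely \emph{bounded} positive weight beside a negative element collapses onto one of the explicit counterexamples. Since global utility ceases to be monotone as soon as bounded positive weights coexist with enemies, there is no single potential covering the whole stable side, so the layered argument above together with the rescaling that reduces $\{-\infty,0,b\}$ to $\{-\infty,0,1\}$ are the steps demanding the most care.
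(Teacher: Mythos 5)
Your forward direction and your handling of $\mathcal{W}=\mathcal{W}^-$ and $\mathcal{W}=\{-\infty,b\}$ track the paper's proof closely: the same split on whether a finite negative weight coexists with a finite positive one, and the same invocations of Lemma~\ref{lem:counter-example-3}, Lemma~\ref{lemma:extent}, Corollary~\ref{lem:counter-example-2} and Corollary~\ref{cor:mn}. The gap is in your treatment of the two families involving $N$. Your key intermediate claim --- that the number of monochromatic $N$-edges is non-decreasing along any profitable deviation --- does not follow from the (correct) observation that each \emph{deviating} player weakly increases her own count of $N$-friends inside her group, because $N$-edges to \emph{non-deviating} players can be destroyed. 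Concretely, with $\mathcal{W}=\{-1,N\}$ take a group containing $u$, an $N$-friend $x$ of $u$, and many $(-1)$-neighbours of $u$; a second group containing $v$, an $N$-friend $y$ of $v$, and many $(-1)$-neighbours of $v$; and set $w_{uv}=N$. The $2$-deviation in which $\{u,v\}$ founds a new group is strictly profitable for both ($u$ goes from $N-m$ to $N$, likewise $v$), yet the two monochromatic edges $ux$ and $vy$ are replaced by the single edge $uv$, so the count drops from $2$ to $1$. The first level of your layered potential is therefore not monotone, and the convergence argument built on it does not go through as stated.

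What the theorem actually requires here is only the \emph{existence} of a $k$-stable partition for every $k$, and the paper obtains it statically rather than dynamically. For $\mathcal{W}=\mathcal{W}^+$ the grand coalition $(V)$ gives every player her maximum conceivable utility $\sum_{v}\max(w_{uv},0)$, so no deviation can be profitable (your two-level machinery is unnecessary in this case). For $\mathcal{W}=\mathcal{W}^-\cup\{N\}$ with $-\infty\notin\mathcal{W}$, the partition into connected components of the $N$-edge graph is $k$-stable for all $k$: a deviating coalition either strands some member away from one of her $N$-neighbours --- costing her $N$, which is irrecoverable since her total non-$N$ penalty is bounded by $(n-1)\max|w|<N$ --- or carries an entire component, whose members then cannot strictly gain because all cross-component weights are non-positive. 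Replacing your monotone-count argument by this direct verification of a single explicit partition closes the gap; the rest of your proof stands.
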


\begin{proof}~[Theorem~\ref{lem:car}]
First, suppose that $\mathcal{W}$ is not of one of the forms we propose.
Let $a,b$ be two positive integers.
If $-a,b \in \mathcal{W}$ then $k(\mathcal{W}) \leq 2(a+b) + 1$ by Lemma~\ref{lem:counter-example-3}.
So, we will assume either $-\infty$ is the only negative weight in $\mathcal{W}$, or $-\infty \in \mathcal{W}$ and $N$ is the only positive weight in $\mathcal{W}$.
In the first case, let $w_p$ denote the largest positive weight in $\mathcal{W}$; then, either there exists a positive integer $a < w_p$ such that $a \in \mathcal{W}$, hence $k(\mathcal{W}) = 1$ by Lemma~\ref{lemma:extent}, or $\mathcal{W} = \{-\infty,0,w_p\}$ and so, $k(\mathcal{W}) = 2$ by Corollary~\ref{lem:counter-example-2}.
In the second case, $k(\mathcal{W}) \leq 2$ by Corollary~\ref{cor:mn}.  
We deduce that $k(\mathcal{W})$ is upper-bounded in all the other cases, by the lemmas that are in Section~\ref{sec:weighted}.\\

Now, for the 'if part', we start the proof with a preliminary remark.
Let $G = (V,w)$ be any graph, and let $P$ be any $1$-stable partition for $G$.
For any node $u \in V$, $f_{u}(P) \geq \sum_{v \in V}{\min(w_{uv},0)}$ and $f_{u}(P) \leq \sum_{v \in V}{\max(w_{uv},0)}$.\\

\noindent
Let $\mathcal{W} = \mathcal{W}^{+}$.
By the first remark, $f_{u}(P) \leq \sum_{v \in V}{\max(w_{uv},0)}$ for any node $u \in V$ and for any stable partition $P$ for $G$.
In our case, $\sum_{v \in V}{\max(w_{uv},0)} = \sum_{v \in V}{w_{uv}}$ because $w_{uv} \geq 0$ for any $uv \in E$.
Let $P = (V)$ be the partition for $G$ composed of $1$ group of size $n$.
By construction, for any node $u \in V$, $f_{u}(P) = \sum_{v \in V}{w_{uv}}$.
Thus $P$ is a $k$-stable partition for $G$.
\iffalse
Consider $q \geq 2$.
Let $V = (u_{1}, \ldots, u_{n})$.
For any $u_{i} \in V$, $1 \leq i \leq n$, let $V^{i}_{1}, \ldots, V^{i}_{q-1}$ be $q-1$ singleton groups, each composed of node $u_{i}$.
Let $C=(V,V^{1}_{1}, \ldots, V^{1}_{q-1},V^{2}_{1}, \ldots, V^{2}_{q-1}, \ldots, V^{n}_{1}, \ldots, V^{n}_{q-1})$ be the configuration with $q$ channels for $G$.
Clearly, $C$ is $k$-stable because $w(u_{i},u_{}j) \geq 0$ and so $f_{u_{i}}(C) = \sum_{v \in V}{w(u,v)}$ for any $u_{i},u_{j} \in V$.
\fi\\

\noindent
Let $\mathcal{W} = \mathcal{W}^{-}$.
By the very first remark, $f_{u}(P) \leq \sum_{v \in V}{\max(w_{uv},0)}$ for any node $u \in V$ and for any stable partition $P$ for $G$.
In our case, $\sum_{v \in V}{\max(w_{uv},0)} = \sum_{v \in V}{0} = 0$ because $w_{uv} \leq 0$ for any $uv \in E$.
Let us denote $V=\{u_{1}, \ldots, u_{n}\}$.
Let $P=(\{u_{1}\}, \ldots,\{u_{n}\})$ be the partition for $G$ composed of $n$ groups of size $1$.
$P$ is a $k$-stable partition for $G$ because $f_{u}(P) = 0$ for any $u \in V$.\\

\noindent
The case $\mathcal{W} = \{-\infty, b\}$ is proved in~\cite{kleinberg2013information}.\\

\noindent
Let $\mathcal{W} = \mathcal{W}^{-} \cup \{N\}$ such that $-\infty \notin \mathcal{W}$.
Let $G'=(V,E')$ be the graph induced by the set of arcs $E' = \{uv, w_{uv}=N \}$.
Let $G_{1}, \ldots, G_{t}$ be the different connected components of $G'$.
Given a node $u \in V$, let $\Gamma_{G'}(u)$ be the set of neighbors of $u$ in graph $G'$.
Formally $\Gamma_{G'}(u) = \{v, uv \in E'\}$.
By the preliminary remark, we get that, for any stable partition $P$ for $G$ and for any node $u \in V$, $f_{u}(P) \leq N|\Gamma_{G'}(u)|$.

\noindent
Let $P= (V(G_{1}), \ldots, V(G_{t}))$ be the partition for $G$ composed of the nodes of the different maximal connected components of $G$.
By Definition of $N$, $f_{u}(P) \geq (N-1)|\Gamma_{G}(u)|$.

\noindent
We now prove that $P$ is $k$-stable.
By contradiction.
Suppose that there exists a $k$-deviation.
Let $P'$ be the partition for $G$ obtained after the $k$-deviation.
There are two cases.
\begin{itemize}
\item A subset $S \subset V(G_{i})$, for some $i$, $1 \leq i \leq t$, $|S| \leq k$, forms a new group.
In that case, we prove that at least one node $u \in S$ is such that $f_{u}(P') \leq f_{u}(P)$.
Indeed, consider a node $u \in S$ such that $\Gamma_{G'}(u) \cap S \neq \Gamma_{G'}(u)$.
Note that there always exists such a node because $|S| < |V(G_{i})|$ and $V(G_{i})$ is a maximal connected component of $G$.
Thus, we get $f_{u}(P') \leq (N-1)|\Gamma_{G'}(u)|$ while $f_{u}(P) \geq \max((N-1),0)|\Gamma_{G'}(u)|$.
So $f_{u}(P') \leq f_{u}(P)$.
A contradiction.
\item A subset $S \subset V$, $|S| \leq k$, reaches a group $V(G_{j})$, for some $i$, $1 \leq i \leq t$.
If $V(G_{i}) \cap S \neq \emptyset$, then we claim $V(G_{i}) \subset S$.
Otherwise, we would have $f_{P'}(u) \leq f_{u}(P)$ for some $u \in V(G_{i})$.
Indeed, consider a node $u \in S$ such that $\Gamma_{G'}(u) \cap S \neq \Gamma_{G'}(u)$.
So $f_{u}(P') \leq (N-1)|\Gamma_{G'}(u)|$ while $f_{u}(P) \geq (N-1)|\Gamma_{G'}(u)|$.
Thus, if $V(G_{i}) \cap S \neq \emptyset$, then $V(G_{i}) \subset S$.
Consider a set $V(G_{i})$, for some $i$, such that $V(G_{i}) \subset S$.
In that case, $f_{u}(P) \leq f_{u}(P')$ for any $u \in V(G_{i})$ because $w_{uv} \leq 0$ for any $v \in V \setminus V(G_{i})$.
A contradiction.
\end{itemize}
There does not exist a $k$-deviation for $P$ and so $P$ is $k$-stable.
\end{proof}

\subsubsection{Proof of Theorem~\ref{thm:complexity-decision}}

First, given a set of weights $\mathcal{W}$ such that $\mathcal{W}^+ \neq \emptyset$, we recall that $w_p \in \mathcal{W}$ is defined as the largest positive weight in $\mathcal{W}$.
Observe that $w_p$ may equal the arbitrary value $N$ in the general case.
Furthermore, as the set $\mathcal{W}$ is fixed, we can assume without loss of generality the value $N$ is a fixed polynomial of $n$; hence $w_p$ is either constant or polynomial.

Then we present two different ways to increase the minimum utility of the nodes in a graph.
In the general case, that minimum utility equals $0$; otherwise, the partition of the nodes is not even $1$-stable.
However, we can improve it with external cliques, in a way that keeps the stability properties of the graph safe.
Given a graph $G = (V,w)$ and a positive integer $t$, we build the graph $\tilde{G_t}$ by adding to the set of vertices $n$ cliques of $t$ nodes.
We further impose that all the nodes in the same $t$-clique have weight between them and between the vertices in $V$ equal to $w_p$, whereas they have a conflict edge with any other vertex in the remaining $t$-cliques.  
By doing so, we intuitively increase the minimal utility of the nodes to $w_p t$ for any stable partition of the graph.
Note that the transformation also keeps the stability properties of the graph safe, which can be formalized as follows:

\begin{definition}
\label{def:transfo}
Let $\mathcal{W}$ be any fixed set of weights such that $\mathcal{W}^+ \neq \emptyset$.
Given a graph $G=(V,w)$ and a positive integer $t$, for every $1 \leq i \leq n$ we can define the clique graph $K_t^i = (V_i,w^i)$ with $t$ nodes and the set of weights $\{w_p\}$.
The graph $\tilde{G_t} = (\tilde{V},\tilde{w})$ is then built as follows:
\begin{itemize}
%\vspace{-0.1cm}
\item $\tilde{V} = V \cup \bigcup_{i=1}^n V_i$;
%\vspace{-0.1cm}
\item for every $1 \leq i,j \leq n$, for all $u_i \in V_i$, and for all $u_j \in V_j$, $\tilde{w}_{u_i u_j} = w_p$ if $i=j$, $\tilde{w}_{u_i u_j} = -\infty$ otherwise;
%\vspace{-0.1cm}
\item  for all $u,v \in V$, $\tilde{w}_{uv} = w_{uv}$;
%\vspace{-0.1cm}
\item  for all $1 \leq i \leq n$, for all $u_i \in V_i$ and for all vertex $v \in V$, $\tilde{w}_{u_i v} = w_p$.
\end{itemize}
\end{definition}

\begin{lemma}
\label{lemma:keep-stability}
Let $\mathcal{W}$ be any fixed set of weights such that $\mathcal{W}^+ \neq \emptyset$, and $k$ be a positive integer.
Given a graph $G=(V,w)$, and a positive integer $t$ such that $t > n$, there exists a $k$-stable partition of the nodes for $G$ if and only if there exists a $k$-stable partition for $\tilde{G_t}$.
\end{lemma}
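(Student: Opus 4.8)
The plan is to pin down the structure of $1$-stable partitions of $\tilde{G_t}$ and then transfer deviations between $G$ and $\tilde{G_t}$ in both directions. Write $c:=w_p\,t$ for the ``clique bonus''. For each $i$ the $t$ vertices of $V_i$ are pairwise twins in $\tilde{G_t}$ (joined by weight $w_p>0$, with identical weights toward every other vertex: $w_p$ to each vertex of $V$, $-\infty$ to each vertex of another clique), so by Lemma~\ref{lemma:group} they lie in one group of any $1$-stable partition; distinct cliques are pairwise enemies and hence never share a group, and no $1$-stable partition puts two enemies together (one would move to an empty group). Thus in a $1$-stable partition $\tilde P$ every nonempty group has exactly one of three forms: (1) $A\cup V_i$ with $\emptyset\neq A\subseteq V$, in which a vertex $v\in A$ has $f_v(\tilde P)=f_v(\tilde P|_V)+c$; (2) $A\subseteq V$ nonempty and clique-free, in which $f_v(\tilde P)=f_v(\tilde P|_V)\le(|A|-1)w_p<c$ because $t>n$; (3) a bare clique $V_i$ (a \emph{free} clique). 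The key structural claim is that form~(2) never occurs, and that $f_v(\tilde P|_V)\ge 0$ for all $v$: if a free clique $V_i$ exists then every original vertex must already have $\tilde P$-utility $\ge c$ (otherwise it gains by moving alone into $V_i$, collecting $w_p t=c$), which excludes form-(2) groups by the displayed inequality and, subtracting $c$ from the form-(1) identity, gives $f_v(\tilde P|_V)\ge 0$; and if no free clique exists then all $n$ cliques sit in pairwise distinct form-(1) groups, which—since each carries at least one of the $n$ original vertices—forces exactly one original vertex per form-(1) group, no form-(2) group, $\tilde P|_V$ the all-singletons partition, and $f_v(\tilde P|_V)=0$.

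For the forward direction, start from a $k$-stable $P$ for $G$ with nonempty groups $X_1,\dots,X_m$, $m\le n$, and let $\tilde P$ consist of the groups $X_1\cup V_1,\dots,X_m\cup V_m$ together with the free cliques $V_{m+1},\dots,V_n$; note $\tilde P|_V=P$. A deviating coalition of $\tilde P$ cannot contain a clique vertex $u$: it has utility $\ge w_p(t-1)$, every nonempty group of $\tilde P$ carries a clique hence an enemy of $u$, and an empty target gives $u$ at most $(k-1)w_p<w_p(t-1)$ (using $t>n\ge k$; we may also assume $t>k$). So a deviating coalition $S$ lies in $V$, and its target $\tilde Y$ in $\tilde P$ is a form-(1) group $X_j\cup V_j$, a free clique, or empty. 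In the first two cases every $u\in S$ has $f_u(\tilde P)=f_u(P)+c$ before, and post-move utility equal to its $G$-utility in $X_j\cup S$ (resp. in $S$) plus $c$; hence a strict gain for all of $S$ in $\tilde P$ is a strict gain for all of $S$ in $G$, i.e. a $k$-deviation for $P$ (joining $X_j$, resp. forming the new group $S$), impossible. With an empty target, $u$'s post-move utility is at most $(k-1)w_p<c\le f_u(P)+c$ (using $f_u(P)\ge 0$ from $1$-stability of $P$), so no such deviation exists; thus $\tilde P$ is $k$-stable.

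For the reverse direction, start from a $k$-stable $\tilde P$, put $P:=\tilde P|_V$ (its nonempty groups are exactly the parts $A$ of the form-(1) groups, since form-(2) groups do not exist), and suppose $S\subseteq V$, $|S|\le k$, is a $k$-deviation for $P$ with target $X'$. If $X'=A$ is a nonempty group of $P$, lift the target to the form-(1) group $A\cup V_j$ of $\tilde P$: this is a legal target in $\tilde{G_t}$ (the clique $V_j$ is universally friendly and $X'$ contained no enemy of $S$), and since each $u\in S$ already sits in a form-(1) group, $f_u(\tilde P)$ exceeds its $G$-utility by $c$ both before and after the move, so $S$ is a $k$-deviation for $\tilde P$, a contradiction. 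If $X'$ is empty, route $S$ in $\tilde{G_t}$ into a clique instead of a new group: into a free clique $V_j$ if one exists, and otherwise—when $P$ is the all-singletons partition, so $f_u(P)=0$—into the clique currently in the group of some fixed $u_1\in S$. The resulting new group ($S\cup V_j$, resp. $S$ together with that clique) is conflict-free, and each $u\in S$ gains exactly $\sum_{v\in S\setminus\{u\}}w_{uv}-f_u(P)>0$; again a $k$-deviation for $\tilde P$, a contradiction. Hence $P$ is $k$-stable for $G$.

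The main obstacle is the reverse direction: a move strictly improving every utility in $G$ need not do so in $\tilde{G_t}$, because membership in a clique-bearing group carries the large additive term $c=w_pt$, so a $G$-deviation lifts cleanly only when the source has this term and the target can be made to have it too. The structural claim—no form-(2) groups, plus the all-singletons dichotomy when there is no free clique—supplies exactly that, and it is the only place the hypothesis $t>n$ is really used (it keeps clique vertices from leaving and forces the all-singletons structure in the absence of a free clique). The degenerate regime $n<k$ is dispatched separately by also requiring $t>k$.
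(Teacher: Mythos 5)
Your proof is correct and follows essentially the same route as the paper's: in the forward direction you attach one clique $K_t^i$ to each group of the $k$-stable partition $P$ and show deviations of the augmented partition project back to deviations of $P$; in the reverse direction you use the twins lemma (Lemma~\ref{lemma:group}) to see each clique is a subgroup, use $t>n$ plus a pigeonhole/free-clique argument to show every original vertex sits with some clique, restrict to $V$, and lift any $G$-deviation back to $\tilde{G_t}$ (your explicit case analysis of empty versus nonempty targets, and of the free-clique versus no-free-clique regimes, is in fact more careful than the paper's rather terse "we can mimic this deviation"). The one place you depart from the statement is in excluding clique vertices from deviating coalitions: your bound $(k-1)w_p<(t-1)w_p$ needs $t>k$, which is not among the hypotheses (only $t>n$ is), and you patch this by adding $t>k$ as an assumption. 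The paper avoids this by not excluding clique vertices at all: it takes $S'=S\cap V$ (nonempty since $S$ cannot lie inside a single clique and cannot meet two) and checks that $S'$ already breaks $P$; the same projection closes your case $k>t$ as well, since a coalition containing a whole clique plus a set $S\cap V$ moving to an empty group forces $S\cap V$ to be a valid new-group deviation in $G$. This is a cosmetic rather than substantive difference, as in the intended application $k$ is a fixed constant and $t$ may be chosen larger than it.
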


\begin{proof}~[Lemma~\ref{lemma:keep-stability}]
Let $P = ( X_1, \ldots, X_n )$ be a $k$-stable partition of the nodes for $G$.
We claim that $P' = ( X_1 \cup V(K_t^1), \ldots, X_n \cup V(K_t^n) )$ is a $k$-stable partition for $\tilde{G_t}$.
By contradiction.
Assume that $P'$ is not $k$-stable, and let $S$ be a $k$-set that breaks $P'$.
First, we claim that for every $i$, $1 \leq i \leq n$, $S$ cannot be a subset of $V(K_t^i)$.
This is clear as all the vertices in $K_t^i$ have enemies in every non-empty group of $P'$ that is not their own group; in addition, they only have friends in their group in $P'$, and all of them are already in the same group.
Then it follows that $S' = S \cap V(G) \neq \emptyset$.
Moreover, we claim that $S'$ breaks $P$.
Indeed, after the $k$-deviation, the vertices in $S'$ end in the same group as a (possibly empty) subset of $K_t^i$ for some $1 \leq  i \leq n$. 
Let $1 \leq j \leq n$ such that $X_j \cap (V(G) \setminus S') = \emptyset$. 
Such an integer $j$ exists, because there are at most $n$ non-empty groups in $P$. 
Then it follows that if we move $S'$ into the group $X_j$, it is still a valid $k$-deviation for $P'$ and so, we can mimic this $k$-deviation in $P$.
A contradiction.
Therefore $P'$ is $k$-stable.\\ 

\noindent
Conversely, assume that there is no $k$-stable partition for $G$, whereas there exists a $k$-stable partition $P' = ( X_1', \ldots, X_n')$ for $\tilde{G_t}$.
Particularly, for all $1 \leq i \leq n$, we get that $K_t^i$ is a subgroup of $P'$ by Lemma~\ref{lemma:group}.
Furthermore, we claim that for all vertex $u \in V$, there exists an integer $i \in \{ 1, \ldots, n\}$ such that $K_t^i \subset X(u)$.
This directly comes from the fact that $t > n$ and that $w_{p}$ is the largest positive weight in $\mathcal{W}$.
Let $P = ( X_1' \cap V(G), \ldots, X_n' \cap V(G) )$ be the underlying partition for $G$ that one can deduce from $P'$.
By the hypothesis, there exists a $k$-deviation $S$ that breaks $P$.
However, we can mimic this deviation in $P'$, which is a contradiction because $P'$ is $k$-stable.
Consequently, there does not exist any $k$-stable partition for $\tilde{G_t}$.
\end{proof}

It is useful to notice that Lemma~\ref{lemma:keep-stability} still holds if we allow to add $n' \geq n$ cliques $K_t^i$ in graph $\tilde{G_t}$.

Another way to increase the minimum utility of the nodes is to substitute every vertex in the graph by cliques.
By this second process, we get the graph $\bar{G_{\alpha}}$.
Formally:

\begin{definition}
Let $\mathcal{W}$ be any set of weights such that $\mathcal{W}^+ \neq \emptyset$.
Given a graph $G=(V,w)$ and a positive integer $\alpha$, for every vertex $u \in V$ we can define the clique graph $K_u = (V_u,w^u)$ with $\alpha$ nodes and the set of weights $\{w_p\}$.
The graph $\bar{G_{\alpha}} = (\bar{V},\bar{w})$ is then built as follows:
\begin{itemize}
\item $\bar{V} = \bigcup_{u \in V} V_u$;
\item for every $u \in V$, for all $u_1,u_2 \in V_u$, $\bar{w}_{u_1 u_2} = w_p$;
\item  for all $u,v \in V$, for all $u_i \in V_u$ and for all $v_j \in V_v$, $\bar{w}_{u_i v_j} = w_{uv}$.
\end{itemize}
\end{definition}

We are now able to prove Theorem~\ref{thm:complexity-decision}:

\begin{proof}~[Theorem~\ref{thm:complexity-decision}]
Suppose there exists a graph $G_0 = (V_0,w^0)$ with $\mathcal{W}$ such that there does not exist any $k$-stable partition for $G_0$.
We get there are positive weights, because otherwise the partition with only singleton groups is $k$-stable for $G_0$.
Moreover, we can assume that there exists a node $x_0 \in V_{0}$ such that the removal of $x_0$ makes the existence of a $k$-stable partition for the gotten subgraph.
Indeed, otherwise, we remove nodes sequentially until obtaining this property.

Let us choose $P^0$ a $k$-stable partition for $G_0 \setminus x_0$.
By the hypothesis, the partition $P^0 \cup \{x_0\}$ is not $k$-stable for $G_0$.
Let us define $f_0$ as the maximum utility the vertex $x_0$ can get with only one $k$-deviation that breaks $P^0 \cup \{x_0\}$.
Note that we can always assume $f_0 > 0$ by using the transformation of Definition~\ref{def:transfo} if needed.
We define two other constants, namely $\alpha = \lceil \frac{f_0}{w_p} \rceil$ and $c_0 = 2n_0+1$, with $n_0 = |V_0|$.

We can now prove the NP-hardness by using a polynomial reduction for the \MIS.
Let $G = (V,E)$ be a graph, and let $c \geq c_0$ be an integer.
We identify $G$ to the conflict graph of some network $D_G$ with the set of weights $\{-\infty,w_p\}$.
Let $t$ be the integer $\lfloor \alpha c - \frac{f_0}{w_p} \rfloor$.
We can consider the graphs $G_1 = \tilde{G_0}_t = (V_1,w^1)$ and $G_2 = \bar{G}_{\alpha} = (V_2,w^2)$.
Observe that $t >  \alpha c - \frac{f_0}{w_p} - 1 \geq \alpha c - n_0 -1 \geq c - n_0 -1 \geq n_0$, because $f_0 \leq n_0w_p$.
So we can apply Lemma~\ref{lemma:keep-stability} to $G_1$.

We build the graph $H_G = (V_H,w^H)$ as follows:
\begin{itemize}
%\vspace{-0.1cm}
\item $V_H = V_1 \cup V_2$;
%\vspace{-0.1cm}
\item for all $i \in \{1,2\}$, for all $u_i,v_i \in V_i$, $w^H_{u_i v_i} = w^i_{u_i v_i}$;
%\vspace{-0.1cm}
\item for all $u_1 \in V_1$, for all $v_2 \in V_2$, $w^H_{u_1 v_2} = w_p$ if $u_1 = x_0$, and  $w^H_{u_1 v_2} = -\infty$ otherwise. 
\end{itemize} 

\vspace{-0.25cm}
\begin{figure}[t]
\centerline{%\includegraphics[width=0.5\linewidth]{Figures/Unstablek4noborder.pdf}
\includegraphics[width=0.5\linewidth]{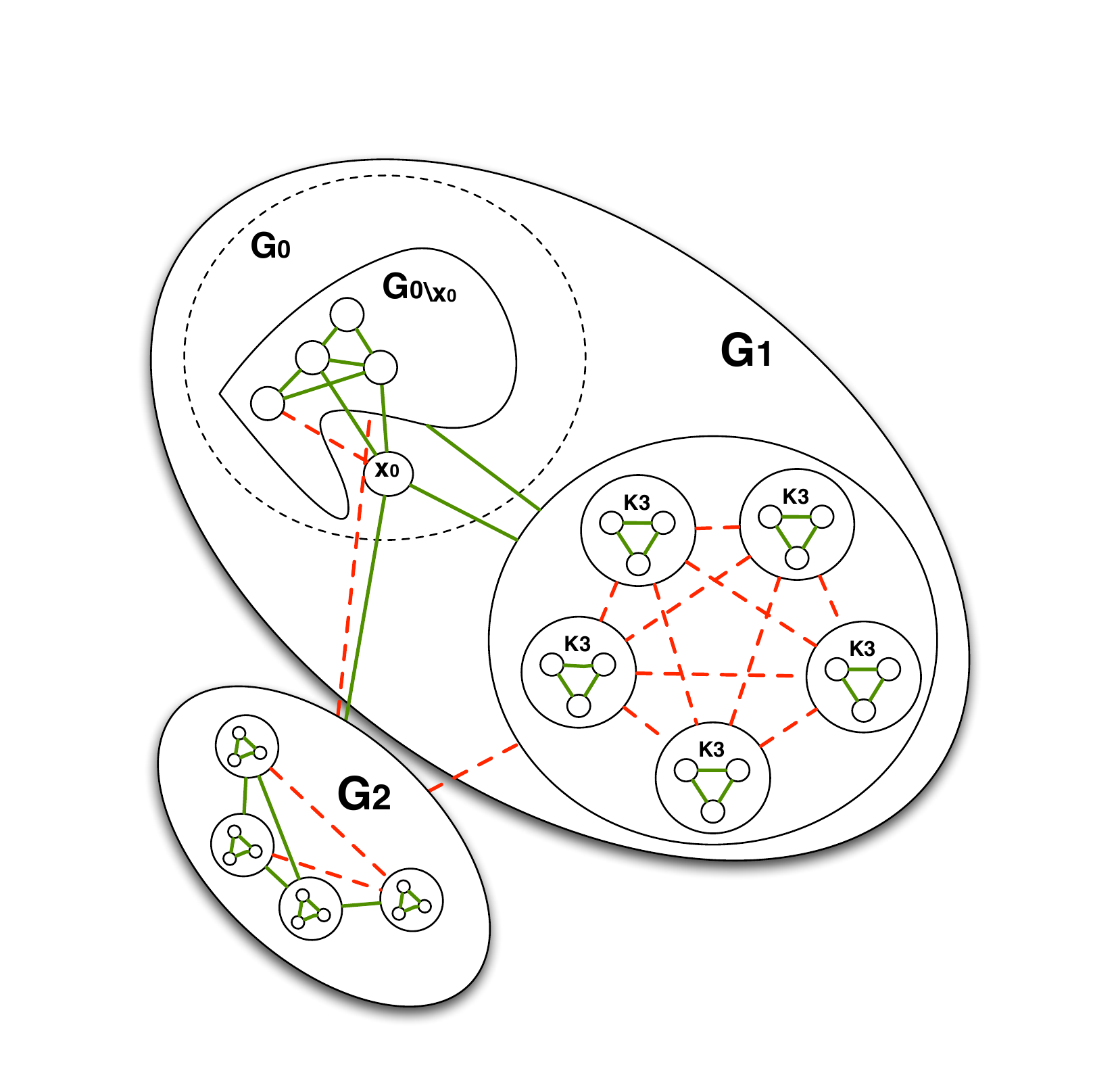}}
\caption{The transformation of an input.}
\label{fig:blabla}
\end{figure}

The transformation above is illustrated in Figure~\ref{fig:blabla}.
First assume that any independent set of $G$ has a size lower than $c$.
Then, there can be no group in a stable partition with more than $\alpha (c - 1)$ vertices of $V_2$. 
Furthermore, $\alpha(c-1) = \alpha c - \lceil \frac{f_0}{w_p} \rceil \leq \alpha c - \frac{f_0}{w_p} -1 < t$, and so, the minimum utility reachable by $x_0$ in $G_1$ is greater than its maximum utility in $G_2$.
As a consequence, $P_0$ can be partitioned into a $1$-stable partition for $G_1$ and a $1$-stable partition for $G_2$.
Moreover, there is no $k$-stable partition for $G_0$ by the hypothesis, hence there is no $k$-stable partition for $G_1$ either by  Lemma~\ref{lemma:keep-stability}.
Therefore, there is no $k$-stable partition for $H_G$.

Conversely, assume that there exists an independent set of $G$ with size at least $c$.
By~\cite{kleinberg2013information}, there exists a $k$-stable partition $P^{\alpha}$ for $G_2 \cup \{x_0\}$ such that there is a group $X \in P^{\alpha}$ that is a maximum independent set of the conflict graph of $G_2 \cup \{x_0\}$.
Furthermore, $x_0 \in X$, because $x_0$ has no conflict within the graph.
We can also observe that $G_1 \setminus x_0$ is $\tilde{(G_0 \setminus x_0)}_t$ with an extra clique graph $K_t^{n_0}$ added.
Hence, both these graphs have the same stable properties, that is if $P^0 = ( X_1, \ldots, X_{n_0-1} )$ is a $k$-stable partition for $G_0 \setminus x_0$, then we can associate to it a $k$-stable partition $P_1^t$  for $G_1 \setminus x_0$, by Lemma~\ref{lemma:keep-stability}.
We then claim that $P^H = P^{\alpha} \cup P_1^t$ is a $k$-stable partition for $H_G$.
Indeed, we have that the utility of $x_0$ in $P^{\alpha}$ is at least $w_p \alpha c$.
Moreover, the maximum utility $x_0$ can get after a $k$-deviation that breaks $P_1^t$ is $f_0 + w_p t = w_p ( t + \frac{f_0}{w_p}) \leq w_p \alpha c$.

We can conclude the NP-hardness, as our transformation is polynomial, and the \MIS is NP-complete~\cite{Kar72}. 
\end{proof}

\subsection{Appendix 3: Extensions of coloring games}
\label{annexe:extensions}

%All theoretical models of social dynamics should consider whether the overall behavior of the model is not too limited by some simplifying assumptions. We prove this is not the case, as our results extend to account for various situations in the formation of social groups. 

\subsubsection{Asymmetry and Gossiping}

%Studying directed graphs rather than undirected graphs is a natural generalization.
%In this case, we may not have that $w_{uv} = w_{vu}$ for all nodes $u,v$.
%However, even if modest generalization of the model, asymmetrical weights lead to intractability.
%This can be seen with a simple digraph $D=(\{u,v\},w)$ such that $w_{uv} > 0$ whereas $w_{vu} < 0$.
%Furthermore, the problem of deciding whether there exists a $1$-stable partition is NP-hard.
%This result holds even when an upper-bound for the number of groups is known.

%\begin{theorem}
%\label{theorem:oneStableConfigurationProblemNP}
%For any $k \geq 1$, deciding whether there exists a $k$-stable partition for a given digraph $D=(V,w)$ is NP-hard, even if the number of groups is upper-bounded by any constant $g \geq 2$.
%\end{theorem}

\begin{proof}~[Theorem~\ref{theorem:oneStableConfigurationProblemNP}]
We use in our reduction the multi-way partitioning number problem, which is NP-complete~\cite{Korf2009}.
Furthermore, we constrain ourselves to $g=2$ in order to ease the proof.
This is enough so that we give the intuition of the general case.
Note that the multi-way partitioning number is also known as the Partition problem when $g=2$.
So, in order to prevent any confusion, we will use the word 'partition' only for the Partition problem, that is we call the partitions of the nodes 'configurations' in the sequel.\\

Consider an instance $S$ of Partition problem.
Let us call $T = \sum_{s\in{S}}s$.
We define $D$ as follows. 
The set of vertices is $V = S \cup {\{z,u,v\}}$.
For all $s_{1}, s_{2} \in{S}$, with $s_1 \neq s_2$, we set $w_{s_{1}s_{2}} = - s_{2}$, $w_{s_{1}u} = w_{s_{1}v} = T - s_{1} + 1$, $w_{us_{1}} = w_{vs_{1}} = 0$, and finally, $w_{s_{1}z} = - w_{zs_{1}} = s_{1}$. 
We also set $w_{uv} = w_{vu} = -\infty$, that $w_{uz} = w_{vz} = 0$, and that $w_{zu} = w_{zv} = T + 1$.
Such a digraph $D$ can be built in polynomial time.

First, in any stable configuration for $D$, we get that $u$ and $v$ must be in two different groups because they are mutual enemies.
Furthermore, for all vertex $x\notin{\{u,v\}}$, $w_{ux} = w_{vx} = 0$ and so, the other vertices do not matter for $u$ and $v$. 
Hence for any configuration $P$ such that $X(u) \neq X(v)$, and for any $k$, there is no subset that contains $u$ or $v$ and that may break the configuration.

Moreover, we recall that $w_{zs} < 0$ for any $s \in S$. 
Consequently, in any configuration $P$ for $D$ such that $X(z) \neq X(u)$ and $X(z) \neq X(v)$, we get that $f_{z}(P) \leq 0$, whereas in any configuration $P'$ such that $X(z) = X(u)$ or $X(z) = X(v)$, we get that $f_{z}(P') \geq{1}$ because $\sum_{s \in {S}}w_{zs} = -T$ and $w_{zu}=w_{zv}=T+1$.

Hence, for any stable configuration for $D$, we conclude that either $z$ is in the same group $X_0$ as $u$, or $z$ is in the same group $X_1$ as $v$. Without any loss of generality, suppose $z$ in the same group $X_0$ as $u$.
We claim that any vertex $s$ in $S$ must be also in the same group $X_0$ as $u$ and $z$ or in the same group $X_1$ as $v$.
Indeed, otherwise we would have that $f_{s}(P) \leq 0$, since $w_{ss'} \leq 0$ for any $s'\in{S}$; however such a configuration $P$ is not $1$-stable, because if vertex $s$ joins the group $X_1$, then $f_{s}(P') > 0$, with $P'$ being the new configuration, due to the fact that $w_{sv} = T - s + 1 > 0$, and $T - s = - [ \sum_{s'\in{S}} w_{ss'} ]$.
Thus, there can be only two groups in a stable configuration of $D$.

Let $P = (X_0, X_1)$ be a configuration for $D$ such that $X(u) \neq X(v)$ and $S_i = \{ s_i \in {S} | s_i \in {X_i} \}, i \in{\{0, 1\}}$.
We can assume that $S_0, S_1$ is a partition of $S$.
Let us call $T_i = \sum_{s_i\in{S_i}}s_i$.
By definition of $w$, for any $i \in{\{0,1\}}$, for all vertex $s_i$ of $S_i$, we get that $f_{s_{i}}(P) = T + 1 - \sum_{s'_i\in{S_i}}s'_i + c_{s_i} = T - T_i + 1 +  c_{s_i} = T_{1-i} + 1 + c_{s_i}$, where the value of $ c_{s_i}\in{\{0,s_i\}}$ depends on whether $z$ is in $X_i$. 
In the same way, one can prove that if $z$ is in $X_j$ for some $j \in {\{0,1\}}$ then $f_{z}(P) = T_{1-j} + 1$.

It is important to notice that there is no subset of $V$ that breaks the configuration $P$ in creating a new group. Indeed, $u$ and $v$ cannot be part of such a subset, and it cannot contain $z$ either, because $z$ must be in the group of $u$ or in the group of $v$.
Furthermore, in any subset of $S$ that is taken as a new group, all of the individual utilities are non positive, whereas they were positive in the original groups.

Let $s \in{S}$ be a vertex of $X_i$ for some $i$, such that $T_i \leq{T_{1-i}}$. We get that $f_{s}(P) \geq{ T_{1-i} + 1}$.
Consequently, $s$ has no incentive in moving into $X_{1-i}$, whatever it is a collective move or not, because in such a new configuration $P'$ we get $f_{s}(P')  \leq{ T_i' + s + 1} \leq{ (T_i - s) + s + 1 = T_i + 1 } \leq{ T_{1-i} + 1}$.
We prove in a similar way that vertex $z$ has no incentive in leaving the group $X_i$ when  $T_i \leq{T_{1-i}}$. 
Therefore if $S_0,S_1$ is a solution for the Partition problem, then $T_0 = T_1$, that is $T_i \leq T_{1-i}$ for all $i\in{\{0,1\}}$ and so, for every $k\geq{1}$ we get that $P$ is a $k$-stable configuration for $D$.

Conversely, let us suppose that  $S_0,S_1$ is not a solution for the Partition problem, that is $T_i < T_{1-i}$ for some $i$, and let us show that $P$ is not a $1$-stable configuration for $D$.
Obviously, $z$ moves into $X_{i}$ provided $z$ belongs to $X_{1-i}$.
So we now assume that $z$ is in $X_{i}$. 
Then, any singleton $\{s_{1-i}\}$ with $s_{1-i} \in{S_{1-i}}$ breaks the configuration, because in the new configuration $P'$ where vertex $s_{1-i}$ has moved into group $X_i$ we get that $f_{s_{1-i}}(P') = T_{1-i} + 1$, that is strictly greater than $f_{s_{1-i}}(P) = T_i + 1$.

Finally, for any $k \geq 1$, there is a $k$-stable configuration $P$ for the digraph $D$ if and only if there is a solution to the Partition problem for the set $S$.
\end{proof}

%The same kind of pathological deviations arises with gossip.
%In this extension, which is still symmetrical, the nodes that are willing to merge their groups are allowed to do so, even without the agreement of some users in their respective groups.
%The $k$-deviations we studied so far are allowed as well.
%One can easily check that, in the uniform case, there is a valid $1$-deviation whenever there exists a gossip-deviation. 
%However, a first negative result has been proved in~\cite{kleinberg2013information}: when $\mathcal{W}=\{-\infty,1,N\}$, it is NP-hard to decide whether a given weighted graph admits a gossip-stable partition.
%We prove the same result when $\mathcal{W}=\{-\infty,0,1\}$.
%Thus, Theorem \ref{NPKL} shows that gossiping makes the stability problem intractable, even for slight extensions of the original model of~\cite{kleinberg2013information}.

%\begin{theorem}
%\label{NPKL}
%Deciding whether there exists a $2$-stable, gossip-stable partition for a given graph $G=(V,w)$ is NP-hard, even with $\mathcal{W}=\{-\infty,0,1\}$.
%\end{theorem}

\begin{proof}~[Theorem~\ref{NPKL}]
For our reduction, we use 3-coloring problem, that consists in determining whether there exists a proper $3$-coloring of a given graph $G=(V,E)$, a well known NP-complete problem~\cite{Dailey1980}.
%a well known NP-complete problem, even on planar graphs of degree $4$~\cite{Dailey1980}.
Let $G$ be an instance for the 3-coloring problem.
We define the graph $G_c=(V,w)$ as follows.
For any vertex $v$ of $G$ there are five vertices $v_1,v_2,v_{c_1},v_{c_2},v_{c_3}$ in $V$.
We say that $v_{c_i}$ is a colored vertex, and that it has the color $i$.
There are also three special vertices in $V$, namely $c_1,c_2$ and $c_3$. 

For any vertex $v$ of $G$ we set $w_{v_1v_2} = 1$,  $w_{v_1v_{c_1}} = w_{v_1v_{c_2}} = w_{v_1v_{c_3}} = 1$,  $w_{v_2v_{c_1}} = w_{v_2v_{c_2}} = w_{v_2v_{c_3}} = 1$, and $w_{v_{c_1}v_{c_2}} = w_{v_{c_1}v_{c_3}} = w_{v_{c_2}v_{c_3}} = -\infty$.
For any vertices $u,v$ in $G$, and for any $i,j\in{\{1,2,3\}}$, $i \neq j$, we also set $w_{c_iv_{c_i}} = w_{c_iu_{c_i}} = 1$, $w_{c_ic_j} = -\infty$, and $w_{u_{c_i}c_j} = w_{u_{c_i}v_{c_j}} = -\infty$.
In the case when $u$ and $v$ are adjacent in $G$, we set the weights $w_{v_iu_j} = -\infty$, for any $i,j\in{\{1,2\}}$.
Finally, every other weight equals $0$.

We have that, for any $i$, and for any vertex $v$ of $G$, $v_{c_i}$ and $c_i$ gossip when they are not in the same group, because $w_{v_{c_i}c_i} = 1$ and their enemies are the same. 
Furthermore, vertices $v_1$ and $v_2$ gossip when they are not in the same group too.
We now suppose there exists a gossip-stable partition $P$ such that none of the vertices $v_{c_i}$, for $i\in{\{1,2,3\}}$, is in the same group as $v_1$ and $v_2$,
and we claim that there is a contradiction. 
Indeed, all the colored vertices that are in the same group as $v_1$ and $v_2$, if any, have the same color $i$, because otherwise there would be enemies in the group and $P$ would not be stable.
Thus, there is at least one vertex $v_{c_i}$ that breaks the partition by joining the group of $v_1$ and $v_2$, because $w_{v_{c_i}v_1}+ w_{v_{c_i}v_2} = 2$ whereas $w_{v_{c_i}c_i} = 1$. 
A contradiction.

Finally, there can be only three groups in any gossip-stable partition for $G_c$.
Moreover, we claim those three groups are equivalent to a proper $3$-coloring of $G$.
Indeed, suppose $P$ is a gossip-stable partition for $G_c$, with only three groups.
We have that $X(c_1),X(c_2),X(c_3)$ are pairwise different, because these three nodes are enemies.
Thus we can define a $3$-coloring of $G$ in the following way: for every vertex $v$ of $G$, $c(v) = i$ if and only if $X(v_1)$ ($= X(v_2)$) $= X(c_i)$, with $1 \leq i \leq 3$.
This $3$-coloring is proper, because otherwise there would be enemies in at least one group in $P$. 

Conversely, suppose that we have a proper $3$-coloring $c$ of $G$.
We define the partition  $P = ( X_1, X_2, X_3 )$ as follows.
For every $1 \leq i \leq 3$, $c_i \in X_i$, and any colored vertex $v_{c_i}$ is in $X_i$ too.
Furthermore, for all vertex $v$ of $G$, we have that $v_1,v_2 \in X_{c(v)}$.
By construction, there is no enemies in any of these three groups.
Moreover, we claim that there is no gossiping between any two vertices that are in different groups.
Indeed, there can only be gossiping between two vertices $v_j$ and $v_{c_i}$, with $j \in \{1,2\}$, $i \in \{1,2,3\}$, and $v_j \notin X_i$; however, the very definition of the partition $P$ implies that $v_{c_i}$ has, at least, one enemy in the group of $v_j$, namely $c(v)$, and so, there is no gossiping at all.
To conclude, we claim that any subset $S$, $|S| \leq 2$, does not break $P$.
Indeed, the vertices in $S$ can only found a new group in $P$, by the very definition of the weights $w$ in $G_c$.
Furthermore, for all vertex $u \in G_c$, we have that $f_{u}(P) \geq 1$.
Hence, there is no vertex in $G_c$ that can strictly increase its utility by a $2$-deviation, and so, $P$ is a $2$-stable, gossip stable partition.   
\end{proof}

\subsubsection{Multichannels}

\begin{proof}~[Lemma~\ref{lem:contre-exemple-q=2}]
We use the counter-example $G_{q=1}=(V_{q=1},w^{q=1})$ of Lemma~\ref{lem:counter-example} for the proof.
Note that, for each node $u \in V_{q=1}$, there exists at most one node $v \in V_{q=1}$ such that $w^{q=1}_{uv} = 0$.

In the sequel, $K$ denotes a clique of order $p > 2|V_{q=1}|$ with weight $1$.
We now assume that $\varepsilon < 1/p$ for the rest of the proof, and we construct the graph $G_{q=2}=(V_{q=2},w^{q=2})$ as follows.
We first start from $G_{q=1}$.
For any node $u \in V_{q=1}$ such that $w^{q=1}_{uv} \neq 0$ for any $v \in V_{q=1}$, we add a copy of $K$, denoted $K_{u}=(V_{u},E_{u})$.
We set $w^{q=2}_{ux} = 1$ for any $x \in V_{u}$.
For any edge $uv$ with $w^{q=1}_{uv} = 0$, we add another copy of $K$, denoted $K_{(u,v)}=(V_{(u,v)},E_{(u,v)})$.
We set $w^{q=2}_{ux} = w^{q=2}_{vx} = 1$ for any $x \in V_{(u,v)}$.
Furthermore, we set $w^{q=2}_{uv} = 1$.
Finally, the conflict graph consists of all the remaining edges.\\

\noindent
First, we get that for any $u,v \in V_{q=1}$ such that $w^{q=1}_{uv} = 0$, for any $x,y \in V_{(u,v)}$, $|X(x) \cap X(y)| = 2$  in any $3$-stable configuration $C$ with $2$ channels for $G_{q=2}$.

Indeed, we have that any $x \in V_{(u,v)}$ cannot be in a group containing any node $y \in V_{q=2} \setminus (V_{(u,v)} \cup \{u,v\})$.

We first prove that in any $3$-stable configuration $C$ with $2$ channels for $G_{q=2}$, we have $|X(x) \cap X(y)| \neq 1$, for any $x,y \in V_{(u,v)}$.
By contradiction.
Assume there exist $x,y \in V_{(u,v)}$ such that $|X(x) \cap X(y)| = 1$.
Without loss of generality, assume that $f_x(C) \geq f_y(C)$.
Then, there exists a $1$-deviation: $y$ reaches the group $X_i$ such that $x \in X_i$ and $y \notin X_i$.
In that case, $f_y(C') > f_y(C)$, where $C'$ is the configuration for $G_{q=2}$ after the previous $1$-deviation.
Indeed, $f_y(C') = f_x(C) + \epsilon$.

Thus, we assume that, in any $3$-stable configuration $C$ with $2$ channels for $G_{q=2}$, we have $|X(x) \cap X(y)| \neq 1$, for any $x,y \in V_{(u,v)}$.
Furthermore, if there exist $x,y \in V_{(u,v)}$ such that $|X(x) \cap X(y)| = 0$, then there exist two groups $X_{i}$ and $X_{j}$ such that $x \in X_{i}$, $x \in X_{j}$, and $X_{i} = X_{j}$, and there exist two groups $X_{i'}$ and $X_{j'}$ such that $y \in X_{i'}$, $y \in X_{j'}$, and $X_{i'} = X_{j'}$.
Thus, $f_x(C) = (1+\epsilon)(|X_{i}|-1)$, and $f_y(C) = (1+\epsilon)(|X_{i'}|-1)$.
Without loss of generality, suppose $f_x(C) \geq f_y(C)$.
Then, there exists a $1$-deviation: $y$ leaves $X_{i'}$ and reaches $X_{i}$.
We get $f_y(C') = |X_{i}| + |X_{i'}| - 1 > f_y(C)$ because $\epsilon < \frac{1}{p}$.
Thus, in any $3$-stable configuration $C$ with $2$ channels for $G_{q=2}$, we have for any $x,y \in V_{(u,v)}$, $|X(x) \cap X(y)| = 2$, and so, the claim is proved.\\

\noindent
Then, by the choice of $0 < \varepsilon < 1/p$ with $p > 2|V_{q=1}|$, there exist two groups $X_{i}$ and $X_{j}$ such that $X_{i} = V_{(u,v)}$ and $X_{j} = V_{(u,v)} \cup \{u,v\}$.
Indeed, otherwise there exists a $1$-deviation for $C$, because by construction $|V_{(u,v)}| > f_{u}(C')$, with $C'$ being any configuration with $2$ channels for $G_{q=1}$.
We obtain the same result for copies $K_{u}$.
To summarize, after setting the groups from the previously necessary conditions for having a $3$-stable configuration with $2$ channels for $G_{q=1}$, each node $u \in V \setminus V_{q=1}$ belongs to exactly $2$ groups, and each node $u \in V_{q=1}$ belongs to exactly $1$ group.

Thus, there exists a $3$-stable configuration for $G_{q=2}$ with $2$ channels if, and only if, there exists a $3$-stable configuration for $G_{q=1}$ with single channel.
Indeed, any two nodes $u,v \in V_{q=1}$ such that $w^{q=1}_{uv} = 0$, are already in a same group.
Thus, by the choice of the function $h$, this mimics a weight $\varepsilon$ between these two nodes for determining the other groups, that is quite the same as a weight $0$, because $\varepsilon$ is arbitrarily small (see Corollary \ref{cor:mn}).
Finally, there does not exists a $3$-stable configuration for $G_{q=2}$ with $2$ channels because there does not exist a $3$-stable configuration for $G_{q=1}$ with single channel by Lemma~\ref{lem:counter-example}.
\end{proof}

%In the case of general weights, a stability criterion depending on $q$ has yet to be found.
%We prove in Theorem~\ref{the:chaotic-behavior} that any chaotic behavior may be observed between any two consecutive numbers of channels.

%\begin{theorem}
%\label{the:chaotic-behavior}
%For any sequence of positive integers $q_{1}, q_{2}, \ldots, q_{p}$, and for any sequence $1 \leq i_{1} < i_{2} < \ldots < i_{l} \leq p$, there exists a graph $G=(V,w)$ such that there is a $2$-stable configuration with $q_j$ channels for $G$ if, and only if, $j \neq i_{t}$, for every $1 \leq t \leq l$.
%\end{theorem}

\begin{proof}~[Theorem~\ref{the:chaotic-behavior}]

\begin{claim}
\label{lemma:counter-symmetric-1}
There exists a graph $G=(V,w)$ with $\mathcal{W}=\{-\infty,-4,2,6,7\}$.such that any partition of the nodes for $G$ is not $2$-stable.
\end{claim}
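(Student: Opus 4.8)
The plan is to obtain $G$ directly from the ``rotating triangle'' counterexample of Figure~\ref{fig:nonstable}. Recall that the construction of Figure~\ref{fig:nonstable} (and its clique-inflated version used in Lemma~\ref{lemma:extent}) produces a graph with no $2$-stable partition whenever its three positive edge weights $w_1<w_2<w_3$ satisfy $w_1+w_2>w_3$: in any partition one of the three ``central'' vertices can always strictly gain by rotating into the next peripheral group, so the dynamics cannot stop. The only thing to do, then, is to exhibit three distinct positive weights in $\mathcal{W}$ meeting this inequality.

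Take $w_1=2$, $w_2=6$, $w_3=7$, which all lie in $\mathcal{W}=\{-\infty,-4,2,6,7\}$ and satisfy $2<6<7$ together with $2+6=8>7$. Plugging these into the graph of Figure~\ref{fig:nonstable} (with conflict edges of weight $-\infty$, exactly as pictured) yields a graph whose every edge weight is in $\{-\infty,2,6,7\}\subseteq\mathcal{W}$, and the rotation argument of Lemma~\ref{lemma:extent} applies verbatim to show that no partition of it is $2$-stable. The weight $-4$ is simply not used by this gadget, which is allowed since we only need \emph{some} graph with weights drawn from $\mathcal{W}$.

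An even shorter route, which I would actually write down, is to invoke Lemma~\ref{lemma:extent} itself with $a=2$ and $b=6$: these are positive integers with $a<b$, so the lemma provides a graph with weight set $\{-\infty,2,6\}\subseteq\mathcal{W}$ admitting no $2$-stable partition, and its internal reduction is fine since $b<2b+a<3b$ and $b+(2b+a)>3b$ hold (here $6<14<18$ and $20>18$). Either way, the only computation involved is the arithmetic check $w_1+w_2>w_3$ (equivalently $a>0$ in Lemma~\ref{lemma:extent}).

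I do not anticipate any real obstacle: all the combinatorial work lives in Lemma~\ref{lemma:extent}, and this claim merely records that the particular weight set $\mathcal{W}=\{-\infty,-4,2,6,7\}$ imposed by the other gadgets in the proof of Theorem~\ref{the:chaotic-behavior} is already rich enough to host a $2$-unstable instance. The one point to be slightly careful about is to state explicitly that ``a graph with $\mathcal{W}$'' only requires its edge weights to belong to $\mathcal{W}$, not to realize every element of it, so that a graph using just $\{-\infty,2,6\}$ legitimately witnesses the claim.
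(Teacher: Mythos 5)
Your proof is correct as a proof of the statement as written, but it takes a genuinely different route from the paper. The paper does not go through Figure~\ref{fig:nonstable} at all here: it builds a minimal four-vertex graph $G_1$ on $\{u_1,u_2,u_3,u_4\}$ with $w_{u_1u_2}=w_{u_1u_3}=-\infty$, $w_{u_1u_4}=7$, $w_{u_2u_4}=6$, $w_{u_3u_4}=2$, $w_{u_2u_3}=-4$ (so every element of $\mathcal{W}$ is actually realized, which is why $-4$ appears in the weight set), and then rules out all partitions by an explicit exhaustive case analysis. Your derivation via Lemma~\ref{lemma:extent} with $a=2$, $b=6$ (or directly via the rotating-triangle gadget with $w_1=2<w_2=6<w_3=7$ and $2+6>7$) is shorter and leans entirely on machinery the paper has already established; your arithmetic checks are right, and your remark that a graph ``with $\mathcal{W}$'' need only draw its weights from $\mathcal{W}$ is the correct reading (and in any case one can pad with an irrelevant $-4$-weighted component). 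What the paper's construction buys, and what you lose by substituting yours, is the downstream use: Claim~\ref{cor:base} reuses this exact graph $G_1$ verbatim, attaching the auxiliary vertices $x_1,\ldots,x_{q-1}$ to the distinguished hub $u_4$ with weight $N$ and exploiting that every non-$2$-stable partition of $G_1$ forces $u_4$ into a specific role. If one replaced $G_1$ by the six-or-more-vertex gadget of Lemma~\ref{lemma:extent}, the proof of Claim~\ref{cor:base} would have to be reworked around an analogous hub. So your argument is a valid and more economical proof of the claim in isolation, but not a drop-in replacement within the proof of Theorem~\ref{the:chaotic-behavior}.
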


\begin{proofclaim}
Let $G_1=(V,w)$ be the graph built as follows.
Let $V=\{u_{1},u_{2},u_{3},u_{4}\}$.
We set $w_{u_{1}u_{2}} = w_{u_{1}u_{3}} = -\infty$, $w_{u_{1}u_{4}}=7$, $w_{u_{2}u_{4}}=6$, $w_{u_{3}u_{4}}=2$, and $w_{u_{2}u_{3}}=-4$.

For any $k \geq 1$, any $k$-stable partition $P$ for $G_1$ is such that $f_{u}(P) \geq 0$ for any $u \in V$.
Thus, for any $k \geq 1$, the following partitions for $G_1$ are not $k$-stable:
\begin{itemize}
\item $P=(\{u_{1}\},\{u_{2},u_{3},u_{4}\})$ as $f_{u_3}(P) = -2 < 0$;
\item $P=(\{u_{1}\},\{u_{2},u_{3}\},\{u_{4}\})$ as $f_{u_3}(P) = -4 < 0$;
\item $P=(\{u_{1},u_{4}\},\{u_{2},u_{3}\})$ as $f_{u_3}(P) = -4 < 0$;
\item any partition $P=(X_{1},X_{2},X_{3},X_{4})$ ($X_{i}$ may be empty for some $i$, $1 \leq i \leq 4$) such that $u_{1},u_{2} \in X_{i}$ and/or $u_{1},u_{3} \in X_{i}$ for some $i$, $1 \leq i \leq 4$.
\end{itemize}

Thus, it remains to prove the result for the other partitions for $G_1$.
\begin{itemize}
\item Consider the partition $P=(\{u_{1}\},\{u_{2}\},\{u_{3}\},\{u_{4}\})$ for $G_1$.
$P$ is not $1$-stable because $u_{4}$ can reach group $\{u_{1}\}$.
Indeed, $f_{u_4}(P') = f_{u_4}(P) + 7 = 7$ where $P'=(\{u_{1},u_{4}\},\{u_{2}\},\{u_{3}\})$ is the resulting partition after this $1$-deviation.

\item Consider the partition $P=(\{u_{1},u_{4}\},\{u_{2}\},\{u_{3}\})$ for $G_1$.
$P$ is not $2$-stable because $u_{2}$ and $u_{4}$ can reach group $\{u_{3}\}$.
Indeed, $f_{u_2}(P') = f_{u_2}(P) + 6 - 4 = f_{u_2}(P) + 2 = 2$ and $f_{u_4}(P') = f_{u_4}(P) -7 + 6 + 2 = f_{u_4}(P) + 1 = 8$ where $P'=(\{u_{1}\},\{u_{2},u_{3},u_{4}\})$ is the resulting partition for $G_1$ after this $2$-deviation.

\item Consider the partition $P=(\{u_{1}\},\{u_{2},u_{4}\},\{u_{3}\})$ for $G_1$.
$P$ is not $1$-stable because $u_{4}$ can reach group $\{u_{1}\}$.
Indeed, $f_{u_4}(P') = f_{u_4}(P) -6 + 7 = f_{u_4}(P) + 1 = 7$ where $P'=(\{u_{1},u_{4}\},\{u_{2}\},\{u_{3}\})$ is the resulting partition for $G_1$ after this $1$-deviation.

\item Consider the partition $P=(\{u_{1}\},\{u_{2}\},\{u_{3},u_{4}\})$ for $G_1$.
$P$ is not $1$-stable because $u_{4}$ can reach group $\{u_{1}\}$.
Indeed, $f_{u_4}(P') = f_{u_4}(P) -2 + 7 = f_{u_4}(P) + 5 = 7$ where $P'=(\{u_{1},u_{4}\},\{u_{2}\},\{u_{3}\})$ is the resulting partition for $G_1$ after this $1$-deviation.
\end{itemize}

Finally, any partition $P$ for $G_1$ is not a $2$-stable partition for $G_1$.
\end{proofclaim}

\begin{claim}
\label{cor:base}
Let us fix $h$ is as $h(0,w) = 0$, $h(g+1,w) = (1 + \varepsilon g)w$, where $\varepsilon > 0$ is arbitrarily small.
Given any integer $q \geq 1$, there exists a graph $G_q=(V_q,w^q)$ with $\mathcal{W}=\{-\infty,-4,2,6,7,N\}$ such that any configuration $C$ with $q$ channels for $G_q$ is not $2$-stable, whereas there always exists a $2$-stable configuration $C'$ with $q' \neq q$ channels for $G_q$.
\end{claim}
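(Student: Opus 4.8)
The plan is to take the four‑vertex gadget $G_1$ from Claim~\ref{lemma:counter-symmetric-1} --- which admits no $2$‑stable \emph{partition} --- and ``inflate'' it into a graph $G_q$ whose $q$‑channel configurations are forced, on the core four vertices, to imitate single‑channel partitions of $G_1$, while for $q'\neq q$ the configurations escape this trap. Concretely, I would keep the vertices $u_1,u_2,u_3,u_4$ of $G_1$ with their weights in $\{-\infty,-4,2,6,7\}$, and attach to each $u_i$ a family of $q-1$ pairwise disjoint ``ballast'' cliques $K_i^{(1)},\dots,K_i^{(q-1)}$, each of a fixed size $m\ge 3$, all of whose internal edges and all of whose edges to $u_i$ have weight $N$; every edge not yet specified (between distinct ballast cliques, between $u_i$ and $K_j^{(l)}$ for $j\neq i$, between a ballast clique and a non‑incident core vertex) is a conflict edge. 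This uses exactly $\mathcal{W}=\{-\infty,-4,2,6,7,N\}$. Finally I would fix $m=3$ and only then choose $\varepsilon$ small, e.g. $\varepsilon<\frac{1}{mN}$, which is legitimate since $h$ is allowed to be arbitrarily flat and $\varepsilon$ is chosen last ($N$ being polynomial in $n$).

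The heart of the argument is a structural lemma: \emph{every} $2$‑stable configuration $C$ of $G_q$ with $q$ channels pins each $u_i$ down. By Lemma~\ref{lemma:group} the members of each $K_i^{(j)}$ are mutual twins, hence always co‑located; moreover $m\ge 3$ guarantees no ballast vertex peels a channel off its clique (the $N\varepsilon$ gained from one more overlap with $u_i$ is beaten by the $(m-1)N\varepsilon$ lost with its clique‑mates), so each ballast clique occupies a solid block of channels. Since the \emph{first} shared channel with $K_i^{(j)}$ is worth $mN$ to $u_i$ whereas any further shared channel is worth only $mN\varepsilon$, and the ballast cliques of $u_i$ are mutually hostile, $u_i$ must devote at least one channel to each of its $q-1$ ballast cliques, leaving at most one free channel. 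I would then split into two cases. (I) $u_i$ spends its last channel on a second ballast copy (equivalently, all $q$ channels are ballast channels); then one shows a $2$‑deviation of the form $\{u_i,u_4\}$ exists --- the two vertices trade a ballast duplicate for a joint $G_1$‑group, each strictly gaining, using that $G_1$ has no stable group giving $u_4$ utility at least $7$ other than $\{u_1,u_4\}$ --- so $C$ is not $2$‑stable. (II) $u_i$'s free channel carries a genuine $G_1$‑group, which, since $u_1,u_2,u_3$ are pairwise enemies and $\{u_2,u_3,u_4\}$ is not a stable group, can only be a singleton or a pair $\{u_a,u_4\}$. In case (II) the four free channels define a single‑channel partition $P$ of $G_1$, and because $h(1,w)=w$, any $1$‑ or $2$‑deviation in $P$ lifts verbatim to a deviation of $C$ (each deviating core vertex merely re‑aims its free channel, with exactly the same utility change); hence $P$ is $2$‑stable, contradicting Claim~\ref{lemma:counter-symmetric-1}. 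Thus $G_q$ has no $2$‑stable configuration with $q$ channels.

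For the positive side I would exhibit explicit $2$‑stable configurations with $q'$ channels for every $q'\neq q$. If $q'<q$, each $u_i$ can cover only $q'$ of its $q-1$ ballast cliques, spends all its channels on ballast, has no $G_1$‑interaction, and any deviation would force a core vertex to abandon a whole ballast clique (loss $mN$) for a $G_1$‑gain at most $7$ --- impossible. If $q'>q$, give each $u_i$ one channel per ballast clique, realise on two further channels the pattern $\{u_1,u_4\}$, $\{u_2,u_4\}$ with $u_3$ isolated (which is $2$‑stable inside $G_1$), and park every remaining channel as an extra copy of a ballast clique; un‑parking costs $mN\varepsilon$ while the best available $G_1$‑improvement is at most $7\varepsilon$, so no new deviation appears. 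The step I expect to be the main obstacle is case (II) of the structural lemma: rigorously excluding that a core vertex leaves its free channel idling in a duplicate ballast copy or in an awkward mixed group, which is precisely what forces the inequalities $mN\varepsilon<1\le$ (smallest positive $G_1$‑gain) and the enemy pattern confining the free channel to $\{u_a,u_4\}$ or a singleton. Once this reduction to Claim~\ref{lemma:counter-symmetric-1} is airtight, Claim~\ref{cor:base} follows.
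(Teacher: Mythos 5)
Your high-level strategy is the same as the paper's: attach best-friend ($N$-weight) gadgets to the four-vertex counterexample of Claim~\ref{lemma:counter-symmetric-1} so that exactly $q-1$ channels are consumed, and read the residual free channels as a single-channel partition. But your specific gadget --- ballast cliques on \emph{all four} core vertices --- breaks the reduction. Consider the $q$-channel configuration in which $u_1$ and $u_4$ spend their free channel on the joint group $\{u_1,u_4\}$, while $u_2$ and $u_3$ each ``park'' their free channel as a second copy of one of their own ballast cliques. In the underlying game $G_1$ the partition $(\{u_1,u_4\},\{u_2\},\{u_3\})$ is destroyed only by the $2$-deviation in which $u_2$ and $u_4$ join the \emph{existing group} $\{u_3\}$; in your graph no joinable group containing $u_3$ exists, since every group containing $u_3$ also contains ballast vertices of $u_3$, which are enemies of $u_2$ and $u_4$. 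One checks that no other $1$- or $2$-deviation remains: $u_4$ will not trade utility $7$ for $6$ or $2$ in a fresh group $\{u_2,u_4\}$ or $\{u_3,u_4\}$; $u_2$ and $u_3$ cannot reach each other's or $u_1$'s groups; un-parking costs $m\varepsilon N>0$; and a ballast vertex peeling a channel off its clique loses $(m-2)\varepsilon N\ge 0$. So this configuration is $2$-stable and your main claim fails. Worse, this configuration is an attractor: a singleton free channel for $u_2$ or $u_3$ is itself not $1$-stable, since parking it on a ballast duplicate gains $m\varepsilon N>0$; hence your case (II), which needs $u_2,u_3$ to sit in joinable singletons so that the lift to $G_1$ is faithful, never occurs for them, and the mixed situation (some core vertices in your case (I), others in case (II)) is exactly where the argument collapses --- your case (I) remedy $\{u_i,u_4\}$ does not apply when $u_4$'s free channel already carries $\{u_1,u_4\}$.

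The repair is essentially the paper's construction: attach ballast only to $u_4$, and as $q-1$ pairwise-hostile \emph{single} best friends $x_1,\dots,x_{q-1}$ rather than cliques. Then $u_1,u_2,u_3$ keep all $q$ channels, have no positive option other than the single free channel of $u_4$, hence always occupy joinable singleton groups, and the single-channel instability of $G_1$ transfers verbatim; only $u_4$'s channel budget needs to be controlled because every profitable interaction in $G_1$ passes through $u_4$. A second, more minor error: your stable configuration for $q'\ge q+2$ leaves $u_3$ with zero overlap with $u_4$ while $u_4$ has channels parked at cost only $m\varepsilon N<1$ each; then $\{u_3,u_4\}$ is a valid $2$-deviation, because a \emph{first} overlap with $u_3$ is worth the full weight $2$, not $2\varepsilon$. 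The paper avoids this by including $\{u_4,u_3\}$ among $u_4$'s groups as soon as $q'\ge q+2$, so that only $\varepsilon$-scale second-overlap gains remain and these are dominated by the $\varepsilon N$ parking cost.
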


\begin{proofclaim}
We arbitrarily select $\epsilon < 1/N$.
Let $G_1 = (V_1,w^1)$ be the graph that is presented in the proof of Claim~\ref{lemma:counter-symmetric-1} (we re-use the same notations for the vertices in $G_1$).
We build the graph $G_q$ as follows.
First, the set of vertices is $V_q = V_1 \cup \{x_1, \ldots, x_{q-1}\}$.
For any $u,v \in V_1$, $w^q_{uv} = w^1_{uv}$.
Furthermore, for any $1 \leq i < j \leq q-1$, $w^q_{x_i u_4} = w^q_{x_j u_4} = N$, $w^q_{x_i x_j} = w^q_{x_i v} = w^q_{x_j v} = -\infty$ for any $v \in V_1 \setminus \{u_4\}$.

We have that, for any $q' < q$, the following configuration
is $2$-stable with $q'$ channels in the graph $G_q$:

$
\begin{array}[c]{rl}
C_{q'} = & ( \{u_4,x_1\}, \ldots, \{u_4,x_{q'}\}, \{u_1\}^1, \ldots, \{u_1\}^{q'}, \{u_2\}^1, \ldots, \{u_2\}^{q'}, \{u_3\}^1, \ldots, \{u_3\}^{q'}, \\ & \{x_1\}^1, \ldots, \{x_1\}^{q'-1}, \ldots, \{x_{q'}\}^1, \ldots, \{x_{q'}\}^{q'-1}, \\ & \{x_{q'+1}\}^1, \ldots, \{x_{q'+1}\}^{q'}, \ldots, \{x_{q-1}\}^1, \ldots, \{x_{q-1}\}^{q'}) \ff
\end{array}
$

However, in any stable configuration $C_q$ with $q$ channels for $G_q$, we necessarily have $\{u_4,x_i\}$ as a group in $C_q$ (without repetitions), for any $1 \leq i \leq q-1$. As there does not exist any $2$-stable configuration with single channel for $G_1$, there does not exist any $2$-stable configuration with $q$ channels for $G_q$.

If $q' = q+1$, then the following configuration is $2$-stable with $q'$ channels for the graph $G_q$:

$
\begin{array}[c]{rl}
	C_{q'} = & ( \{u_4,x_1\}, \ldots, \{u_4,x_{q-1}\},\{u_4,u_1\}, \{u_4,u_2\}, \{u_1\}^1, \ldots, \{u_1\}^{q'-1}, \{u_2\}^1, \ldots, \{u_2\}^{q'-1}, \\ & \{u_3\}^1, \ldots, \{u_3\}^{q'}, \{x_1\}^1, \ldots, \{x_1\}^{q'-1}, \ldots, \{x_{q-1}\}^1, \ldots, \{x_{q-1}\}^{q'-1})\ff
\end{array}
$ 

Finally, if $q' \geq q+2$, then the following configuration is $2$-stable with $q'$ channels in $G_q$:

$
\begin{array}[c]{rl}
C_{q'} = & ( \{u_4,x_1\}^1, \ldots, \{u_4,x_1\}^{q'-q-1}, \\ & \{u_4,x_2\}, \ldots, \{u_4,x_{q-1}\},\{u_4,u_1\}, \{u_4,u_2\}, \{u_4,u_3\}, \{u_1\}^1, \ldots, \\& \{u_1\}^{q'-1}, \{u_2\}^1, \ldots, \{u_2\}^{q'-1}, \{u_3\}^1, \ldots, \{u_3\}^{q'-1}, \{x_1\}^1, \ldots, \\ & \{x_1\}^{q+1}, \{x_2\}^1, \ldots, \{x_2\}^{q'-1}, \ldots, \{x_{q-1}\}^1, \ldots, \{x_{q-1}\}^{q'-1}) \ff
\end{array}
$ 

\end{proofclaim}

For any $1 \leq t \leq l$, let $G_{q_{i_t}}$ be the graph in the proof of Claim~\ref{cor:base} such that there exists a $2$-stable configuration with $q'$ channels for $G_{q_{i_t}}$ if, and only if, $q' \neq q_{i_t}$.
By convention, $G_0$ is any graph such that there always is a $2$-stable configuration with $q$ channels for $G_0$; for instance, we may assume that all the weights for $G_0$ are positive.

Then we build a graph $G$ from $G_0, G_{q_{i_1}}, \ldots G_{q_{i_l}}$, such that any vertices $u,v$ that are not in the same graph $G_q$ are enemies (e.g. $w_{uv} = -\infty$).
By doing so, every graph $G_q$ is independent of the other ones, and so, there is a $2$-stable configuration with $q'$ channels for $G$ if, and only if, there exists a $2$-stable configuration with $q'$ channels for every $G_{q_{i_t}}$, $1 \leq t \leq l$. 
\end{proof}

\subsubsection{Multi-modal relationship}
\label{annexe-multi-modal-annexe}

Our model so far is heavily biased towards pairwise relationship, as the utility of a player depends on the sum of her interaction with all other members of the groups. In reality, more subtle interaction occur: one may be interested to interact with either friend $u$ or $v$, but would not like to join a group where both of them are present. 

Our analysis also generalize to this case. 
We define a coloring game over a weighted undirected hypergraph $H = (V,w)$, with $w$ a weight function from $\mathcal{P}_t(V)$ to $\mathbb{N}$, where $\mathcal{P}_t(V)$ denotes the subsets in $V$ of size at most $t$.
Note that $t = \infty$ is allowed by our model.
The utility of a node $u$ in partition $P$ now equals: 
\(\sum_{ \{u\} \subseteq A \subseteq X(u), \\|A| \leq t} w_{A}.\)

We prove in Lemma~\ref{lemma:hypergraph-1-stable} that $1$-stability is guaranteed. 
Interestingly, the proof uses a different potential function than the global utility: $\phi (P) = \sum_{i=1}^n \sum_{A \subseteq X_i,\\ |A| \leq t} w_{A}$.

\begin{lemma}
\label{lemma:hypergraph-1-stable}
For every hypergraph $H = (V,w)$, there exists a $1$-stable partition.
\end{lemma}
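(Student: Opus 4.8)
The plan is to show that the quantity $\phi(P) = \sum_{i=1}^n \sum_{A \subseteq X_i,\, |A| \leq t} w_A$ is a strictly increasing potential function along any sequence of $1$-deviations, and that it is bounded above; since a $1$-deviation can always be performed as long as the partition is not $1$-stable, the game must terminate at a $1$-stable partition. The first step is to establish the key local identity relating a single node's utility to the change in $\phi$ when that node moves. Fix a partition $P$ and a node $u$ in group $X(u)$, and let $P'$ be obtained by moving $u$ from $X(u)$ to another group $Y$ (possibly empty), so that $X'(u) = Y \cup \{u\}$ in $P'$. Since all other groups are unchanged and only the two groups containing $u$ before and after are affected, one computes directly from the definition of $\phi$ that
\[
\phi(P') - \phi(P) = \Big(\sum_{\{u\}\subseteq A \subseteq Y\cup\{u\},\, |A|\leq t} w_A\Big) - \Big(\sum_{\{u\}\subseteq A \subseteq X(u),\, |A|\leq t} w_A\Big) = f_u(P') - f_u(P),
\]
because the subsets $A$ not containing $u$ contribute identically to $\phi(P)$ and $\phi(P')$ (they lie inside $X(u)\setminus\{u\}$ resp. $Y$, which are the same sets before and after), and the subsets containing $u$ are exactly those counted in the node-utility formula given in the statement.

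The second step is immediate: if $\{u\}$ is a $1$-deviation that breaks $P$ to give $P'$, then by definition $f_u(P') > f_u(P)$, so by the identity $\phi(P') > \phi(P)$. Since all weights are non-negative integers and there are finitely many partitions, $\phi$ takes finitely many values and strictly increases at every step, hence the process terminates; the terminal partition admits no $1$-deviation and is therefore $1$-stable. (For the existence statement alone, one can also simply note that a maximizer of $\phi$ over the finite set of partitions must be $1$-stable, since any $1$-deviation from it would strictly increase $\phi$, a contradiction.)

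I expect the only genuinely delicate point to be the bookkeeping in the first step: one must be careful that the function $w$ is defined on subsets of $V$ of size at most $t$, that empty or singleton sets are handled consistently with the convention for $f_u$, and that when $u$ moves into an empty group the sum over $A \subseteq \{u\}$ with $|A|\le t$ is correctly empty of nontrivial terms — so the "$\{u\}\subseteq A$" constraint combined with $A \subseteq X(u)$ isolates precisely the hyperedges incident to $u$ within its current group. Once the decomposition of $\phi$ into "hyperedges touching $u$" and "hyperedges avoiding $u$" is written cleanly, the cancellation is routine and the rest follows as above.
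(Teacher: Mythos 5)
Your proposal is correct and follows essentially the same route as the paper: the same potential function $\phi(P) = \sum_{i} \sum_{A \subseteq X_i,\,|A| \leq t} w_A$, the same key identity $\phi(P') - \phi(P) = f_u(P') - f_u(P) > 0$ for a $1$-deviation, and the same conclusion via boundedness of $\phi$ over the finite set of partitions. Your added remark that a maximizer of $\phi$ directly witnesses existence is a harmless strengthening of the same argument.
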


\begin{proof}~[Lemma~\ref{lemma:hypergraph-1-stable}]
Let $P$ be a partition of the nodes for $H$, let $u$ be a vertex that breaks $P$ and let $P'$ be the resulting partition.
We use the potential function $\phi (P) = \sum_{i=1}^n \sum_{A \subseteq X_i,\\ |A| \leq t} w(A)$ for the proof.
Observe that $\phi(P') - \phi(P) = f_u(P') - f_u(P) > 0$.
As $\phi$ is upper-bounded, the dynamic of the coloring game always converges.
\end{proof}

Moreover, all results on uniform games hold.
Formally, for any group $X_j \in P$, the vertices $u$ in $X_j$ have a utility $\sum_{\{u\} \subset A \subseteq X_j,\\ |S| \leq t} w_{A} = \sum_{i=1}^{t-1} \binom{|X_j| - 1}{i}$, which is monotically increasing in the size of their group.

Given a general hypergraph $H=(V,w)$, we define the friendship hypergraph $H^+$ in the same way as the friendship graph defined in Section~\ref{subsec:-m01}.
A Berge cycle in $H^+$ is a sequence $v_0, e_0, v_1, \ldots, v_{p-1},e_{p-1}, v_0$, such that:
\begin{itemize}
\item $v_0, v_1, \ldots, v_{p-1}$ are pairwise different vertices;
\item $e_0, e_1, \ldots, e_{p-1}$ are pairwise different hyperedges;
\item for all $0 \leq i \leq p-1$, $v_i, v_{i+1} \in E_i$. 
\end{itemize}

If subsets of nodes have weights in $\mathcal{W}=\{-\infty,0,1\}$, there still exists a $k$-stable partition for a hypergraph $H$ if the girth of the friendship hypergraph $H^{+}$ is at least $k+1$. Note that it does not imply $2$-stability in general because the girth of a hypergraph may be $2$.
\begin{lemma}
\label{lem:hypergraph-1}
Let $H=(V,E)$ be an acyclic hypergraph, with $c$ connected components.
Then $|V| = \sum_{e \in E} (|e|-1) + c$.
\end{lemma}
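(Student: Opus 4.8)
The plan is to prove this identity by adding the hyperedges of $H$ one at a time and tracking how the number of connected components changes, mimicking the classical proof that a forest on $|V|$ vertices with $c$ components has $|V|-c$ edges. First I would fix an arbitrary ordering $E = \{e_1, e_2, \ldots, e_m\}$ and, for $0 \le i \le m$, let $H_i = (V, \{e_1, \ldots, e_i\})$ with $c_i$ connected components. Every $H_i$ is a sub-hypergraph of the acyclic hypergraph $H$, hence is itself acyclic, since deleting hyperedges cannot create a Berge cycle. The base case $H_0$ has no hyperedges and so $c_0 = |V|$, which is exactly the claimed identity $|V| = \sum_{j \le i}(|e_j|-1) + c_i$ for $i = 0$. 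It then suffices to show that passing from $H_{i-1}$ to $H_i$ decreases the component count by exactly $|e_i|-1$, i.e. $c_i = c_{i-1} - (|e_i|-1)$; telescoping these relations over $i=1,\dots,m$ yields the statement with $c = c_m$.

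The heart of the argument is the claim that, in any acyclic hypergraph, the vertices of a single hyperedge $e$ lie in $|e|$ \emph{pairwise distinct} connected components of $H - e$ (the hypergraph obtained by removing $e$). Granting this and noting that $H_{i-1} = H_i - e_i$, adding $e_i$ back to $H_{i-1}$ merges exactly $|e_i|$ distinct components into one and leaves all others unchanged, so $c_i = c_{i-1} - (|e_i|-1)$, as required.

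To prove the claim I would argue by contradiction: suppose two distinct vertices $x, y \in e$ lie in the same connected component of $H - e$. Then there is a walk in $H - e$ from $x$ to $y$; pick one of minimum length, say $x = w_0, f_0, w_1, f_1, \ldots, f_{p-1}, w_p = y$. Minimality forces the $w_\ell$ to be pairwise distinct (a repeated vertex can be short-circuited) and, more importantly, the $f_\ell$ to be pairwise distinct as well: if $f_a = f_b$ with $a < b$, then $w_a$ and $w_{b+1}$ both belong to $f_a$, giving a strictly shorter walk. Since $x \ne y$ we have $p \ge 1$, and $e$ differs from every $f_\ell$ because the $f_\ell$ belong to $H - e$. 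Hence $w_0, f_0, w_1, \ldots, w_p, e, w_0$ is a Berge cycle in $H$ of length $p+1 \ge 2$, contradicting acyclicity and establishing the claim.

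The only delicate point — and the step I expect to be the main obstacle to a fully rigorous write-up — is this extraction of a genuine Berge cycle from a shortest walk, in particular the verification that a minimum-length walk uses \emph{distinct} hyperedges (so that the resulting closed walk satisfies the "distinct edges" requirement of the Berge-cycle definition rather than being degenerate). Once that combinatorial lemma is in place, the rest of the proof is just the telescoping bookkeeping described above.
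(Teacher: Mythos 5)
Your proof is correct, but it follows a genuinely different route from the paper's. The paper fixes an ordering of the vertices and builds an auxiliary \emph{graph} $G_H$ on $n$ vertices in which each hyperedge $e$, attached at the representative of its minimum-index vertex, contributes a star of $|e|-1$ edges; acyclicity of $H$ is used to show these $\sum_e(|e|-1)$ edges are pairwise distinct, that $G_H$ is a forest, and that $G_H$ has the same number of connected components as $H$, after which the identity is just the standard formula $|V_H|=|E_H|+c$ for forests. You instead run the forest-counting argument directly on the hypergraph: add the hyperedges one at a time and show that each $e_i$ merges exactly $|e_i|$ pairwise distinct components of $H_{i-1}$, the key lemma being that in an acyclic hypergraph the vertices of any hyperedge $e$ lie in distinct components of $H-e$. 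Your route is more self-contained (no auxiliary construction, no appeal to the graph forest formula) at the cost of having to extract a genuine Berge cycle from a shortest connecting walk; the paper's route outsources that combinatorial work to the known graph-theoretic fact but must verify three separate properties of $G_H$ (uniqueness of the connecting hyperedge, acyclicity, and preservation of components), each of which is itself a small Berge-cycle argument. The one step you flagged as delicate — that a minimum-length walk in $H-e$ uses pairwise distinct hyperedges, so that closing it up with $e$ yields a Berge cycle of length $p+1\geq 2$ satisfying the distinct-edges requirement — does go through exactly as you sketch: a repeated hyperedge $f_a=f_b$ with $a<b$ lets you shortcut from $w_a$ to $w_{b+1}$ through $f_a$, strictly shortening the walk, and $e$ itself is distinct from all the $f_\ell$ since they live in $H-e$. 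Note that both proofs implicitly assume every hyperedge is nonempty (an empty hyperedge would contribute $-1$ to the sum while affecting nothing), but this is a degenerate case the paper also ignores.
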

\begin{proof}~[Lemma~\ref{lem:hypergraph-1}]
First we arbitrarily order the vertices in $V$, denoted $v_0, v_1, \ldots, v_{n-1}$.
We build an intersection graph $G=(V_H,E_H)$ as follows.
$V_H = \{ s_0, s_1, \ldots, s_{n-1}\}$, such that for all $0 \leq i \leq n-1$, $s_i = \{ e \in E : \{v_0, \ldots, v_i\} \cap e = \{v_i\}\}$.
Moreover, for all $0 \leq i < j \leq n-1$, the edge $s_is_j = s_js_i$ is in $E_H$ if and only if there is $e \in s_i$ such that $v_j \in e$.
Observe that such a hyperedge $e$ is necessarily unique because $H$ is acylic.
Let us call it $e_{i,j} = e_{j,i}$.
Thus, we get $|E_H| = \sum_{e \in E} (|e|-1)$.
Moreover, we claim that $G_H$ has the same number of connected components as $H$.
Indeed, if $s_i$ and $s_j$ are adjacent in $G_H$ then $v_i$ and $v_j$ both belong to $e_{i,j}$ in $H$ by definition of $G_H$.
Conversely, suppose that $v_i$ and $v_j$ are adjacent in $H$.
Let $e$ be a hyperedge such that $v_i,v_j \in e$.
By definition of $G_H$, there exists $0 \leq i_0 \leq \min (i,j)$ such that $e \in s_{i_0}$.
Furthermore, $s_{i_0}$ is adjacent to both $s_i$ and $s_j$.
As a consequence, we get there are $c$ connected components in $G_H$.
Finally, we claim that $G_H$ is acyclic, because any cycle in $G_H$ would give a Berge cycle in $H$.
Therefore, we have $|V_H| = |E_H| + c$.
\end{proof}

\begin{corollary}
Given an integer $k \geq 1$ and a hypergraph $H = (V,w)$ with $\mathcal{W} = \{-\infty,0,1\}$, there is a $k$-stable partition for $H$ if the girth of the friendship hypergraph $H^{+}$ is at least $k+1$.
\end{corollary}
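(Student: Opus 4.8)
The plan is to mimic the potential-function argument of Theorem~\ref{thm:stability:ktree} at the level of the friendship hypergraph, replacing the notion of an independent-set-type counting with the Berge-cycle count furnished by Lemma~\ref{lem:hypergraph-1}. First I would set up notation: given a $k$-deviation with acting set $S$, $|S|\le k$, breaking a partition $P$ into $P'$, write $f(P)=\sum_{j}\sum_{A\subseteq X_j,|A|\le t}w_A$ for the global-utility-type potential (the same $\phi$ used in Lemma~\ref{lemma:hypergraph-1-stable} when restricted to a single partition). Since every user in $S$ strictly increases her utility, $\sum_{u\in S}(f_u(P')-f_u(P))\ge |S|$. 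The goal is to show $f(P')-f(P)>0$, which forces termination in the usual way (the potential is integer-valued and bounded), and hence, together with the existence argument, yields a $k$-stable partition.

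The key step is a decomposition of $f(P')-f(P)$ analogous to Lemma~\ref{lem:util-variations}: the change in the potential equals the change summed over the acting nodes, minus the over-counting of hyperedges $A$ that lie entirely inside $S$ but whose members all end up in the same group (these are counted once per vertex on the ``gain'' side but only once in $f$). Since $\mathcal{W}=\{-\infty,0,1\}$, only hyperedges with weight $1$ matter, i.e. hyperedges of $H^+$, and the correction term is at most the number of hyperedges of $H^+$ contained in $S$. So I would aim for an inequality of the shape
\[
f(P')-f(P)\ \ge\ 2\Big(|S|-\big|\{e\in E(H^+):e\subseteq S\}\big|\Big).
\]
Now invoke the girth hypothesis: if the girth of $H^+$ is at least $k+1$, then the sub-hypergraph of $H^+$ induced on any set $S$ with $|S|\le k$ has no Berge cycle, hence is acyclic, so by Lemma~\ref{lem:hypergraph-1} applied to that induced sub-hypergraph (on vertex set $S$, with $c\ge1$ components) the number of its hyperedges is $\sum_{e\subseteq S}(|e|-1)+c\le |S|$ — and in fact strictly less than $|S|$ whenever any hyperedge has size $\ge2$, which it does since $w$ is defined on subsets of size $\ge1$ and a weight-$1$ singleton can be absorbed into a vertex's baseline without loss. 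This gives $f(P')-f(P)\ge 2>0$, exactly as in Theorem~\ref{thm:stability:ktree}.

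The main obstacle I anticipate is the bookkeeping in the potential-difference identity when hyperedges of arbitrary size $t$ straddle the boundary of $S$: a hyperedge $A$ with $A\cap S\ne\emptyset$ and $A\not\subseteq S$ contributes to $f(P')-f(P)$ only through whether the \emph{non}-acting part of $A$ happened to already be co-located with the acting part, and one must check these ``external'' contributions are already accounted for inside $\sum_{u\in S}(f_u(P')-f_u(P))$ (they are, by the same symmetry argument as in Lemma~\ref{lem:util-variations}, but the multi-modal sum makes the indexing fussier). A secondary subtlety is ensuring that ``Berge-acyclic on $S$'' really gives the clean edge count: Lemma~\ref{lem:hypergraph-1} is stated for an acyclic hypergraph, and one should confirm that forbidding Berge cycles of \emph{every} length $\le k$ — equivalently girth $\ge k+1$ — indeed makes every $\le k$-vertex induced sub-hypergraph Berge-acyclic, which is immediate since any Berge cycle in the induced sub-hypergraph uses at most $|S|\le k$ distinct vertices and is therefore a Berge cycle of length $\le k$ in $H^+$. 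Once these two points are nailed down, the proof closes exactly as the corollary's one-line statement suggests, with the $O(n^2)$-style termination following from the boundedness of $f$.
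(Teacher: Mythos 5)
Your proposal is correct and follows exactly the route the paper intends: the paper omits the proof of this corollary, stating only that it is ``quite the same as for Theorem~\ref{thm:stability:ktree}'', with Lemma~\ref{lem:hypergraph-1} supplied precisely so that the deviating set $S$, being Berge-acyclic in $H^+$ by the girth hypothesis, carries strictly fewer than $|S|$ friendship hyperedges and the potential strictly increases. One small imprecision to fix when writing it out: the overcounting correction must include every hyperedge $A$ with $|A\cap S|\geq 2$ (not only those with $A\subseteq S$), each weighted by $|A\cap S|-1$; since the traces $A\cap S$ still form a Berge-acyclic hypergraph on $S$, Lemma~\ref{lem:hypergraph-1} gives $\sum_A (|A\cap S|-1)\leq |S|-1$ and the argument closes as you describe.
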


We will omit the proof as it is quite the same as for Theorem~\ref{thm:stability:ktree}.
In particular, there always exists a $2$-stable partition if the friendship hypergraph is simple, that is the intersection of two different hyperedges has size at most $1$.

\subsection{Appendix 4: Efficiency Stability Trade-off}
\label{annexe:efficiency}

\def\MIS{{\sc Maximum Independent Set problem}\xspace}

%Efficiency of configuration can be estimated through the sum of utility received by all the players. It has been proved in~\cite{kleinberg2013information}, that computing a maximum partition is NP-complete even with uniform games. Hence, it does not only require coordination, but also has a prohibitive computational cost. We first strengthen this result by showing that this optimum cannot even be \emph{approximated} within a sublinear multiplicative factor.

%\begin{lemma}
%\label{lem:mcp-not-multiplicative}
%For every $1 > \epsilon > 0$, the problem of finding a maximum partition cannot be approximated within any $n^{\epsilon - 1}$-ratio in polynomial time, unless P=NP.
%\end{lemma}

\begin{proof}~[Lemma~\ref{lem:mcp-not-multiplicative}]
Let us constrain ourselves to the set of weights $\{-\infty,1\}$.
We consider a conflict graph $G^-=(V,E)$.
Given a partition $P=\{X_1,\ldots,X_n\}$ for $G^-$, the trick is to consider $\phi(P) = \sum_{i=1}^n |X_i|^2$ rather than $f(P)$; note that $\phi(P) = f(P) + n$ in this case.
Also, we will denote by $OPT(G^-)$ the optimal value taken by $\phi$ for any maximum partition of the nodes for $G^-$; we will denote by $MIS(G^-)$ the size of any maximum independent set of $G^-$.
Observe that such an independent set represents a clique of friends in the coloring game.

By contradiction, suppose there exists a polynomial-time algorithm $\mathcal{A}$, such that for any conflict graph $G^-$ we have $\phi(\mathcal{A}(G^-)) \geq n^{\epsilon - 1} OPT(G^-)$.
We claim we can use algorithm $\mathcal{A}$ to compute an independent set of $G^-$.
To do this, it suffices to take the largest group in $\mathcal{A}(G^-)$.
Moreover, we know by~\cite{Kar72}, the \MIS cannot be approximated within any $n^{\frac{\epsilon}{2}- 1}$-ratio in polynomial time, unless P=NP.
Then, if P $\neq$ NP, there are graphs $G^-$ such that the largest group in $\mathcal{A}(G^-)$ has size lesser than $n^{\frac{\epsilon}{2} - 1}MIS(G^-)$.
Especially, for such a graph $G^-$, we have $\phi(\mathcal{A}(G^-)) < n.(n^{\frac{\epsilon}{2} - 1}MIS(G^-))^2 = n^{\epsilon - 1}(MIS(G^-))^2$.
However, we can always define a partition of the nodes using a maximum independent set of $G^-$; the remaining vertices, if any, are supposed to be placed in singleton groups.
Moreover, for the partition $P_{MIS}$ that we get for $G^-$ we have $\phi(P_{MIS}) = (MIS(G^-))^2$.
So, we have $\phi(\mathcal{A}(G^-)) <  n^{\epsilon - 1} OPT(G^-)$, which is a contradiction unless P = NP.
\end{proof}

%However, we have a construction to show that using $q>1$ channels adds in general no particular difficulty to that computation.
%It could be useful for classes of graphs for which we have a good approximation to find a maximum partition.
%Formally:
%
%\begin{theorem}
%\label{the:equivalence-maximum}
%Assume $h$ concave.
%Given a graph $G = (V,w)$ with $\setW \subseteq \{-\infty\} \cup \NatInt$, one can construct a graph $G'$ such that from a maximum partition of the nodes for $G'$, we deduce in linear time a maximum configuration with $q$ channels for $G$.
%\end{theorem}

\textbf{Transformation for the proof of Theorem~\ref{the:equivalence-maximum}.}
We formally define the transformation as follows:
\begin{definition}
\label{def:transformation}
For any graph $G=(V,w)$ with $\setW \subseteq \{-\infty\} \cup \NatInt$, 
let $G'=(V',w')$ be as follows. \\
\textbf{a)} For any $u \in V$, we have $q$ nodes $u_{1},u_{2}, \ldots, u_{q} \in V'$ and we set $w'_{u_{i} u_{j}} = -\infty$ for any $i,j$, $1 \leq i < j \leq q$; \\
\textbf{b)} For any $u,v \in V$ such that $w_{uv} \in \{-\infty,0\}$, we set $w'_{u_{i} v_{j}} = w_{uv}$ for any $i,j$, $1 \leq i, j \leq q$; \\
\textbf{c)} For any $u,v \in V$ such that $w_{uv} > 0$, for any $g$, $1 \leq g \leq q$, we have two nodes $uv_{g,1}, uv_{g,2} \in V'$, as shown in Figure~\ref{fig:transformation}, such that for any $i, j$, $1 \leq i, j \leq q$: 
\\
$\bullet$ $w'_{u_{i} uv_{g,1}} = w'_{v_{i} uv_{g,1}} = (h(g,w_{uv}) - h(g-1,w_{uv}))/2$; \\
$\bullet$ $w'_{u_{i} v_{j}} = 0$; \\
$\bullet$ $w'_{uv_{g,1} uv_{g,2}} = 3 (h(g,w_{uv}) - h(g-1,w_{uv}))/4$; \\
$\bullet$ $w'_{uv_{g,2} y} = -\infty$, $\forall y \notin \{ uv_{g,1},uv_{g,2} \}$; \\
$\bullet$ $w'_{uv_{g,1} uv_{g',1}} = -\infty$, $\forall g' \neq g$, $1 \leq g' \leq q$; \\
$\bullet$ $w'_{uv_{g,1} y} = 0,\forall y\! \notin\!\{u_{i}, v_{i}, uv_{g',1}, uv_{g,2}, 1\!\!\leq i, g'\!\!\!\leq q\}$.
\end{definition}

\begin{figure}[t]
\centerline{\includegraphics[width=0.5\linewidth]{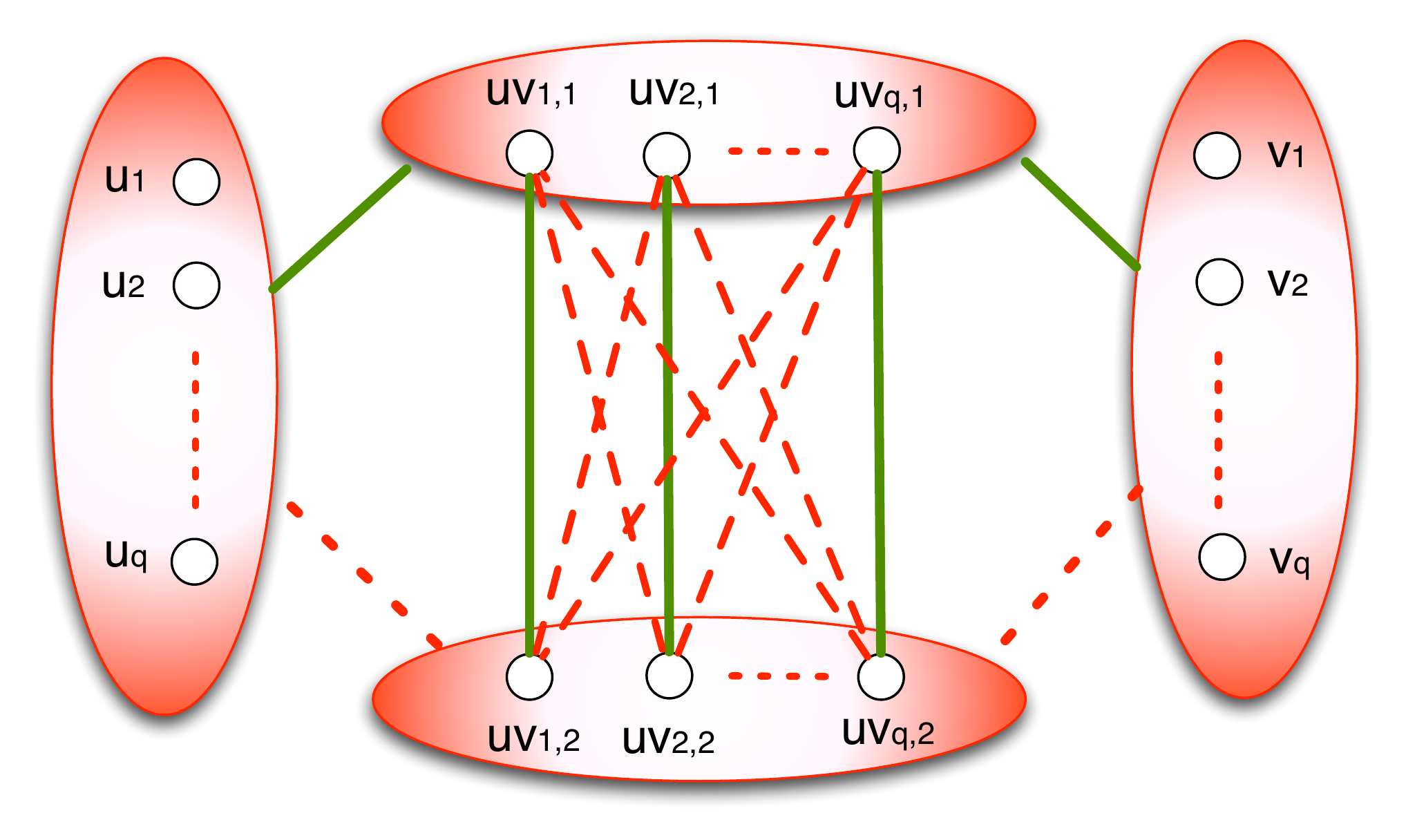}}
\caption{Representation in the graph G' of an edge uv with positive weight in the transformation.}
\label{fig:transformation}
\end{figure}

Let $P'$ be a maximum partition of the nodes for $G'$. 
Note that $X(u_{i}) \neq X(u_{j})$ for any $i\neq j$, and $X(u_{i}) = X(v_{j})$ if, and only if, a unique $uv_{g,1}$ exists such that $X(u_{i}) = X(uv_{g,1}) = X(v_{j})$. 
From $P'$ we can then build a configuration $C$ with $q$ channels for $G$ by replacing all nodes $u_i$ by $u$ and deleting all nodes $uv_{g,1}$ and $uv_{g,2}$. 
The proof concludes by proving that $P'$ is not maximum if $C$ isn't.
Formally:

\begin{proof}~[Theorem~\ref{the:equivalence-maximum}]
Let $P'$ be any maximum partition for $G'$.
%Note that for any $u,v \in V'$, then $|P'(u_{i}) \cap P'(u_{j})| \in \{0,1\}$.
We prove several properties for $P'$.

a) For any $u \in V$, we have $X(u_{i}) \neq X(u_{j})$ for any $i, j$, $1 \leq i < j \leq q$, because $w'_{u_{i} u_{j}} = -\infty$.

b) For any $u,v \in V$ such that $w_{uv} = -\infty$, we have $X(u_{i}) \neq X(v_{j})$ for any $i, j$, $1 \leq i, j \leq q$, because $w'_{u_{i} v_{j}} = -\infty$.

c) For any $u,v \in V$ such that $w_{uv} > 0$, if there exist $i, j$, $1 \leq i, j \leq q$, such that $X(u_{i}) = X(v_{j})$, then we claim there exists $g$, $1 \leq g \leq q$, such that $X(u_{i}) = X(v_{j}) = X(uv_{g,1})$.
By contradiction.
If $X(u_{i}) = X(v_{j})$ and $X(u_{i}) = X(v_{j}) \neq X(uv_{g,1})$ for any $g$, $1 \leq g \leq q$, then we show there exists a  partition $P''$ for $G'$ such that $f(P'') > f(P')$.
Indeed, we proved before that $X(u_{i'}) \neq X(u_{j'})$ and $X(v_{i'}) \neq X(v_{j'})$ for any $i', j'$, $1 \leq i' < j' \leq q$.
Thus, there exists $g$, $1 \leq g \leq q$, such that $X(uv_{g,1}) \neq X(u_{i'})$ and $X(uv_{g,1}) \neq X(v_{i'})$ for any $i'$, $1 \leq i' \leq q$.
We get two cases:
\begin{enumerate}
\item if $X(uv_{g,1}) \neq X(y)$ for any $y \in V'$, then we add $uv_{g,1}$ in the group of $u_{i}$ and $v_{j}$ and we get $f(P'') = f(P') + 2 (h(g,w_{uv}) - h(g-1,w_{uv}))$.
\item if $X(uv_{g,1}) = X(uv_{g,2})$, then $X(uv_{g,1}) \neq X(y)$ for any $y \in V'$ by construction of $G'$.
Thus, we add $uv_{g,1}$ in the group of $u_{i}$ and $v_{j}$ and we get $f(P'') = f(P') + 2 (h(g,w_{uv})-h(g-1,w_{uv})) - 6 ( h(g,w_{uv})-h(g-1,w_{uv}) )/4 > f(P')$.
\end{enumerate}

d) For any $u,v \in V$ such that $w_{uv} > 0$, if there exist $i, j, g$, $1 \leq i, j, g \leq q$, such that $X(u_{i}) = X(v_{j}) = X(uv_{g,1})$, then for any $g' < g$, we claim there exists $i', j'$, $1 \leq i', j' \leq q$, such that $X(u_{i'}) = X(v_{j'}) = X(uv_{g',1})$.
By contradiction.
If there exists $g' < g$ such that $X(u_{i'}) \neq  X(uv_{g',1})$ or $X(v_{j'}) \neq X(uv_{g',1})$ for any $i', j'$, $1 \leq i', j' \leq q$, then we again show there exists a partition $P''$ for $G'$ such that $f(P'') \geq f(P')$.
Indeed, as $h$ is concave, we have $h(g',w_{uv})-h(g'-1,w_{uv}) \geq h(g,w_{uv})-h(g-1,w_{uv})$ because $g' < g$.
There are three cases.
\begin{enumerate}
\item Without loss of generality, if for some $i'$, $1 \leq i' \leq q$, $X(u_{i'}) = X(uv_{g',1})$ and so, $X(v_{j'}) \neq X(uv_{g',1})$ for any $j'$, $1 \leq j' \leq q$, then we add $uv_{g',1}$ to the group of $u_{i}$ and $v_{j}$, and we add $uv_{g,1}$ to the group with $uv_{g,2}$.
We get the partition $P''$ such that $f(P'') = f(P') - (h(g',w_{uv})-h(g'-1,w_{uv})) + 2(h(g',w_{uv})-h(g'-1,w_{uv})) - 2(h(g,w_{uv})-h(g-1,w_{uv})) +  6 (h(g,w_{uv})-h(g-1,w_{uv}))/ 4 > f(P')$, because $X(uv_{g',1}) \neq X(uv_{g',2})$ and because $h$ is such that $h(g',w_{uv})-h(g'-1,w_{uv}) \geq h(g,w_{uv})-h(g-1,w_{uv})$.
\item If $X(u_{i'}) \neq X(uv_{g',1})$, $X(v_{j'}) \neq X(uv_{g',1})$ for any $i', j'$, $1 \leq i', j' \leq q$, and $X(uv_{g',1}) \neq X(uv_{g',2})$, then we add $uv_{g',1}$ to the group of $u_{i}$ and $v_{j}$, and we add $uv_{g,1}$ to the group with $uv_{g,2}$.
We obtain the partition $P''$ such that $f(P'') = f(P') - 2(h(g,w_{uv})-h(g-1,w_{uv})) + 2(h(g',w_{uv})-h(g'-1,w_{uv})) + 6 (h(g,w_{uv})-h(g-1,w_{uv}))/4 > f(P')$ by the choice of $h$.
\item If $X(u_{i'}) \neq X(uv_{g',1})$, $X(v_{j'}) \neq X(uv_{g',1})$ for any $i', j'$, $1 \leq i', j' \leq q$, and $X(uv_{g',1}) = X(uv_{g',2})$, then we add $uv_{g',1}$ to the group of $u_{i}$ and $v_{j}$, and we add $uv_{g,1}$ to the group with $uv_{g,2}$.
We obtain the partition $P''$ such that $f(P'') = f(P') - 2(h(g,w_{uv})-h(g-1,w_{uv})) + 2(h(g',w_{uv})-h(g'-1,w_{uv})) - 6 (h(g',w_{uv})-h(g'-1,w_{uv}))/4 + 6 (h(g,w_{uv})\\-h(g-1,w-{u,v}))/4 > f(P')$ by the choice of $h$.
\end{enumerate}

To summarize, any maximum partition $P'$ for $G'$ is such that:
for any $u \in V$, $X(u_{i}) \neq X(u_{j})$ for any $i \neq j$, $1 \leq i,j \leq q$;
for any $u, v \in V$ such that $w_{uv} = -\infty$, $X(u_{i}) \neq X(v_{j})$ for any $i, j$, $1 \leq i,j \leq q$;
for any $u, v \in V$ such that $w_{uv} > 0$, then there exists $g$, $0 \leq g \leq q$, such that for any $g' \leq g$, there exists $i, j$, $1 \leq i, j \leq q$, such that $X(u_{i}) = X(v_{j}) = X(uv_{g',1})$, and for all $g' > g$, $X(uv_{g',1}) = X(uv_{g',2})$ and so, $X(uv_{g',1}) = \{uv_{g',1},uv_{g',2}\}$.

We get that any maximum partition $P'$ for $G'$ is such that one can always write $f(P') = c + f'(P')$ where $c$ is the constant $\sum_{u,v \in V, w_{uv} > 0} \sum_{g=1}^{q}  6(h(g,w_{uv}) -h(g-1,w_{uv}))/4$, and $f'(P') = \sum_{u,v \in V, w_{uv} > 0} \sum_{g=1}^{q} \ind{\exists i,j, X(u_{i}) = X(u_{j})} (h(g,w_{uv})-h(g-1,w_{uv}))/2$.

We now construct the corresponding configuration $C$ with $q$ channels for $G$.
For any group $X_{j} \in P$, we construct the group $X_{j}' \in C$ as follows.
For any $u \in V$, $u \in X_{j}'$ if, and only if, there exists $i$, $1 \leq i \leq q$, such that $u_{i} \in X_{j}$.
Clearly, the construction of $G'$ and the construction of $C$ can be done in linear time in the size of $G$.

By construction, we get that $f(C) = 4 f'(P')$.

We now prove that if $P'$ is maximum for $G'$, then $C$ is maximum for $G$.
First suppose $P'$ is maximum for $G'$.
By contradiction.
Suppose there exists a configuration $C'$ with $q$ channels for $G$ such that $f(C') > f(C)$.
We prove that there exists a partition $P''$ for $G'$ such that $f(P'') > f(P')$.
We construct $P''$ as follows.
For any group $X_{i_{q}} \in C'$, we construct the group $X_{i_{q}}'$ for $P''$ as follows:
sequentially, for any $u,v \in X_{i_{q}}$, there exist $i, j$, $1 \leq i,j \leq q$ such that $u_{i}$ and $u_{j}$ are not yet in one group in $P''$.
If there exist, we choose $u_{i}$ and/or $v_{j}$ that are not yet in group $X_{i_{q}}'$.
Furthermore, if $w_{uv} > 0$, we choose the smallest $g$, $1 \leq g \leq q$, such that $uv_{g,1}$ is not yet in one group in $P''$.
We add $u_{i}$, $u_{j}$, and $uv_{g,1}$ in group $X_{i_{q}}'$.

We get that $f(P'') = c + f'(P'')$ where $f'(P'') = f(C')/4$ by construction. 

Thus, we get $f(P'') > f(P')$ because $f'(P'') > f'(P')$.
A contradiction because $P'$ is maximum for $G'$.
Finally, $C$ is maximum for $G$.
\end{proof}

%\begin{figure}[t]
%\centerline{\includegraphics[width=0.5\linewidth]{Figures/Transformation.pdf}}
%\caption{Representation in the graph G' of an edge uv with positive weight in the transformation.}
%\label{fig:transformation}
%\end{figure}

%Note that, in our extended model with multichannels, the total utility of the players can only improve as the number of channels $q$ gets larger, because more configurations are available. Unfortunately, we also prove that computing the minimum number of channels that guarantee a given threshold global utility is hard to approximate.
%
%\begin{theorem}
%\label{the:min-q-pas-APX}
%Given $U \geq 1$, computing $q^{*}$, the minimum $q$ for which there exists a configuration $C$ satisfying $f(C) \geq U$, cannot be approximated within $n^{1-\varepsilon}$ for any $\varepsilon > 0$ in polynomial time, unless P=NP.
%\end{theorem}

\begin{proof}~[Theorem~\ref{the:min-q-pas-APX}]
We construct $G=(V,w)$ as follows.
Let $G_{1}=(V_{1},w^{1})$ be any graph such that $w^{1}_{uv} \in \{-\infty,0\}$ for any $u,v \in V_{1}$.
Let $V=V_{1} \cup \{u, v_{1}, v_{2}, \ldots, v_{|V_{1}|}\}$.
We set $w_{ux} = N$ for any $x \in V_{1}$.
We set $w_{uv_{i}} = 1$ for any $i$, $1 \leq i \leq |V_{1}|$.
We set $w_{v_{i}x} = -\infty$ for any $x \in V_{1}$ and for any $i$, $1 \leq i \leq |V_{1}|$.

Let $U = 2 |V_{i}|$.
If there exists a configuration $C$ with $q$ channels for $G$ such that $f(C) \geq U$, then we claim $|X(u) \cap X(x)| \geq 1$ for any $x \in V \setminus \{u\}$.
Indeed, there are exactly $2 |V_{i}|$ edges with weights $1$.
Thus, by construction of $G$, the problem remains to minimize the number of groups containing node $u$.
Clearly by construction, we have $|V_{i}|$ groups of size $2$ composed of $\{u,v_{i}\}$ for any $i$, $1 \leq i \leq |V_{i}|$.
Then, $u$ must be in one group with each $x \in V_{1}$.
As $w^{1}_{uv} \in \{-\infty,0\}$ for any $u,v \in V_{1}$, then we have to partition the nodes $V_{1}$ into groups without arcs with weight $-\infty$, and we have to minimize the number of groups.
Said differently, the problem remains to find the chromatic number of the conflict graph.
As for all $\varepsilon > 0$, approximating the chromatic number within $n^{1-\varepsilon}$ is NP-hard~\cite{Zuck}, then we get the result.
\end{proof}

\begin{proof}~[Lemma~\ref{poa:infinite-ratio}]
a) Case $\{-\infty,0,b\} \subseteq \mathcal{W}$.

Let $n' \geq \sqrt{R/2b}$ be an integer.
We now build $G = (V,w)$ with $V = V_1 \cup V_2 \cup \{ v_{1}, v_{2}\}$, such that $|V_1| = |V_2| = n'$. 
For any $u,v \notin\{ v_{1}, v_{2}\}$, we set $w_{uv} = b$ if $u$ and $v$ are not in the same subset $V_i$, and $w_{uv} = 0$ otherwise. 
Moreover, $v_{1}$ ($v_{2}$, respectively) is enemy with any vertex of $V_{2}$ ($V_{1}$, respectively), and for any $u \in V_{1}$, any $v \in V_2$, we have $w_{uv_{1}} = w_{vv_{2}} = w_{v_1v_2} = 0$. 
Partition $P = ( V_1 \cup \{v_{1}\}, V_2 \cup \{v_{2}\} )$ is $1$-stable for $G$, and $f(P) = 0$.\footnote{Note we can make $q$ copies of each groups and then, the $1$-stability also holds, for instance, if we choose $h(0,w) =0$, $h(g+1,w) = w$ for the proof.}
Furthermore, a maximum partition for $G$ is $P^+ = (V_1 \cup V_2, \{v_1,v_2\} )$  and $f(P^+) = 2bn'^2 \geq R$. \\

b) Case $\{-a,b\} \subseteq \mathcal{W}$.

Let $n' \geq \frac{R}{2b(b+a)}$ be an integer.
Let $d = 2n'b$, and $n = 2n'(b+a)+1$.
First observe that $n > d$.
We claim that there exists a $d$-regular graph $G$ with $n$ vertices.
Indeed, let us choose $G$ such that the vertices in $G$ are labeled by the integers of $\mathbb{Z}_n$; for all $i,j \in \mathbb{Z}_n$, there is an edge $ij$ in $G$ if, and only if, $j \equiv i - d/2 + \alpha$ (mod $n$), for every $0 \leq \alpha \leq d$ such that $\alpha \neq d/2$.

We now define the graph $G' = (V,w)$ as follows.
The vertices in $G'$ are the vertices in $G$.
For all $i,j \in \mathbb{Z}_n$, $i \neq j$, $w_{ij} = -a$ if $ij$ is an edge of $G$, and $w_{ij} = b$ otherwise.

We have that $P = (V)$ is a $1$-stable partition for $G'$.
Furthermore, for all $i \in \mathbb{Z}_n$, $f_i(P) = -da + (n-d-1)b = 0$.

On the other hand, $P' = (\{0,1\}, \{2,3\}, \ldots, \{n-3,n-2\}, \{n-1\})$ is a partition for $G'$ such that $f(P') = b(n-1) \geq R$.
\end{proof}

%On the other hand, such situation is an oddity as precised in Lemma~\ref{greedy:upper-bound-gal}.
%We now denote by $C^{\textrm{d}-}_{1}$ the worst equilibrium obtained under the dynamic of the system.
%Note that in the following, $w_p$ is the maximum weight of $\mathcal{W}$.
%
%\begin{lemma}
%\label{greedy:upper-bound-gal}
%Given integers $q, n \geq 1$ and a set $\mathcal{W}$ such that $\mathcal{W}^+ \neq \emptyset$, for any graph $G=(V,w)$ with $\mathcal{W}$, we have $f(C^+)/f(C^{\textrm{d}-}_{1}) = O(h(q,w_p)n)$.
%\end{lemma}

\begin{proof}~[Lemma~\ref{greedy:upper-bound-gal}]
Let $G=(V,w)$ be any graph, with $m_+$ denoting the number of edges with positive weights.
We choose $G$ as an input for the greedy algorithm.
It takes $s \geq 0$ steps for computing a $1$-stable configuration $C_s$ with $q$ channels for $G$.
Note that $f(C_s) \geq s$ because the global utility strictly increases at each step (Theorem~\ref{thm:stability:k=1}). 

Let us denote $V_s$ the set of nodes $u$ such that there is at least one of their $q$ groups in $C_s$ that is not a singleton group.
We prove that $n_s \leq \ 2s$, where $n_s = |V_s|$.
Indeed, at each step of the algorithm, any vertex $v$ has at most one of its singleton groups that has been modified into another, larger group.
%Thus it takes at least $1$ step for a given vertex $v$ not having $q$ singleton groups anymore.
Furthermore, any such step modifies the singleton groups of at most two distinct vertices. 

As a consequence, there are at most $nn_s \leq 2sn$ edges $uv$ such that $w_{uv} > 0$ and either $u \in V_s$ or $v \in V_s$.
Moreover, note that there does not exist any couple of vertices $u,v \in V \setminus V_s$, such that $w_{uv} > 0$, because $C_s$ is $1$-stable. 
Hence, $2sn \geq m_+$, and so $f(C^+)/f(C_s) \leq 2h(q,w_p)m_+/s = O(h(q,w_p)n)$.
\end{proof}

%
%The upper-bound above, if it were tight, would imply that increasing of the number of channels might deteriorate the quality of the lowest computable Nash Equilibria.
%On the contrary, we prove in Lemma~\ref{lemma:positive-impact} that the number of channels may have a positive impact for some  sets of weights.
%
%\begin{lemma}
%\label{lemma:positive-impact}
%Given integers $q,n \geq 1$, and a set $\mathcal{W} = \mathcal{W}^+ \cup \{-\infty\}$ such that $\mathcal{W}^+ \neq \emptyset$, for any graph $G = (V,w)$ with $\mathcal{W}$, we have $f(C^+)/f(C^{\textrm{d}-}_{1}) = O((1 + 2n/q)h(q,w_p))$.
%\end{lemma}

\begin{proof}~[Lemma~\ref{lemma:positive-impact}]
Notations are the same as for the previous proof of Lemma \ref{greedy:upper-bound-gal}.
We set $m_+ = m_{1,+} + m_{0,+}$, where $m_{1,+}$ is the number of edges $uv$ such that $w_{uv} > 0$ and $X(u) \cap X(v) \neq \emptyset$, and $m_{0,+}$ is the number of edges $u'v'$ such that $w_{u'v'} > 0$ whereas  $X(u') \cap X(v') = \emptyset$, respectively.
By induction on the number of steps of the greedy algorithm, we get that for every two vertices $u,v$ such that $w_{uv} = -\infty$, $X(u) \cap X(v) = \emptyset$, because $u$ would avoid at any cost to interact with $v$, and reciprocally.
As a consequence, we have $f(C_s) \geq \max (s,m_{1,+})$.
Furthermore, we denote $V_q$ the set of vertices such that none of their $q$ groups in $C_s$ is a singleton group, and we get $|V_q| \leq 2s/q$.
As $C_s$ is $1$-stable by the hypothesis, there does not exist any couple of vertices $u,v \in V \setminus V_q$, such that $w_{uv} > 0$ whereas $X(u) \cap X(v) = \emptyset$ and so, $m_{0,+} \leq 2sn/q \leq 2nf(C_s)/q$.

Finally, $\frac{f(C^+)}{f(C_s)} \leq 2h(q,w_p)\frac{m_+}{f(C_s)} = 2h(q,w_p)\frac{ m_{1,+} + m_{0,+}}{f(C_s)} \leq  2h(q,w_p)(1 + 2\frac{n}{q})$.
\end{proof}

%
%
%\subsubsection{Efficiency of higher order stability}
%\label{sec:stabilityefficiency}
%
%Increasing $k$ beyond 1 restricts further the set of stable configurations; they may even sometimes cease to exist. However, before this happens, we can expect the worst case configurations, and hence our general bounds, to improve.
%We prove in Theorem~\ref{poa:upper-bound:kgeq2} that it is generally the case by considering single channel case (the existence of a stable configuration is not guaranteed, but the result holds if it exists). 
%As we constrain ourselves to partitions of the nodes anew, we will denote the configurations $P$ instead of $C$ in the sequel.
%
%\begin{theorem}
%\label{poa:upper-bound:kgeq2}
%Given $\mathcal{W}$, any graph $G = (V,w)$ with weights in $\setW$ satisfies, $f(P^+)/f(P_k^-) = O(\Delta_+)$
%for $k\geq 2$, where 
%$\Delta_+=\max_{v\in V} |\lset u \dimset w_{uv} > 0 \rset|$.
%\end{theorem}
%

\begin{proof}~[Theorem~\ref{poa:upper-bound:kgeq2}]
Again, let $m_+$ be the number of edges $uv \in E$ such that $w_{uv} > 0$.
We have that $f(P^+) \leq 2m_+w_p$.
Moreover, for every edge  $uv$ of $E$ such that $w_{uv} > 0$, for any $k$-stable partition $P_k$ for $G$, we claim either $f_{u}(P_k) > 0$ or $f_{v}(P_k) > 0$.
Indeed, suppose $f_{u}(P_k) = f_{v}(P_k) = 0$. 
Then $\{u,v\}$ breaks $P_k$ by founding a new group, which is a contradiction.
Thus, there are at least $m_+/\Delta_+$ vertices $v$ of $V$ such that  $f_{v}(P_k) > 0$, as the same node cannot be counted more than $\Delta_+$ times.
Consequently, $f(P^+)/f(P_k^-) = O(\Delta_+) = O(n)$.
\end{proof}

%Note the result also holds for $k=1$ whenever neither $\{-\infty,0,b\} \subseteq \mathcal{W}$ nor $\{-a,b\} \subseteq \mathcal{W}$. 
%Indeed, in this case, if there exist $u,v$ such that $f_u(P) = f_v(P) = 0$, whereas $w_{uv} > 0$, then $u$ can reach the group of $v$, or vice-versa, and so, there is a $1$-deviation.
%
%
%In other words, we show that for $k\geq 2$ stable partitions are with a multiplicative factor of the optimal given by the maximum degree of a node in the friendship graph.
%
%We define the \emph{price of $k$-anarchy} $p_a(n,k)$ as the greatest ratio $f(P^+)/f(P_k^-)$ obtained for a graph $G$ with $n$ nodes.
%By considering a worst case degree, we have $p_a(n,k) = O(n)$. 
%For a given $k$, this bound is tight: we can even construct a graph with $\setW\subseteq \NatInt$ and any other non trivial weights such that $f(P^+)/f(P_k^-) = \Omega(n)$.
%
%\begin{lemma}
%\label{poa:lower-bound-1-2}
%Let $k$ be any integer $k \geq 2$.
%Given positive integers $b,b'$ such that $b' < b$, and a non-negative integer $a$, there is a graph $G=(V,w)$:
%\begin{itemize}
%\vspace{-0.2cm}
%\item with $\mathcal{W} \supset \{-a,b\}$, such that $\frac{f(P^+)}{f(P_k^-)}  \geq  \frac{ (k+1)\lfloor \frac{n}{(k+1)^2} \rfloor - 1}{k}$.
%%\vspace{-0.2cm}
%\item  with $\mathcal{W} \supset \{b,b'\}$, such that $\frac{f(P^+)}{f(P_k^-)} \geq 1 + \frac{b'}{b}\frac{kb(\lfloor \frac{n}{kb} \rfloor - 1)}{kb-1}$.
%\end{itemize}
%\end{lemma}

\begin{proof}~[Lemma~\ref{poa:lower-bound-1-2}]
a) Case $\mathcal{W} \supset \{-a,b\}$\\

\noindent
Let $n' = (k+1)^2 \lfloor \frac{n}{(k+1)^2} \rfloor \leq n$.
We build $G = (V,w)$ such that $|V| = n'$, as follows.
The vertices in $V$ are labeled $v_{i,j}$, with $1 \leq i \leq k+1$, and with $1 \leq j \leq n'/(k+1)$.
For every $1 \leq i \leq k+1$, the \emph{row} $L_i$ is the set $\{ v_{i,1}, v_{i,2}, \ldots, v_{i,\frac{n'}{k+1}}\}$.
In the same way, for every $1 \leq j \leq \frac{n'}{k+1}$, the \emph{column} $C_j$ is the set $\{v_{1,j}, \ldots, v_{k+1,j}\}$.
Furthermore, vertices $v_{i,j}$ and $v_{l,t}$ are friends, that is $w_{v_{i,j}v_{l,t}} = b > 0$, if, and only if, either $i = l$ or $j = t$. 
Otherwise, $w_{v_{i,j}v_{l,t}} = -a \leq 0$.\\

\noindent
Let us consider $P = ( L_1, L_2, \ldots, L_{k+1} )$ for $G$.
We have $f(P) =   n'b(\frac{n'}{k+1}-1)$.
On the other hand, we present $P' = ( C_1, C_2, \ldots, C_{\frac{n'}{k+1}} )$, and we claim that $P'$ is a $k$-stable partition for $G$.
Indeed, observe that for any $1 \leq j \leq \frac{n'}{k+1}$, any vertex $v_{i,j} \in C_j$ has one, and only one friend, in any other column. Hence, if $v_{i,j}$ were part of a subset $S$ that breaks $P'$, then after the $k$-deviation, $v_{i,j}$ would have, at most, $k$ friends in its group, which is already the case in $C_j$.
As a consequence, there is no vertex that can strictly increase its utility by a $k$-deviation, and so, the claim is proved.
However, $f(P') = n'bk$.
As a result, $\frac{f(P^+)}{f(P_k^-)} \geq  \frac{f(P)}{f(P')} = \frac{\frac{n'}{k+1}-1}{k} = \frac{ (k+1)\lfloor \frac{n}{(k+1)^2} \rfloor - 1}{k}$.\\

 b) Case $\mathcal{W} \supset \{b,b'\}$\\

\noindent
This time, let $n' = kb \lfloor \frac{n}{kb} \rfloor$.
We build $G = (V,w)$ such that $|V| = n'$, as follows.
The set of vertices is partitioned  into $\frac{n'}{kb}$ parts, denoted $V_1, V_2, \ldots, V_{\frac{n'}{kb}}$, of equal size $kb$. 
For every couple of vertices $u,v$, $w_{uv} = b$ if $u$ and $v$ are in the same part $V_i$, with $1 \leq i \leq \frac{n'}{kb}$, and $w_{uv} = b'$ otherwise.

Obviously, a maximum (stable) partition for $G$ is $P = ( V )$, and $f(P) = n'[b(kb-1)+b'(n' - kb)]$.

Furthermore, we claim that $P' = ( V_1, V_2, \ldots, V_{\frac{n'}{kb}} )$ is a $k$-stable partition too. 
Indeed, any vertex $v$ that might break $P'$, by joining a coalition $S$ whose size is greater or equal to $k$, would have its individual utility after the deviation that is at most $(k-1)b + kbb' = kb(1+b') - b \leq kb^2 - b = (kb-1)b$. 
Hence $P'$ is $k$-stable, and $f(P') = n'b(kb-1)$.

Finally, $\frac{f(P)}{f(P')} = 1 + \frac{b'}{b}\frac{n' - kb}{kb-1} = 1 + \frac{b'}{b}\frac{kb(\lfloor \frac{n}{kb} \rfloor - 1)}{kb-1} $.  
\end{proof}

\end{document}